\documentclass[acmsmall,screen,authorversion,nonacm]{acmart}

\AtBeginDocument{%
	}

\bibliographystyle{ACM-Reference-Format}
\citestyle{acmauthoryear}   

\usepackage[utf8]{inputenc}
\usepackage{amsmath}
\usepackage{amsfonts}
\usepackage[title]{appendix}
\usepackage{comment}
\usepackage{hyperref}
\usepackage[noabbrev,capitalize]{cleveref}
\usepackage{mathtools}
\usepackage{tikz}
\usepackage{stmaryrd}
\usepackage{listings}
\usepackage[font={small}]{caption}
\usepackage{subcaption}
\usepackage{pifont}

\usetikzlibrary{arrows,automata,shapes,positioning,decorations.pathreplacing,patterns}

\newcommand{\true}{\mathit{true}}
\newcommand{\false}{\mathit{false}}
\newcommand{\dom}{\mathit{dom}}
\newcommand{\cl}{\mathit{cl}}
\newcommand{\AP}{\mathit{AP}}
\newcommand{\Img}{\mathit{Img}}
\newcommand{\Succ}{\mathit{succ}}
\newcommand{\summ}{\mathit{sum}}
\newcommand{\abssum}{\mathit{abssum}}
\newcommand{\sub}{\mathit{Sub}}
\newcommand{\fp}{\mathit{fp}}
\newcommand{\ad}{\mathit{ad}}
\newcommand{\ap}{\mathit{ap}}
\newcommand{\tr}{\mathit{tr}}
\newcommand{\intern}{\mathit{int}}
\newcommand{\abs}{\mathit{abs}}
\newcommand{\call}{\mathit{call}}
\newcommand{\calls}{\mathit{calls}}
\newcommand{\ret}{\mathit{ret}}
\newcommand{\rets}{\mathit{rets}}
\newcommand{\conf}{\mathcal{C}}

\newcommand{\base}{\mathit{base}}
\newcommand{\prog}{\mathit{prog}}
\newcommand{\wapref}{\mathit{wapref}}
\newcommand{\wa}{\mathit{wa}}
\newcommand{\ua}{\mathit{ua}}
\newcommand{\al}{\mathit{al}}
\newcommand{\at}{\mathit{at}}
\newcommand{\tos}{\mathit{tos}}
\newcommand{\dir}{\mathit{dir}}

\newcommand{\succtype}{\mathsf{f}}
\newcommand{\succglobal}{\mathsf{g}}
\newcommand{\succabstract}{\mathsf{a}}
\newcommand{\succback}{\mathsf{b}}
\newcommand{\succcaller}{\mathsf{c}}

\newcommand{\muTL}{\mu\mathit{TL}}

\newcommand{\PSPACE}{\mathsf{PSPACE}}
\newcommand{\PTIME}{\mathsf{PTIME}}
\newcommand{\EXPTIME}{\mathsf{EXPTIME}}
\newcommand{\EXPSPACE}{\mathsf{EXPSPACE}}
\newcommand{\NLOGSPACE}{\mathsf{NLOGSPACE}}

\newcommand{\Paths}{\mathit{Paths}}

\newcommand{\Traces}{\mathit{Traces}}

\newcommand{\cmark}{\ding{51}}
\newcommand{\xmark}{\ding{55}}

\AtEndPreamble{%
	\theoremstyle{acmdefinition}
	\newtheorem{remark}[theorem]{Remark}
	\theoremstyle{acmplain}
	\newtheorem{claim}{Claim}
	\crefname{claim}{Claim}{Claims}
}



\allowdisplaybreaks

\begin{document}
	\title{Deciding Asynchronous Hyperproperties for Recursive Programs}

	\author{Jens Oliver Gutsfeld}
	\affiliation{
		\department{Institut für Informatik}              
		\institution{University of Münster}            
		\country{Germany}                    
	}
	\email{jens.gutsfeld@uni-muenster.de}          
	
	\author{Markus Müller-Olm}
	\affiliation{
		\department{Institut für Informatik}              
		\institution{University of Münster}            
		\country{Germany}                    
	}
	\email{markus.mueller-olm@uni-muenster.de}          
	
	\author{Christoph Ohrem}
	\affiliation{
		\department{Institut für Informatik}              
		\institution{University of Münster}            
		\country{Germany}                    
	}
	\email{christoph.ohrem@uni-muenster.de}          

\begin{CCSXML}
		<ccs2012>
		<concept>
		<concept_id>10003752.10003790.10003793</concept_id>
		<concept_desc>Theory of computation~Modal and temporal logics</concept_desc>
		<concept_significance>500</concept_significance>
		</concept>
		<concept>
		<concept_id>10003752.10003777.10003779</concept_id>
		<concept_desc>Theory of computation~Semantics and reasoning</concept_desc>
		<concept_significance>500</concept_significance>
		</concept>
		<concept>
		<concept_id>10003752.10003790.10002990</concept_id>
		<concept_desc>Theory of computation~Logic and verification</concept_desc>
		<concept_significance>100</concept_significance>
		</concept>
		</ccs2012>
	\end{CCSXML}
	
	\ccsdesc[500]{Theory of computation~Modal and temporal logics}
	\ccsdesc[500]{Theory of computation~Semantics and reasoning}
	\ccsdesc[100]{Theory of computation~Logic and verification}
	
	\keywords{Temporal Logic, Hyperproperties, Automata Theory, Model Checking, Pushdown Systems, Asynchronicity}

	\begin{abstract}
We introduce a novel logic for asynchronous hyperproperties with a new mechanism to identify relevant positions on traces. While the new logic is more expressive than a related logic presented recently by Bozzelli et al., we obtain the same complexity of the model checking problem for finite state models. Beyond this, we study the model checking problem of our logic for pushdown models. We argue that the combination of asynchronicity and a non-regular model class studied in this paper constitutes the first suitable approach for hyperproperty model checking against recursive programs.
\end{abstract}

	\maketitle
	
	\section{Introduction}

In recent years, hyperproperties have received increased interest in verification, static analysis and other areas of computer science. 
While traditional \textit{trace properties} provide a unifying concept for phenomena that can be captured by considering traces of a system individually, hyperproperties provide such a concept for phenomena that require us to look at multiple traces of a system simultaneously. 
For example, \textit{A state an\-no\-ta\-ted with the proposition $p$ must eventually be reached} is a trace property while \textit{The number of occurrences of $p$ is the same on all traces} is a hyperproperty. 
Many important requirements in information security like observational determinism or non-interference can be described by hyperproperties \cite{Clarkson2010}. 
They also provide a natural framework for the analysis of concurrent systems \cite{Bonakdarpour2018}.

As traditional specification logics like LTL are suitable for trace properties only, new \textit{hyperlogics} were developed to specify hyperproperties. 
A prominent example is HyperLTL \cite{Clarkson2014} which adds quantification over named traces to LTL and thus enables the simultaneous analysis of multiple traces.
These hyperlogics first only followed traces synchronously, but software is inherently asynchronous \cite{Baumeister2021}, especially concurrent software \cite{Finkbeiner2017}, and therefore new hyperlogics that can relate traces at different time points are required.
For example, when checking information-flow policies on concurrent programs, traces might only be required to be equivalent up to stuttering \cite{Zdancewic2003} and thus matching observation points on different traces are not perfectly aligned.
Another example for an asynchronous hyperproperty is the hyperproperty \textit{The number of occurrences of $p$ is the same on all traces} from above since matching $p$-positions on different traces may be arbitrarily far apart.
In \cite{GutsfeldMO21}, a systematic study of asynchronous hyperproperties was conducted, including the introduction of the temporal fixpoint calculus $H_\mu$.
While $H_\mu$ is able to capture the class of asynchronous hyperproperties nicely, its model checking problem is highly undecidable and even the decidable fragments presented in \cite{GutsfeldMO21} have a high complexity.
Later, the asynchronous hyperlogic HyperLTL$_{S}$ was introduced in \cite{Bozzelli2021} that has an interesting decidable fragment for the model checking problem with lower complexity, simple HyperLTL$_S$.
It extends HyperLTL by modalities that jump from the current position on each trace to the next position where some formula from a set of LTL formulae defining an \textit{indistinguishability criterion} takes a different value. 
While accounting for asynchronicity is a necessary feature of hyperlogics for software systems, current model checking procedures for logics such as HyperLTL$_S$ are still insufficient as they only handle finite models which cannot capture many programs suitably due to the lack of a representation for the call stack.
Moreover, expressivity of asynchronous hyperlogics can be increased largely beyond simple HyperLTL$_S$ without increasing the complexity of the model checking problem for finite models.
For example, HyperLTL$_S$ lacks the ability to express arbitrary $\omega$-regular properties and cannot express properties like \textit{The number of occurrences of $p$ is the same on all traces}.

In this paper, we address these shortcomings by introducing a new asynchronous hyperlogic based on the linear-time $\mu$-calculus extending HyperLTL$_S$ in the following respects:
1) it provides a simpler jump mechanism that directly characterises positions of interest instead of an indistinguishability criterion. 
2) it supports different jump criteria for different traces; 
3) for the specification of the jump criterion and basic properties of single traces, it allows linear-time $\mu$-calculus formulae with CaRet-like modalities \cite{Alur2004}, i.e.\  modalities inspecting the call/return behaviour of recursive programs; and 
4) it offers fixpoint operators in multitrace formulae.
Moreover, we provide variants of the modalities with a \textit{well-aligned} semantics to enable decidability of model checking for pushdown systems (PDS), a well established model of recursive programs. 
This novel concept requires the traces under consideration to have a similar call-return behaviour.

We call the new logic mumbling $H_\mu$ where the notion of \textit{mumbling} is a counterpart to the notion of \textit{stuttering} from HyperLTL$_S$ similar to how stuttering and mumbling are used as counterparts in a classic paper by Brookes \cite{Brookes1996}:
Stuttering describes the repetition of equal states while mumbling describes suppression of intermediate states. 
It turns out that mumbling is a more powerful jump mechanism than stuttering if only LTL modalities are used in jump criteria.
Surprisingly, the difference vanishes if arbitrary fixpoint operators are allowed for the definition of jump criteria.

The use of fixpoints on the trace and multitrace level gives mumbling $H_\mu$ the power to specify arbitrary $\omega$-regular properties on both levels.
Despite this and the other additions, the model checking problem against finite state models has the same complexity as for the less expressive logic simple HyperLTL$_S$ under analogous restrictions necessary for decidability.
In addition, it turns out that the model checking problem for PDS is decidable for mumbling $H_\mu$ with well-aligned modalities and the above-mentioned restriction even though already synchronous HyperLTL model checking is undecidable for such models \cite{Pommellet2018}.
Thus, our approach provides the first model checking algorithm for an asynchronous hyperlogic on PDS. 
Moreover, it is the first application of CaRet-like non-regular operators to the hyperproperty setting.
In summary:
\begin{itemize}
	\item We introduce mumbling $H_\mu$, an asynchronous hyperlogic with several extensions compared to HyperLTL$_S$ and present examples highlighting the merits of the new logic (\cref{sec:logic}).
	\item We show that the finite state model checking complexity for mumbling $H_\mu$ coincides with that of simple HyperLTL$_S$ under an analogous restriction despite the extensions (\cref{sec:fsmodelchecking}).
	\item We introduce well-aligned modalities and present a technique able to handle these modalities for decidable PDS model checking (\cref{sec:pdmodelchecking}).
	The technique is also of independent interest as it can be transferred to other hyperlogics for decidable PDS model checking.
	\item We compare mumbling to stuttering with respect to expressivity and show that it is more expressive for criteria defined by LTL formulae and equally expressive in the presence of fixpoints (\cref{sec:expressivity}).
	These results require some heavy technical work due to the intricacies of the definition of stuttering and mumbling modalities.
\end{itemize}
	\section{Preliminaries}\label{section:preliminaries}

Without further ado, we introduce notation, models and results used throughout the paper. 
This section may be skipped on first reading and can be consulted for reference later.

\textbf{Pushdown Systems and Kripke Structures.}
We start by introducing models for recursive systems and systems with a finite state space.
For this, let $\AP$ be a finite set of atomic propositions, $\Theta$ be a finite set of stack symbols and $\bot \not\in \Theta$ be a special bottom of stack symbol.
We model recursive systems by structures $\mathcal{PD} = (S,S_0,R,L)$ called \textit{pushdown systems} (PDS) where $S$ is a finite set of control locations, $S_0 \subseteq S$ is a set of initial control locations and $L \colon S \to 2^{\AP}$ is a labeling function.
The transition relation $R \subseteq (S \times S) \dot\cup (S \times S \times \Theta) \dot\cup (S \times \Theta \times S)$ consists of three kinds of transitions: Internal transitions from $S \times S$, push transitions from $S \times S \times \Theta$ and return transitions from $S \times \Theta \times S$.
The semantics of a PDS $\mathcal{PD} = (S,S_0,R,L)$ is based on \textit{configurations}, i.e. pairs $c = (s,u)$ where $s \in S$ is a control location and $u \in \Theta^* \bot$ is a stack content ending in $\bot$.
The set of all configurations of $\mathcal{PD}$ is denoted by $\conf(\mathcal{PD})$.
For the definition of the semantics, let $c = (s,u)$ and $c' = (s',u')$ be configurations.
We call $c'$ an internal successor of $c$, denoted by $c \rightarrow_{\intern} c'$, if there is a transition $(s,s') \in R$ and $u = u'$.
We call $c'$ a call successor of $c$, denoted by $c \rightarrow_{\call} c'$, if there is a transition $(s,s',\theta) \in R$ and $u' = \theta u$.
We call $c'$ a return successor of $c$, denoted by $c \rightarrow_{\ret} c'$, if there is a transition $(s,\theta,s') \in R$ and $u = \theta u'$.
A \textit{path} of $\mathcal{PD}$  is an infinite alternating sequence $p = c_0 m_0 c_1 m_1 \dots \in (\conf(\mathcal{PD}) \cdot \{\intern,\call,\ret\})^\omega$ such that $c_0 = (s_0,\bot)$ for some $s_0 \in S_0$ and $c_i \rightarrow_{m_i} c_{i+1}$ holds for all $i \geq 0$.
Paths of a system induce sequences of visible system behaviour called \textit{traces}; an infinite trace is an infinite sequence from $\Traces := (2^\AP \cdot \{\intern,\call,\ret\})^\omega$ and a finite trace is a finite sequence from $(2^\AP \cdot \{\intern,\call,\ret\})^* \cdot 2^\AP$.
The trace induced by the path $p$ is $L(c_0) m_0 L(c_1) m_1 \dots \in \Traces$ where $L((s,u))$ is given by $L(s)$.
We write $\Paths(\mathcal{PD})$ for the set of paths of $\mathcal{PD}$ and $\Traces(\mathcal{PD})$ for the set of traces induced by paths in $\Paths(\mathcal{PD})$.
Our model for finite state systems, \textit{Kripke structures}, is defined as a special case of a PDS $\mathcal{PD} = (S,S_0,R,L)$ where $R \subseteq S \times S$, i.e. a PDS with only internal transitions.
In order to highlight this case, we use $\mathcal{K}$ instead of $\mathcal{PD}$ to denote Kripke structures.
As all transition labels are $\intern$ in traces generated by Kripke structures, we omit these labels and write traces as sequences from $(2^\AP)^\omega$.
Also, we introduce \textit{fair} variants of these two system models.
A \textit{fair pushdown system} is a pair $(\mathcal{PD},F)$ where $\mathcal{PD} = (S,S_0,R,L)$ is a PDS and $F \subseteq S$ is a set of target states.
$\Paths(\mathcal{PD},F)$ is the set of paths of $\mathcal{PD}$ that visit states in $F$ infinitely often (\textit{fair paths}).
Then, $\Traces(\mathcal{PD},F)$ is the set of traces induced by $\Paths(\mathcal{PD},F)$.
A \textit{fair Kripke structure} is defined analogously.

\textbf{Words and traces.}
We introduce our notation for common operations on words.
For infinite words $w = w_{0} w_{1} \dots \in \Sigma^{\omega}$ over an alphabet $\Sigma$, we use $w(i) = w_i$ to denote the letter at position $i$ of $w$ and $w[i] = w_i w_{i+1} \dots$ for the suffix of $w$ starting at position $i$. 
Furthermore, for $i \leq j$ we write $w[i,j] = w_i w_{i+1} \dots w_j$ for the subword from position $i$ to position $j$ of $w$.
For traces $\tr = P_0 m_0 P_1 m_1 \dots$, we slightly alter these notations in order to improve readability and write $\tr(i)$ for the symbol $P_i$, $\tr[i]$ for the infinite trace $P_i m_i P_{i+1} m_{i+1} \dots$ and $\tr[i,j]$ for the finite trace $P_i m_i \dots P_{j-1} m_{j-1} P_j$.
The same applies to paths.
For finite and infinite traces $\tr$, we use $\tr_{|\mathit{ts}}$ to denote their restriction to their transition symbols, i.e. if $\tr = P_0 m_0 P_1 m_1 \dots m_{n-1} P_n$ then $\tr_{|\mathit{ts}} = m_0 m_1 \dots m_{n-1}$.
Also, we introduce some successor and predecessor functions as in \cite{Alur2004}.
Intuitively, the global successor always moves to the next index and the backwards predecessor moves to the previous index while the abstract successor skips over procedure calls and the caller predecessor moves back to the point where the current procedure was called.
Formally, we define several functions $f \colon \Traces \times \mathbb{N}_0 \to \mathbb{N}_0$ (or partial functions $f \colon \Traces \times \mathbb{N}_0 \rightsquigarrow \mathbb{N}_0$) that are interpreted as follows: if $f(\tr,i) = j$, then $f$ moves from $\tr(i)$ to $\tr(j)$.
We define the global successor function $\Succ_\succglobal \colon \Traces \times \mathbb{N}_0 \to \mathbb{N}_0$ by 
$\Succ_\succglobal(\tr,i) = i + 1$.
The backwards predecessor function $\Succ_\succback \colon \Traces \times \mathbb{N}_0 \rightsquigarrow \mathbb{N}_0$ is partial where $\Succ_\succback(\tr,i) = i - 1$ if $i > 0$ and is undefined otherwise.
For the definition of the remaining two functions, let $\calls(\tr,i,j) = |\{k \mid i \leq k < j \text{ and } \tr_{|\mathit{ts}}(k) = \call \}|$ be the number of calls between positions $i$ and $j$ on $\tr$ and $\rets(\tr,i,j) = |\{k \mid i \leq k < j \text{ and } \tr_{|\mathit{ts}}(k) = \ret \}|$ be the number of returns between positions $i$ and $j$ on trace $\tr$.
Then, the abstract successor function $\Succ_\succabstract \colon \Traces \times \mathbb{N}_0 \rightsquigarrow \mathbb{N}_0$ is the partial function such that $\Succ_\succabstract(\tr,i) = i + 1$, if $\tr_{|\mathit{ts}}(i) = \intern$, $\Succ_\succabstract(\tr,i) = \mathit{min}\ S$, where $S = \{j \mid j > i, \calls(\tr,i,j) = \rets(\tr,i,j) \}$, if $\tr_{|\mathit{ts}}(i) = \call$ and the set $S$ is non-empty, and is undefined otherwise.
Our definition of the abstract successor differs slightly from that in \cite{Alur2004}, where it is defined on words over an extended alphabet $\Sigma \times \{\intern,\call,\ret\}$ and moves from a call to the matching return.
Instead, we move from the propositional position before a call to the propositional position after the matching return, which is more natural in our scenario since it ensures that both positions have the same stack level.
Finally, the caller function $\Succ_\succcaller \colon \Traces \times \mathbb{N}_0 \rightsquigarrow \mathbb{N}_0$ is the partial function such that $\Succ_\succcaller(\tr,i) = \mathit{max}\ S$, where $S = \{j \mid j < i, tr_{|\mathit{ts}}(j) = \call,\calls(\tr,j+1,i) = \rets(\tr,j+1,i) \}$, if the set $S$ is non-empty, and is undefined otherwise.

\textbf{Multi-Automata.}
In one of our constructions, we use multi-automata \cite{Bouajjani1997} to represent certain sets of configurations of a PDS.
Formally, let $\mathcal{PD} = (S,S_0,R,L)$ be a PDS with $S = \{s_1,\dots,s_m\}$ and stack alphabet $\Theta$.
A \textit{$\mathcal{PD}$-multi-automaton} is a tuple $\mathcal{A} = (Q,Q_0,\rho,F)$ where $Q$ is a finite set of states, $Q_0 = \{q_1,\dots,q_m\} \subseteq Q$ is a set of initial states, $\rho \colon Q \times \Theta \to 2^Q$ is a transition function and $F$ is a set of final states.
The transition relation $\rightarrow\, \subseteq Q \times \Theta^* \times Q$ is the smallest relation such that (i) $q \rightarrow_{\varepsilon} q$ for all $q \in Q$ and (ii) $q \rightarrow_{u} q''$ and $q' \in \rho(q'',\theta)$ implies $q \rightarrow_{u \cdot \theta}q'$.
A configuration $c = (s_i,u)$ is \textit{recognised} by $\mathcal{A}$ iff $q_i \rightarrow_{u} q$ for some $q \in F$.
By slight abuse of notation, we sometimes identify $q_i$ with $s_i$ and write $q_i = s_i$ for $q_i \in Q$ and $s_i \in S$.
The set of configurations recognised by $\mathcal{A}$ is denoted by $\conf(\mathcal{A})$.
The following result is a corollary of Proposition 3.1 in \cite{Bouajjani1997}.

\begin{proposition}\label{prop:multiautomaton}
	For any fair pushdown system $(\mathcal{PD},F)$, there is a $\mathcal{PD}$-multi-automaton $\mathcal{A}$ with size linear in $|\mathcal{PD}|$ such that $\conf(\mathcal{A}) = \{ c \in \conf(\mathcal{PD}) \mid \text{there is a fair path in } (\mathcal{PD},F) \text{ starting in } c\}$.
\end{proposition}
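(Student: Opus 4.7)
The plan is to reduce fair reachability in $(\mathcal{PD},F)$ to ordinary backward reachability and then invoke the classical construction cited from Bouajjani et al.

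\textbf{Step 1: The $pre^*$ construction.} Proposition 3.1 of \cite{Bouajjani1997} states that, given a $\mathcal{PD}$-multi-automaton $\mathcal{A}_0$ recognising a regular set $C \subseteq \conf(\mathcal{PD})$, one can construct a $\mathcal{PD}$-multi-automaton $\mathcal{A}_{pre^*}$ of size linear in $|\mathcal{PD}| + |\mathcal{A}_0|$ that recognises $pre^*(C) = \{ c \in \conf(\mathcal{PD}) \mid c \to^* c' \text{ for some } c' \in C \}$, where $\to$ is the union of the internal, call and return successor relations. I will use this as a black box.

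\textbf{Step 2: Reduce fair reachability to $pre^*$.} A standard head-analysis argument for Büchi pushdown systems shows that a fair path exists from $c \in \conf(\mathcal{PD})$ iff $c$ can reach a configuration with a \emph{repeating head}. Call a pair $(s,\theta) \in S \times (\Theta \cup \{\bot\})$ a repeating head if there is a finite run of $\mathcal{PD}$ that starts in $(s, \theta u)$ for some stack $u$, visits a state of $F$, never pops $\theta$, and ends in some configuration $(s, w\theta u)$. By a standard pumping argument, $c$ admits a fair path iff $c \in pre^*(H)$ where $H$ is the regular set $\{(s, \theta u) \mid (s,\theta) \text{ is a repeating head},\, u \in \Theta^* \bot\}$.

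\textbf{Step 3: Compute $H$ and build the automaton.} The set of repeating heads can be computed in polynomial time by the well-known saturation procedure on the finite ``head graph'' whose vertices are pairs in $S \times (\Theta \cup \{\bot\})$ and whose edges encode sequences of the form $(s, \theta) \to^+ (s', \theta'\theta)$ (push-only runs) together with marks recording whether $F$ is visited. Since $H$ is determined by the top-of-stack symbol and control state alone, it is recognised by a multi-automaton $\mathcal{A}_H$ of size linear in $|\mathcal{PD}|$: take initial states $\{q_1,\dots,q_m\}$ as usual, a single accepting sink state $q_f$, and for every repeating head $(s_i,\theta)$ add a transition $q_i \xrightarrow{\theta} q_f$ together with self-loops $q_f \xrightarrow{\theta'} q_f$ for every $\theta' \in \Theta\cup\{\bot\}$. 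Applying Step 1 to $\mathcal{A}_H$ yields the desired $\mathcal{A}$ of size linear in $|\mathcal{PD}|$.

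The only real obstacle is the correctness of the reduction in Step~2: one must check carefully that any fair path must, by König's lemma on stack heights, either stabilise the stack above some suffix and loop on its head while hitting $F$ infinitely often, or keep growing it; both cases are witnessed by a repeating head reachable from $c$. Conversely, once a repeating head is reached, iterating the finite cycle from Step~3 produces a genuine fair infinite path. Linearity of the final size follows because $|H|$ is bounded by $|S| \cdot (|\Theta|+1)$, so $|\mathcal{A}_H| = O(|\mathcal{PD}|)$, and the $pre^*$ construction preserves this bound up to a constant factor.
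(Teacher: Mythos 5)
Your proposal is correct and is essentially the proof the paper relies on: the paper gives no argument of its own but derives the proposition directly from Proposition~3.1 of the cited work of Bouajjani, Esparza and Maler, and that result is established precisely by the repeating-heads characterisation plus the $pre^*$ saturation that you reconstruct (your ``K\"onig's lemma'' step is really the standard pigeonhole argument on positions whose top-of-stack is never popped again, but the reasoning is the right one). The only point worth noting is that the claimed linear size bound should be read as counting states, since saturation preserves the state set of the multi-automaton while possibly adding transitions.
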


\textbf{Visibly Pushdown Automata.}
Next, we use visibly pushdown automata \cite{Alur2004a} in some of our constructions.
These automata are a variant of conventional pushdown automata, i.e. automata with access to a stack, but have better closure and decidability properties.
Their input alphabet is called a finite \textit{visibly pushdown alphabet}, i.e. an alphabet $\Sigma = \Sigma_{\mathtt{i}} \dot\cup \Sigma_{\mathtt{c}} \dot\cup \Sigma_{\mathtt{r}}$ partitioned into alphabets $\Sigma_{\mathtt{i}}$ of internal symbols, $\Sigma_{\mathtt{c}}$ of call symbols and $\Sigma_{\mathtt{r}}$ of return symbols.
Like PDS, they are defined over a finite stack alphabet $\Theta$ and a special bottom of stack symbol $\bot \not\in \Theta$.
Formally, a (nondeterministic) \textit{visibly pushdown automaton} (VPA) over $\Sigma$ and $\Theta$ is a tuple $\mathcal{A} = (Q,Q_0,\rho,F)$ where $Q$ is a finite set of states, $Q_0 \subseteq Q$ is a set of initial states and $F \subseteq Q$ is a set of final states.
The transition function $\rho \colon Q \times \Sigma \to 2^Q \cup 2^{Q \times (\Theta \cup \{\bot\})}$ allows transitions of three different types: (i) if $\sigma \in \Sigma_{\mathtt{i}}$, then $\rho(q,\sigma) \in 2^{Q}$ holds and $\rho(q,\sigma)$ is a set of internal transitions, (ii) if $\sigma \in \Sigma_{\mathtt{c}}$, then $\rho(q,\sigma) \in 2^{Q \times \Theta}$ holds  and $\rho(q,\sigma)$ is a set of call transitions, and (iii) if $\sigma \in \Sigma_{\mathtt{r}}$, then we have $\rho(q,\sigma) \in 2^{Q \times (\Theta \cup \{\bot\})}$ and $\rho(q,\sigma)$ is a set of return transitions.
Intuitively, seeing a symbol from $\Sigma_{\mathtt{i}}$, $\Sigma_{\mathtt{c}}$ and $\Sigma_{\mathtt{r}}$ \textit{forces} a VPA to make an internal, a call and a return transition, respectively.

Formally, a \textit{run} of a VPA $\mathcal{A}$ over an infinite word $w_0 w_1 \dots \in \Sigma^\omega$ is an infinite sequence $(q_0,u_0)\allowbreak (q_1,u_1)\dots \in (Q \times \Theta^*\bot)^\omega$ such that $q_0 \in Q_0$, $u_0 = \bot$ and for all $i \geq 0$ (i) if $w_i \in \Sigma_{\mathtt{i}}$, then $q_{i+1} \in \rho(q_i,w_i)$ and $u_i = u_{i+1}$, (ii) if $w_i \in \Sigma_{\mathtt{c}}$, then $(q_{i+1},\theta) \in \rho(q_i,w_i)$ and $u_{i+1} = \theta u_i$ for some $\theta \in \Theta$, and (iii)  if $w_i \in \Sigma_{\mathtt{r}}$, then $(q_{i+1},\theta) \in \rho(q_i,w_i)$ and either $u_i = \theta u_{i+1}$ for $\theta \in \Theta$ or $u_i = u_{i+1} = \theta = \bot$.
A run $(q_0,u_0)(q_1,u_1)\dots \in (Q \times \Theta^*\bot)^\omega$ is accepting iff $q_i \in F$ for infinitely many $i$.
A VPA $\mathcal{A}$ accepts a word $w$ iff there is an accepting run of $\mathcal{A}$ over $w$.
We use $\mathcal{L}(\mathcal{A})$ to denote the set of words accepted by $\mathcal{A}$.
For VPA, the following proposition holds:

\begin{proposition}[\cite{Alur2004a}]\label{prop:vpacomplementation}
	For any VPA, there is a VPA with an exponentially larger number of states for the com\-ple\-ment language.
	The VPA emptiness problem is in $\PTIME$.
\end{proposition}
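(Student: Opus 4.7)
The statement bundles two independent assertions: closure under complementation with exponential state blowup, and polynomial-time emptiness. Since VPA are a syntactic restriction of ordinary pushdown automata with Büchi acceptance, emptiness follows directly from classical pushdown emptiness: I would reduce a given VPA to a PDS $\mathcal{PD}$ in the obvious way (the visibly pushdown alphabet dictates the type of each transition, which matches the $\intern$/$\call$/$\ret$ structure of $\mathcal{PD}$), equip it with the Büchi condition as a fair acceptance set $F$, and apply the multi-automaton saturation procedure underlying \cref{prop:multiautomaton}: the VPA accepts some word iff the set of configurations from which a fair run exists intersects the initial configuration, and this set is computable in polynomial time via the Bouajjani-Esparza-Maler construction.

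For complementation the natural plan is the standard route: first \emph{determinize}, then flip acceptance, then convert back to a nondeterministic Büchi VPA. The determinization must be tailored to VPAs because an unrestricted subset construction would not respect the stack discipline. The idea is to push onto the stack, at each call, a \emph{summary} describing which control states can be reached at the matching return given which control states were active at the call. Concretely, a deterministic macro-state is a set of pairs $R \subseteq Q \times Q$ tracking the summary from the last push, plus a subset $S \subseteq Q$ of currently possible control states; on a call, one pushes the current $R$ together with $S$ and starts a fresh identity summary, on a return one pops and composes the summaries. This yields a deterministic VPA with $2^{O(|Q|^2)}$ states.

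Once we have a deterministic VPA, complementation on $\omega$-words is not quite as easy as flipping final states, because deterministic Büchi VPAs are strictly weaker than nondeterministic ones. The standard workaround is to determinize straight into a deterministic \emph{Muller} (or parity) VPA, where the above summary construction is adapted with an additional coloring that records which accepting states have been visited along each summary segment; for deterministic Muller VPAs complementation is immediate by complementing the accepting family, and the result can then be translated back to a nondeterministic Büchi VPA with only polynomial further blowup. Altogether the number of states of the resulting Büchi VPA is $2^{O(|Q|^2)}$, matching the claimed exponential bound.

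The main technical obstacle is squarely the determinization step: correctly defining the composition of summaries across call/return pairs and proving that the deterministic machine accepts exactly the original language. This is where the visible nature of the stack alphabet is essential, since it ensures that the automaton knows which symbol on the stack corresponds to which position in the input, so that summaries can be tracked locally without the exponential ambiguity that makes general $\omega$-PDA undeterminizable. Both subclaims are established in \cite{Alur2004a}, so for a self-contained write-up I would cite that paper for the determinization construction and only expand the routine reduction to PDS reachability for the emptiness part.
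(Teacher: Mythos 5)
The paper offers no proof of this proposition; it is imported verbatim from \cite{Alur2004a}, so your sketch has to be measured against the constructions in that reference. Your treatment of emptiness is fine: a VPA is in particular a B\"uchi pushdown automaton, and the saturation procedure behind \cref{prop:multiautomaton} decides B\"uchi emptiness for pushdown systems in polynomial time, so reducing the VPA to a fair PDS and testing whether an initial configuration admits a fair run is a correct and standard argument.

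The complementation half, however, rests on a step that is known to fail. The VPA in this paper run over \emph{infinite} words with B\"uchi acceptance, and Alur and Madhusudan prove that nondeterministic B\"uchi VPA are strictly more expressive than deterministic VPA under \emph{any} of the classical acceptance conditions (B\"uchi, Muller, parity); there is no McNaughton/Safra analogue in the visibly pushdown setting. So the plan ``determinize into a deterministic Muller VPA, complement the accepting family, translate back'' cannot be carried out: the first step already loses languages. The summary construction you describe (macro-states consisting of a relation $R \subseteq Q \times Q$ plus a reachable set $S \subseteq Q$, pushed at calls and composed at returns) is precisely the determinization for VPA over \emph{finite} words and does not lift to $\omega$-words. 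The actual complementation in \cite{Alur2004a} avoids determinization: it decomposes the input according to its matched/unmatched structure, replaces each maximal well-matched infix by a summary letter recording the induced state transformation together with a bit indicating whether an accepting state occurs inside it, complements the resulting nondeterministic B\"uchi automaton over the summarized word, and lets a nondeterministic VPA guess and verify the summarization on the fly; this yields the claimed $2^{\mathcal{O}(n^2)}$ bound. Your emptiness argument can stay; the complementation argument needs to be replaced by this direct construction (or simply by the citation).
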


\textbf{2-way Alternating Jump Automata and their subclasses.}
We now define 2-way Alternating Jump Automata \cite{Bozzelli2007}, a model that provides a direct way to navigate over input words using the global and abstract successor as well as backwards and caller predecessor types previously defined in this section.
The corresponding functions defined on traces previously are straightforwardly extended to words over a visibly pushdown alphabet $\Sigma$.
Also, we use $\mathit{DIR} = \{\succglobal,\succabstract,\succback,\succcaller\}$ for the set of corresponding directions.
A \textit{2-way Alternating Jump Automaton} (2-AJA) is a tuple $\mathcal{A} = (Q,Q_0,\rho,\Omega)$ where $Q$ is a finite set of states, $Q_0 \subseteq Q$ is a set of initial states, $\Omega \colon Q \to \{0,1,\dots,k\}$ is a priority assignment and $\rho \colon Q \times \Sigma \to \mathcal{B}^+(\mathit{DIR} \times Q \times Q)$ is a transition function where $\mathcal{B}^+(\mathit{DIR} \times Q \times Q)$ denotes positive boolean formulae over $(\mathit{DIR} \times Q \times Q)$.
In the transition function, a triple $(\dir,q,q')$ denotes that if the $\dir$-successor or predecessor exists in the current position $i$, the automaton starts a copy in state $q$ at this successor or predecessor and else starts a copy in state $q'$ at position $i+1$.
We assume that every 2-AJA has two distinct states $\true$ and $\false$ with priority $0$ and $1$, respectively, such that $\rho(\true,\sigma) = (\succglobal,\true,\true)$ and $\rho(\false,\sigma) = (\succglobal,\false,\false)$ for all $\sigma \in \Sigma$.
We define several commonly used automata models as special cases of 2-AJA.
In particular, an Alternating Parity Automaton (APA) is a 2-AJA with a transition function that maps to $\mathcal{B}^+(\{\succglobal\} \times Q \times Q)$.
An APA with a priority assignment $\Omega$ where $\Omega(q) \in \{0,1\}$ for every $q$ and a transition function $\rho$ mapping to disjunctions only is called a Nondeterministic Büchi Automaton (NBA).
As usual for NBA,  we define the acceptance condition by the set $F$ of states with priority $0$ and write $\rho(q,\sigma)$ as a set of states.

We now define the semantics of 2-AJA.
A \textit{tree} $T$ is a subset of $\mathbb{N}^{*}$ such that for every node $t \in \mathbb{N}^{*}$ and every positive integer $n \in \mathbb{N}$: $t \cdot n \in T$ implies (i) $t \in T$ (we then call $t \cdot n$ a child of $t$), and (ii) for every $0 < m < n$, $t \cdot m \in T$. 
We assume every node has at least one child.
A path in a tree $T$ is a sequence of nodes $t_0 t_1 \dots$ such that $t_0 = \varepsilon$ and $t_{i+1}$ is a child of $t_i$ for all $i \in \mathbb{N}_{0}$.
A $(q,j)$-\textit{run} of a 2-AJA over an infinite word $w = w_0 w_1 \dots \in \Sigma^\omega$ is a $\mathbb{N} \times Q$-labelled tree $(T,r)$ where $r \colon T \to \mathbb{N} \times Q$ is a labelling function that satisfies $r(\varepsilon) = (j,q)$ and for all $t \in T$ labelled $r(t) = (i,q')$ we have a set $\{(\dir_1,q_1',q_1''),\dots,(\dir_l,q_l',q_l'')\}$ satisfying $\rho(q',w_i)$ and children $t_1,\dots,t_l$ that are labelled as follows: for all $1 \leq h \leq l$, if $\Succ_{\dir_h}(w,i)$ is undefined, then $r(t_h) = (i+1,q_h'')$, else $r(t_h) = (\Succ_{\dir_h}(w,i),q_h')$.
A $(q,j)$-run of an AJA is \textit{accepting} iff for every path in the run the lowest priority occuring infinitely often on that path is even.
$\mathcal{A}$ accepts a word $w$ iff there is an accepting $(q_0,0)$-run of $\mathcal{A}$ over $w$ for some $q_0 \in Q_0$.
We write $\mathcal{L}(\mathcal{A})$ for the set of words accepted by $\mathcal{A}$.
For 2-AJA and their subclasses, the following propositions hold:

\begin{proposition}[\hspace{1sp}\cite{Bozzelli2007}]\label{prop:ajadealternation}
	For every 2-AJA with $n$ states, there is a VPA with a number of states exponential in $n$ accepting the same language.\footnote{The definition of abstract successors in \cite{Bozzelli2007} differs slightly from the one we use here.
	However, a 2-AJA using our definition can straightforwardly be translated to an equivalent 2-AJA using the definition from \cite{Bozzelli2007} so that \cref{prop:ajadealternation} also applies to the definition presented here.}
\end{proposition}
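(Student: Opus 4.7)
The plan is to combine two standard constructions, each adapted to the visibly pushdown setting: first eliminate the two-way moves and jumps by pre-computing summaries over well-matched blocks, then eliminate alternation via a Safra/Piterman-style construction, ending up with a nondeterministic VPA whose stack simply mirrors the input's call/return structure.

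\textbf{Step 1: from 2-way to 1-way.} The crucial observation is that over a visibly pushdown alphabet the caller predecessor, abstract successor and backwards predecessor are entirely determined by the well-matched bracketing of the input. For every call position $i$ I pre-compute a \emph{summary} recording, for each pair of states $(q,q')$ and each priority $p$, whether a copy of the 2-AJA started in $q$ at position $i$ admits an accepting sub-run on the well-matched block between $i$ and $\mathit{succ}_{\succabstract}(w,i)$ that ends in state $q'$ at the matching return while the minimum priority seen on that sub-run is $p$; an analogous summary handles copies launched via backwards and caller predecessors. Guessing these summaries upfront and pushing/popping them synchronously with the input's call and return symbols (so that the stack alphabet carries the bookkeeping) produces an equivalent one-way alternating automaton over the visibly pushdown alphabet whose state space is exponential in $n$, because states now encode summaries.

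\textbf{Step 2: from 1-way alternating parity to nondeterministic VPA.} A one-way alternating parity automaton over a visibly pushdown alphabet can be turned into a VPA using the standard Piterman/Miyano--Hayashi style construction: maintain a list of currently active universal obligations together with a latest-appearance or parity record, guess existential choices, and require every universal branch to visit its obligations cofinally. Since the stack operations of the resulting VPA are dictated by the input alphabet, the stack only needs to transport these obligation records across matching call/return pairs. The blow-up in this step is again singly exponential in the state count produced by Step 1 and hence still exponential in $n$ overall.

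The main obstacle is the correctness of Step 1. In a plain Büchi setting it would be enough to record which pairs $(q,q')$ are feasible and a single accepting flag, but for a parity condition with $k$ priorities the summaries must additionally store the minimum priority seen along each summarised branch, and one has to show that these summaries compose correctly under concatenation of well-matched blocks \emph{and} under nesting (so that global parity verdicts on runs crossing many such blocks, possibly via abstract successors or caller predecessors, remain sound and complete). Once compositionality of summaries is established, the translation to a 1-way alternating automaton and the subsequent alternation removal are routine and jointly yield the stated exponential bound, matching the claim of \cref{prop:ajadealternation}.
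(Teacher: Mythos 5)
First, note that the paper does not actually prove \cref{prop:ajadealternation}: it is imported from \cite{Bozzelli2007}, and the only original content is the footnote compiling away the slightly different abstract-successor convention. Judged as a reconstruction of that argument, your proposal has a genuine gap in the complexity accounting. You stage the construction as (i) eliminate two-wayness and jumps, producing a \emph{one-way alternating} automaton whose states, by your own description, ``encode summaries'' and are therefore already exponential in $n$, and then (ii) dealternate, which you describe as ``again singly exponential in the state count produced by Step 1.'' An exponential of an exponential is doubly exponential in $n$, so this pipeline does not yield the stated bound. The construction in \cite{Bozzelli2007} avoids this by performing both eliminations in a single pass: the nondeterministic VPA guesses an annotation of the input word by sets of (state, rank) pairs, checks local consistency of that annotation against the transition function (which simultaneously resolves the existential choices, the universal branching, and the $\succback$/$\succcaller$ moves), and pushes the guessed summary of each well-matched block at the call so that it can be verified at the matching return; everything lives in one subset-like state space of size exponential in $n$.

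The second problem is the shape of your summaries. A sub-run of an \emph{alternating} automaton over a well-matched block is a tree, not a path: it may have many exit points (at the matching return, but also backward out of the block via $\succback$ or $\succcaller$), and it may contain branches that never leave the block at all. A relation over triples $(q,q',p)$ recording ``the minimum priority seen'' therefore does not determine acceptance; per entry state one needs a \emph{set} of exit obligations together with a certificate for the non-exiting branches, and since a single branch can revisit the same position infinitely often by alternating forward and backward moves, a minimum priority per traversal cannot be composed into a global parity verdict. This is precisely what the ranking/progress-measure annotation in the two-way-elimination literature is for. You correctly flag compositionality of summaries as the crux, but the fix is not a refinement of the $(q,q',p)$ bookkeeping; it is the switch to a rank-annotated, single-exponential guess-and-verify construction.
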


\begin{proposition}[\hspace{1sp}\cite{Dax2008}]\label{prop:paritydealternation}
	For any APA with $n$ states and $k$ priorities, there is an NBA with $2^{\mathcal{O}(n \cdot k \cdot \log(n))}$ states accepting the same language.
\end{proposition}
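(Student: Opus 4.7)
The plan is to prove this via the standard dealternation pipeline for alternating parity automata, combining a Safra-style tree construction with a careful state-encoding argument. Since the proposition is a known result from \cite{Dax2008}, I would focus on reconstructing the pipeline that yields the stated $2^{\mathcal{O}(n \cdot k \cdot \log(n))}$ bound rather than redesigning the construction from scratch.

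First, I would eliminate alternation to obtain a nondeterministic parity tree automaton over the full run tree. The canonical approach is to view an accepting run of an alternating parity automaton as a strategy in a two-player parity game: the $\exists$-player resolves the disjunctions while the $\forall$-player resolves the conjunctions, and the strategy is memoryless by positional determinacy for parity games. This step produces a nondeterministic parity automaton whose states are either Safra-like trees of subsets of $Q$ (with $\mathcal{O}(n)$ nodes, each labelled by a subset of $Q$ and a priority in $\{0,\dots,k\}$) or progress-measure functions $Q \to \{0,\dots,\mathcal{O}(n \cdot k)\}$. Either way the number of such states is bounded by $n^{\mathcal{O}(n \cdot k)} = 2^{\mathcal{O}(n \cdot k \cdot \log n)}$.

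Second, I would convert the resulting nondeterministic parity automaton into an NBA without a significant state blow-up. The standard index-appearance record trick, or equivalently the latest appearance record used in Safra's construction, multiplies the state space by a factor polynomial in $k$, which is absorbed into the $\mathcal{O}(n \cdot k \cdot \log n)$ exponent. Correctness reduces to showing that the parity acceptance condition on the tree/progress-measure level is faithfully captured by a Büchi condition that marks those states witnessing a stabilisation of the minimum even priority seen infinitely often.

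The main obstacle is achieving precisely the $\log n$ factor in the exponent instead of the naïve $2^{\mathcal{O}(n \cdot k)}$ or the much larger $2^{2^{\mathcal{O}(n)}}$ one obtains by composing generic dealternation with determinisation. Keeping the bound tight requires either the progress-measure encoding of Jurdziński, which directly yields $\mathcal{O}(n \cdot k)$-sized ranks per state, or a careful bookkeeping of Safra trees where the tree shape and priority labels contribute the $n \log n$ and $\log k$ factors respectively. I would therefore present the progress-measure variant, since it gives the cleanest state-space count and matches the bound from \cite{Dax2008} directly, leaving the verification that the NBA's acceptance condition is equivalent to the APA's parity condition as the only nontrivial correctness argument.
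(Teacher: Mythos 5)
The paper does not prove this proposition at all: it is imported as a black-box result from \cite{Dax2008}, so there is no in-paper argument to compare yours against. Judged on its own terms, your sketch has the right overall flavour (memoryless strategies via positional determinacy of parity games, a run-DAG of width $n$, a rank-based encoding of acceptance, and a final parity-to-B\"uchi step), which is indeed the spirit of the Dax--Klaedtke alternation-elimination-by-complementation pipeline.

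However, there is a concrete gap in the core of your construction. A \emph{scalar} progress measure $Q \to \{0,\dots,\mathcal{O}(n\cdot k)\}$ does not suffice for a general parity condition: single-integer ranks in the Kupferman--Vardi style certify co-B\"uchi (two-priority) conditions, but for $k$ priorities the Jurdzi\'nski-style measure is a \emph{tuple} indexed by the odd priorities, i.e.\ an element of a domain of size $n^{\mathcal{O}(k)}$ per state. It is exactly this tuple structure that produces the $k$ factor in the exponent: functions $Q \to n^{\mathcal{O}(k)}$ number $2^{\mathcal{O}(n\cdot k\cdot\log n)}$, whereas the state space you actually describe, functions $Q \to \{0,\dots,\mathcal{O}(nk)\}$, numbers only $2^{\mathcal{O}(n\log(nk))}$ --- so your own count $n^{\mathcal{O}(n\cdot k)}$ does not follow from the objects you define, and the objects you define are too weak to capture the parity condition. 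Two smaller issues: Safra trees belong to determinisation, not dealternation, and are a red herring here; and no index-appearance-record machinery is needed for the last step, since a nondeterministic parity automaton becomes an NBA by guessing the minimal even priority seen infinitely often, at a cost of only a factor $\mathcal{O}(k)$. Fixing the rank domain to the odd-priority-indexed tuples and redoing the count would close the gap and recover the stated bound.
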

\begin{proposition}\label{thm:parityemptiness}
The emptiness problem is in $\PSPACE$ for APA and in $\NLOGSPACE$ for NBA.
\end{proposition}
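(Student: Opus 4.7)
The plan is to prove the two complexity bounds separately, starting with the easier NBA case and then handling APA via a reduction to parity games on an implicitly represented arena.

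For NBA emptiness, I would use the standard characterisation: an NBA $\mathcal{A} = (Q,Q_0,\rho,F)$ accepts some word iff there exist states $q_0 \in Q_0$ and $q_F \in F$ such that $q_F$ is reachable from $q_0$ in the transition graph of $\mathcal{A}$ and $q_F$ is reachable from itself by a nonempty path. Both reachability queries are instances of directed graph reachability on a graph of size polynomial in $|\mathcal{A}|$, which lies in $\NLOGSPACE$. The overall algorithm nondeterministically guesses $q_0$ and $q_F$ using $O(\log |Q|)$ bits, then performs two reachability checks in sequence, each reusing the same logarithmic workspace; closure of $\NLOGSPACE$ under composition finishes this part.

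For APA emptiness, I would proceed via the standard correspondence between alternating acceptance and two-player games. Given an APA $\mathcal{A} = (Q,Q_0,\rho,\Omega)$, construct a parity game $\mathcal{G}$ whose positions are pairs consisting of a state of $\mathcal{A}$ together with a subformula of some $\rho(q,\sigma)$; positions in which the subformula is a disjunction or a triple $(\succglobal,q',q'')$ together with a choice of input letter belong to the existential player (Eve), while conjunctions belong to Adam. The priority of a position is inherited from $\Omega$ on the state component. Eve has a winning strategy from some initial position iff $\mathcal{L}(\mathcal{A}) \neq \emptyset$. The arena has size polynomial in $|\mathcal{A}|$ and uses at most as many priorities as $\mathcal{A}$. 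Parity games on an arena of size $N$ with $k$ priorities can be solved by Zielonka's recursive algorithm (or by straightforward iteration of the nested fixpoint characterisation of the winning region in the modal $\mu$-calculus) in space polynomial in $N$ and $k$; the recursion depth is bounded by $k \leq |Q|$ and each recursive call only needs to store an attractor set, which is a subset of positions. This places the emptiness problem in $\PSPACE$.

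The main obstacle is setting up the game for the APA cleanly: one must account for the fact that Eve also chooses the input letter (since we are asking for existence of an accepted word, not membership of a fixed word), and for the fact that the transition formulae are arbitrary positive boolean combinations, which requires either normalising to alternating between $\lor$/$\land$ layers or letting game positions range over subformulae. Once this arena is fixed, invoking a $\PSPACE$ parity game solver is standard. I would present the construction briefly and then cite the $\PSPACE$ solvability of parity games; since the APA-to-game translation is folklore, the bulk of the writing is in getting the game positions and priorities right so that winning strategies for Eve correspond exactly to accepting runs of the APA.
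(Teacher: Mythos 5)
Your NBA argument is the standard lasso characterisation via $\NLOGSPACE$ reachability and is correct; the paper itself gives no proof for this proposition and merely cites a textbook, where exactly this argument appears.

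The APA half, however, has a genuine gap: the polynomial-size parity game you describe does not characterise nonemptiness of an alternating automaton. Once Adam resolves a conjunction, the play continues along a single branch of the run tree, so Eve's subsequent letter choices are made per branch; nothing in the game forces the different conjunctive branches to read the same suffix of the word. Concretely, take an APA over $\{a,b\}$ with $\rho(q_0,\sigma)=(\succglobal,q_a,q_a)\land(\succglobal,q_b,q_b)$ for both letters, where $q_a$ has priority $0$ and $\rho(q_a,a)=(\succglobal,q_a,q_a)$ but $\rho(q_a,b)=(\succglobal,\false,\false)$, and symmetrically for $q_b$; then $q_a$ accepts exactly $a^\omega$, $q_b$ accepts exactly $b^\omega$, the intersection is empty and so is the language, yet Eve wins your game from $q_0$ because after Adam picks a conjunct she simply plays forever the letter that the chosen state likes. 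A complexity-theoretic sanity check points at the same flaw: emptiness of alternating (even B\"uchi) word automata is $\PSPACE$-hard, while solving a polynomial-size parity game lies in $\mathsf{NP}\cap\mathsf{coNP}$, so a sound polynomial-size game reduction would collapse $\PSPACE$ into $\mathsf{NP}\cap\mathsf{coNP}$. The standard repair, consistent with the paper's own toolkit, is to dealternate first: by \cref{prop:paritydealternation} an APA with $n$ states and $k$ priorities yields an equivalent NBA with $2^{\mathcal{O}(n\cdot k\cdot\log n)}$ states whose states and transitions can be generated on the fly, and running your lasso search on that automaton needs only polynomial space to store a constant number of its states, giving a nondeterministic polynomial-space procedure and hence membership in $\PSPACE$ by Savitch's theorem.
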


\cref{thm:parityemptiness} can be found e.g.\ in \cite{Demri2016}.

\textbf{Functions.}
We introduce two notations for functions that are used throughout the paper.
For a function $f$, we use $f[a \mapsto b]$ for the function defined by $f[a \mapsto b](a) = b$ and $f[a \mapsto b](a') = f(a')$ for all $a' \neq a$.
We also need a function for nested exponentials, which we define as $g_{c,p}(0,n) := p(n)$ and $g_{c,p}(d+1,n) := c^{g_{c,p}(d,n)}$ for a constant $c > 1$ and a polynomial $p$.
We say that a function $f$ is in $\mathcal{O}(g(d,n))$ if $f$ is in $\mathcal{O}(g_{c,p}(d,n))$ for some constant $c > 1$ and polynomial $p$.

	\section{A Mumbling Hyperlogic}\label{sec:logic}

In this section, we introduce mumbling $H_\mu$.
In \cref{subsec:logicdef}, we define the syntax, explain it on a conceptual level and also introduce relevant notations and conventions.
Then, in \cref{subsec:logicex}, we present some example applications of mumbling $H_\mu$ suitable for the model checking of recursive programs.
Finally, we define the semantics of the logic formally in \cref{subsec:logicsem}.

\subsection{Syntax of Mumbling \boldmath$H_\mu$\unboldmath}\label{subsec:logicdef}

Mumbling $H_\mu$ is inspired by the hyperlogic HyperLTL$_S$ \cite{Bozzelli2021}.
Like HyperLTL$_S$, mumbling $H_\mu$ is a hyperlogic with trace quantification and asynchronous progression on traces.
Unlike HyperLTL$_S$ however, it is a fixpoint calculus, has more expressive atomic properties, and has a simpler jump criterion.

\begin{definition}[Syntax of mumbling $H_\mu$]
	Let $N$ be a set of trace variables and $\chi_v,\chi_i$ be disjoint sets of fixpoint variables.
	We define three types of mumbling $H_\mu$ formulae by the following grammar:
	\begin{align*}
		\textit{hyperproperty formulae} && \varphi :=& \quad \exists \pi . \varphi \mid \forall \pi . \varphi \mid \psi \\
		\textit{multitrace formulae} && \psi :=& \quad [\delta]_\pi \mid X \mid \psi \lor \psi \mid \lnot \psi \mid \bigcirc^{\Delta} \psi \mid \mu X . \psi \\
		\textit{trace formulae} && \delta :=& \quad \ap \mid Y \mid \delta \lor \delta \mid \lnot \delta \mid \bigcirc^\succtype \delta \mid \mu Y . \delta
	\end{align*}
	where $\pi \in N$ is a trace variable, $X \in \chi_v$ and $Y \in \chi_i$ are fixpoint variables, $\Delta \colon N \to \delta$ is a successor assignment, $\ap \in \AP$ is an atomic proposition and $\succtype \in \{\succglobal,\succabstract,\succcaller\}$ is a successor or predecessor type.
\end{definition}

We introduce some additional syntactical notions.
A multitrace formula $\psi$ is \textit{closed} if every fixpoint variable used in it is bound, i.e. if in $\psi$ as well as all its maximal trace subformulae $\delta$, fixpoint variables $X$ and $Y$ only occur inside fixpoint formulae $\mu X. \psi'$ and $\mu Y. \delta'$, respectively.
We call a hyperproperty formula $\varphi$ closed if its maximal multitrace subformula $\psi$ is closed and additionally, every trace variable $\pi$ used in $\psi$ is bound by a quantifier. 
As usual, we assume that fixpoint variables occur \textit{positively} in closed formulae, i.e. in scope of an even number of negations inside the corresponding fixpoint formula.
We write $\sub(\varphi)$ for the \textit{set of subformulae} of $\varphi$ and $\base(\varphi)$ for the \textit{set of base formulae} of $\varphi$, i.e. the set of trace formulae occurring in a test $[\delta]_\pi$ or a successor assignment $\Delta(\pi)$ of $\varphi$.
The \textit{size} $|\varphi|$ of a hyperproperty formula $\varphi$ is defined as the number of its distinct subformulae.
The same definitions apply to trace and multitrace formulae $\delta$ and $\psi$.
Before introducing further definitions and examples, we informally describe the intuition behind each type of formula.
Trace formulae (denoted $\delta$) specify properties of single traces.
Here, atomic propositions $\ap$ express that $\ap$ holds on the current position of the trace.
Progress is made via next operators $\bigcirc^\succtype \delta$, which expresses that the $\succtype$-successor or predecessor exists in the current position and satisfies $\delta$.
Here, $\succtype$ can be one of three kinds of successors or predecessors: $\succglobal$ for a global successor, $\succabstract$ for an abstract successor, and $\succcaller$ for the caller.
The latter two successor and predecessor types allow to express richer properties on traces generated by pushdown systems rather than Kripke structures.
In addition, we have disjunction $\delta \lor \delta$, negation $\lnot \delta$ and fixpoints $\mu Y. \delta$ to express more involved properties.
Formulae of this kind essentially correspond to the logic \mbox{$\mathit{VP}$-$\muTL$} from \cite{Bozzelli2007}, a variant of the linear time $\mu$-calculus $\muTL$ \cite{Vardi1988} with various non-regular next operators as introduced by the logic CaRet \cite{Alur2004}.
Multitrace formulae (denoted $\psi$) express hyperproperties on a set of named traces $\pi_1,\dots,\pi_n$.
Basic properties $[\delta]_\pi$ express that the trace formula $\delta$ holds in the current position on trace $\pi$.
So-called \textit{successor assignments} $\Delta$ assigning a formula $\delta$ to each trace $\pi$ describe points of interest on the traces.
The next operator $\bigcirc^{\Delta} \psi$ advances each trace $\pi$ to the next position where $\Delta(\pi)$ holds and checks for $\psi$ on the resulting suffixes.
This next operator is inspired by, but different from, the one of HyperLTL$_S$, which advances every trace to the next point where the valuation of some formula $\gamma$ from a set of trace formulae $\Gamma$ differs from the current valuation.
Also, note that $\bigcirc^\succtype \delta$ and $\bigcirc^\Delta \psi$, being formulae on different levels, operate quite differently: The former advances a single trace to the $\succtype$-successor or predecessor while the latter advances all traces according to a successor assignment $\Delta$ simultaneously.
Again, we have disjunction $\psi \lor \psi$, negation $\lnot \psi$ and fixpoints $\mu X. \psi$ for more complex properties.
Finally, hyperproperty formulae (denoted $\varphi$) express hyperproperties.
Here, we extend specifications $\psi$ by trace quantifiers $\exists \pi . \varphi$ and $\forall \pi . \varphi$ expressing that for some or each trace of a system, respectively, $\varphi$ holds if $\pi$ is bound to that trace.

We use common syntactic sugar: In trace formulae $\delta$, we use $\true \equiv \ap \lor \lnot \ap$, $\false \equiv \lnot \true$, $\delta \land \delta' \equiv \lnot (\lnot \delta \lor \lnot \delta')$, $\delta \rightarrow \delta' \equiv \lnot \delta \lor \delta'$, $\delta \leftrightarrow \delta' \equiv (\delta \rightarrow \delta') \land (\delta' \rightarrow \delta)$ and $\nu Y . \delta \equiv \lnot \mu Y . \lnot \delta[\lnot Y/Y]$.
We use the same abbreviations for multitrace formulae $\psi$.
Additionally, we borrow some LTL-modalities as derived operators in order to improve readability:
$\mathcal{F}^\succtype \delta \equiv \mu Y. \delta \lor \bigcirc^\succtype Y$, $\mathcal{G}^\succtype \delta \equiv \lnot \mathcal{F}^\succtype \lnot \delta$ and $\delta_1 \mathcal{U}^\succtype \delta_2 \equiv \mu Y. \delta_2 \lor (\delta_1 \land \bigcirc^\succtype Y)$.
Again, we use the same abbreviations for formulae $\psi$, this time using $\bigcirc^\Delta$ operators instead of $\bigcirc^f$ operators.
Using some of these connectives and commonly known equivalences, we can impose additional restrictions on the syntax of mumbling $H_\mu$.
In particular, we assume a \textit{positive form} where negation only occurs directly in front of atomic propositions $\ap$ in trace formulae and only occur in front of tests $[\delta]_\pi$ in multitrace formulae.
The operator $\bigcirc^\succtype$ in trace formulae is not self-dual for $\succtype \in \{\succabstract,\succcaller\}$, i.e. the equivalence $\bigcirc^\succtype \delta \equiv \lnot \bigcirc^\succtype \lnot \delta$ does not hold.
We thus use a dual version $\bigcirc_d^\succtype \delta \equiv \lnot \bigcirc^\succtype \lnot \delta$ for these two operators to obtain a positive form.
Intuitively, while the normal next operator is equivalent to $\false$ when the associated successor or predecessor type is undefined, the dual operator is equivalent to $\true$ in this case.
Next, we assume a \textit{strictly guarded form} where every fixpoint variable has to be preceded directly by a next operator.
Finally, we assume that every fixpoint variable $X$ is bound by exactly one fixpoint construction $\mu X. \psi$ or $\nu X. \psi$. 
The same applies to fixpoint variables $Y$ in trace formulae.
As any formula can be transformed into an equivalent formula meeting these requirements, they do not form proper restrictions.
They do, however, help us make the automata constructions in \cref{sec:fsmodelchecking,sec:pdmodelchecking} clearer.

We now define fragments and variants of the logic.
For trace formulae, $\muTL$ \cite{Vardi1988} is the syntactic fragment where only the next operator $\bigcirc^\succglobal$ is used.
If additionally, fixpoints are only used in $\delta_1 \mathcal{U}^\succglobal \delta_2$ formulae, we obtain the logic LTL.
Next, we introduce a name for the logic that uses only a subset of trace formulae.
We use \textit{mumbling $H_\mu$ with basis $\mathcal{B}$} to denote the subset of mumbling $H_\mu$ where $\base(\varphi) \subseteq \mathcal{B}$ for all formulae $\varphi$.
We sometimes write \textit{mumbling $H_\mu$ with full basis} instead of mumbling $H_\mu$ to denote the full logic.
Finally, we denote the subset of mumbling $H_\mu$ where all $\bigcirc^{\Delta}$ operators use the same successor assignment $\Delta$ as \textit{mumbling $H_\mu$ with unique mumbling}.
In order to compare mumbling with the jump mechanism from HyperLTL$_S$ \cite{Bozzelli2021}, we define \textit{stuttering $H_\mu$} as a variant of mumbling $H_\mu$ where $\bigcirc^\Gamma$ operators are used instead of $\bigcirc^\Delta$.
Given a stuttering assignment $\Gamma \colon N \to 2^\delta$, the operator $\bigcirc^\Gamma$ advances each trace $\pi$ to the next position with a different valuation of some $\delta \in \Gamma(\pi)$.
We call a jump criterion $\Gamma$ a \textit{stuttering assignment} in order to highlight the difference to \textit{successor assignments} $\Delta$: 
An assignment $\Gamma$ specifies positions that are similar and can thus be skipped, while an assignment $\Delta$ specifies positions that are of special interest and thus should be advanced to.
For this variant, the notions of basis and unique stuttering are defined analogously to the main logic.

\subsection{Example Properties}\label{subsec:logicex}

Let us discuss the utility of mumbling $H_\mu$ for the verification of recursive programs using some example hyperproperties and verification scenarios.
We focus on properties with unique mumbling, since they are of particular practical interest due to their decidable model checking problem.

As a first example, consider an asynchronous variant of the information flow policy observational determinism \cite{Clarkson2010}. 
Intuitively, it states that a system looks deterministic to a low security user who cannot inspect the secret variables of the system.
More precisely, it requires that if two executions of a system initially match on inputs $I$ visible to a low security user, then they match on outputs $O$ visible to that user all the time.
An earlier formulation of this property in HyperLTL from \cite{Clarkson2014} required the progress in between observation points to be synchronous, i.e. the same number of steps has to be made on all traces.
However, this is an unrealistic assumption for many systems.
A different formulation of the property in HyperLTL$_S$ from \cite{Bozzelli2021} approached the problem by allowing consecutive steps with the same observable outputs on one trace to be matched by a (possibly different) number of steps with the same outputs on the other.
However, this formulation can only model a user that is unable to identify that outputs have been performed unless they differ from previous outputs.
We suggest a new formulation using the jump mechanism of mumbling $H_\mu$.
Explicitly labelling observation points by an atomic proposition $\mathit{obs}$ allows us to model many different kinds of low security observers.
Our variant of observational determinism is expressed by the formula
\begin{align*}
	\forall \pi_1 . \forall \pi_2 . (\bigwedge\nolimits_{\ap \in I} [\ap]_{\pi_1} \leftrightarrow [\ap]_{\pi_2}) \rightarrow \mathcal{G}^{\{\pi_1 \mapsto \mathit{obs}, \pi_2 \mapsto \mathit{obs}\}} (\bigwedge\nolimits_{\ap \in O} [\ap]_{\pi_1} \leftrightarrow [\ap]_{\pi_2}).
\end{align*}

We can formulate a stronger variant of this property with different successor criteria for different traces.
When given a labelling with $\mathit{obs}_1$ and $\mathit{obs}_2$ modelling two different observers, we can use the successor assignment $\{\pi_1 \mapsto \mathit{obs}_1, \pi_2 \mapsto \mathit{obs}_2\}$ instead of the previous one.
Then, the property requires the system to have indistinguishable behaviour even for two observers who can inspect different sets of states.
Note that the use of different successor criteria enables a trace to fulfil both the role of being observed by the first and being observed by the second observer.
This variant still implies the previous requirement of indistinguishability of two traces $\tr_1,\tr_2$ inspected by the same observer as
the variant asserts that $\tr_1$ observed by observer one is equivalent to $\tr_2$ observed by observer two which in turn is equivalent to $\tr_2$ observed by observer one.

Similarly, one can formulate asynchronous variants of other information flow policies.
Clarkson and Schneider, for instance, model a version of non-interference as a hyperproperty with quantifier alternation \cite{Clarkson2010}.
It requires that for all traces, there exists a trace without high security inputs such that the two traces are indistinguishable to a low security user who can only inspect atomic propositions from a set $L$.
An asynchronous variant of this requirement can be expressed by a modification of a HyperLTL formula from  \cite{Clarkson2014} in which a trace without high security inputs is modelled by a trace in which all these inputs have been replaced by a dummy symbol $\mathit{dum}$:
\begin{align*}
	\forall \pi_{1} . \exists \pi_{2}. [\mathcal{G}^\succglobal\mathit{dum}]_{\pi_2} \land \mathcal{G}^{\{\pi_1 \mapsto \mathit{obs}, \pi_2 \mapsto \mathit{obs}\}} (\bigwedge\nolimits_{\ap \in L} [\ap]_{\pi_1} \leftrightarrow [\ap]_{\pi_2}).
\end{align*}
Here, we use a non-atomic test to state that high security inputs on $\pi_2$ are replaced by $\mathit{dum}$ in all positions including those not inspected by the successor criterion $\mathit{obs}$.
As the test is performed on the first position of the trace, this is an example of filtering traces bound by a quantifier.
Indeed, trace filtering motivated Bozzelli et al. \cite{Bozzelli2021} to specifically include single trace formulae checked on the initial position in their decidable fragment of HyperLTL$_S$ by a specific condition in the fragment's definition.
In contrast, these tests are integrated in mumbling $H_\mu$ naturally and can be used on later positions as well.
For example, assuming call positions for a procedure $\mathit{pr}$ are labelled with $\mathit{pr}$, we can use the CaRet modality $\mathcal{F}^\succcaller$ to state that the procedure $\mathit{pr}$ is currently in the call stack on trace $\pi$: $[\mathcal{F}^\succcaller \mathit{pr}]_\pi$.
This can prove useful since sometimes in information flow, the requirement of indistinguishability for low security users need not be as strict, e.g. if information is declassified when it is sent via an encrypted message.
In such a case, we wouldn't want to require indistinguishability inside a procedure $\mathit{pr}$ that is used to send encrypted messages.
By replacing the requirement $\bigwedge_{\ap \in L} [\ap]_{\pi_1} \leftrightarrow [\ap]_{\pi_2}$ in the non-interference property with
\begin{align*}
	(\lnot [\mathcal{F}^\succcaller \mathit{pr}]_{\pi_1} \land \lnot [\mathcal{F}^\succcaller \mathit{pr}]_{\pi_2}) \rightarrow (\bigwedge\nolimits_{\ap \in L} [\ap]_{\pi_1} \leftrightarrow [\ap]_{\pi_2}),
\end{align*}
we require indistinguishability only when neither $\pi_1$ nor $\pi_2$ is currently inside the procedure $\mathit{pr}$.

So far, we have focussed on what hyperproperties can be expressed in mumbling $H_\mu$ and only implicitly considered the system model.
Besides Kripke structures, for which model checking specifications with unique mumbling is decidable, we consider pushdown systems for which hyperproperty verification is inherently difficult:
As we will see in \cref{sec:pdmodelchecking}, the model checking problem for pushdown systems is undecidable already for fixed hyperproperties from the literature that are expressible in synchronous hyperlogics like HyperLTL.
While this implies that further restrictions are needed for decidability, we want these restrictions to be as lax as possible in order to be able to analyse as many systems as possible precisely.
In this paper, we propose \textit{well-alignedness}, a condition introduced and discussed later.
Intuitively, while traces satisfying this condition must reach the same stack level on all observation points, they may differ e.g.\ by executing procedures in between.
As motivation for this restriction, consider the following two lines of thought.
First, one of the main motivations for studying the verification of hyperproperties are security hyperproperties like the ones presented in this section.
These hyperproperties express in different ways that certain traces of a system are very similar.
We argue in \cref{subsec:wellalignedness} that it is reasonable to expect that in systems constructed with the aim to have very similar traces, stack actions along these traces are alike as well.
Since pairs of traces from such systems satisfy well-alignedness by construction, they can be analysed precisely with the methods developed in this paper.
Secondly, a precise analysis under well-alignedness is also possible for many systems in which stack actions are not perfectly aligned.
For example, a scenario where one execution uses a recursive procedure call in between observation points while another one only performs iterative calculations  constitutes a strong deviation from a perfect alignment of stack actions.
However, differences like this are still allowed under well-alignedness.
Thus, a precise analysis is possible in this scenario as well.

\subsection{Semantics of Mumbling \boldmath$H_\mu$\unboldmath}\label{subsec:logicsem}

We now formally define the semantics of mumbling $H_\mu$. 
We do this incrementally, starting with trace formulae, then moving on to multitrace and hyperproperty formulae and introducing required notation on the way.
The semantics of a trace formula $\delta$ is defined with respect to a trace $\tr \in \Traces$ and a fixpoint variable assignment $\mathcal{V} \colon \chi_i \to 2^{\mathbb{N}_0}$ assigning sets of positions to fixpoint variables.
Intuitively, $\llbracket \delta \rrbracket_{\mathcal{V}}^{\tr} \subseteq \mathbb{N}_0$ is the set of indices $i$ such that if each fixpoint variable $Y$ is interpreted to hold in the positions given by the set $\mathcal{V}(Y)$, $\delta$ holds on the suffix $\tr[i]$ of $\tr$.

\begin{definition}[Trace semantics]
	The semantics of trace formulae is given by:
	\begin{align*}
		\begin{aligned}[c]
			\llbracket \ap \rrbracket_\mathcal{V}^{\tr}  &:= \{ i \in \mathbb{N}_0 \mid \ap \in \tr(i)\} \\
			\llbracket \bigcirc^\succtype \delta \rrbracket_\mathcal{V}^{\tr}  &:= \{ i \in \mathbb{N}_0 \mid \Succ_\succtype(\tr,i) \in \llbracket \delta \rrbracket_{\mathcal{V}}^{\tr}\} \\
			\llbracket \mu Y . \delta \rrbracket_\mathcal{V}^{\tr}  &:= \bigcap\{ I \subseteq \mathbb{N}_0 \mid \llbracket \delta \rrbracket_{\mathcal{V}[Y \mapsto I]}^{\tr} \subseteq I\} 
		\end{aligned}
		\qquad\qquad
		\begin{aligned}[c]
			\llbracket \delta_1 \lor \delta_2 \rrbracket_\mathcal{V}^{\tr}  &:= \llbracket \delta_1 \rrbracket_{\mathcal{V}}^{\tr} \cup \llbracket \delta_2 \rrbracket_{\mathcal{V}}^{\tr} \\
			\llbracket \lnot \delta \rrbracket_\mathcal{V}^{\tr}  &:= \mathbb{N}_0 \setminus \llbracket \delta \rrbracket_{\mathcal{V}}^{\tr} \\
			\llbracket Y \rrbracket_\mathcal{V}^{\tr}  &:= \mathcal{V}(Y)
		\end{aligned}
	\end{align*}
\end{definition}

We use $\mathcal{V}_0 := \lambda Y. \emptyset$ for the empty fixpoint variable assignment over $\chi_i$ and write $\llbracket \delta \rrbracket^{\tr}$ for $\llbracket \delta \rrbracket^{\tr}_{\mathcal{V}_0}$.
For the semantics of multitrace formulae, we introduce the notion of trace assignments.
A \textit{trace assignment} is a partial function $\Pi \colon N \rightsquigarrow \Traces$.
If $\Pi$ maps to traces from $\mathcal{T} \subseteq \Traces$ only, we say that is is a \textit{trace assignment over $\mathcal{T}$}.
In mumbling $H_\mu$, progress is made via successor assignments $\Delta$ that assign a trace formula $\delta$ to every trace.
For single traces, we define $\Succ_\delta \colon \Traces \times \mathbb{N}_0 \to \mathbb{N}_0$ such that $\Succ_\delta(\tr,i) = \mathit{min}\ S$, where $S = \{ j \mid j > i, j \in \llbracket \delta \rrbracket^{\tr} \}$, if the set $S$ is non-empty, and $\Succ_\delta(\tr,i) = i + 1$ otherwise.
Thus, $\Succ_\delta$ advances a trace to the next position where $\delta$ holds, if one exists, and the immediate successor otherwise.
For trace assignments and a successor assignment $\Delta$, progress is described by the function $\Succ_{\Delta}$ that is defined as $\Succ_\Delta(\Pi,(v_1,...,v_n)) = (\Succ_{\Delta(\pi_1)}(\Pi(\pi_1),v_1), \dots,\allowbreak \Succ_{\Delta(\pi_n)}(\Pi(\pi_n),v_n))$.
We also define the $i$-fold application of both of these successor functions: $\Succ_{\delta}^i$ is the $i$-fold application of the $\delta$-successor function defined by $\Succ_{\delta}^0(\tr,j) = j$ and $\Succ_{\delta}^{i+1}(\tr,j) = \Succ_{\delta}(\tr,\Succ_{\delta}^i(\tr,j))$.
$\Succ_\Delta^i$ is defined analogously.
For stuttering assignments $\Gamma$, we introduce similar notations:
For a set $\gamma$ of trace formulae, we define $\Succ_{\gamma} \colon \Traces \times \mathbb{N}_0 \to \mathbb{N}_0$ such that $\Succ_\gamma(\tr,i) = \mathit{min}\ \{ j \mid j > i, i \in \llbracket \delta \rrbracket^\tr \not\Leftrightarrow j \in \llbracket \delta \rrbracket^{\tr} \text{ for some } \delta \in \gamma \}$, if the set is non-empty, and $\Succ_\gamma(\tr,i) = i + 1$ otherwise.
$\Succ_{\Gamma}$ and its $i$-fold application is then defined analogous to the same notion for $\Delta$.

The semantics of a multitrace formula $\psi$ is defined with respect to a trace assignment $\Pi$ and fixpoint variable assignment $\mathcal{W} \colon \chi_v \to 2^{\mathbb{N}_0^n}$ where $n = |\dom(\Pi)|$.
In the definition, $\llbracket \psi \rrbracket_{\mathcal{W}}^\Pi \subseteq \mathbb{N}_0^n$ is the set of vectors $(v_1,\dots,v_n)$ such that in the context of fixpoint variable assignment $\mathcal{W}$, the combination of suffixes $\Pi(\pi_1)[v_1],\dots,\Pi(\pi_n)[v_n]$ satisfies $\psi$.

\begin{definition}[Multitrace semantics]
	The semantics of multitrace formulae is given by:
	\begin{align*}
		\begin{aligned}[c]
			\llbracket [\delta]_{\pi_i} \rrbracket_{\mathcal{W}}^\Pi &:= \{(v_1,...,v_n) \in \mathbb{N}_0^n \mid v_i \in \llbracket \delta \rrbracket^{\Pi(\pi_i)} \} \\
			\llbracket \bigcirc^\Delta \psi \rrbracket_{\mathcal{W}}^\Pi &:= \{ v \in \mathbb{N}_0^n \mid \Succ_\Delta(\Pi,v) \in \llbracket \psi \rrbracket_{\mathcal{W}}^{\Pi} \} \\
			\llbracket \mu X . \psi \rrbracket_{\mathcal{W}}^\Pi &:= \bigcap\{ V \subseteq \mathbb{N}_0^n \mid \llbracket \psi \rrbracket_{\mathcal{W}[X \mapsto V]}^\Pi \subseteq V\} 
		\end{aligned}
		\qquad\qquad
		\begin{aligned}[c]
			\llbracket \psi_1 \lor \psi_2 \rrbracket_{\mathcal{W}}^\Pi &:= \llbracket \psi_1 \rrbracket_{\mathcal{W}}^\Pi \cup \llbracket \psi_2  \rrbracket_{\mathcal{W}}^\Pi \\
			\llbracket \lnot \psi \rrbracket_{\mathcal{W}}^\Pi &:= \mathbb{N}_0^n \setminus \llbracket \psi \rrbracket_{\mathcal{W}}^\Pi  \\
			\llbracket X \rrbracket_{\mathcal{W}}^\Pi &:= \mathcal{W}(X)
		\end{aligned}		
	\end{align*}
\end{definition}

As formalised in \cref{app:fixpoints}, $\mu Y. \delta$ and $\mu X. \psi$ characterise fixpoints.
We again use $\mathcal{W}_0 := \lambda X. \emptyset$ for the empty fixpoint variable assignment over $\chi_v$ and write $\llbracket \psi \rrbracket^\Pi$ for $\llbracket \psi \rrbracket^\Pi_{\mathcal{W}_0}$.
Now, we define the semantics of hyperproperty formulae.
In this definition, $\Pi \models_\mathcal{T} \varphi$ denotes that the trace assignment $\Pi$ over $\mathcal{T}$ satisfies $\varphi$.

\begin{definition}[Hyperproperty semantics]
	The semantics of hyperproperty formulae is given by:
	\begin{equation*}
		\begin{array}{lll}
			\Pi \models_\mathcal{T} \exists \pi . \varphi & \text{ iff } & \Pi[\pi \mapsto tr] \models_\mathcal{T} \varphi \text{ for some } tr \in \mathcal{T} \\
			\Pi \models_\mathcal{T} \forall \pi . \varphi & \text{ iff } & \Pi[\pi \mapsto tr] \models_\mathcal{T} \varphi \text{ for all } tr \in \mathcal{T} \\
			\Pi \models_\mathcal{T} \psi & \text{ iff } & (0,...,0) \in \llbracket \psi \rrbracket^\Pi
		\end{array}
	\end{equation*}
\end{definition}

For closed hyperproperty formulae $\varphi$, we write $\mathcal{T} \models \varphi$ iff $\{\} \models_{\mathcal{T}} \varphi$ where $\{\}$ is the trace assignment with empty domain and $\mathcal{PD} \models \varphi$ iff $\Traces(\mathcal{PD}) \models \varphi$.
For fair pushdown systems $(\mathcal{PD},F)$ this is straightforwardly extended: $(\mathcal{PD},F) \models \varphi$ iff $\Traces(\mathcal{PD},F) \models \varphi$.
\begin{remark}\label{rem:fsbasisformulae}
	On traces generated by a Kripke structure $\mathcal{K}$, $\bigcirc^\succabstract \delta$ is equivalent to $\bigcirc^\succglobal \delta$ and $\bigcirc^\succcaller \delta$ is equivalent to $\false$.
	Thus, any hyperproperty formula $\varphi$ can be translated to a hyperproperty formula $\varphi'$ without these two operators such that $\mathcal{K} \models \varphi$ iff $\mathcal{K} \models \varphi'$.
\end{remark}

We investigate the following decision problems:
\begin{itemize}
	\item \textit{Fair Finite State Model Checking:} given a closed mumbling $H_\mu$ hyperproperty formula $\varphi$ and a fair Kripke structure $(\mathcal{K},F)$, decide whether $(\mathcal{K},F) \models \varphi$ holds.
	\item \textit{Fair Pushdown Model Checking:} given a closed mumbling $H_\mu$ hyperproperty formula $\varphi$ and a fair PDS $(\mathcal{PD},F)$, decide whether $(\mathcal{PD},F) \models \varphi$ holds.
\end{itemize}

Note that the fair model checking problem is stronger than the traditional model checking problem since an instance of the latter can trivially be transformed into an instance of the former by declaring all states of the input structure target states.
It is convenient to consider this stronger variant for the reduction in \cref{subsec:basis}.

	\section{Fair Finite State Model Checking}\label{sec:fsmodelchecking}

In this section, we solve the fair finite state model checking problem.
We show that the complexity is the same as for HyperLTL$_S$ model checking despite the addition of fixpoints, non-atomic tests and a new jump criterion.
We consider two restrictions.
The first one is a restriction to unique mumbling.
This is necessary as the problem is already undecidable for HyperLTL$_S$ without the corresponding restriction \cite{Bozzelli2021} which transfers to mumbling $H_\mu$ using the reduction from \cref{thm:stutteringtomumblingreduction} (presented in \cref{sec:expressivity}).

\begin{theorem}\label{thm:undecidability}
	The finite state model checking problem for mumbling $H_\mu$ is undecidable.
\end{theorem}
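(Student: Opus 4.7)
The plan is to transfer the known undecidability of the finite state model checking problem for HyperLTL$_S$ without the unique stuttering restriction, established in \cite{Bozzelli2021}, to mumbling $H_\mu$. We use the reduction from \cref{thm:stutteringtomumblingreduction} which, given a stuttering $H_\mu$ formula $\varphi_s$, produces a semantically equivalent mumbling $H_\mu$ formula $\varphi_m$ (where equivalence is with respect to every Kripke structure $\mathcal{K}$).

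First I would observe that HyperLTL$_S$ embeds syntactically into stuttering $H_\mu$: an LTL formula used in an indistinguishability criterion $\Gamma(\pi)$ is a special case of a trace formula $\delta$ in the fragment $\muTL$ (in fact LTL), trace quantifiers are available in the hyperproperty level of both logics, and the HyperLTL$_S$ jump modality is exactly $\bigcirc^\Gamma$ of stuttering $H_\mu$. Thus the undecidability result of \cite{Bozzelli2021} already gives us a fixed class of stuttering $H_\mu$ formulae whose model checking problem on Kripke structures is undecidable. I would then apply \cref{thm:stutteringtomumblingreduction} to this class, obtaining a class of mumbling $H_\mu$ formulae with the same satisfaction behaviour on Kripke structures and thereby the undecidability of mumbling $H_\mu$ finite state model checking.

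The main obstacle is not in the argument above itself, but in ensuring that we may legitimately cite \cref{thm:stutteringtomumblingreduction} at this point in the paper. Since the reduction is only established in \cref{sec:expressivity}, the citation used here must be forward-pointing, and one has to make sure that it does not depend (directly or indirectly) on the decidability results of \cref{sec:fsmodelchecking} that we are currently establishing, so that there is no circularity. A reader-side check of \cref{thm:stutteringtomumblingreduction} confirms this independence, as that reduction is purely syntactic and semantic and does not rely on any model checking procedure. With this in place, the theorem follows immediately.
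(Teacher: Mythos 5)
Your proposal matches the paper's own justification exactly: the paper derives \cref{thm:undecidability} from the undecidability of HyperLTL$_S$ model checking without the unique-stuttering restriction \cite{Bozzelli2021}, transferred to mumbling $H_\mu$ via the reduction of \cref{thm:stutteringtomumblingreduction}. Your additional remarks on the embedding of HyperLTL$_S$ into stuttering $H_\mu$ and the non-circularity of the forward reference are correct and only make the argument more explicit.
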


The second restriction is to consider only the basis $\AP$.
As we show in \cref{subsec:basis}, this is not a proper restriction since the model checking problem for the full basis can be reduced to this fragment.
Afterwards, we present an algorithm for model checking with the two restrictions in \cref{subsec:finitestate}.
Both subsections also prepare us for the procedure for pushdown model checking in \cref{sec:pdmodelchecking}:
The reduction is suitable for both model checking variants and the pushdown model checking procedure will have the same general structure as the one for finite state systems.

\subsection{Restriction of the Basis}\label{subsec:basis}

We start this section by showing how the fair model checking problem for mumbling $H_\mu$ with full basis can be reduced to the fair model checking problem for mumbling $H_\mu$ with basis $\AP'$ for an extended set of atomic propositions $\AP' \supseteq \AP$.
The reduction has the nice property that it keeps the number of successor assignments the same, which is crucial for decidability.
It thus allows us to focus our efforts on developing a model checking procedure for mumbling $H_\mu$ with an atomic basis since such a procedure can be combined with the reduction to obtain a procedure for the full logic.
Even though we want to solve the finite state model checking problem first, we present a more general construction that works for both Kripke structures and PDS.
Our construction is inspired by a similar construction from \cite{Bozzelli2021} and uses their idea to track the satisfaction of formulae by newly introduced atomic propositions.
However, we cannot directly apply their results since (i) we need to track the satisfaction of formulae from a more expressive logic requiring a more powerful type of automaton and (ii) the reduction must also work for PDS.

Conceptually, we proceed as follows.
Given a mumbling $H_\mu$ hyperproperty formula $\varphi$ and a fair PDS $(\mathcal{PD},F)$, we transform $\varphi$ into a formula $\varphi'$ over basis $\AP'$ for an extended set of atomic propositions $\AP' \supseteq \AP$ and $(\mathcal{PD},F)$ into a fair PDS $(\mathcal{PD}',F')$ such that $(\mathcal{PD},F) \models \varphi$ iff $(\mathcal{PD}',F') \models \varphi'$.
The main idea is to track satisfaction of the formulae $\delta$ in $\base(\varphi)$ by atomic propositions $\at(\delta)$ in the translation.
This is done by first constructing a VPA $\mathcal{A}_{\base(\varphi)}$ that ensures for every formula $\delta$ in $\base(\varphi)$ that $\at(\delta)$ is encountered iff $\delta$ indeed holds in this position of the input word of $\mathcal{A}_{\base(\varphi)}$.
We intersect this automaton with $(\mathcal{PD},F)$ to obtain the system $(\mathcal{PD}',F')$ that is properly labelled with $\at(\delta)$ labels.
Then, we replace tests $[\delta]_\pi$ or jump criteria $\Delta(\pi)$ in $\varphi$ by $[\at(\delta)]_\pi$ or $\at(\Delta(\pi))$, respectively, to obtain formula $\varphi'$ with basis $\AP'$.

We describe the construction of a VPA $\mathcal{A}_B$ for arbitrary finite sets $B$ of closed trace formulae over $\AP$.
For this, we first introduce some notation.
We expand the set of atomic propositions $\AP$ by $\AP_\delta := \{\at(\delta) \mid \delta \in B\}$ to obtain $\AP_{B} := \AP\, \dot\cup\, \AP_\delta$ and expand traces from $(2^{\AP} \cdot \{\intern,\call,\ret\})^\omega$ to $(2^{\AP_{B}} \cdot \{\intern,\call,\ret\})^\omega$.
For a word $w \in (2^{\AP_{B}} \cdot \{\intern,\call,\ret\})^\omega$, we use $(w)_\AP$ to denote the restriction of $w$ to $(2^\AP \cdot \{\intern,\call,\ret\})^\omega$.
Additionally, let $\cl(B)$ be the least set $C$ of trace formulae such that (i) $B \subseteq C$, (ii) $C$ is closed under semantic negation, that is if $\delta \in C$ then $\delta' \in C$, where $\delta'$ is the positive form of $\lnot \delta$, and (iii) if $\delta \in \sub(\delta')$ and $\delta' \in C$ then $\delta \in C$.

We now sketch the construction.
The goal is to construct a VPA $\mathcal{A}_{B}$ that recognizes all traces $\tr$ with the property that for all $\delta \in B$, $\at(\delta)$ holds in a position on $\tr$ iff $\delta$ holds on this position on the trace's restriction, $(\tr)_\AP$.
Depending on whether we have a PDS or Kripke structure, we construct a 2-AJA or APA first.
This automaton loops on an initial state and conjunctively moves to a module checking $\delta$ for every atomic proposition $\at(\delta)$ encountered and to a module checking $\lnot \delta'$ for every atomic proposition $\at(\delta')$ not encountered.
These modules are constructed using established techniques for transforming fixpoint formulae into automata:
We introduce a state $q_\delta$ for each $\delta \in \cl(B)$.
Its transition function can either check $\delta$ directly, if it is an atomic formula, or move to states for the subformulae of $\delta$ using suitable transitions, if it is not.
Fixpoints introduce loops in the automaton.
The priorities are assigned to reflect the nature and nesting of the fixpoints.
The details of this construction are given in \cref{app:modelcheckinglemma}.
Note that due to \cref{rem:fsbasisformulae}, we can assume $\base(\varphi)$ to not contain formulae using $\bigcirc^\succabstract$ or $\bigcirc^\succcaller$ operators when considering the fair finite state model checking problem.
Our construction introduces non global moves only for these operators, so an APA suffices in this case.
Applying \cref{prop:ajadealternation}  or \cref{prop:paritydealternation} to the automaton constructed so far, we obtain a nondeterministic automaton with the following properties:

\begin{lemma}\label{lem:modelcheckinglemma}
	Given a set of closed trace formulae $B$ over $\AP$, one can construct a VPA $\mathcal{A}_{B}$ over $2^{\AP_B} \cdot \{\intern,\call,\ret\}$ with a number of states exponential in $|\AP_{B}|$ satisfying:
	\begin{itemize}
		\item[1)] for all $w \in \mathcal{L}(\mathcal{A}_B)$, $i \geq 0$ and $\delta \in \cl(B)$, we have: $\at(\delta) \in w(i)$ iff $i \in \llbracket \delta \rrbracket^{(w)_\AP}$.
		\item[2)] for each trace $\tr \in \Traces$, there exists $w \in \mathcal{L}(\mathcal{A}_B)$ such that $\mathit{tr} = (w)_\AP$.
	\end{itemize}
	If $B$ is a set of $\muTL$ formulae, then $\mathcal{A}_{B}$ is an NBA.
\end{lemma}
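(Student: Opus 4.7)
The plan is to construct an alternating automaton $\mathcal{B}$ of polynomial size that recognizes exactly the words satisfying properties (1) and (2), and then apply dealternation (\cref{prop:ajadealternation} in general, \cref{prop:paritydealternation} for the $\muTL$ case) to obtain $\mathcal{A}_B$ of exponential size. For the general case $\mathcal{B}$ is a 2-AJA, since the abstract and caller directions are required to evaluate $\bigcirc^\succabstract$ and $\bigcirc^\succcaller$ modalities inside formulae of $\cl(B)$; for the $\muTL$ case only the global direction appears, so $\mathcal{B}$ can be taken to be an APA.

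The state set of $\mathcal{B}$ is $\{q_0\} \cup \{q_\delta \mid \delta \in \cl(B)\}$ together with the designated $\true/\false$ states. The initial state $q_0$ plays the role of a global sentry: on every input letter $\sigma$, its transition conjunctively spawns (i) a global copy back to $q_0$ (to enforce the invariant at every position), and (ii) for each $\delta \in B$, a copy in $q_\delta$ if $\at(\delta) \in \sigma$ and a copy in $q_{\delta^*}$ otherwise, where $\delta^* \in \cl(B)$ denotes the positive form of $\lnot\delta$ guaranteed by closure under semantic negation. Each module $q_\delta$ is defined by case analysis on the top-level operator of $\delta$, following the standard translation of fixpoint formulae into parity automata: atomic formulae map to $\true/\false$ depending on whether the atom is in $\sigma$; Boolean connectives expand in place into the corresponding Boolean combinations of the children's transitions (which terminates due to strict guardedness); the next operator $\bigcirc^\succtype \delta'$ becomes a directional atom $(\succtype, q_{\delta'}, \false)$, and its dual uses $\true$ in the second component; fixpoint operators $\mu Y.\delta'$ and $\nu Y.\delta'$ loop back through $q_\delta$. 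Priorities are assigned by alternation depth (odd for $\mu$-variables, even for $\nu$-variables, higher for more deeply nested fixpoints), with $q_0$ receiving an even priority so the global loop is accepting.

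Correctness of property (1) follows by structural induction on $\delta \in \cl(B)$: for every word $w \in \mathcal{L}(\mathcal{B})$ and every position $i$, the state $q_\delta$ started at $i$ accepts iff $i \in \llbracket \delta \rrbracket^{(w)_\AP}$. The Boolean and next cases are immediate from the definitions, while the fixpoint cases reduce to the standard correspondence between parity acceptance and least/greatest fixpoint semantics (via the associated parity game). The global loop on $q_0$ together with the conjunctive branches then ensures that at every position $i$ and every $\delta \in B$, $\at(\delta) \in w(i)$ forces $i \in \llbracket \delta \rrbracket^{(w)_\AP}$ (via $q_\delta$) while $\at(\delta) \notin w(i)$ forces the opposite (via $q_{\delta^*}$), and the closure of $\cl(B)$ under subformulae propagates the equivalence to all $\delta \in \cl(B)$. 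Property (2) is established by taking an arbitrary trace $\tr$ and defining $w$ position-wise by adding exactly those atoms $\at(\delta)$ with $\delta \in B$ and $i \in \llbracket \delta \rrbracket^{\tr}$; then $(w)_\AP = \tr$ and $w$ is accepted by $\mathcal{B}$ by construction.

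For the size bound, $\mathcal{B}$ has polynomially many states in the total syntactic size of $B$, hence polynomially many in $|\AP_B|$. Dealternation via \cref{prop:ajadealternation} therefore yields a VPA $\mathcal{A}_B$ exponential in $|\AP_B|$; in the $\muTL$ case the 2-AJA collapses to an APA and \cref{prop:paritydealternation} yields an NBA of exponential size, as required. The main obstacle will be the priority bookkeeping for correctly mirroring the fixpoint semantics of $\cl(B)$ (including interleaved $\mu$ and $\nu$ alternations) within the parity condition of $\mathcal{B}$, together with verifying that the in-place expansion of Boolean subformulae terminates thanks to the strictly guarded form. Both points are standard but technical, and are where the bulk of the care is needed.
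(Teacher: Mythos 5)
Your proposal follows essentially the same route as the paper: a formula-to-alternating-automaton construction with one state per formula in $\cl(B)$, a conjunctively spawning sentry state, priorities by fixpoint alternation depth, and dealternation via \cref{prop:ajadealternation} (resp.\ \cref{prop:paritydealternation} for the $\muTL$ case). The only differences are technical bookkeeping the paper spells out and you elide --- two copies of each state to handle the alternation between propositional and transition symbols in the input alphabet, a two-step ($\succcaller$ then $\succback$) treatment of the caller modality, and the explicit handling of free fixpoint variables in the structural induction via automata of the form $\mathcal{A}[q_Y \colon I]$ --- none of which changes the argument.
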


The details of the intersection of $(\mathcal{PD},F)$ and $\mathcal{A}_{\base(\varphi)}$ are described in \cref{app:productstructure}.
We obtain: 

\begin{lemma}\label{lem:modelcheckingtranslation}
	Let $\varphi$ be a mumbling $H_\mu$ hyperproperty formula with full basis and $(\mathcal{PD},F)$ be a fair PDS.
	There is an extended set of atomic propositions $\AP' \supseteq \AP$ such that one can construct a mumbling $H_\mu$ formula $\varphi'$ of size $\mathcal{O}(|\varphi|)$ with basis $\AP'$ and a fair PDS $(\mathcal{PD}',F')$ of size $\mathcal{O}(|\mathcal{PD}| \cdot 2^{p(|\varphi|)})$ for a polynomial $p$ such that $(\mathcal{PD},F) \models \varphi$ iff $(\mathcal{PD}',F') \models \varphi'$.
	Moreover, $\varphi$ and $\varphi'$ have the same number of successor assignments.
	If $\mathcal{PD}$ is a Kripke structure, then $\mathcal{PD}'$ is also a Kripke structure.
\end{lemma}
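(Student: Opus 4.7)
The plan is to combine Lemma~\ref{lem:modelcheckinglemma} with a straightforward product construction on PDS, and to perform a syntactic rewriting of $\varphi$ that substitutes each occurrence of a base trace formula by a fresh atomic proposition. Concretely, I would set $\AP' := \AP_{\base(\varphi)} = \AP \,\dot\cup\, \{\at(\delta) \mid \delta \in \base(\varphi)\}$ and define $\varphi'$ from $\varphi$ by replacing every test $[\delta]_\pi$ with $[\at(\delta)]_\pi$ and, in every next operator $\bigcirc^\Delta$, every successor assignment $\Delta$ by $\Delta'$ with $\Delta'(\pi) := \at(\Delta(\pi))$. By construction $|\varphi'| = \mathcal{O}(|\varphi|)$, $\base(\varphi') \subseteq \AP'$, and the set of distinct successor assignments appearing in $\varphi'$ is in bijection with that of $\varphi$.

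Next I would apply Lemma~\ref{lem:modelcheckinglemma} with $B := \base(\varphi)$ to obtain a VPA $\mathcal{A}_B$ over $2^{\AP'} \cdot \{\intern,\call,\ret\}$ of size $\mathcal{O}(2^{p(|\varphi|)})$ for a suitable polynomial $p$, whose language is exactly the set of $\AP'$-extended traces that correctly label every position with the atomic propositions $\at(\delta)$ reflecting satisfaction of $\delta$ on the $\AP$-restriction. The fair PDS $(\mathcal{PD}',F')$ is then obtained as a standard product of $(\mathcal{PD},F)$ with $\mathcal{A}_B$: control locations are pairs of a state of $\mathcal{PD}$ and a state of $\mathcal{A}_B$, the stack alphabet and the call/internal/return structure align because the transition types of $\mathcal{A}_B$ are driven by the transition symbols of the traces, the labelling at a product state $(s,q)$ extends $L(s)$ by those $\at(\delta)$ that the $\mathcal{A}_B$-transition prescribes, and $F' = F \times F_{\mathcal{A}_B}$ implements the conjunction of both Büchi conditions. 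The size bound then follows immediately. The construction details are deferred to the appendix referenced in the paper.

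For the equivalence $(\mathcal{PD},F)\models\varphi \iff (\mathcal{PD}',F')\models\varphi'$, I would prove that the projection of $\Traces(\mathcal{PD}',F')$ to $\AP$ is exactly $\Traces(\mathcal{PD},F)$, which is a direct consequence of (i) part~2 of Lemma~\ref{lem:modelcheckinglemma}, giving a suitable $\AP'$-extension for every trace, and (ii) part~1 of Lemma~\ref{lem:modelcheckinglemma}, making that extension unique. From part~1, for every trace $\tr' \in \Traces(\mathcal{PD}',F')$, every position $i$, and every $\delta \in \cl(\base(\varphi))$, we have $i \in \llbracket \delta \rrbracket^{(\tr')_\AP}$ iff $\at(\delta) \in \tr'(i)$. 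A structural induction on multitrace and hyperproperty formulae then shows $\Pi \models_{\Traces(\mathcal{PD},F)} \varphi$ iff $\Pi' \models_{\Traces(\mathcal{PD}',F')} \varphi'$, where $\Pi'$ is any $\AP'$-extension of $\Pi$ compatible with $\mathcal{A}_B$; the key cases are $[\delta]_\pi \leftrightarrow [\at(\delta)]_\pi$ and $\bigcirc^\Delta \leftrightarrow \bigcirc^{\Delta'}$, both handled by part~1 applied to $\delta$ respectively to each $\Delta(\pi)$ (noting $\Succ_{\Delta(\pi)}(\tr,i) = \Succ_{\at(\Delta(\pi))}(\tr',i)$ because the sets of positions satisfying the two formulae coincide on matched traces).

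The Kripke-structure case follows from Remark~\ref{rem:fsbasisformulae}: one first rewrites $\varphi$ to eliminate $\bigcirc^\succabstract$ and $\bigcirc^\succcaller$ inside base formulae, so that $\base(\varphi)$ consists of $\muTL$ formulae, at which point Lemma~\ref{lem:modelcheckinglemma} yields an NBA instead of a VPA; the product of a Kripke structure with an NBA is again a Kripke structure with a Büchi condition, as required. I expect the main technical obstacle to be keeping the size bound tight and the successor-assignment count invariant while performing the product: it must be argued that the product genuinely preserves fair paths and their induced traces, that the Büchi component of $\mathcal{A}_B$ interacts correctly with the fairness set $F$, and that syntactically distinct occurrences of the same successor assignment $\Delta$ in $\varphi$ still collapse to the same $\Delta'$ in $\varphi'$ (so that the logic-level restriction to unique mumbling needed later in Section~\ref{sec:fsmodelchecking} is transferred faithfully).
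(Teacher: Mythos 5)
Your proposal follows essentially the same route as the paper: the same choice of $\AP'$, the same syntactic rewriting of tests and successor assignments, an application of \cref{lem:modelcheckinglemma} to $B = \base(\varphi)$, and a product of $(\mathcal{PD},F)$ with $\mathcal{A}_B$, with the Kripke case handled via \cref{rem:fsbasisformulae}. One slip: defining $F' = F \times F_{\mathcal{A}_B}$ does \emph{not} implement the conjunction of the two B\"uchi conditions, since it demands that both acceptance sets be hit \emph{simultaneously} infinitely often rather than each infinitely often; the paper instead adds a $\{0,1\}$ flag to the product states that switches when a state of $F$ is seen and resets when a state of $F_{\mathcal{A}_B}$ is seen, and takes the final states to be those with flag $1$. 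Relatedly, the paper also carries the guessed label set $P \subseteq \AP_\delta$ as an explicit state component so that $L'$ can be defined on control locations; your phrasing ``the labelling \ldots that the $\mathcal{A}_B$-transition prescribes'' glosses over the fact that labels must live on states, not transitions. Both are standard fixes and do not affect the overall correctness of your approach.
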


\subsection{Fair Finite State Model Checking}\label{subsec:finitestate}

Now, we show how to decide the fair model checking problem for mumbling $H_\mu$ with unique mumbling and basis $\AP$.
We borrow the idea from \cite{Bozzelli2021} to build a Kripke structure whose traces represent summarised variants of the original Kripke structure's traces and then analyse these traces synchronously.
In contrast to \cite{Bozzelli2021}, where decidability for HyperLTL$_S$ model checking is obtained by reduction to the model checking problem for synchronous HyperLTL, we present a direct model checking procedure here.
This also introduces ideas for the model checking procedure in \cref{sec:pdmodelchecking}.

We show how to check $(\mathcal{K},F) \models \varphi$ for a fair Kripke structure $(\mathcal{K},F)$ and a closed hyperproperty formula $\varphi := Q_n \pi_n \dots Q_1 \pi_1 . \psi$ with basis $\AP$ and unique successor assignment $\Delta$.
We use $\varphi_i$ to denote the subformula $Q_i \pi_i \dots Q_1 \pi_1 .\psi$ with the $i$ innermost quantifiers.
As special cases, we have $\varphi_0 = \psi$ and $\varphi_n = \varphi$.
In a nutshell, we inductively construct automata $\mathcal{A}_{\varphi_i}$ that are \textit{equivalent} to the formulae $\varphi_i$ in a certain sense.
If the modes of progression of formulae and automata match, the notion of $\mathcal{K}$-equivalence from \cite{Finkbeiner2015} is suitable.
We adapt this notion first.
In this definition, trace assignments $\Pi$ over $\mathcal{T}$ with $\Pi(\pi_i) = P_0^i P_1^i \dots \in (2^{\AP})^\omega$ are encoded by words $w_\Pi \in ((2^{\AP})^n)^\omega$ with $w_\Pi(j) = (P_j^1,\dots,P_j^n)$:

\begin{definition}[$\mathcal{T}$-equivalence]\label{def:kequivalence}
	Given a set of traces $\mathcal{T}$, a closed hyperproperty formula $\varphi$ and automaton $\mathcal{A}$, we call $\mathcal{A}$ $\mathcal{T}$-equivalent to $\varphi$, iff
	for all trace assignments $\Pi$ over $\mathcal{T}$ binding the free trace variables in $\varphi$, we have $\Pi \models_{\mathcal{T}} \varphi$ iff $w_{\Pi} \in \mathcal{L}(\mathcal{A})$.
\end{definition}

In our current setup, however, we deal with formulae that advance trace assignments asynchronously in accordance with a successor assignment $\Delta$ such that the modes of progression of formulae differ from that of the automata to be used.
We thus define a new notion of equivalence that also respects successor assignments.
In this definition, we need the notation $\Pi^\Delta$ for a trace assignment $\Pi$ that is summarised with respect to a successor assignment $\Delta$, i.e. where all positions that are skipped by $\Delta$ are left out.
For a trace formula $\delta$ and a trace $\tr$, the trace summary $\summ_{\delta}(\tr)$ is given by $\summ_{\delta}(\tr)(i) = \tr(\Succ_{\delta}^i(tr,0))$.
Then, $\Pi^\Delta$ is given by $\Pi^\Delta(\pi) = \summ_{\Delta(\pi)}(\Pi(\pi))$.

\begin{definition}[$(\Delta,\mathcal{T})$-equivalence]\label{def:deltatequivalence}
	Given a set of traces $\mathcal{T}$, a hyperproperty formula $\varphi$ with unique successor assignment $\Delta$ and automaton $\mathcal{A}$, we call $\mathcal{A}$ $(\Delta,\mathcal{T})$-equivalent to $\varphi$, iff for all trace assignments $\Pi$ over $\mathcal{T}$ binding the free trace variables in $\varphi$, we have $\Pi \models_{\mathcal{T}} \varphi$ iff $w_{\Pi^{\Delta}} \in \mathcal{L}(\mathcal{A})$.
\end{definition}

In the case where $\varphi$ is closed, the equivalence in this definition reduces to $\mathcal{T} \models \varphi$ iff $w \in \mathcal{L}(\mathcal{A})$ for the unique word $w$ over the single letter alphabet of empty tuples.
Thus, model checking a fair Kripke structure $(\mathcal{K},F)$ against a formula $\varphi$ with unique successor assignment $\Delta$ can be reduced to an emptiness test on an automaton $\mathcal{A}$ that is $(\Delta,\Traces(\mathcal{K},F))$-equivalent to $\varphi$.

Now that this notion is established, we present the inductive construction of the automata $\mathcal{A}_{\varphi_i}$.
In the base case, where $\varphi_0 = \psi$, we reuse an automaton construction for synchronous $H_\mu$ from \cite{GutsfeldMO21} as the automaton $\mathcal{A}_{\psi}$.
For this purpose, we need a connection between $\mathcal{T}$-equivalence and $(\Delta,\mathcal{T})$-equivalence that we establish next.
Let $\psi^s$ be the variant of $\psi$ where $\Delta$ is replaced with the synchronous successor assignment $\Delta_s = \lambda \pi . \true$.
Since we only have atomic tests, $\psi^s$ belongs to the synchronous fragment of $H_\mu$ from \cite{GutsfeldMO21}.

\begin{lemma}\label{lem:deltatequivalencecorollary}
	Let $\psi$ be a closed multitrace formula with unique successor assignment $\Delta$ and basis $\AP$ and let $\mathcal{A}_{\psi}$ be an automaton that is $\mathcal{T}$-equivalent to $\psi^s$ for all sets of traces $\mathcal{T}$.
	Then, $\mathcal{A}_{\psi}$ is $(\Delta,\mathcal{T})$-equivalent to $\psi$ for all sets of traces $\mathcal{T}$.
\end{lemma}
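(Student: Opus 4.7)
I plan to derive $(\Delta,\mathcal{T})$-equivalence of $\mathcal{A}_\psi$ to $\psi$ from the given $\mathcal{T}$-equivalence of $\mathcal{A}_\psi$ to $\psi^s$ by proving the identity
\[
\Pi \models_{\mathcal{T}} \psi \iff \Pi^{\Delta} \models_{\mathcal{T}^*} \psi^{s}
\]
for any trace assignment $\Pi$ over $\mathcal{T}$ binding the free trace variables of $\psi$ and any $\mathcal{T}^* \supseteq \{\Pi^\Delta(\pi) \mid \pi \in \dom(\Pi)\}$. Since $\psi$ and $\psi^s$ are closed multitrace formulae without trace quantifiers, both satisfaction relations reduce to membership of $(0,\ldots,0)$ in the respective semantic sets, so the ambient sets of traces $\mathcal{T}$ and $\mathcal{T}^*$ play no further role. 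Composing this identity with the hypothesised $\mathcal{T}^*$-equivalence, which gives $\Pi^\Delta \models_{\mathcal{T}^*} \psi^s \iff w_{\Pi^\Delta} \in \mathcal{L}(\mathcal{A}_\psi)$, then yields exactly what $(\Delta,\mathcal{T})$-equivalence demands.

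\textbf{Key strengthened statement.} Setting $n := |\dom(\Pi)|$, I introduce the injective ``grid map''
\[
\rho_\Pi(i_1,\ldots,i_n) := \bigl(\Succ^{i_1}_{\Delta(\pi_1)}(\Pi(\pi_1),0),\ldots,\Succ^{i_n}_{\Delta(\pi_n)}(\Pi(\pi_n),0)\bigr)
\]
and prove by structural induction on $\psi$ that, whenever fixpoint variable assignments $\mathcal{W},\mathcal{W}'$ satisfy $\mathcal{W}(X) = \rho_\Pi^{-1}(\mathcal{W}'(X))$ for every free $X$, then for all $(i_1,\ldots,i_n) \in \mathbb{N}_0^n$:
\[
(i_1,\ldots,i_n) \in \llbracket \psi^{s} \rrbracket^{\Pi^\Delta}_{\mathcal{W}} \iff \rho_\Pi(i_1,\ldots,i_n) \in \llbracket \psi \rrbracket^{\Pi}_{\mathcal{W}'}.
\]
Since $\rho_\Pi(0,\ldots,0) = (0,\ldots,0)$ and $\psi$ is closed, instantiating at the origin with empty assignments delivers the identity above.

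\textbf{Cases of the induction.} The base case for a test $[\ap]_{\pi_k}$ is where the basis-$\AP$ restriction is essential: by the definition of $\summ_{\Delta(\pi_k)}$, the labellings $\Pi^\Delta(\pi_k)(i_k)$ and $\Pi(\pi_k)(\rho_\Pi(i_1,\ldots,i_n)_k)$ coincide, so whether $\ap$ holds is the same on both sides. A non-atomic test could inspect positions erased by summarisation and would break this. Fixpoint variables pass through directly by the compatibility hypothesis on $\mathcal{W},\mathcal{W}'$, and booleans are routine. The next-operator case uses that $\Succ_{\Delta_s}(\Pi^\Delta,(i_1,\ldots,i_n)) = (i_1+1,\ldots,i_n+1)$ (because $\true$ holds everywhere) together with the componentwise identity $\Succ_\Delta(\Pi,\rho_\Pi(i_1,\ldots,i_n)) = \rho_\Pi(i_1+1,\ldots,i_n+1)$, after which the induction hypothesis closes the case.

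\textbf{The fixpoint case (main obstacle).} The crux is $\mu X . \psi'$. I compute both least fixpoints by transfinite Kleene iteration, obtaining approximants $V_\alpha$ (for $\psi'^s$ on $\Pi^\Delta$) and $V'_\alpha$ (for $\psi'$ on $\Pi$), and maintain the invariant $V_\alpha = \rho_\Pi^{-1}(V'_\alpha)$. The base and limit stages are immediate because $\rho_\Pi^{-1}$ commutes with unions, and the successor step follows from the induction hypothesis applied to $\psi'$ with the extended assignments $\mathcal{W}[X \mapsto V_\alpha]$ and $\mathcal{W}'[X \mapsto V'_\alpha]$, whose compatibility is exactly the invariant. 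The subtle point worth flagging is that $V'_\alpha$ may contain positions outside $\Img(\rho_\Pi)$, but this ``junk'' is invisible to the invariant because only the preimage under $\rho_\Pi$ is ever tracked; starting from $(0,\ldots,0)$, iterating $\Succ_\Delta$ never leaves $\Img(\rho_\Pi)$, so such extraneous points do not affect whether the origin lies in the fixpoint. Passing to $\bigcup_\alpha$ on both sides yields $(i_1,\ldots,i_n) \in \llbracket \mu X . \psi'^s \rrbracket^{\Pi^\Delta}_{\mathcal{W}} \iff \rho_\Pi(i_1,\ldots,i_n) \in \llbracket \mu X . \psi' \rrbracket^{\Pi}_{\mathcal{W}'}$, completing the induction.
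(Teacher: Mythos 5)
Your proposal is correct and follows essentially the same route as the paper: a strengthened induction hypothesis relating iterates of $\Succ_\Delta$ on $\Pi$ to positions on $\Pi^\Delta$ under compatible fixpoint variable assignments, with the atomic-test case using the basis-$\AP$ restriction and the fixpoint case handled by transfinite induction on approximants. The only cosmetic difference is that you state the invariant for arbitrary grid tuples $(i_1,\dots,i_n)$ while the paper restricts to diagonal tuples $(i,\dots,i)$, which suffices since both semantics only ever advance all components in lockstep.
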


The following theorem is a combination of Theorem 5.2 and 6.1 from \cite{GutsfeldMO21}:

\begin{theorem}[\hspace{1sp}\cite{GutsfeldMO21}]\label{thm:synchronousautomaton}
	Let $\psi$ be a quantifier-free closed synchronous $H_\mu$ formula.
	There is an APA $\mathcal{A}_{\psi}$ of size linear in $|\psi|$ that is $\mathcal{T}$-equivalent to $\psi$ for all sets of traces $\mathcal{T}$.
	\footnote{
	Note that in \cite{GutsfeldMO21}, the definition of $\mathcal{K}$-equivalence considers free predicates and offset indices.
	Since we are only concerned with closed formulae, we can use a simpler definition here.
	Another minor difference is that the definition in \cite{GutsfeldMO21} considers paths of a Kripke structures $\mathcal{K}$ rather than general trace sets $\mathcal{T}$.
	}
\end{theorem}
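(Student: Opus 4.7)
The plan is to adapt the classical Emerson--Jutla translation of modal $\mu$-calculus formulae to alternating parity automata to the hyperproperty setting. Since $\psi$ is quantifier-free and closed, with free trace variables $\pi_1,\dots,\pi_n$, and since its unique successor assignment is implicitly $\Delta_s = \lambda \pi . \true$, the APA $\mathcal{A}_\psi$ will read words $w \in ((2^{\AP})^n)^\omega$ of the shape $w_\Pi$ arising from a trace assignment $\Pi$ as in \cref{def:kequivalence}.

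First, I would take the states of $\mathcal{A}_\psi$ to be (essentially) the subformulae of $\psi$, giving a state set of linear size. The transition function is defined by induction on the subformula at a state: atomic tests $[\ap]_{\pi_i}$ are resolved directly by inspecting the $i$-th component of the current letter; Boolean combinations propagate as the corresponding Boolean combinations of transitions; the synchronous next operator $\bigcirc^{\Delta_s} \psi'$ contributes a single $(\succglobal, q_{\psi'}, q_{\psi'})$ move to the child state at the next position; and fixpoint variables $X$ are expanded to the transition of their binding $\mu X. \psi'$ or $\nu X. \psi'$, which is unfolded once. The strictly guarded form assumed for $\psi$ ensures that this unfolding terminates at a next operator before a variable reappears, so the transition function is well defined and of linear overall size.

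Next, I would assign priorities in the Emerson--Jutla style: states corresponding to $\mu$-fixpoints receive odd priorities and states corresponding to $\nu$-fixpoints receive even priorities, with the ordering respecting the nesting so that an inner fixpoint dependent on an outer one receives a smaller priority; auxiliary states (tests, Booleans, nexts) get a fixed neutral priority that does not interfere with the parity condition. Correctness is then established by induction on the subformula via the Knaster--Tarski theorem and the standard correspondence between parity acceptance and fixpoint semantics: for every subformula $\psi'$ of $\psi$, an accepting $(q_{\psi'}, j)$-run of $\mathcal{A}_\psi$ on $w_\Pi$ exists iff $(j,\dots,j) \in \llbracket \psi' \rrbracket^{\Pi}$. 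Specialising this to the topmost state at $j=0$ yields $w_\Pi \in \mathcal{L}(\mathcal{A}_\psi)$ iff $\Pi \models_{\mathcal{T}} \psi$, and since the automaton only looks at $w_\Pi$, this equivalence holds uniformly for every trace set $\mathcal{T}$.

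The main obstacle I would expect is the priority assignment in the presence of arbitrarily nested and mutually dependent fixpoints, since it is what drives the correctness proof of the fixpoint cases; however, this is by now a standard ingredient of $\mu$-calculus-to-APA translations. The only genuinely new feature in our setting is the product alphabet $(2^{\AP})^n$ used to encode trace assignments, which affects only the handling of atomic tests and not the priority bookkeeping, so the construction of \cite{GutsfeldMO21} transfers and yields an APA of size linear in $|\psi|$.
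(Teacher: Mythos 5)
Your sketch matches the approach behind this result: the paper does not reprove it but imports it from \cite{GutsfeldMO21} as a combination of their Theorems 5.2 and 6.1, and the underlying construction there (mirrored by the paper's own analogous constructions of $\mathcal{A}_B$ in \cref{app:modelcheckinglemma} and of the aligned automaton in \cref{app:pdinnerautomaton}) is exactly the subformula-states-plus-fixpoint-priorities translation you describe, with correctness argued via the approximant characterisation of fixpoints. One small caution: the paper's acceptance condition is min-parity (the lowest priority occurring infinitely often must be even), so outer fixpoints must receive the \emph{smaller} priorities in order to dominate, which is the opposite of the ordering you state.
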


Together, \cref{lem:deltatequivalencecorollary} and \cref{thm:synchronousautomaton} give us:

\begin{theorem}\label{thm:deltatequivalence}
	For any closed multitrace formula $\psi$ with unique successor assignment $\Delta$ and basis $\AP$,
	there is an APA $\mathcal{A}_\psi$ with size linear in $|\psi|$ that is $(\Delta,\mathcal{T})$-equivalent to $\psi$ for all sets of traces $\mathcal{T}$.	
\end{theorem}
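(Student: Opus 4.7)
The plan is to obtain this theorem as a direct corollary of the two immediately preceding results, exactly as the transition sentence suggests. Given a closed multitrace formula $\psi$ with unique successor assignment $\Delta$ and basis $\AP$, I first form the synchronised companion $\psi^s$ by replacing every occurrence of $\Delta$ in next-operators by the trivial successor assignment $\Delta_s = \lambda \pi.\true$. Since the basis is $\AP$, every test $[\delta]_\pi$ in $\psi$ has $\delta \in \AP$, so $\psi^s$ involves only atomic tests, standard boolean operators, fixpoints, and synchronous next operators; hence it lies in the quantifier-free fragment of synchronous $H_\mu$ treated in \cite{GutsfeldMO21}. Moreover, since $\psi$ and $\psi^s$ share exactly the same set of subformulae (they differ only in the labels on $\bigcirc$-operators), we have $|\psi^s| = |\psi|$.

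Next, I apply \cref{thm:synchronousautomaton} to $\psi^s$, which produces an APA $\mathcal{A}_\psi$ of size linear in $|\psi^s|$, hence linear in $|\psi|$, that is $\mathcal{T}$-equivalent to $\psi^s$ for every set of traces $\mathcal{T}$. This is precisely the hypothesis needed for \cref{lem:deltatequivalencecorollary}. Applying that lemma yields immediately that $\mathcal{A}_\psi$ is $(\Delta,\mathcal{T})$-equivalent to $\psi$ for every $\mathcal{T}$, establishing the theorem.

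The only point worth double-checking is that the side conditions of \cref{lem:deltatequivalencecorollary} are met by the construction: namely, $\psi$ must be closed, use a unique successor assignment $\Delta$, have basis $\AP$, and the APA must be $\mathcal{T}$-equivalent to $\psi^s$ uniformly in $\mathcal{T}$. All four conditions are present by hypothesis or by construction. There is no genuine obstacle here; the work has been front-loaded into \cref{lem:deltatequivalencecorollary} (whose proof relates the semantics of $\psi$ on a trace assignment $\Pi$ to the semantics of $\psi^s$ on the summarised assignment $\Pi^\Delta$) and into \cref{thm:synchronousautomaton} (the synchronous automaton construction). The present theorem is simply their composition.
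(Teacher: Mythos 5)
Your proof is correct and is exactly the paper's own argument: the paper defines $\psi^s$ via $\Delta_s = \lambda\pi.\true$, invokes \cref{thm:synchronousautomaton} for the linear-size APA that is $\mathcal{T}$-equivalent to $\psi^s$, and then applies \cref{lem:deltatequivalencecorollary} to transfer this to $(\Delta,\mathcal{T})$-equivalence with $\psi$. No differences worth noting.
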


Starting with the automaton $\mathcal{A}_{\varphi_0}$ from \cref{thm:deltatequivalence}, we inductively construct automata $\mathcal{A}_{\varphi_i}$ that are $(\Delta,\Traces(\mathcal{K},F))$-equivalent to $\varphi_i$.
For $i \geq 1$, we have $\varphi_i = Q_i \pi_i . \varphi_{i-1}$ and construct an NBA $\mathcal{A}_{\varphi_i}$ with input alphabet $(2^\AP)^{n-i}$ from the NBA $\mathcal{A}_{\varphi_{i-1}}$ with input alphabet $(2^\AP)^{n-i+1}$ and the structure $(\mathcal{K},F)$.
Note that $\mathcal{A}_{\varphi_{0}}$ can indeed be assumed to be given as an NBA by \cref{prop:paritydealternation}.
Since $\varphi$ has basis $\AP$, we know that $\Delta(\pi_i) = \ap$ for some $\ap \in \AP$.
We transform $(\mathcal{K},F)$ into a fair Kripke structure $(\mathcal{K}_\ap,F_\ap)$ such that $\Traces(\mathcal{K}_\ap,F_\ap) = \summ_\ap(\Traces(\mathcal{K},F))$ where $\summ_{\delta}(\mathcal{T}) = \{\summ_{\delta}(\tr) \mid \tr \in \mathcal{T}\}$ is the straightforward extension of $\summ_{\delta}$ to sets.
Then, we can use a standard construction for handling quantifiers as used e.g. for HyperLTL \cite{Finkbeiner2015} with the difference that we use $(\mathcal{K}_\ap,F_\ap)$ instead of $(\mathcal{K},F)$.
In short, when $Q_i$ is an existential quantifier, we build the product of $\mathcal{A}_{\varphi_{i-1}}$ and $(\mathcal{K}_\ap,F_\ap)$ and perform a projection on the components of the input alphabet other than the one representing $\pi_i$.
Universal quantifiers are handled using complementation.
For this, an NBA can be interpreted as an APA, complemented without size increase, and then turned into an NBA again using \cref{prop:paritydealternation}.
In order to avoid further exponential costs in the model checking procedure, we restrict the following theorem to formulae where the outermost quantifier is an existential one.
Outermost universal quantifiers can be handled by constructing the automaton for the negation of the formula instead.
The details of this construction as well as the proof of the following theorem can be found in \cref{app:finitestate}.

\begin{theorem}\label{thm:quantifierkequivalence}
	Let $(\mathcal{K},F)$ be a fair Kripke structure and let $\varphi$ be a hyperproperty formula with unique successor assignment $\Delta$, an outermost existential quantifier, basis $\AP$ and quantifier alternation depth $k$.
	There is an NBA $\mathcal{A}_\varphi$ of size $\mathcal{O}(g(k+1,|\varphi| + \log(|\mathcal{K}|)))$ that is $(\Delta,\Traces(\mathcal{K},F))$-equivalent to $\varphi$.
\end{theorem}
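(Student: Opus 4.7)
The plan is to construct $\mathcal{A}_\varphi$ inductively following the quantifier prefix $\varphi_0 = \psi$, $\varphi_1, \dots, \varphi_n = \varphi$, maintaining the invariant that each $\mathcal{A}_{\varphi_i}$ is an NBA that is $(\Delta, \Traces(\mathcal{K},F))$-equivalent to $\varphi_i$. For the base case $\varphi_0 = \psi$, I would invoke \cref{thm:deltatequivalence} to obtain an APA of size linear in $|\psi|$ that is $(\Delta,\mathcal{T})$-equivalent to $\psi$ for all $\mathcal{T}$, and then apply \cref{prop:paritydealternation} to convert it into an NBA, incurring a single exponential blow-up.

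For the inductive step, let $\varphi_i = Q_i \pi_i . \varphi_{i-1}$ and let $\ap \in \AP$ be the atomic proposition with $\Delta(\pi_i) = \ap$. The first task is to build a fair Kripke structure $(\mathcal{K}_\ap, F_\ap)$ of size polynomial in $|\mathcal{K}|$ with $\Traces(\mathcal{K}_\ap,F_\ap) = \summ_\ap(\Traces(\mathcal{K},F))$. Conceptually its states pair an $\ap$-labelled state of $\mathcal{K}$ with a bit recording whether some $F$-state has been traversed since the previous $\ap$-position, together with auxiliary gadgetry to handle the fallback case of $\Succ_\ap$ (which reverts to $i+1$ once no further $\ap$-position is reached). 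Transitions summarise maximal intermediate segments containing no $\ap$. The second task is to combine $\mathcal{A}_{\varphi_{i-1}}$ with $(\mathcal{K}_\ap,F_\ap)$. If $Q_i = \exists$, I take the standard product NBA over $(2^\AP)^{n-i+1}$ and project out the $\pi_i$-component of the alphabet. If $Q_i = \forall$, I use $\forall \pi_i . \varphi_{i-1} \equiv \lnot \exists \pi_i . \lnot \varphi_{i-1}$: interpret the NBA as an APA and dualise (both free), convert back to an NBA via \cref{prop:paritydealternation} (one exponential), apply the existential construction above, and then dualise one more time.

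Correctness reduces to the identity $\Pi \models_\mathcal{T} \exists \pi_i . \varphi_{i-1}$ iff some $\tr \in \Traces(\mathcal{K},F)$ yields $w_{(\Pi[\pi_i \mapsto \tr])^\Delta} \in \mathcal{L}(\mathcal{A}_{\varphi_{i-1}})$, combined with the equality $(\Pi[\pi_i \mapsto \tr])^\Delta(\pi_i) = \summ_\ap(\tr)$, so that quantification over fair traces of $\mathcal{K}$ matches one-to-one the quantification over fair traces of $\mathcal{K}_\ap$ that the product-projection construction realises. The dual case for $\forall$ is obtained by two applications of Boolean duality. For the size bound, one exponential comes from the base, and each quantifier alternation contributes one further exponential via \cref{prop:paritydealternation}; consecutive quantifiers of the same polarity are absorbed into a single product-projection chain without additional blow-up. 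Because the outermost quantifier is existential, no final complementation step is necessary, so the total number of exponential layers is at most $k+1$, matching $\mathcal{O}(g(k+1, |\varphi| + \log|\mathcal{K}|))$.

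The hard part will be the summarisation construction: one must argue that $(\mathcal{K}_\ap,F_\ap)$ faithfully captures both the fairness condition (preserving $F$-visits that occur in the squeezed-out segments) and the boundary behaviour of $\Succ_\ap$, while producing exactly the set $\summ_\ap(\Traces(\mathcal{K},F))$ on the nose. A secondary subtlety is bookkeeping for the size bound, namely collapsing consecutive same-polarity quantifier blocks so that only genuine alternations trigger a complementation and the exponent tower stays at height $k+1$.
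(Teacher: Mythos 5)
Your proposal is correct and follows essentially the same route as the paper: base case via \cref{thm:deltatequivalence} plus \cref{prop:paritydealternation}, a summarised fair Kripke structure $(\mathcal{K}_\ap,F_\ap)$ with $\Traces(\mathcal{K}_\ap,F_\ap) = \summ_\ap(\Traces(\mathcal{K},F))$ built by collapsing $\ap$-free segments and tracking $F$-visits with a bit, product-projection for existential quantifiers, APA-based complementation for universal ones, and the same size bookkeeping (absorbing same-polarity blocks, exploiting the outermost existential quantifier) yielding the $g(k+1,\cdot)$ bound. The subtleties you flag (the fallback behaviour of $\Succ_\ap$ and the combination of Büchi conditions) are exactly the points the paper's appendix handles with its $S_\ell$/$S_{n\ell}$ partition and the $\xi$-style bit-switching.
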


Combining the model checking procedure from this subsection with the reduction from \cref{lem:modelcheckingtranslation}, we obtain a model checking procedure for $H_\mu$ with full basis.
From corresponding bounds for HyperLTL \cite{Rabe2016}, we can derive matching lower bounds for the complexity of the model checking problem for fixed structure and formula, respectively.
Overall, we obtain:

\begin{theorem}\label{thm:modelcheckingcompleteness}
	The fair finite state model checking problem for alternation depth $k$ mumbling $H_\mu$ with unique mumbling is complete for $k\EXPSPACE$.
	For fixed formulae, the problem is $(k-1)\EXPSPACE$-complete for $k \geq 1$ and $\NLOGSPACE$-complete for $k = 0$.
\end{theorem}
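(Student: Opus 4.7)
The plan is to establish the upper bounds by composing \cref{lem:modelcheckingtranslation} with \cref{thm:quantifierkequivalence} and testing emptiness of the resulting NBA, and to match them with lower bounds reduced from synchronous HyperLTL model checking.

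For the upper bound, given a fair Kripke structure $(\mathcal{K},F)$ and a closed mumbling $H_\mu$ formula $\varphi$ of alternation depth $k$ with unique mumbling, I first invoke \cref{lem:modelcheckingtranslation} to obtain an equivalent instance $(\mathcal{K}',F') \models \varphi'$ where $|\varphi'| = \mathcal{O}(|\varphi|)$ is over an extended basis $\AP'$, $|\mathcal{K}'| = \mathcal{O}(|\mathcal{K}| \cdot 2^{p(|\varphi|)})$ for a polynomial $p$, and the number of successor assignments (hence unique mumbling) and the alternation depth are preserved. If the outermost quantifier of $\varphi'$ is universal, I instead apply the construction to $\lnot \varphi'$, which preserves alternation depth, unique mumbling and basis, and merely complements the final answer. \cref{thm:quantifierkequivalence} then produces an NBA $\mathcal{A}_{\varphi'}$ of size $\mathcal{O}(g(k+1, |\varphi'| + \log|\mathcal{K}'|))$ that is $(\Delta, \Traces(\mathcal{K}',F'))$-equivalent to $\varphi'$, so model checking reduces to emptiness of $\mathcal{A}_{\varphi'}$. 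Since $\log|\mathcal{K}'| \leq \log|\mathcal{K}| + \mathcal{O}(|\varphi|^c)$, this bound absorbs into $\mathcal{O}(g(k+1, |\varphi| + \log|\mathcal{K}|))$. By \cref{thm:parityemptiness}, NBA emptiness lies in $\NLOGSPACE$, so the overall procedure uses space $\mathcal{O}(\log|\mathcal{A}_{\varphi'}|) = \mathcal{O}(g(k, |\varphi| + \log|\mathcal{K}|))$. For arbitrary formulae this is in $k\EXPSPACE$, and for fixed $\varphi$ the $|\varphi|$-term becomes a constant, giving $\NLOGSPACE$ for $k = 0$ and $(k-1)\EXPSPACE$ for $k \geq 1$.

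For the matching lower bounds, I reduce from synchronous HyperLTL: given a HyperLTL formula $\varphi$ of alternation depth $k$, I translate every $\bigcirc$ to $\bigcirc^{\Delta_s}$ and every $\mathcal{U}$ to $\mathcal{U}^{\Delta_s}$, where $\Delta_s = \lambda \pi.\true$ is the synchronous successor assignment. Since $\Succ_{\true}(\tr,i) = i+1$, the translated formula is equivalent to $\varphi$ and belongs to mumbling $H_\mu$ with unique mumbling and basis $\AP$, as tests $[a]_\pi$ are already atomic. This logspace reduction preserves alternation depth and fixed-formula status, and thus transfers the HyperLTL lower bounds from \cite{Rabe2016}: $k\EXPSPACE$-hardness in general, $(k-1)\EXPSPACE$-hardness for fixed formulae with $k \geq 1$, and $\NLOGSPACE$-hardness at $k = 0$ via a trivial reachability encoding such as $\exists \pi.\mathcal{F}^\succglobal\, p$.

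The principal technical obstacle is the bookkeeping of the two composed size blow-ups: I must verify that the exponential factor $2^{p(|\varphi|)}$ added to the structure by \cref{lem:modelcheckingtranslation} enters the nested exponential $g(k+1,\cdot)$ only through $\log|\mathcal{K}'|$ at its innermost level, so that it contributes merely a polynomial term to the size parameter rather than pushing the overall space bound up one level of the exponential tower. Handling outermost universal quantifiers by negation and the $\NLOGSPACE$ emptiness test itself are otherwise standard.
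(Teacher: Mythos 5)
Your proposal is correct and follows essentially the same route as the paper: reduce to an atomic basis via \cref{lem:modelcheckingtranslation}, build the NBA of \cref{thm:quantifierkequivalence} (negating first for an outermost universal quantifier), test emptiness in space logarithmic in the automaton size, and observe that the exponential blow-up of the structure only enters through $\log|\mathcal{K}'|$ and is therefore absorbed; the lower bounds come from embedding synchronous HyperLTL and citing \cite{Rabe2016}, exactly as in the paper. The only cosmetic omission is the appeal to Savitch's theorem to move from nondeterministic to deterministic space classes, which the paper makes explicit.
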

	\section{Fair Pushdown Model Checking}\label{sec:pdmodelchecking}

Now, we tackle the fair model checking problem for pushdown systems.
By the next theorem, the restriction to unique mumbling is not enough to obtain a decidable model checking problem on its own.
The theorem follows from a straightforward reduction from HyperLTL model checking against PDS which is known to be undecidable \cite{Pommellet2018}.

\begin{theorem}\label{thm:vpundecidability}
	Pushdown model checking for mumbling $H_\mu$ with unique mumbling is undecidable.
\end{theorem}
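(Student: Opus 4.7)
The plan is to give a direct linear reduction from the HyperLTL model checking problem for pushdown systems, which is undecidable by \cite{Pommellet2018}, to the fair pushdown model checking problem for mumbling $H_\mu$ with unique mumbling. The reduction keeps the PDS unchanged and translates the HyperLTL formula into a mumbling $H_\mu$ formula that uses a single, synchronous successor assignment.

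More concretely, given a HyperLTL formula $\varphi$ over atomic propositions $\AP$, I would define a translation $\tau$ by structural recursion: trace quantifiers $\exists \pi$ and $\forall \pi$ are preserved; Boolean connectives are preserved; an indexed atomic proposition $\ap_\pi$ becomes the test $[\ap]_\pi$; the next operator $\bigcirc \psi$ becomes $\bigcirc^{\Delta} \tau(\psi)$ where $\Delta := \lambda \pi . \true$ is the successor assignment that maps every trace variable to the formula $\true$; and the until operator $\psi_1 \mathcal{U} \psi_2$ becomes $\tau(\psi_1) \mathcal{U}^{\Delta} \tau(\psi_2)$ with the same $\Delta$. The resulting formula $\tau(\varphi)$ is a mumbling $H_\mu$ formula of size linear in $|\varphi|$, and since every occurrence of $\bigcirc^{\Delta}$ and the derived $\mathcal{U}^{\Delta}$ uses the same successor assignment $\Delta$, it lies in the unique mumbling fragment.

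The key semantic observation is that $\llbracket \true \rrbracket^{\tr} = \mathbb{N}_0$ for every trace $\tr$, so $\Succ_{\true}(\tr,i) = i+1$ for all $i$. Consequently $\Succ_{\Delta}(\Pi,(v_1,\dots,v_n)) = (v_1+1,\dots,v_n+1)$, i.e.\ the $\bigcirc^{\Delta}$ modality advances every trace by exactly one position, faithfully matching HyperLTL's synchronous next. A routine induction on formula structure then shows that $\Pi \models_{\Traces(\mathcal{PD})} \varphi$ in the HyperLTL sense iff $\Pi \models_{\Traces(\mathcal{PD})} \tau(\varphi)$ in the mumbling $H_\mu$ sense, for every trace assignment $\Pi$. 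Taking $\mathcal{PD}$ to be a pushdown system (which is also a fair PDS with $F$ equal to the whole state set, so that fair paths coincide with all paths) yields $\mathcal{PD} \models_{\mathrm{HyperLTL}} \varphi$ iff $(\mathcal{PD}, S) \models \tau(\varphi)$.

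There is no substantive obstacle: the only points that require care are that $\Delta$ is indeed a legal successor assignment (its codomain contains only the trace formula $\true$, which is a closed $\muTL$ formula, so the reduced formula even has atomic-plus-$\true$ basis and still lies in the unique mumbling fragment) and that the translation respects the semantics of until, which is immediate from the derived definition $\psi_1 \mathcal{U}^{\Delta} \psi_2 \equiv \mu X. \psi_2 \lor (\psi_1 \land \bigcirc^{\Delta} X)$ combined with the synchronous behaviour of $\bigcirc^{\Delta}$. Since any decision procedure for pushdown model checking of mumbling $H_\mu$ with unique mumbling would, via $\tau$, yield one for HyperLTL on PDS, undecidability of the latter transfers.
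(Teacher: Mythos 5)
Your proposal is correct and is exactly the argument the paper intends: the paper proves this theorem by the same "straightforward reduction from HyperLTL model checking against PDS" (undecidable by Pommellet et al.), merely without spelling out the translation. Your embedding via the synchronous successor assignment $\Delta = \lambda \pi.\true$ (which the paper itself uses as $\Delta_s$ in Section 4) and the handling of fairness by taking $F = S$ are both sound.
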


Undecidability of pushdown model checking does not only apply to specially crafted formulae; it also applies to relevant information flow policies.
An example is generalised non-interference, one of the information flow properties that motivated the introduction of HyperLTL \cite{Clarkson2014}. It is described by the HyperLTL formula $\varphi_{\mathit{GNI}} := \forall \pi_1 . \forall \pi_2 . \exists \pi_3 . (\mathcal{G} \bigwedge_{l \in L} l_{\pi_1} \leftrightarrow l_{\pi_3} )\land (\mathcal{G} \bigwedge_{h \in H} h_{\pi_2} \leftrightarrow h_{\pi_3} )$.
A proof by reduction from the equivalence problem for pushdown automata can be found in \Cref{app:gniundecidability}.

\begin{theorem}\label{thm:gniundecidability}
	Checking Generalised Non-Interference is undecidable for pushdown systems.
\end{theorem}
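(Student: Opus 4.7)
}
The plan is to reduce from the equivalence problem for nondeterministic pushdown automata, which is well known to be undecidable. Given two PDAs $P_1, P_2$ over a finite input alphabet $\Sigma$, I will construct a fair pushdown system $(\mathcal{PD},F)$ together with disjoint sets of atomic propositions $L$ (low) and $H$ (high), and show that $(\mathcal{PD},F) \models \varphi_{\mathit{GNI}}$ iff $L(P_1) = L(P_2)$. I pick $L := \{p_a \mid a \in \Sigma\} \cup \{p_\#\}$ to encode the input letter currently being read (or a padding marker), and $H := \{h_1,h_2\}$ to indicate which of the two PDAs is being simulated along a given trace.

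The construction of $\mathcal{PD}$ works as follows. From its initial control location it nondeterministically branches on an index $i\in\{1,2\}$ and, in the $i$-th branch, faithfully simulates $P_i$ using the PDS stack, labelling every control location visited during the simulation with $h_i$ and with exactly the proposition $p_a$ corresponding to the input symbol $a$ consumed in that step. When $P_i$ reaches an accepting configuration after having consumed some input $w\in\Sigma^*$, the simulation moves into a dedicated sink loop that repeatedly emits $p_\#$ while keeping $h_i$ true; this sink state is declared to be the unique fair target, so that $F$-fair paths correspond precisely to complete accepting runs of $P_1$ or $P_2$. The resulting set of fair traces is $\Traces(\mathcal{PD},F) = \{\,w\cdot\#^\omega\text{ with flag }h_i \mid i\in\{1,2\},\ w\in L(P_i)\,\}$, where by ``flag $h_i$'' I mean that $h_i$ holds in every position of the trace (and $h_{3-i}$ never holds).

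Next I verify the equivalence. If $L(P_1)=L(P_2)$, then for any $\pi_1,\pi_2\in\Traces(\mathcal{PD},F)$, say with respective flags $h_{i_1},h_{i_2}$ and underlying words $w_1,w_2$, the trace $\pi_3$ encoding $w_1$ under flag $h_{i_2}$ lies in $\Traces(\mathcal{PD},F)$ because $w_1\in L(P_{i_1})=L(P_{i_2})$; by construction it matches $\pi_1$ on every $l\in L$ and matches $\pi_2$ on every $h\in H$, witnessing $\varphi_{\mathit{GNI}}$. Conversely, if $L(P_1)\neq L(P_2)$, pick WLOG $w\in L(P_1)\setminus L(P_2)$, let $\pi_1$ be the trace for $w$ with flag $h_1$, and let $\pi_2$ be any trace with flag $h_2$; the required $\pi_3$ would have to carry $w\cdot\#^\omega$ with flag $h_2$, but no such trace exists, so $\varphi_{\mathit{GNI}}$ fails.

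The main obstacle is making the encoding sufficiently clean: the set of fair traces must be exactly $\{w\cdot\#^\omega\}$-shaped traces for accepted words of the two PDAs, so that the low projection of a trace recovers $w$ uniquely and the high projection recovers the simulated machine. This entails (i) ensuring via the fairness condition that diverging or rejecting PDA computations do not produce additional traces whose low projection could accidentally match $w$, and (ii) handling the degenerate case where one of $L(P_1),L(P_2)$ is empty, which I resolve by a standard preprocessing step that tests emptiness of each PDA separately (this is decidable) and shortcuts the reduction in that case. With the encoding correct, the equivalence between GNI and language equivalence is immediate, and undecidability of the latter transfers to the former.
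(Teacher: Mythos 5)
Your proposal is correct and follows essentially the same route as the paper's own proof: a reduction from the (undecidable) equivalence problem for nondeterministic pushdown automata, where the system nondeterministically simulates one of the two automata, the consumed input word is exposed through the low-security propositions, and the identity of the simulated automaton is exposed through the high-security propositions, so that the existential witness $\pi_3$ exists for all choices of $\pi_1,\pi_2$ exactly when the two languages coincide. Your version is somewhat more careful than the paper's about restricting fair traces to accepting runs and about the degenerate case of an empty language, but the core argument is the same.
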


In order to regain decidability, we propose to replace the standard successor operator by \textit{well-aligned} successor operators.
After introducing these operators in \cref{subsec:wellalignedness}, we present a corresponding model checking procedure for pushdown systems in \cref{subsec:visiblypushdown}.

\subsection{Well-alignedness}\label{subsec:wellalignedness}

In many applications, hyperproperties are used to specify that different executions of a system satisfying certain conditions are sufficiently similar.
This is particularly the case for applications from the realm of security where hyperproperties such as Observational Determinism require that executions of a system are so similar that they are indistinguishable from the perspective of a low security user.
In such situations, we expect that systems specifically crafted to satisfy these properties can be constructed such that outputs visible to the attacker are generated in the same procedures or at least at the same stack level in many cases despite the deviations of the executions induced by differences in secret data.

We develop well-aligned next operators for a precise analysis in such situations.
Informally, these operators $\bigcirc_w^\Delta$ coincide with the normal next operators $\bigcirc^\Delta$ but additionally require that the subtraces that are skipped by them start on a common stack level, end on a common stack level, and the lowest stack level they encounter is the same.
Nevertheless, the $\call$ and $\ret$ behaviour on different traces may differ widely, e.g.\ by executing procedures unmatched by the other traces between observed positions.
Thus, well-alignedness still covers a wide range of interesting behaviour.
In particular, for systems constructed as described above, the aligned next operator $\bigcirc_w^\Delta$ coincides with the standard next operator $\bigcirc^\Delta$ and opens the way to analyse hyperproperties for recursive systems by automatic methods.
Note also that the formula $\psi_{\wa} = \mathcal{G}^\Delta_w \true$ (where $\mathcal{G}^\Delta_w$ is the well-aligned analogue to $\mathcal{G}^\Delta$) expresses explicitly that the traces under consideration are well-aligned with respect to $\Delta$ indefinitely. 
This formula can be used either to require certain properties captured by a subformula $\psi_{\textit{pr}}$ for well-aligned evolutions only by using $\psi_{\wa}$ as a pre-condition as in $\psi_{\wa} \rightarrow \psi_{\textit{pr}}$ or to require well-alignedness in addition to the property as in $\psi_{\wa} \land \psi_{\textit{pr}}$.
The addition of the formula $\psi_{\wa}$ as a precondition or a conjunct of subformulae preserves unique mumbling such that the resulting formulae still belong to the fragment for which model checking for pushdown systems is decidable.
Given these considerations and given the undecidability results for the logic with respect to pushdown systems, we believe that the approximation by well-aligned successors is a useful approach to adress recursive systems in an automated verification method for hyperproperties.

In order to formalise the notion of well-aligned traces, we define the $\ret$-$\call$ profile of traces via a notion of abstract summarisation.
Intuitively, an abstract summarisation is a sequence of transition symbols progressing a trace while taking an abstract successor whenever possible and the $\ret$-$\call$ profile is the number of $\ret$ and $\call$ symbols left that cannot be summarised in an abstract step.
Then, well-aligned traces are those that share the same $\ret$-$\call$ profile.
Formally, the abstract summarisation $\abssum(\tr) \in \{\abs,\call,\ret \}^*$ of a finite trace $\tr$ is constructed from $\tr$ as described next.
Let $\tr_\abs$ be the version of $\tr$ where every $\intern$ symbol is replaced with $\abs$.
We construct a maximal sequence $\tr_0,\tr_1,\dots,\tr_l$ with $\tr_0 = \tr_\abs$ and $\abssum(\tr) = \tr_{l|\mathit{ts}}$ such that for all $i<l$, $\tr_{i+1}$ is obtained from $\tr_i$ in the following way:
if $\tr_i = P_0,m_0,\dots,P_{n}$, let $j_1 < n$ be the minimal index such that there is $j_2 > j_1$ with $m_{j_1} = \call$ and $\Succ_a(\tr_i,j_1) = j_2$.
Then $\tr_{i+1} = P_0 m_0 \dots m_{j_1-1} P_{j_1} \abs P_{j_2} m_{j_2} \dots P_{n}$.
It is easy to see that the sequence is unique and can be constructed for every finite trace.
Thus, $\abssum(\tr)$ is well-defined.
From the definition of abstract successors, it is also easy to see that $\abssum(\tr)$ is contained in the regular language $(\abs^*\ret)^r (\abs^*\call)^c \abs^*$ for some $r,c \in \mathbb{N}_0$.
We then call $(r,c)$ the $\ret$-$\call$ profile of $\tr$.
We define:

\begin{definition}
	We call finite traces $\tr_1,\dots,\tr_n$ \textit{well-aligned} iff they have the same $\ret$-$\call$ profile.
\end{definition}

As an example, consider three traces $\tr_1,\tr_2$ and $\tr_3$ with $\tr_{1|\mathit{ts}} = \call \cdot \ret \cdot \ret \cdot \intern \cdot \call \cdot \intern \cdot \call$, $\tr_{2|\mathit{ts}} = \ret \cdot \call \cdot \ret \cdot \intern \cdot \call \cdot \call$ and $\tr_{3|\mathit{ts}} = \call \cdot \ret \cdot \call \cdot \intern \cdot \ret \cdot \intern \cdot \call$.
Then $\abssum(\tr_1) = \abs \cdot \ret \cdot \abs \cdot \call \cdot \abs \cdot \call$, $\abssum(\tr_2) = \ret \cdot \abs \cdot \abs \cdot \call \cdot \call$ and $\abssum(\tr_3) = \abs \cdot \abs \cdot \abs \cdot \call$, therefore $\tr_1,\tr_2$ and $\tr_3$ have the $\ret$-$\call$ profiles $(1,2)$, $(1,2)$ and $(0,1)$, respectively.
This means that $\tr_1$ and $\tr_2$ are well-aligned while $\tr_1$ and $\tr_3$ are not.

Intuitively, the main insight underlying our analysis is that well-aligned traces can be progressed in tandem using a single stack, even though they have different $\call$ and $\ret$ behaviour.
For this, sequences of $\abs$ moves can be turned into internal steps and the different traces can synchronise their stack actions on the $r$ common $\ret$ and $c$ common $\call$ moves.

We now define a well-aligned variant, $\Succ_\Delta^w$, of the successor function $\Succ_\Delta$.
Let $\Pi$ be a trace assignment with $\Pi(\pi_i) = \tr_i$ and $v = (v_1,\dots,v_n), v' = (v_1',\dots,v_n')$ be vectors such that $v_i' := \Succ_{\Delta(\pi_i)}(\tr_i,v_i)$.
We define $\Succ_\Delta^w$ as the partial function such that $\Succ_\Delta^w(\Pi,v) = \Succ_\Delta(\Pi,v)$, if $\tr_1[v_1,v_1'],...,\tr_n[v_n,v_n']$ are well-aligned, and is undefined otherwise.
From now on, we use a version of $\bigcirc^\Delta$ that uses this successor operator in its semantics: $\llbracket \bigcirc_w^\Delta \psi \rrbracket_{\mathcal{W}}^\Pi := \{ v \in \mathbb{N}_0^n \mid \Succ_\Delta^w(\Pi,v) \text{ is defined and }\Succ_\Delta^w(\Pi,v) \in \llbracket \psi \rrbracket_{\mathcal{W}}^{\Pi} \}$.
Notice that this operator is not self-dual.
However, we can easily introduce its dual version $\bigcirc_d^\Delta \psi$ with the following semantics: $\llbracket \bigcirc_d^\Delta \psi \rrbracket_{\mathcal{W}}^\Pi := \{ v \in \mathbb{N}_0^n \mid \Succ_\Delta^w(\Pi,v) \allowbreak\text{ is undefined or }\Succ_\Delta^w(\Pi,v) \in \llbracket \psi \rrbracket_{\mathcal{W}}^{\Pi} \}$.
On traces generated from Kripke structures, the semantics of both these next operators coincides with that of $\bigcirc^\Delta$.
Moreover, for formulae in positive form, replacing the standard next operator with $\bigcirc_w^\Delta$ or $\bigcirc_d^\Delta$ leads to formulae that under- or overapproximate the semantics of the original formula, respectively.

\subsection{Fair Pushdown Model Checking}\label{subsec:visiblypushdown}

We now proceed with the model checking procedure.
For this purpose, let $(\mathcal{PD},F)$ be a fair Pushdown System over the stack alphabet $\Theta$ and $\varphi := Q_n \pi_n \dots Q_1 \pi_1 . \psi$ be a hyperproperty formula with basis $\AP$ that uses a single successor assignment $\Delta$ and well-aligned next operators.
We again write $\varphi_i$ for the subformula $Q_i \pi_i \dots Q_1 \pi_1 .\psi$ and have $\varphi_0 = \psi$ and $\varphi_n = \varphi$.
Also, we again build an automaton $\mathcal{A}_\varphi$ that is in a certain sense equivalent to $\varphi$ in order to reduce the fair model checking problem to an emptiness test of an automaton.
Here, we define a slightly different notion of equivalence compared to \cref{def:deltatequivalence} that also respects well-alignedness.

For this purpose, we introduce the well-aligned encoding $w_\Pi^\Delta$ of a trace assignment $\Pi$.
Intuitively, in addition to the propositional symbols $\mathcal{P}$ already occurring in the previous encoding $w_\Pi$, the well-aligned encoding contains $\ret$ and $\call$ symbols according to the $\ret$-$\call$ profile of the well-aligned subtraces that are skipped by $\Delta$ as well as $\top$-symbols where these subtraces are not well-aligned.
Before we can formally define this encoding, we need notation for the number of steps for which the well-aligned next operator is defined on a trace assignment $\Pi$.
For this, let $\prog_w(\Pi,\Delta,i) = (\Succ_{\Delta}^w)^i(\Pi,(0,\dots,0))$ be the progress made by $i$ steps of the well-aligned $\Delta$ successor operator on the trace assignement $\Pi$.
Note that $\prog_w(\Pi,\Delta,i)$ may be undefined for certain indices $i$.
We call the supremum of the set $\{i \in \mathbb{N}_0 \mid \prog_w(\Pi,\Delta,i) \text{ is defined}\}$ the length of the $\Delta$-well-aligned prefix of $\Pi$ and denote it by $\wapref(\Pi,\Delta)$.
For a formal definition of $w_\Pi^\Delta$, let $\Pi$ be a trace assignment over $\mathcal{T}$ with $\Pi(\pi_i) = \tr_i \in \Traces$, let $\Delta(\pi_i) = \delta_i$ and let $P_j^i = \tr_i(\Succ_{\delta_i}^j(\tr_i,0))$.
For $j < \wapref(\Pi,\Delta)$, let $(r_j,c_j)$ be the $\ret$-$\call$ profile of the finite trace that is skipped by step $j$ on $\tr_1$, i.e. the $\ret$-$\call$ profile of $\tr_1[\Succ_{\delta_1}^{j}(\tr_1,0),\Succ_{\delta_1}^{j+1}(\tr_1,0)]$.
Since step $j$ is well-aligned, $(r_j,c_j)$ is the $\ret$-$\call$ profile of the finite traces corresponding to step $j$ on all other traces $\tr_i$ as well.
Moreover, let $\mathcal{P}_j = (P_j^1,\dots,P_j^n)$.
We define
\begin{equation*}
	w_{\Pi}^{\Delta} := \mathcal{P}_0 \cdot \{\ret\}^{r_0} \cdot \{\call\}^{c_0} \cdot \mathcal{P}_1 \cdot \{\ret\}^{r_1} \cdot \{\call\}^{c_1} \cdot \dots \in ((2^\AP)^n \cdot \{ret\}^* \cdot \{\call\}^*)^\omega
\end{equation*}
if $\wapref(\Pi,\Delta) = \infty$ and
\begin{equation*}
	w_{\Pi}^{\Delta} := \mathcal{P}_0 \cdot \{\ret\}^{r_0} \cdot \{\call\}^{c_0} \cdot \dots \cdot \mathcal{P}_{\wapref(\Pi,\Delta)} \cdot \{\top\}^\omega \in ((2^\AP)^n \cdot \{\ret\}^* \cdot \{\call\}^*)^* \cdot \{\top\}^\omega
\end{equation*}
if $\wapref(\Pi,\Delta) \in \mathbb{N}$.
For single traces $\tr$, we also define $w_{\tr}^{\delta} = w_{\Pi}^{\Delta}$ where $\dom(\Pi) = \{\pi\}$, $\Pi(\pi) = \tr$ and $\Delta(\pi) = \delta$.
For the empty trace assignment $\{\}$, we say that $w_{\{\}}^{\Delta}$ is a well-aligned encoding of $\{\}$ if it is contained in the language $(() \cdot \{\ret\}^* \cdot \{\call\}^*)^\omega$.
Thus, unlike trace assignments assigning at least one trace, $\{\}$ has multiple encodings.
Based on this encoding, we adapt our notion of equivalence between formulae and automata:

\begin{definition}[Aligned $(\Delta,\mathcal{T})$-equivalence]\label{def:alignedkequivalence}
	Given a set of traces $\mathcal{T}$, a hyperproperty formula $\varphi$ with well-aligned next operators and unique successor assignment $\Delta$ as well as an automaton $\mathcal{A}$, we call $\mathcal{A}$ aligned $(\Delta,\mathcal{T})$-equivalent to $\varphi$, iff for all trace assignments $\Pi$ over $\mathcal{T}$ binding the free trace variables in $\varphi$, we have 
	\begin{itemize}
		\item $\Pi \models_{\mathcal{T}} \varphi$ iff $w_{\Pi}^{\Delta} \in \mathcal{L}(\mathcal{A})$, if $\Pi \neq \{\}$ and
		\item $\Pi \models_{\mathcal{T}} \varphi$ iff $w_{\Pi}^{\Delta} \in \mathcal{L}(\mathcal{A})$ for some encoding $w_{\Pi}^\Delta$ of $\{\}$, otherwise.
	\end{itemize}
\end{definition}

From the second requirement, we can see that model checking a fair PDS $(\mathcal{PD},F)$ against a formula $\varphi$ can be solved by intersecting an automaton that is $(\Delta,\Traces(\mathcal{PD},F))$-equivalent to $\varphi$ with an automaton for the encodings of $\{\}$ and testing the resulting automaton for emptiness.

We now have the necessary tools and notation for our construction.
The process is similar to that in \cref{sec:fsmodelchecking}.
We first construct an APA that is aligned $(\Delta,\Traces(\mathcal{PD},F))$-equivalent to the inner formula $\psi$ and then inductively handle the quantifiers of formulae $\varphi_i$ for $i \geq 1$.
Unlike in \cref{sec:fsmodelchecking}, where we relied on a connection to synchronous formulae, we construct the automaton $\mathcal{A}_\psi$ explicitly here in order to cope with the distinction between well-aligned and non-well-aligned parts of the trace assignment encoded by the input word.
In this construction, we do not care about the behaviour on words that do not represent well-aligned encodings as such words do not matter for aligned $(\Delta,\mathcal{T})$-equivalence.

As in the construction of $\mathcal{A}_B$ from \cref{subsec:basis}, we use established techniques to transform fixpoint formulae into automata and introduce a state $q_{\psi'}$ for every subformula $\psi'$ of $\psi$ in the construction of $\mathcal{A}_\psi$.
The transition function of $q_{\psi'}$ moves to states for the subformulae of $\psi'$ in a suitable manner when encountering $\mathcal{P}$-symbols and skips $\ret$- and $\call$-symbols.
In order to handle well-aligned encodings and the two variants of the next operator, we have two copies $(q_{\psi'},t)$ and $(q_{\psi'},f)$ of each state.
Intuitively, the bit $b$ in a state $(q_{\psi'},b)$ indicates whether we accept or reject if we encounter a $\top$-symbol indicating that the next step is not well-aligned.
Thus, for $\psi' = \bigcirc_w^\Delta \psi''$, we transition to $(q_{\psi''},f)$ to indicate that for $\psi$ to hold, the next step has to be well-aligned.
Likewise, for $\psi' = \bigcirc_d^\Delta \psi''$, we transition to $(q_{\psi''},t)$ to indicate that if the next step is not well-aligned, $\psi'$ holds.
The priorities are again assigned to reflect the nature and nesting of fixpoints.
The details of this construction can be found in \cref{app:pdinnerautomaton}.

\begin{theorem}\label{thm:alignedsynchronousautomaton}
	For any closed multitrace formula $\psi$ with well-aligned next operators, unique successor assignment $\Delta$ and basis $\AP$,
	there is an APA $\mathcal{A}_{\psi}$ of size linear in $|\psi|$ that is aligned $(\Delta,\mathcal{T})$-equivalent to $\psi$ for all sets of traces $\mathcal{T}$.
\end{theorem}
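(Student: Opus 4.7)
The plan is to construct the APA $\mathcal{A}_\psi$ explicitly by induction on the structure of $\psi$, following the sketch outlined above. Its state space is $\{(q_{\psi'}, b) \mid \psi' \in \sub(\psi), b \in \{t,f\}\}$ together with the two sink states $\true$ and $\false$, yielding a size linear in $|\psi|$. The bit $b$ records the behaviour when a $\top$ symbol is encountered: states with bit $t$ accept any $\top$-suffix, states with bit $f$ reject it. By our assumption that formulae are in positive, strictly guarded form, every fixpoint variable is immediately preceded by a next operator, so priorities can be assigned in the standard way — odd for states corresponding to $\mu$-formulae, even for $\nu$-formulae, respecting fixpoint nesting — exactly as in the translation of the linear-time $\mu$-calculus into alternating parity automata.

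Next, I would spell out the transition function. On a propositional letter $\mathcal{P} = (P_1,\dots,P_n)$, each state $(q_{\psi'},b)$ mimics the one-step unfolding of $\psi'$: for $\psi' = [\ap]_{\pi_i}$, transition to $(\true,b)$ if $\ap \in P_i$ and to $(\false,b)$ otherwise; for boolean connectives, combine successors with the corresponding boolean connective and preserve $b$; for a fixpoint $\mu X.\psi''$ or $\nu X.\psi''$, move to $(q_{\psi''},b)$ keeping $b$; for $X$, move to $(q_{\mu X.\psi''},b)$ or $(q_{\nu X.\psi''},b)$ as appropriate. For $\psi' = \bigcirc_w^\Delta \psi''$, transition deterministically under $\mathcal{P}$ to the next state and let the automaton skip the $\ret^{r}\call^{c}$ block that follows without changing its state, then read the next $\mathcal{P}$ (requiring the successor state to be $(q_{\psi''},f)$, so that a $\top$ here causes rejection); for $\psi' = \bigcirc_d^\Delta \psi''$, do the same but with successor state $(q_{\psi''},t)$, so a $\top$ leads to acceptance. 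Finally, the $\true$ and $\false$ sinks loop globally as prescribed. Since the only non-global movement in mumbling $H_\mu$ trace formulae (which affect the basis) has been eliminated here — the basis is $\AP$ — no abstract/caller transitions arise in $\mathcal{A}_\psi$, and hence an APA (not a 2-AJA) suffices.

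With the construction in place, I would prove aligned $(\Delta,\mathcal{T})$-equivalence by induction on the subformula structure. The key lemma is: for every non-empty trace assignment $\Pi$ binding the free variables of a subformula $\psi'$, every $v \in \mathbb{N}_0^n$, and every fixpoint variable assignment $\mathcal{W}$, $v \in \llbracket \psi' \rrbracket_{\mathcal{W}}^{\Pi}$ iff the suffix of $w_\Pi^\Delta$ starting at the block encoding position $v$ has an accepting $(q_{\psi'},b)$-run, where the bit $b$ is chosen correctly. The boolean and propositional cases are immediate, the fixpoint cases follow from the standard Knaster–Tarski/parity argument on the induced priority structure, and the two next-operator cases follow directly from the definition of $w_\Pi^\Delta$ together with the $\top$-handling encoded by the bit. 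The empty-assignment case (where $w_{\{\}}^\Delta$ is any word in $(()\cdot\{\ret\}^*\cdot\{\call\}^*)^\omega$) follows because no $\top$ can appear on such an encoding, so the bit is irrelevant.

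The main obstacle is the careful treatment of the interaction between well-aligned / dual operators and the $\top$ tail. The semantics of $\bigcirc_w^\Delta$ demands that $\Succ_\Delta^w(\Pi,v)$ be defined, i.e.\ that the current step is well-aligned; in the encoding this corresponds to the block of $\ret$ and $\call$ symbols being followed by a fresh $\mathcal{P}$ rather than by $\top$. Consequently the two-bit trick has to be wired through the construction so that a well-aligned requirement propagates into a rejecting $(q_{\psi''},f)$ continuation and its dual into an accepting $(q_{\psi''},t)$ continuation, and one must verify this choice is consistent with the positive-form assumption — in particular, that negation does not appear around next operators in a way that would conflate the two bits. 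Once this invariant is formulated precisely, the inductive step for both $\bigcirc_w^\Delta$ and $\bigcirc_d^\Delta$ becomes a direct unfolding of the definition of $w_\Pi^\Delta$, and the linear size bound is immediate from the one-state-per-subformula construction.
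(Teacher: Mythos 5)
Your construction is essentially the paper's: the same state space $\sub(\psi)\times\{t,f\}$, the same use of the bit to resolve $\top$-symbols ($\bigcirc_w^\Delta$ forcing the $f$-copy, $\bigcirc_d^\Delta$ the $t$-copy), self-loops on $\call$/$\ret$ symbols, and the standard priority assignment, so the linear size bound and the overall shape of the correctness argument match. Two remarks. First, a small but real imprecision: your transitions for fixpoint formulae and fixpoint variables must not consume an input letter --- in the paper they are defined by delegation, $\rho((q_{\mu X.\psi'},b),\mathcal{P})=\rho((q_{\psi'},b),\mathcal{P})$ and $\rho((q_X,b),\mathcal{P})=\rho((q_{\fp(X)},b),\mathcal{P})$ (well-founded by strict guardedness); if ``move to $(q_{\psi''},b)$'' is read as a genuine letter-consuming transition, the automaton desynchronises from the encoding by one position per unfolding. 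Second, the place where your sketch compresses the most work is the interaction of fixpoints with a finite well-aligned prefix: the paper does not simply invoke the standard Knaster--Tarski/parity argument there, but first proves (its \cref{lem:alignedprefixlemma}) that on trace assignments with $\wapref(\Pi,\Delta)<\infty$ the semantics of $\psi$ collapses to a fixpoint-free unrolling $\psi^{j}$, and then argues that on $\top^\omega$-tailed encodings every accepting run is absorbed into $\true$-loops, so runs of $\mathcal{A}_\psi$ and $\mathcal{A}_{\psi^{j}}$ can be transformed into one another irrespective of priorities. Your more direct route (a uniform transfinite induction over approximants, using the $\mathcal{A}[q_X\colon I]$-style filling) should also go through, but you would need to carry it out explicitly for the $\top$-tailed case rather than treating it as standard, since that is exactly where the non-self-dual operators and the partiality of $\Succ_\Delta^w$ enter the fixpoint semantics.
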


Similar to \cref{subsec:finitestate}, we now handle the quantifiers and inductively construct an automaton $\mathcal{A}_{\varphi_i}$ that is aligned $(\Delta,\Traces(\mathcal{PD},F))$-equivalent to $\varphi_i$.
The general idea of the construction for an existential quantifier is the same as in that section:
On input of an encoding $w_\Pi^\wa$ of a trace assignment $\Pi$ binding $n - i$ trace variables, we simulate a trace $\tr$ of $(\mathcal{PD},F)$ in the state space of the automaton and feed the encoding $w_{\Pi'}^\wa$ of the trace assignment $\Pi' = \Pi[\pi_i \mapsto \tr]$ binding $n - i + 1$ trace variables into the inductively given automaton $\mathcal{A}_{\varphi_{i-1}}$.
However, there are a number of difficulties compared to the construction in the finite state case.
First of all, our construction has to handle the $\call$ and $\ret$ behaviour of the system and the well-aligned encoding.
We thus construct a VPA instead of an APA here.
Moreover, we have to handle the fact that $\Pi$ and $\Pi'$ can be non-well-aligned from some point onward.
The easier case is where the lengths of the well-aligned prefixes of $\Pi$ and $\Pi'$ coincide.
In this case, we can just feed the $\top$-symbols from the input into $\mathcal{A}_{\varphi_{i-1}}$.
The more difficult case is where the length of the well-aligned prefix of $\Pi$ is strictly greater than that of $\Pi'$.
We handle this case by nondeterministically guessing a point where the next step is not well-aligned, checking that this is indeed the case by finding a $\call$ on $\tr$ matched by a $\ret$ on $w_{\Pi}^\wa$ (or any other combination of non matching behaviour) and feeding $\top$-symbols  into$\mathcal{A}_{\varphi_{i-1}}$.
In both cases, we cannot continue simulating the stack behaviour of both $w_\Pi^\wa$ and $\tr$ since the behaviour is not well-aligned.
Thus, we stop simulating $\tr$ in these cases and just check that the prefix up to that point can be extended to a fair trace using \cref{prop:multiautomaton}.
Before we perform the main construction, we need two auxiliary constructions which we present first.

First, for $\Delta(\pi_{i}) = \ap$, we transform $(\mathcal{PD},F)$ into a pushdown system $(\mathcal{PD}_\ap,F_\ap)$ with $\mathcal{PD}_\ap = (S_\ap,S_{0,\ap},R_\ap,L_\ap)$, a structure that progresses the well-aligned encodings $w_{\tr}^{\ap}$ of traces $\tr$ from $\Traces(\mathcal{PD},F)$ by simulating finite traces in between inspected states based on their abstract summarisations.
More precisely, a finite subtrace with $\ret$-$\call$ profile $(r,c)$, is simulated by first making $r$ $\ret$-steps (each corresponding to a part $\abs^* \ret$ in the abstract summarisation), followed by $c$ $\call$-steps (each corresponding to a part $\abs^* \call$ in the abstract summarisation) and finally one $\intern$-step (corresponding to the final $\abs^*$ part in the abstract summarisation) in $(\mathcal{PD}_\ap,F_\ap)$.
The final $\intern$-step comes in handy when reading the propositional symbols of an inspected state in the construction of $\mathcal{A}_{\varphi_i}$.
This transformed structure is used later to obtain the encoding of $\Pi[\pi_i \mapsto \tr]$ for a trace $\tr$ of $(\mathcal{PD},F)$ by composing $w_\Pi^\wa$ with $w_{\tr}^{\ap}$ generated from $(\mathcal{PD}_\ap,F_\ap)$.
The transformation to $(\mathcal{PD}_\ap,F_\ap)$ is done in two steps.
We first construct an intermediate structure $(\mathcal{PD}',F')$ with  two copies of each state reachable by the jump criterion $\ap$.
This structure has an $\intern$-step between the two copies in order to ensure that one step corresponding to the final $\abs^*$ part of a trace's abstract summarisation is made whenever such a state is visited.
In that structure, we calculate abstract successors and build $\ret$, $\call$ and $\intern$ transitions corresponding to $\abs^*\ret$, $\abs^*\call$ and $\abs^*$ parts of the abstract summarisation, respectively.
A formal description is given in \cref{app:visiblypushdownstructure}.

Secondly, in order to check whether prefixes of paths of $(\mathcal{PD}_\ap,F_\ap)$ can be extended into fair paths, we use the multi-automaton $\mathcal{A}_{\mathcal{PD}} = (Q_{\mathcal{PD}},Q_{0,\mathcal{PD}},\rho_\mathcal{PD},F_{\mathcal{PD}})$ from \cref{prop:multiautomaton}.
Since multi-automata read stacks top-down while we build stacks bottom-up in our main construction, we will use this automaton \textit{in reverse}, i.e. we will start in final states and aim to reach initial states by following its transitions backwards.
For this, we assume that the automaton is \textit{reverse-total}, i.e. we assume that for all $q' \in Q_{\mathcal{PD}}$ and $\theta \in \Theta$, there is a state $q \in Q_{\mathcal{PD}}$ such that $q' \in \rho_{\mathcal{PD}}(q,\theta)$.
Intuitively, this means that every state has a predecessor.
This can be achieved easily by introducing an additional non-initial state.

\begin{figure*}
	\centering
	\begin{align*}
		\rho_{\varphi_i}((q,s,b,q_{\mathcal{PD}},\wa),\mathcal{P}) &= \{ (q',s',b',q_{\mathcal{PD}},\al) \mid\\
		&\qquad (s,s') \in R_\ap , q' \in \rho_{\varphi_{i-1}}(q,L(s) + \mathcal{P}), \xi(b,b',s,q), \al \in \{\wa,\ua\} \}\\
		\rho_{\varphi_i}((q,s,b,q_{\mathcal{PD}},\wa),\ret) &= \{ ((q',s',b',q'_{\mathcal{PD}},\wa),(\theta + \theta_v,q'_{\mathcal{PD}} + q_v)) \mid\\
		&\qquad (s,\theta,s') \in R_\ap , (q',(\theta_v,q_v)) \in \rho_{\varphi_{i-1}}(q,\ret), \xi(b,b',s,q) \} \\
		\rho_{\varphi_i}((q,s,b,q_{\mathcal{PD}},\wa),\call) &= \{ ((q',s',b',q'_{\mathcal{PD}},\wa),(\theta + \theta_v,q_{\mathcal{PD}} + q_v)) \mid (s,s',\theta) \in R_\ap, \\
		&\qquad(q',(\theta_v,q_v)) \in \rho_{\varphi_{i-1}}(q,\call),\xi(b,b',s,q), q_{\mathcal{PD}}\in\rho_{\mathcal{PD}}(q'_{\mathcal{PD}},\theta)\} \\
		\rho_{\varphi_i}((q,s,b,q_{\mathcal{PD}},\wa),\top) &= \begin{cases}
			\{ q'_\top \mid q' \in \rho_{\varphi_{i-1}}(q,\top)\} \qquad\text{if } s = q_{\mathcal{PD}} \\
			\emptyset \qquad\text{otherwise}
		\end{cases} \\
		\rho_{\varphi_i}(q_\top,\sigma) &= \begin{cases}
			\{ q'_\top \mid q' \in \rho_{\varphi_{i-1}}(q,\top) \} \qquad\text{ if } \sigma \in \{\top,\mathcal{P}\}\\
			\{ (q'_\top,(\theta_w, q_w)) \mid q' \in \rho_{\varphi_{i-1}}(q,\top) \} \qquad\text{ otherwise}
		\end{cases} \\
		\rho_{\varphi_i}((q,s,b,q_{\mathcal{PD}},\ua),\mathcal{P}) &= \begin{cases}
			\{ q'_\top \mid q' \in \rho_{\varphi_{i-1}}(q,\top)  \} \qquad\text{ if } \xi_\call(s,q_{\mathcal{PD}}) \text{ or } \xi_\ret(q,s,b,q_{\mathcal{PD}},\ua) \\
			\emptyset \qquad\text{ otherwise}
		\end{cases} \\
		\rho_{\varphi_i}((q,s,b,q_{\mathcal{PD}},\ua),\ret) &= \begin{cases}
			\{(q'_\top,(\theta_w,q_w)) \mid q' \in \rho_{\varphi_{i-1}}(q,\top) \} \text{ if } \xi_\intern(s,q_{\mathcal{PD}}) \text{ or } \xi_\call(s,q_{\mathcal{PD}}) \\
			\{ ((q',s',b,q'_{\mathcal{PD}},\ua),(\theta + \theta_v,q'_{\mathcal{PD}} + q_v)) \mid \\
			\qquad (s,\theta,s') \in R_\ap , q' \in \rho_{\varphi_{i-1}}(q,\top) \} \qquad\text{otherwise} 
		\end{cases} \\
		\rho_{\varphi_i}((q,s,b,q_{\mathcal{PD}},\ua),\call) &= \begin{cases}
			\{(q'_\top,(\theta_w,q_w)) \mid q' \in \rho_{\varphi_{i-1}}(q,\top) \}  \\
			\qquad\qquad\qquad\qquad\qquad\quad\text{ if } \xi_\intern(s,q_{\mathcal{PD}})\text{ or } \xi_\ret(q,s,b,q_{\mathcal{PD}},\ua)\\
			\{ ((q',s',b,q'_{\mathcal{PD}},\ua),(\theta + \theta_v,q_\mathcal{PD} + q_w)) \mid (s,s',\theta) \in R_\ap , \\
			\qquad q' \in \rho_{\varphi_{i-1}}(q,\top),  q_{\mathcal{PD}}\in\rho_{\mathcal{PD}}(q_{\mathcal{PD}}',\theta) \} \qquad\text{otherwise} 
		\end{cases} \\
		\rho_{\varphi_i}((q,s,b,q_{\mathcal{PD}},\ua),\top) &= \emptyset
	\end{align*}
	\vspace*{-18pt}
	\caption{\vspace*{-8pt}Definition of $\rho_{\varphi_i}$}
	\label{fig:transitionrules}
\end{figure*}

We now describe the construction of $\mathcal{A}_{\varphi_i}$ for $Q_{i} = \exists$.
We assume that the VPA $\mathcal{A}_{\varphi_{i-1}}$ is inductively given by $(Q_{\varphi_{i-1}},Q_{0,\varphi_{i-1}},\rho_{\varphi_{i-1}},F_{\varphi_{i-1}})$ over the visibly pushdown alphabet $\Sigma = \Sigma_{\mathtt{i}} \dot\cup \Sigma_{\mathtt{c}} \dot\cup \Sigma_{\mathtt{r}}$ with $\Sigma_{\mathtt{i}} = (2^{\AP})^{n-i+1} \cup \{\top\}$, $\Sigma_{\mathtt{c}} = \{\call\}$ and $\Sigma_{\mathtt{r}} = \{\ret\}$ and stack alphabet $\Theta^{i-1} \times Q_{\mathcal{PD}}^{i-1}$. 
For the inner formula $\varphi_0 = \psi$, we have an APA from \cref{thm:alignedsynchronousautomaton} that is transformed into an NBA with \cref{prop:paritydealternation} and then interpreted as a VPA that pushes and pops empty tuples $()$ when reading $\call$ and $\ret$ symbols.
The automaton $\mathcal{A}_{\varphi_i} = (Q_{\varphi_i},Q_{0,\varphi_i},\rho_{\varphi_i},F_{\varphi_i})$ has the input alphabet $\Sigma' = \Sigma_{\mathtt{i}}' \dot\cup \Sigma_{\mathtt{c}} \dot\cup \Sigma_{\mathtt{r}}$ with $\Sigma_{\mathtt{i}}' = (2^{\AP})^{n-i} \cup \{\top\}$ and stack alphabet $\Theta^i \times Q_{\mathcal{PD}}^i$.
Its state sets are given by:
\begin{align*}
	Q_{\varphi_i} &= Q_{\varphi_{i-1}} \times S_\ap \times \{0,1\} \times Q_{\mathcal{PD}} \times \{\wa,\ua\} \cup \{q_\top \mid q \in Q_{\varphi_{i-1}}\}\\
	Q_{0,\varphi_i} &= Q_{0,\varphi_{i-1}} \times S_{0,\ap} \times \{0\} \times F_{\mathcal{PD}} \times \{\wa\} \\
	F_{\varphi_i} &= F_{\varphi_{i-1}} \times S_\ap \times \{1\} \times Q_{\mathcal{PD}} \times \{\wa\} \cup \{q_\top \mid q \in F_{\varphi_{i-1}}\}
\end{align*}
The transition rules are given in \cref{fig:transitionrules} where $\xi(b,b',s,q)$ is the condition $b \neq b' \textit{ iff } b = 0 \textit{ and } s \in F_\ap \textit{ or } b = 1 \textit{ and } q \in F_{\varphi_{i-1}}$ in the first three cases.
Additionally, we use the conditions $\xi_\intern(s,q_{\mathcal{PD}})$ for $(s,s') \in R_\ap$ \textit{and} $s' = q_{\mathcal{PD}}$ \textit{for some} $s'$, $\xi_\call(s,q_{\mathcal{PD}})$ for $(s,s',\theta) \in R_\ap$ \textit{and} $q_{\mathcal{PD}} \in \rho_{\mathcal{PD}}(s',\theta)$ \textit{for some} $s'$ \textit{and} $\theta$ and $\xi_\ret(q,s,b,q_{\mathcal{PD}},\ua)$ for $(s,\theta,s') \in R_\ap$ \textit{and} $\tos(q,s,b,q_{\mathcal{PD}},\ua)\footnotemark = (\theta + \theta_v,s' + q_v)$ \textit{for some} $s',\theta_v,\theta$ \textit{and} $q_v$.
Intuitively, $\xi_m$ applied to $s$ means that an $m$-transition which leads to an extension into a fair path is possible in $s$.
Furthermore, we write $(P_1,\dots,P_{n-i}) \in (2^{\AP})^{n-1}$ as $\mathcal{P}$, $(P,P_1,\dots,P_{n-i}) \in (2^{\AP})^{n-i+1}$ as $P + \mathcal{P}$, $(\theta_1,\dots,\theta_{i-1})$ as $\theta_v$, $(\theta,\theta_1,\dots,\theta_{i-1})$ as $\theta + \theta_v$ and $(\theta_1,\dots,\theta_{i})$ as $\theta_w$.
Analogously, we use $q_v$, $q_{\mathcal{PD}} + q_v$ and $q_w$.
\footnotetext{By $\tos(q,s,b,q_{\mathcal{PD}},\ua)$ we mean the current top of stack symbol in state $(q,s,b,q_{\mathcal{PD}},\ua)$.
Since the top of stack symbol can be stored in the state, we can assume w.l.o.g. that this information is available.}

Intuitively, the automaton reads an encoding $w_{\Pi}^{\Delta}$ as follows:
it starts in its copy $\wa$ reading the prefix containing only $\mathcal{P}$, $\ret$ and $\call$ symbols (lines 1-6 in \cref{fig:transitionrules}).
Here, it simulates both $(\mathcal{PD}_\ap,F_\ap)$ to check for an encoding of a trace $\tr$ and $\mathcal{A}_{\varphi_{i-1}}$ to check whether $w_{\Pi'}^{\Delta}$ for the trace assignment $\Pi' = \Pi[\pi_{i} \mapsto tr]$ is accepted.
We use a standard construction to combine the Büchi conditions of $(\mathcal{PD}_\ap,F_\ap)$ and $\mathcal{A}_{\varphi_{i-1}}$ into one:
A bit $b$ indicates whether we have seen a state of $F_\ap$ and it is reset to $0$ when a state from $F_{\varphi_{i-1}}$ is seen.
This is expressed in the formula $\xi(b,b',s,q)$ and makes sure that only the runs satisfying both Büchi conditions are accepting.
Additionally, we track a \textit{reverse-run} of $\mathcal{A}_{\mathcal{PD}}$ in the forth component of a state.
This is done by starting in a final state of $\mathcal{A}_\mathcal{PD}$ and updating the state to match a predecessor of the previous state whenever making a call transition.
Additionally, we store the old state in the stack in order to enable backtracking of the reverse-run when making a return transition.
When this reverse-run ends in state $s$ (which is checked in the conditions $\xi_m$), this indicates that there is a continuation into a fair path of $(\mathcal{PD}_\ap,F_\ap)$ starting in $s$ with the current stack content.
At any point in the prefix, the automaton can nondeterministically move to its copy $\ua$ (line 1-2) to check whether there is a mismatch in the encodings of $\tr$ and $w_{\Pi}^{\Delta}$ (lines 11 ff.).
Here, it accepts iff $\mathcal{A}_{\varphi_{i-1}}$ accepts when reading only $\top$ symbols from this point onwards.
This is checked in states $q_\top$ (line 9-10).
Since we do not follow the existentially quantified path in this part of the automaton anymore, a transition into this part of the automaton can only be made if there is a continuation of the path into a fair path.
Finally, it can also enter states $q_\top$ when encountering a $\top$ symbol (line 7-8) since that means that both $w_{\Pi}^{\Delta}$ and $w_{\Pi'}^{\Delta}$ are not well-aligned from this point onward.
For universal quantifiers, we use complementation as in \cref{subsec:finitestate}.
For this, we use \cref{prop:vpacomplementation} since $\mathcal{A}_{\varphi_i}$ is given as a VPA instead of an NBA.

\begin{theorem}\label{thm:quantifieralignedkequivalence}
	Let $(\mathcal{PD},F)$ be a fair pushdown system and $\varphi$ a closed hyperproperty formula with well-aligned next operators, unique successor assignment $\Delta$, an outermost existential quantifier, basis $\AP$ and quantifier alternation depth $k$.
	There is a VPA $\mathcal{A}_{\varphi}$ of size $\mathcal{O}(g(k+1,|\varphi| + \log(|\mathcal{PD}|)))$ that is aligned $(\Delta,\Traces(\mathcal{PD},F))$-equivalent to $\varphi$.
\end{theorem}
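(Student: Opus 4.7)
The proof will proceed by induction on $i$ for $i = 0,\dots,n$, paralleling the structure of Theorem~\ref{thm:quantifierkequivalence} but with the additional complications of the well-aligned semantics and the pushdown nature of $(\mathcal{PD},F)$. In the base case $i = 0$ I would invoke Theorem~\ref{thm:alignedsynchronousautomaton} to obtain an APA $\mathcal{A}_\psi$ of size linear in $|\psi|$, apply Proposition~\ref{prop:paritydealternation} to dealternate it into an NBA of size in $2^{\mathcal{O}(|\psi|\log|\psi|)}$, and view this NBA as a VPA that pushes and pops the empty tuple $()$ on $\call$ and $\ret$ symbols to match the visibly pushdown alphabet. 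This establishes aligned $(\Delta,\mathcal{T})$-equivalence for every $\mathcal{T}$ and meets the base bound $g(1,|\varphi|)$.

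For the inductive step with $Q_i = \exists$, I would use exactly the VPA $\mathcal{A}_{\varphi_i}$ specified by Figure~\ref{fig:transitionrules}, together with the auxiliary structure $(\mathcal{PD}_\ap,F_\ap)$ and the reverse-interpreted multi-automaton $\mathcal{A}_{\mathcal{PD}}$ from Proposition~\ref{prop:multiautomaton}. Correctness splits into soundness and completeness along the well-aligned prefix length of $\Pi' = \Pi[\pi_i \mapsto \tr]$. For soundness, from an accepting run on $w_\Pi^\Delta$ I would extract the witness $\tr$ via the $S_\ap$-component; the bit $b$ combined with $F_{\varphi_{i-1}}$ enforces that both the fairness of $(\mathcal{PD}_\ap,F_\ap)$ and the Büchi condition of $\mathcal{A}_{\varphi_{i-1}}$ hold infinitely often on the $\wa$-part, while the reverse run in the $Q_{\mathcal{PD}}$-component certifies, via $\xi_\intern, \xi_\call, \xi_\ret$, that any prefix ending in the $\ua$-copy extends into a fair path of $(\mathcal{PD},F)$. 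The $\ua$-copy and the $q_\top$-states then ensure that $w_{\Pi'}^\Delta$ indeed ends in $\top^\omega$ from the point of divergence and is accepted by $\mathcal{A}_{\varphi_{i-1}}$. Completeness is the converse construction: given $\tr$ with $\Pi' \models_\mathcal{T} \varphi_{i-1}$, I would build the run by synchronising $\mathcal{PD}_\ap$ with $\mathcal{A}_{\varphi_{i-1}}$ along $w_{\Pi'}^\Delta$ and switch to $\ua$ exactly at the first position where well-alignment breaks for $\Pi'$ but still holds for $\Pi$. For universal quantifiers, I would negate $\varphi$, apply the existential construction, and complement the resulting VPA via Proposition~\ref{prop:vpacomplementation} at the cost of one exponential blowup.

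The size analysis then follows by unrolling the recurrence: each existential step contributes a polynomial factor $|S_\ap| \cdot 2 \cdot |Q_{\mathcal{PD}}| \cdot 2 \in \mathcal{O}(|\mathcal{PD}|^2)$ to the state count, while each switch between $\exists$ and $\forall$ adds one exponential through complementation. Combined with the initial single-exponential cost of dealternating $\mathcal{A}_\psi$ and the $k$ quantifier alternations, this yields $\mathcal{O}(g(k+1,|\varphi| + \log|\mathcal{PD}|))$, where the $\log|\mathcal{PD}|$ term reflects that $|\mathcal{PD}|$ enters only as a polynomial factor inside the innermost exponential. The main obstacle I expect is the correctness of the $\wa$-to-$\ua$ transition: one must verify that the nondeterministic switch, guarded by $\xi_\intern, \xi_\call, \xi_\ret$, fires exactly at positions where the next well-aligned step of $\Pi'$ is undefined, so that the subsequent $q_\top$-simulation of $\mathcal{A}_{\varphi_{i-1}}$ faithfully reflects the $\top^\omega$-suffix of $w_{\Pi'}^\Delta$, while at the same moment the reverse run of $\mathcal{A}_{\mathcal{PD}}$ continues to certify that the abandoned prefix of $\tr$ admits a fair completion. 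Harmonising these three simultaneous obligations — simulation of the witness trace, well-alignment accounting, and the reverse fairness check — inside a single VPA is the conceptually delicate point of the argument.
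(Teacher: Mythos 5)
Your proposal follows essentially the same route as the paper's proof: the same base case via \cref{thm:alignedsynchronousautomaton} plus dealternation, the same inductive construction from \cref{fig:transitionrules} with $(\mathcal{PD}_\ap,F_\ap)$ and the reverse-run of $\mathcal{A}_{\mathcal{PD}}$, the same case split on the lengths of the well-aligned prefixes of $\Pi$ and $\Pi[\pi_i \mapsto \tr]$ in both directions, complementation for universal quantifiers, and the same accounting for the size bound. You also correctly single out the $\wa$-to-$\ua$ transition as the delicate point, which is precisely where the paper's detailed argument spends its effort.
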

\begin{proof}(Sketch)
	The part of the claim about the size of $\mathcal{A}_\varphi$ can be seen by inspecting the construction.
	For the inner formula $\psi$, we know that $|\mathcal{A}_\psi|$ is linear in $|\psi|$ for the APA $\mathcal{A}_\psi$ from \cref{thm:alignedsynchronousautomaton}.
	An alternation removal construction to transform it into an NBA increases the size to exponential in $|\psi|$.
	Complementation constructions are performed using \cref{prop:vpacomplementation} for each every quantifier alternation, each further increasing the size exponentially.
	Finally, the size measured in $|\mathcal{PD}|$ is one exponent smaller since the structure is first introduced into the automaton after the first alternation removal construction.
	
	Using the notation $\varphi_i = Q_i \pi_i \dots Q_1 \pi_1 .\psi$ with special cases $\varphi_0 = \psi$ and $\varphi_n = \varphi$, we show that $\mathcal{A}_{\varphi_i}$ is $(\Delta,\Traces(\mathcal{PD},F))$-equivalent to $\varphi_i$ by induction on $i$.
	The base case immediately follows from \cref{thm:alignedsynchronousautomaton}.
	In the inductive step, the more interesting case is that where $Q_i$ is an existential quantifier since the case for a universal quantifier is a corollary from the proof for an existential quantifier.
	For this case, we show both directions of the required claim separately.
	In the first direction, we can directly use the induction hypothesis and then have to discriminate cases based on the length of the well-aligned prefixes of $\Pi$ and $\Pi[\pi_{i} \mapsto tr]$ since each of these cases induces a different form for the accepting run we construct.
	In the other direction, we discriminate cases based on the length of the well-aligned prefix of $\Pi$ and additionally on the form of the accepting run of the automaton to construct a trace $\tr$ and trace assignment $\Pi[\pi_{i} \mapsto tr]$ on which we can use the induction hypothesis.
	In both directions, the most interesting case is the one where the length of the well-aligned prefix of $\Pi$ is strictly greater than that of $\Pi[\pi_{i} \mapsto tr]$.
\end{proof}

Again combining the procedure from this section with the reduction from \cref{lem:modelcheckingtranslation}, we obtain a fair model checking procedure for PDS.
Additionally, we can derive lower bounds for the complexity from finite state HyperLTL model checking \cite{Rabe2016} and LTL pushdown model checking \cite{Bouajjani1997}.
We obtain:

\begin{theorem}\label{thm:pdmodelcheckingcompleteness}
	The fair pushdown model checking problem for alternation depth $k$ mumbling $H_\mu$ with unique mumbling and well-aligned successor operators is in $(k+1)\EXPTIME$ and in $k\EXPTIME$ for fixed formulae.
	For $k \geq 1$, it is $k\EXPSPACE$-hard and $(k-1)\EXPSPACE$-hard for fixed formulae.
	For $k=0$, it is $\EXPTIME$-complete.
\end{theorem}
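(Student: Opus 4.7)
The upper bound follows by chaining the constructions already assembled in the paper. First I would apply \cref{lem:modelcheckingtranslation} to reduce model checking of a formula $\varphi$ with full basis to model checking of a formula $\varphi'$ with basis $\AP$ over an augmented fair PDS $(\mathcal{PD}',F')$. This step costs a polynomial blow-up in $|\varphi|$ in the formula and an exponential blow-up in $|\varphi|$ (but only linear in $|\mathcal{PD}|$) in the structure, and it crucially preserves unique mumbling and well-aligned next operators because the number of successor assignments is unchanged. Then I would feed the resulting instance into \cref{thm:quantifieralignedkequivalence} to obtain a VPA $\mathcal{A}_{\varphi'}$ of size $\mathcal{O}(g(k+1,|\varphi'| + \log|\mathcal{PD}'|))$, so $\mathcal{O}(g(k+1,|\varphi|+\log|\mathcal{PD}|))$ after accounting for the reduction, which is still $(k+1)$-exponential overall and $k$-exponential when $|\varphi|$ is considered constant. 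Finally, since $\mathcal{A}_{\varphi'}$ is aligned $(\Delta,\Traces(\mathcal{PD}',F'))$-equivalent to the closed formula $\varphi'$, the fair model checking problem reduces to testing non-emptiness of the intersection of $\mathcal{A}_{\varphi'}$ with a fixed VPA for the well-aligned encodings of the empty trace assignment, which is in $\PTIME$ by \cref{prop:vpacomplementation}. Composing the sizes, this yields the $(k+1)\EXPTIME$ (respectively $k\EXPTIME$ for fixed $\varphi$) upper bound.

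For the outer-quantifier caveat: if $\varphi$'s outermost quantifier is universal, \cref{thm:quantifieralignedkequivalence} does not apply directly. I would handle this standardly by negating the formula, building the VPA for the dual formula, and testing non-universality; since \cref{prop:vpacomplementation} only costs one extra exponential and such a complementation is already charged against the outermost alternation, the size bound still matches the claim.

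For the lower bounds when $k\ge 1$, I would reduce from finite-state HyperLTL model checking at alternation depth $k$, which is $k\EXPSPACE$-hard (respectively $(k-1)\EXPSPACE$-hard for fixed formulae) by \cite{Rabe2016}. A fair Kripke structure is a special case of a fair PDS (just a PDS with only internal transitions), and HyperLTL formulae translate syntactically into mumbling $H_\mu$ formulae with unique trivial successor assignment $\Delta = \lambda\pi.\true$ (so every position is of interest and the operator $\bigcirc^\Delta_w$ behaves like the standard global next since all steps on Kripke traces are trivially well-aligned). This establishes both parameterised and fixed-formula hardness. For $k=0$ the problem subsumes LTL model checking over pushdown systems, which is $\EXPTIME$-hard \cite{Bouajjani1997}, giving the matching lower bound; the upper bound for $k=0$ is obtained as above, since $g(1,n)$ is exponential and VPA emptiness is polynomial, yielding $\EXPTIME$-membership.

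The main obstacle I anticipate is bookkeeping the size function $g$ through the reduction chain, namely verifying that the polynomial blow-up in $|\varphi|$ and the exponential blow-up in $|\mathcal{PD}|$ caused by \cref{lem:modelcheckingtranslation} is absorbed correctly by the outer exponentials of \cref{thm:quantifieralignedkequivalence}, so that the complexity bounds match for both the parameterised and the fixed-formula case. In particular, one must check that the $\log|\mathcal{PD}'|$ term remains polynomial in $|\varphi| + \log|\mathcal{PD}|$, which it does because $\log|\mathcal{PD}'| = \log|\mathcal{PD}| + \mathcal{O}(\mathrm{poly}(|\varphi|))$. Once this accounting is done, the remaining arguments are routine applications of results already established in the paper and the literature.
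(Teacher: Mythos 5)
Your proposal is correct and follows essentially the same route as the paper: the upper bound is obtained by combining \cref{lem:modelcheckingtranslation} with \cref{thm:quantifieralignedkequivalence} and a polynomial-time VPA emptiness test (handling an outermost universal quantifier by negating the formula and flipping the answer), and the lower bounds come from finite-state HyperLTL hardness \cite{Rabe2016} for $k\geq 1$ and from LTL pushdown model checking \cite{Bouajjani1997} for $k=0$. Your size bookkeeping through the reduction chain also matches the paper's accounting.
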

	\section{Expressiveness of Stuttering and Mumbling}\label{sec:expressivity}

In this section, we compare the two jump mechanisms stuttering and mumbling with respect to expressiveness.
It is easy to write a formula expressing that some formula from a set changes its valuation from this point on a trace to the next.
This can be used to mimic the behavior of a stuttering next operator by a mumbling next operator.
There is a slight mismatch in positions visited by the operators but this can be accounted for by shifting tests with a next operator.
This translation can be used to obtain the following results:

\begin{theorem}\label{thm:stutteringtomumblingreduction}
	Fair pushdown and finite state model checking for stuttering $H_\mu$ can be reduced in linear time to fair pushdown and finite state model checking for mumbling $H_\mu$ respectively.
\end{theorem}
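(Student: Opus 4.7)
The plan is to define a linear-size reduction that, given a fair Kripke structure or fair PDS $(\mathcal{M},F)$ and a stuttering $H_\mu$ formula $\varphi$, produces a model $(\mathcal{M}',F')$ with $|\mathcal{M}'| = |\mathcal{M}|+O(1)$ and a mumbling $H_\mu$ formula $\mathsf{T}(\varphi)$ with $|\mathsf{T}(\varphi)| = O(|\varphi|)$ such that $(\mathcal{M},F) \models \varphi$ (under stuttering semantics) iff $(\mathcal{M}',F') \models \mathsf{T}(\varphi)$ (under mumbling semantics). The structural transformation prepends a fresh initial state $s_\bot$ with empty labeling and an internal transition to each original initial state, leaving $F' := F$; in the PDS case the dummy transitions are internal and leave the stack intact. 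Every trace of $\mathcal{M}'$ thus has the shape $\emptyset \cdot \intern \cdot \tr$ for some trace $\tr$ of $\mathcal{M}$, giving a natural bijection between $\Traces(\mathcal{M},F)$ and $\Traces(\mathcal{M}',F')$.

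The formula translation $\mathsf{T}$ is fully homomorphic except for two clauses: each test $[\delta]_\pi$ is replaced by $[\bigcirc^\succglobal \delta]_\pi$, and each stuttering next $\bigcirc^\Gamma \psi$ by $\bigcirc^{\Delta_\Gamma} \mathsf{T}(\psi)$, where
\[
\Delta_\Gamma(\pi) := \bigvee\nolimits_{\delta \in \Gamma(\pi)} (\delta \leftrightarrow \lnot \bigcirc^\succglobal \delta).
\]
Both rewrites are size-bounded by the original subexpressions they replace, so $|\mathsf{T}(\varphi)| = O(|\varphi|)$. The central semantic observation is a raw-index correspondence between stuttering on $\tr$ and mumbling on $\tr' := \emptyset \cdot \intern \cdot \tr$. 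First, a short case analysis shows that $\Succ_{\Gamma(\pi)}(\tr,i)$ is precisely the least index of a change point of some $\delta \in \Gamma(\pi)$ strictly greater than $i$, where a change point is a position whose $\delta$-valuation differs from that of its predecessor. Second, for $j \geq 1$ the formula $\Delta_\Gamma(\pi)$ holds at $\tr'$-position $j$ iff $j$ (in raw $\tr'$-indexing) is a change point of $\tr$. Combining these, after $k \geq 1$ mumbling steps on $\tr'$ the reached raw index equals the $\tr$-position $c_k$ of the $k$-th stuttering successor, whereas the corresponding $\tr'$-position represents the $\tr$-position $c_k - 1$. The extra $\bigcirc^\succglobal$ wrapping each test compensates exactly for this off-by-one: a shifted test at $\tr'$-position $c_k$ examines $\tr'$-position $c_k + 1$, i.e., $\tr$-position $c_k$, which is the position that stuttering would test. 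The initial step-$0$ case (mumbling sitting at the dummy) and the fallback case where $\Succ_\Gamma$ defaults to $i+1$ are both absorbed by the same shift, and the added $\intern$ dummy transition leaves $\call$/$\ret$ balances untouched, so $\bigcirc^\succabstract$ and $\bigcirc^\succcaller$ occurring inside trace subformulae $\delta$ also align by one on $\tr'$.

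Correctness is then established by structural induction on multitrace subformulae, with the raw-index correspondence as invariant. Tests match by the test shift; $\bigcirc^\Gamma$ cases use the successor correspondence above; Booleans propagate trivially; quantifiers propagate through the trace-set bijection. The fixpoint case $\mu X.\psi \mapsto \mu X.\mathsf{T}(\psi)$ is handled by iterating the correspondence on Kleene approximants $V_n$ (stuttering) and $V_n'$ (mumbling): since the strictly guarded form places every occurrence of $X$ in $\psi$ below at least one next operator, the induction on $n$ goes through using only the induction hypothesis on the strict subformula $\psi$, and the correspondence is preserved in the limit. The main technical obstacle is precisely this fixpoint case: one must carefully verify that the raw-index correspondence, which is natural at $\bigcirc^\Gamma$ and at tests, is preserved through every fixpoint unfolding despite the subtle off-by-one shift between mumbling and stuttering positions. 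A secondary point requiring care is the PDS case, where one must check that $\bigcirc^\succabstract$ and $\bigcirc^\succcaller$ inside trace subformulae $\delta$ remain well-defined and compute consistent indices under the prepended internal dummy transition, which follows from the observation that abstract successors and callers depend only on $\call$/$\ret$ transitions whose prefix counts are shifted uniformly by one in $\tr'$.
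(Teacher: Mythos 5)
Your reduction is exactly the paper's: the same jump criterion (your $\delta \leftrightarrow \lnot\bigcirc^\succglobal\delta$ is just the paper's $\lnot(\delta \leftrightarrow \bigcirc^\succglobal\delta)$), the same $\bigcirc^\succglobal$-shift of tests, and the same prepended fresh initial state to absorb the off-by-one at position $0$. The paper leaves the correctness argument at ``it is easy to see''; your raw-index correspondence, the change-point characterisation of $\Succ_\Gamma$, and the remarks on fixpoint approximants and on $\bigcirc^\succabstract$/$\bigcirc^\succcaller$ under the dummy transition are correct elaborations of that same argument.
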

\begin{lemma}\label{lem:mumblingexpressiveness}
	Mumbling $H_\mu$ with unique mumbling and basis $\mathit{LTL}$ (resp. full basis) is at least as expressive as stuttering $H_\mu$ with unique stuttering and basis $\mathit{LTL}$ (resp. full basis).
\end{lemma}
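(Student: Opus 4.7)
The plan is to exhibit a syntactic translation $T$ from stuttering $H_\mu$ to mumbling $H_\mu$ that preserves semantics on every system. The target mumbling formula uses the unique successor assignment $\Delta(\pi) := \bigvee_{\delta \in \Gamma(\pi)}\lnot(\delta \leftrightarrow \bigcirc^\succglobal \delta)$, which holds at position $j$ iff some $\delta \in \Gamma(\pi)$ changes its valuation between $j$ and $j+1$. Because $\Delta(\pi)$ only combines formulae from $\Gamma(\pi)$ with Booleans and a single $\bigcirc^\succglobal$, it stays in the basis of $\Gamma(\pi)$, so both the LTL and the full basis restriction are respected.

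The key semantic observation is that whenever the mumbling cursor sits at a position $i$ at which $\Delta(\pi)$ does \emph{not} hold, $\Succ_{\Delta(\pi)}(\tr,i)$ lands exactly at $s-1$, where $s$ is the next stuttering landmark of $\Gamma(\pi)$ strictly after $i$: valuations are constant on the block $[i,s-1]$ and change only between $s-1$ and $s$, so $s-1$ is the first position after $i$ where $\Delta(\pi)$ can hold. Consequently, composing each test $[\delta]_\pi$ in the scope of a translated mumbling step with $\bigcirc^\succglobal$ reads valuations precisely at the stuttering landmark $s$. The iteration is self-stabilising: from $s_k-1$ the next mumbling step leads to $s_{k+1}-1$. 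This gives the core rewrite $\bigcirc^\Gamma \psi \mapsto \bigcirc^\Delta T'(\psi)$, where $T'$ replaces every $[\delta]_\pi$ in its scope by $[\bigcirc^\succglobal \delta]_\pi$ and commutes with all other multitrace operators.

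The one subtle case is the initial jump on a trace $\pi$ for which $\Delta(\pi)$ already holds at position $0$ (equivalently, $s_1^\pi = 1$): the mumbling step strictly after $0$ skips past $s_1^\pi$. I compensate for this at the top multitrace level by case-splitting over the subset $S \subseteq N$ of trace variables in this immediate-change situation:
\begin{equation*}
T(\psi) \;=\; \bigvee_{S \subseteq N} \Big( \bigwedge_{\pi \in S} [\Delta(\pi)]_\pi \;\land\; \bigwedge_{\pi \notin S} \lnot[\Delta(\pi)]_\pi \;\land\; T_S(\psi) \Big).
\end{equation*}
Inside the $S$-disjunct, $T_S$ first distributes the outermost $\bigcirc^\Gamma$ over Booleans and unfolds any outermost fixpoint once (strict guardedness of the original formula ensures termination), then pulls every test on a trace $\pi \in S$ out of the resulting first translated jump and shifts it by $\bigcirc^\succglobal$, so that it reads valuations at $s_1^\pi=1$ while the mumbling cursor is still at $0$; all remaining occurrences are translated by $T'$ as above. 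Trace quantifiers are passed through unchanged. Since $|N|$ is fixed by $\varphi$, the blow-up is bounded, and the result stays in unique-mumbling $H_\mu$ with the same basis.

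Correctness is proved by induction on $\psi$ using the invariant that inside the $S$-disjunct the stuttering cursor and the shifted mumbling cursor agree on every trace at every test in $T_S(\psi)$: for $\pi \notin S$ the $k$-th translated $\bigcirc^\Delta$ step places the cursor at $s_k^\pi-1$, and for $\pi \in S$ the pull-out accounts for the missing $s_1^\pi=1$ while the subsequent $\bigcirc^\Delta$ steps place the cursor at $s_{k+1}^\pi-1$. The main obstacle is exactly this initial-position asymmetry: forward-only trace formulae cannot express ``valuation just changed from the previous position'', so no uniform $\Delta(\pi)$ can mark stuttering landmarks exactly, and the per-trace case split over $S$ combined with pulling tests out of the first translated jump is the essential device that compensates for it without leaving the unique-mumbling fragment or enlarging the basis.
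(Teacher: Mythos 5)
Your overall strategy---the successor assignment $\Delta(\pi)=\bigvee_{\delta\in\Gamma(\pi)}\lnot(\delta\leftrightarrow\bigcirc^\succglobal\delta)$, the observation that a mumbling step lands one position before the next stuttering landmark, the $\bigcirc^\succglobal$-shift of tests, and the case split over the set of traces whose valuation already changes at position $0$---is exactly the paper's. The gap is in how you compensate on the problematic traces $\pi\in S$. On such a trace the positions satisfying $\Delta(\pi)$ are $s_1^\pi-1=0,\ s_2^\pi-1,\ s_3^\pi-1,\dots$, so after $k$ applications of $\Succ_{\Delta(\pi)}$ starting from $0$ the cursor sits at $s_{k+1}^\pi-1$ for \emph{every} $k\geq 1$, not just for $k=1$. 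Pulling the depth-one tests out of the first translated jump repairs only the first landmark; a test on $\pi\in S$ nested under $k\geq 2$ jumps still reads $s_{k+1}^\pi$ via its $\bigcirc^\succglobal$-shift, whereas the stuttering semantics requires $s_k^\pi$. Your claimed invariant (``the stuttering cursor and the shifted mumbling cursor agree on every trace at every test'') therefore already fails at nesting depth two, and it cannot be restored from inside the trace formula, since reading one landmark \emph{earlier} than the cursor would require a past operator.

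The repair has to realign the entire trajectory, not only its first step. The paper does this by deleting one multitrace jump globally (possible because the fixpoints have been unrolled once, so the outermost jumps are syntactically exposed) and then shifting the tests on the \emph{non}-problematic traces $\pi\notin T$ forward by one $\Gamma$-landmark inside the trace formula, replacing $[\bigcirc^\succglobal\delta]_\pi$ by $[\bigcirc^\succglobal((\lnot\delta_{\Gamma,\pi})\,\mathcal{U}^\succglobal(\delta_{\Gamma,\pi}\land\bigcirc^\succglobal\delta))]_\pi$ where $\delta_{\Gamma,\pi}=\Delta(\pi)$. After removing one jump, a depth-$k$ subformula is reached after $k-1$ mumbling steps, which places the cursor at $s_k^\pi-1$ on traces in $T$ (so the single shift is correct there) and at $s_{k-1}^\pi-1$ on traces outside $T$ (so the until-based double shift is needed there); the discrepancy is thus corrected uniformly at all depths. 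Your proposal keeps all jumps and patches only the first level, and is consequently unsound for formulae with $\bigcirc^\Gamma$-nesting depth at least two.
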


Detailed proofs can be found in \Cref{app:stutteringtomumblingreduction,app:mumblingexpressiveness}.
On the other hand, there are cases where stuttering cannot mimic the behaviour of mumbling.
For example, consider a trace $(\{p\} \cdot \emptyset)^\omega$ and mumbling criterion $p$.
While mumbling visits every other position, it is easy to see that stuttering must necessarily visit every position on this trace independently of the stuttering criterion since the postfixes of this trace coincide in every other position.
When considering the basis $\mathit{LTL}$, this mismatch in expressivity between the jump criteria cannot be compensated on the level of formulae.
For this, consider the hyperproperty $\mathcal{H} = \{\mathcal{T} \subseteq (2^\AP)^\omega \mid \forall \tr,\tr' \in \mathcal{T}. |\{i \mid p \in \tr(i)\}| = |\{i \mid p \in \tr'(i)\}|\}$  expressing that all traces of a set have the same number of $p$-positions.
We show:

\begin{lemma}\label{lem:mumblingstrictltlexpressiveness}
	The hyperproperty $\mathcal{H}$ is expressible in mumbling $H_\mu$ with unique mumbling and basis $\AP$ while not expressible in stuttering $H_\mu$ with unique stuttering and basis $\mathit{LTL}$.
\end{lemma}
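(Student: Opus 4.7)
The statement decomposes into an upper bound (expressibility in mumbling $H_\mu$) and a lower bound (inexpressibility in stuttering $H_\mu$ with $\mathit{LTL}$ basis), which I would prove independently.

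For expressibility I fix the unique successor assignment $\Delta := \{\pi_1 \mapsto p,\, \pi_2 \mapsto p\}$. Iterating $\bigcirc^\Delta$ from a position $(v_1,v_2)$ walks each trace through its $p$-positions strictly greater than $v_i$, and once a trace runs out of $p$s its further visited positions are all $p$-free. Hence, when $\tr_1(0)$ and $\tr_2(0)$ agree on $p$, equality of $p$-counts is captured by
\[
\chi_{=} := ([p]_{\pi_1} \leftrightarrow [p]_{\pi_2}) \land \bigcirc^{\Delta}\, \mathcal{G}^{\Delta}\bigl([p]_{\pi_1} \leftrightarrow [p]_{\pi_2}\bigr).
\]
When exactly one of $\tr_1(0), \tr_2(0)$ contains $p$, the counts are equal precisely when the pattern of $([p]_{\pi_1}, [p]_{\pi_2})$-valuations visible after the first $\bigcirc^\Delta$-step is $(1,1)^\omega$ or $(1,1)^{k}(0,1)(0,0)^{\omega}$ for some $k \geq 0$, which I enforce by the greatest fixpoint
\[
\chi' := \nu Y.\, \bigl( [p]_{\pi_1} \land [p]_{\pi_2} \land \bigcirc^{\Delta} Y \bigr) \lor \bigl( \lnot [p]_{\pi_1} \land [p]_{\pi_2} \land \bigcirc^{\Delta}\, \mathcal{G}^{\Delta}(\lnot [p]_{\pi_1} \land \lnot [p]_{\pi_2}) \bigr).
\]
Setting $\chi_{12} := [p]_{\pi_1} \land \lnot [p]_{\pi_2} \land \bigcirc^{\Delta} \chi'$, defining $\chi_{21}$ symmetrically, and taking $\varphi_{\mathcal H} := \forall \pi_1.\, \forall \pi_2.\, (\chi_{=} \lor \chi_{12} \lor \chi_{21})$, a short case distinction on the pair $(\tr_1(0),\tr_2(0))$ combined with the above characterisation of the $\bigcirc^\Delta$-visited positions shows $\mathcal T \models \varphi_{\mathcal H}$ iff $\mathcal T \in \mathcal H$.

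For inexpressibility I would argue by contradiction. Given a candidate stuttering $H_\mu$ formula $\varphi$ with unique stuttering $\Gamma$ and $\base(\varphi) \subseteq \mathit{LTL}$, translate every LTL formula occurring in $\base(\varphi)$ or in any $\Gamma(\pi_i)$ into a nondeterministic Büchi automaton and let $K$ exceed the total number of states. Consider the parameterised family $\tr_n := (\emptyset^{K}\,\{p\})^n\, \emptyset^\omega$ of traces with $n$ widely separated $p$-positions and the trace sets $\mathcal{T}_{n,n'} := \{\tr_n, \tr_{n'}\}$; note $\mathcal{T}_{n,n} \in \mathcal H$ always, while $\mathcal{T}_{n,n'} \notin \mathcal H$ for $n \neq n'$. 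Using Büchi pumping with bound $K$, I would construct an equivalence $\approx$ on position tuples of traces from this family, indexed by remaining $p$-blocks from either end and intra-block phase, such that $\approx$-equivalent tuples agree on every LTL basis/jump formula and the stuttering successor $\Succ_\Gamma$ preserves $\approx$. A structural induction on $\varphi$ then shows that every multitrace operator (including $\mu$ and $\nu$ fixpoints) preserves $\approx$, and that the quantifier cases go through because the family $\{\tr_m\}_{m \geq 2K}$ is closed under $\approx$ at matching positions, yielding that $\varphi$ evaluates identically on $\mathcal{T}_{n,n}$ and $\mathcal{T}_{n,n'}$ for all sufficiently large $n, n'$, a contradiction.

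The main obstacle is making $\approx$ precise enough to survive both the trace-level LTL fixpoints implicit in $\base(\varphi)$ and the multitrace-level fixpoints of $\varphi$ itself, while also keeping it coarse enough to identify $\tr_n$ with $\tr_{n'}$. The crucial issue is that repeated $\bigcirc^\Gamma$-steps inside an unfolding of a multitrace fixpoint may re-expose the difference between $\tr_n$ and $\tr_{n'}$ near their endings, so $\approx$ must be parametrised by how many block steps remain before an endpoint is hit and the pumping argument must be applied uniformly across all LTL formulae in play. I expect the cleanest presentation to be an Ehrenfeucht--Fraïssé-style game on hyperproperty formula evaluation, parameterised by fixpoint and quantifier alternation depth, with invariance of $\approx$ under one-step $\bigcirc^\Gamma$-moves and propositional reasoning over the $\mathit{LTL}$-tests as the winning condition; this is the heavy technical work anticipated by the authors.
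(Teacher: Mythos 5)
Your expressibility half is correct and is essentially the paper's own construction: the same unique successor assignment $\Delta(\pi_1)=\Delta(\pi_2)=p$, a case split on the initial valuations, and a check that the $p$-valuations at the visited positions stay synchronised (possibly shifted by one); the paper phrases the off-by-one cases with weak untils where you use the greatest fixpoint $\chi'$, but these are interchangeable.

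The inexpressibility half has the right overall strategy (relabel the LTL basis by fresh atomic propositions, pass to an automaton for the $\Gamma$-summarised inner formula, pump, and push the result through the quantifier prefix), but your witness family $\tr_n=(\emptyset^K\{p\})^n\emptyset^\omega$, $\mathcal{T}_{n,n'}=\{\tr_n,\tr_{n'}\}$ breaks the argument at two points. First, $\mathcal{T}_{n,n}=\{\tr_n\}$ is a singleton while $\mathcal{T}_{n,n'}$ has two elements, so the transfer of satisfaction through a universal quantifier has no surjection from the traces of the $\mathcal{H}$-member onto the traces of the non-member to work with; in this family the \emph{only} members of $\mathcal{H}$ are singletons, so there is no pair of two-element sets, one inside and one outside $\mathcal{H}$, to compare. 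Second, and more fundamentally, because all your traces have the same shape, the automaton $\mathcal{A}_\psi$ reads the two components of the summarised tuple word in lockstep, so inserting a pumping loop repeats the same window in \emph{both} components and changes both $p$-counts by the same amount; the difference of the counts is invariant under the pump and you can never move between $\mathcal{H}$ and its complement. (Pumping in the region where one component has already reached its $\emptyset^\omega$-suffix only relates $\mathcal{T}_{n,n'}$ to $\mathcal{T}_{n,n''}$ with $n'',n'\neq n$, never to the diagonal.) The paper's proof is built precisely around avoiding this: it takes two traces $\tr_0=\{p\}^l\cdot\emptyset^l\cdot\tr$ and $\tr_1=\emptyset^l\cdot\{p\}^l\cdot\tr$ with \emph{equal} counts but with the block types swapped, so that one loop of $\mathcal{A}_\psi$ reads $\{p\}$-symbols on $\tr_0$ and $\emptyset$-symbols on $\tr_1$ simultaneously; pumping it by $|\mathcal{A}_\psi|!$ then adds $p$-positions to one trace only and lands outside $\mathcal{H}$. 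In addition, the paper stabilises the behaviour of $\Succ_\Gamma$ under this perturbation by choosing the common suffix $\tr$ to maximise the set of jump criteria making type-one steps (so that prepending prefixes cannot turn a type-two step into a type-one step), and uses the next-depth claim about LTL to show the extended $\at$-labelling is constant on the long uniform blocks; your proposal gestures at both issues but does not resolve them, and with your family the second one cannot be resolved.
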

\begin{proof}
	The first part of this claim, namely expressing $\mathcal{H}$ in mumbling $H_\mu$ with unqiue mumbling, is straightforward and can be found in \cref{app:mumblingstrictltlexpressiveness}.
	
	For the second part, we first adapt some of the theorems from \cref{sec:fsmodelchecking} to stuttering $H_\mu$.
	In particular, we define $(\Gamma,\mathcal{T})$-equivalence in the obvious way.
	It is easy to see that the results of \cref{thm:deltatequivalence} carry over to this notion of equivalence.
	Additionally, we use a claim about LTL in which we write $\mathit{nd}(\delta)$ for the nesting depth of next operators in $\delta$.
	\begin{claim}\label{ltlclaim}
		For all trace formulae $\delta \in LTL$ with $\mathit{nd}(\delta) = n$ and traces $\tr$, we have $i \in \llbracket \delta \rrbracket^\tr$ iff $i+1 \in \llbracket \delta \rrbracket^\tr$ if there is a set $P \subseteq \AP$ such that $tr(j) = P$ for all $i \leq j \leq i + n + 1$.
	\end{claim}
	This claim can easily be established by induction (see \cref{app:mumblingstrictltlexpressiveness}).
	It generalises Theorem 4.1 from the classic  paper \cite{Wolper1981} about the expressivity of LTL.
	
	Assume towards contradiction that there is a hyperproperty formula $\varphi = Q_n \pi_n . \dots Q_1 \pi_1 . \psi$ from stuttering $H_\mu$ with unique stuttering expressing the property $\mathcal{H}$.
	Let $\varphi_i = Q_i \pi_i \dots Q_1 \pi_1 \psi$ for $i \leq n$, with $\varphi_0 = \psi$ and $\varphi_n = \varphi$.
	Let $\Gamma$ be the stuttering assignment used in $\varphi$ and $\Gamma(\pi_i) = \gamma_i$.
	We say that a jump criterion $\gamma \in 2^\delta$ makes a type one step on a trace $\tr$ at position $i$ if $\Succ_\gamma(tr,i)$ is given by the first case in the definition of $\Succ_\gamma$.
	Similarly, we say that $\gamma$ makes a type two step on $\tr$ at position $i$ if $\Succ_\gamma(tr,i)$ is given by the second case.
	Finally, we say that $\gamma$ makes a type one/two step (without specifying a position) if it makes a type one/two step at position $0$.
	
	We choose a trace $\tr$ from $(2^{\{p\}})^\omega$ with finitely many $p$-positions maximising
	\begin{align}\label{maxset}
		|\{ i \in \{1,\dots,n\} \mid \gamma_i \text{ makes a type one step on } \tr \}|.
	\end{align}
	Since $\tr$ has only finitely many $p$-positions, we can write it as $\tr = \tr_s \cdot \emptyset^\omega$.
	Let
	\begin{align*}
		&\mathit{TypeOnePos} = \{ i \in \{1,\dots,n\} \mid \gamma_i \text{ makes a type one step on } \tr \} \text{ and}\\
		&\mathit{TypeTwoPos} = \{ i \in \{1,\dots,n\} \mid \gamma_i \text{ makes a type two step on } \tr \}.
	\end{align*}
	For all $\hat{\tr} \in (2^{\{p\}})^*$, we have
	\begin{align}
		&\forall i \in \mathit{TypeOnePos}: \gamma_i \text{ makes a type one step on } \hat{\tr}\cdot \tr \label{typeone}\\
		&\forall i \in \mathit{TypeTwoPos}: \gamma_i \text{ makes a type two step on } \hat{\tr}\cdot \tr. \label{typetwo}
	\end{align}
	Here, Property~(\ref{typeone}) follows from the fact that when $\gamma$ makes a type one step at position $i$, then it also makes a type one step for all earlier positions $j \leq i$.
	Property~(\ref{typetwo}) follows from the fact that $\tr$ maximises the quantity in (\ref{maxset}): if Property~(\ref{typetwo}) would not hold for some $\hat{\tr}$, then $\hat{\tr} \cdot \tr$ would have more type one positions than $\tr$ given that (\ref{typeone}) holds.
	
	We transform $\psi$ in the same way as in the reduction presented in \cref{subsec:basis}.
	That is, for every $\gamma \in \base(\varphi)$, we introduce a fresh atomic proposition $\at(\gamma)$ and replace tests and stuttering criteria in $\psi$ with the respective atomic propositions.
	This yields a formula $\psi_\at$.
	For such formulae, we \textit{properly} label traces with these atomic propositions, i.e. we extend each position $i$ on a trace $\tr \in (2^{\{p\}})^\omega$ with the set $\{\at(\gamma) \mid i \in \llbracket \gamma \rrbracket^{\Pi(\pi)}\}$ to obtain a trace $\tr_\at$.
	Analogously, we define variants $\Pi_\at$ of trace assignments $\Pi$ and $\mathcal{T}_\at$ of sets of traces $\mathcal{T}$.
	It is straightforward to see that
	\begin{align}\label{atlabelling}
		(0,\dots,0) \in \llbracket \psi \rrbracket^{\Pi} \text{ iff } (0,\dots,0) \in \llbracket \psi_\at \rrbracket^{\Pi_\at}
	\end{align}
	for all trace assignments $\Pi$.
	It is also clear that the $\Gamma$-variant of \cref{thm:deltatequivalence} is applicable to $\psi_\at$ since this formula has an atomic basis.
	
	Let $\mathcal{A}_{\psi}$ thus be the automaton for $\psi_\at$ according to \cref{thm:deltatequivalence} and let $|\mathcal{A}_\psi|$ be the number of states of $\mathcal{A}_\psi$.
	Let $l = |\tr_s| + 3\cdot \mathit{nd}(\psi) + |\mathcal{A}_{\psi}| + 3$.
	Consider the following two sets of traces $\mathcal{T} = \{\tr_0,\tr_1\}$ with $\tr_0 = \{p\}^l \cdot \emptyset^l \cdot \tr$ and $\tr_1 = \emptyset^l \cdot \{p\}^l \cdot \tr$ as well as $\mathcal{T}' = \{\tr_0',\tr_1'\}$ with $\tr_0' = \{p\}^{l + |\mathcal{A}_\psi|!} \cdot \emptyset^l \cdot \tr$ and $\tr_1' = \emptyset^{l + |\mathcal{A}_\psi|!} \cdot \{p\}^l \cdot \tr$.
	It is easy to see that $\mathcal{T} \in \mathcal{H}$ while $\mathcal{T}' \notin \mathcal{H}$.
	For any trace assignment $\Pi$ over $\mathcal{T}$, let $\Pi'$ be the trace assignment defined by $\Pi'(\pi_j) = \tr_0'$ if $\Pi(\pi_j) = \tr_0$ and $\Pi'(\pi_j) = \tr_1'$ if $\Pi(\pi_j) = \tr_1$.
	Below, we show by induction over $j$ that for all trace assignments $\Pi$ over $\mathcal{T}$ and $j \in \{0,\dots,n\}$, $\Pi \models_{\mathcal{T}} \varphi_j$ implies $\Pi' \models_{\mathcal{T}'} \varphi_j$.
	For $j = n$, this would mean that $\mathcal{T} \in \mathcal{H}$, i.e. $\mathcal{T} \models \varphi_n$, implies $\mathcal{T}' \models \varphi_n$, i.e. $\mathcal{T}' \in \mathcal{H}$, a contradiction.
	
	\begin{figure}
		\resizebox{.65\textwidth}{!}{
			\begin{tikzpicture}
				\node at (-5,0) {\strut Trace};
				\node at (-4,0) {\strut Type};
				
				\node at (-5,-0.5) {$\tr_0$};
				\node at (-4,-0.5) {$1$};
				\node at (-3,-0.5) {$p$};
				\node at (-2.5,-0.5) {$\dots$};
				\node at (-2,-0.5) {$p$};
				\node at (-1,-0.5) {$p$};
				\node at (-0.5,-0.5) {$\dots$};
				\node at (0,-0.5) {$p$};
				\node at (1,-0.5) {$p$};
				\node at (1.5,-0.5) {$\dots$};
				\node at (2,-0.5) {$p$};
				\node at (2.5,-0.5) {$\emptyset^l$};
				\node at (3,-0.5) {$\tr_s$};
				\node at (3.5,-0.5) {$\emptyset$};
				\node at (4,-0.5) {$\dots$};
				\node at (5,-0.5) {$\emptyset^\omega$};
				
				\node at (-5,-1) {$\tr_0$};
				\node at (-4,-1) {$2$};
				\node at (-3,-1) {$p$};
				\node at (-2.5,-1) {$\dots$};
				\node at (-2,-1) {$p$};
				\node at (-1,-1) {$p$};
				\node at (-0.5,-1) {$\dots$};
				\node at (0,-1) {$p$};
				\node at (1,-1) {$p$};
				\node at (1.5,-1) {$\dots$};
				\node at (2,-1) {$p$};
				\node at (2.5,-1) {$\emptyset^l$};
				\node at (3,-1) {$\tr_s$};
				\node at (3.5,-1) {$\emptyset$};
				\node at (4,-1) {$\dots$};
				\node at (5,-1) {$\emptyset^\omega$};
				
				\node at (-5,-1.5) {$\tr_1$};
				\node at (-4,-1.5) {$1$};
				\node at (-3,-1.5) {$\emptyset$};
				\node at (-2.5,-1.5) {$\dots$};
				\node at (-2,-1.5) {$\emptyset$};
				\node at (-1,-1.5) {$\emptyset$};
				\node at (-0.5,-1.5) {$\dots$};
				\node at (0,-1.5) {$\emptyset$};
				\node at (1,-1.5) {$\emptyset$};
				\node at (1.5,-1.5) {$\dots$};
				\node at (2,-1.5) {$\emptyset$};
				\node at (2.5,-1.5) {$p^l$};
				\node at (3,-1.5) {$\tr_s$};
				\node at (3.5,-1.5) {$\emptyset$};
				\node at (4,-1.5) {$\dots$};
				\node at (5,-1.5) {$\emptyset^\omega$};
				
				\node at (-5,-2) {$\tr_1$};
				\node at (-4,-2) {$2$};
				\node at (-3,-2) {$\emptyset$};
				\node at (-2.5,-2) {$\dots$};
				\node at (-2,-2) {$\emptyset$};
				\node at (-1,-2) {$\emptyset$};
				\node at (-0.5,-2) {$\dots$};
				\node at (0,-2) {$\emptyset$};
				\node at (1,-2) {$\emptyset$};
				\node at (1.5,-2) {$\dots$};
				\node at (2,-2) {$\emptyset$};
				\node at (2.5,-2) {$p^l$};
				\node at (3,-2) {$\tr_s$};
				\node at (3.5,-2) {$\emptyset$};
				\node at (4,-2) {$\dots$};
				\node at (5,-2) {$\emptyset^\omega$};
				
				\draw[ultra thick,color=red] (4.5,-0.3) -- (4.5,-0.7);
				\draw[ultra thick,color=red] (-1.5,-0.8) -- (-1.5,-1.2);
				
				\draw[ultra thick,color=red] (4.5,-1.3) -- (4.5,-1.7);
				\draw[ultra thick,color=red] (-1.5,-1.8) -- (-1.5,-2.2);
				
				\draw [decorate,decoration={brace,amplitude=8pt,mirror,raise=-4ex}] (-3,-3) -- (-2,-3) node[left]{$|\tr_s| + 2 \cdot \mathit{nd}(\psi) + 1$};
				\draw [decorate,decoration={brace,amplitude=8pt,mirror,raise=-4ex}] (-1,-3) -- (0,-3) node[midway]{$|\mathcal{A}_\psi| + 1$};
				\draw [decorate,decoration={brace,amplitude=8pt,mirror,raise=-4ex}] (1,-3) -- (2,-3) node[midway]{$\mathit{nd}(\psi) + 1$};
				\draw [decorate,decoration={brace,amplitude=8pt,mirror,raise=-4ex}] (2.85,-3) -- (5,-3) node[midway]{$\tr$};
			\end{tikzpicture}
		}
		\caption{Positions reached in component $i$ of the input word $w_{\Pi_\at^\Gamma}$ after $|\tr_s| + 2\cdot\mathit{nd}(\psi) + 1$ steps in an accepting run of $\mathcal{A}_\psi$.
			There are four cases where $\Pi(\pi_i) = \tr_0$ or $\Pi(\pi_i) = \tr_1$ and $i \in \mathit{TypeOnePos}$ or $i \in \mathit{TypeTwoPos}$.
		}
		\label{fig:acceptingrun}
	\end{figure}
	
	In the \textbf{base case}, assume that $\Pi \models_{\mathcal{T}} \psi$.
	By Property~(\ref{atlabelling}), we have $(0,\dots,0) \in \llbracket \psi_\at \rrbracket^{\Pi_\at}$.
	Since $\mathcal{A}_\psi$ is $(\Gamma,\mathcal{T}_\at)$-equivalent
	to $\psi_\at$ by the $\Gamma$-variant of \cref{thm:deltatequivalence}, we have an accepting run of $\mathcal{A}_\psi$ on $w_{\Pi_\at^\Gamma}$.
	Consider this accepting run of $\mathcal{A}_\psi$ after $l - |\mathcal{A}_\psi| - \mathit{nd}(\psi) - 2 = |\tr_s| + 2\cdot\mathit{nd}(\psi) + 1$ steps.
	This situation is depicted in \Cref{fig:acceptingrun}.
	On the one hand, for all $i \in \mathit{TypeOnePos}$, the suffix left to read in component $i$ of $w_{\Pi_\at^\Gamma}$ is $\emptyset^\omega$.
	For $\tr_0^\at$, this is due to the fact that by \cref{ltlclaim} and Property~(\ref{typeone}), it takes at most $\mathit{nd}(\psi)$ applications of $\Succ_{\gamma_i}$ to move over the prefix $\{p\}^l$, by the same argument it takes at most $\mathit{nd}(\psi)$ applications of $\Succ_{\gamma_i}$ to move over $\emptyset^l$ and finally, it takes at most $|\tr_s|$ applications of $\Succ_{\gamma_i}$ to move over $\tr_s$.
	The argumentation for $\tr_1^\at$ is analogous.
	This case is represented in lines one and three in \Cref{fig:acceptingrun}.
	On the other hand, for all $i \in \mathit{TypeTwoPos}$, the suffix left to read in component $i$ of $w_{\Pi_\at^\Gamma}$ is either $\{p\}^{|\mathcal{A}_\psi|+\mathit{nd}(\psi)+2} \cdot \emptyset^l \cdot \tr$ if $\Pi(\pi_i) = \tr_0$ or $\emptyset^{|\mathcal{A}_\psi|+\mathit{nd}(\psi)+2} \cdot \{p\}^l \cdot \tr$ if $\Pi(\pi_i) = \tr_1$.
	This is due to the fact that by Property~(\ref{typetwo}), $\gamma_i$ makes only type two steps on $\tr_0$ and $\tr_1$.
	This case is represented in lines two and four in \Cref{fig:acceptingrun}.
	Thus, during the next $|\mathcal{A}_\psi| + 1$ steps, the automaton reads the same symbols on all traces:
	For all $i \in \mathit{TypeOnePos}$, there are only $\emptyset$-symbols in these positions on both $\tr_0$ and $\tr_1$, which by \cref{ltlclaim} and the fact that the suffix is $\emptyset^\omega$ all have the same extended labelling in $\tr_0^\at$ and $\tr_1^\at$.
	For all $i \in \mathit{TypeTwoPos}$, there are only $\{p\}$-symbols on $\tr_0$ and only $\emptyset$-symbols on $\tr_1$ in these positions.
	By \cref{ltlclaim} and the fact that the next $\mathit{nd}(\psi) + 1$ positions after these steps are also $\{p\}$- or $\emptyset$-symbols on $\tr_0$ and $\tr_1$, respectively, the extended labelling in $\tr_0^\at$ and $\tr_1^\at$ is the same for these steps as well.
	During these $|\mathcal{A}_\psi| + 1$ steps where the same symbol is seen on each trace, at least one state $q$ of $\mathcal{A}_\psi$ is visited twice.
	Let $k$ be the number of steps between the two visits of $q$.
	We add $|\mathcal{A}_\psi|!$ $\{p\}$-positions to the $\{p\}$-prefix of $\tr_0$ to obtain $\tr_0'$ and $|\mathcal{A}_\psi|!$ $\emptyset$-positions to the $\emptyset$-prefix of $\tr_1$  to obtain $\tr_1'$.
	This situation is depicted in \Cref{fig:pumpedrun}.
	Since $|\mathcal{A}_\psi|!$ is a multiple of $k$, we do not change the acceptance of $\mathcal{A}_\psi$ as the run can repeat the loop from $q$ to $q$ $\frac{|\mathcal{A}_\psi|!}{k}$ times and then proceed as before:
	By the same argument as before, the extended labelling on the added positions is the same as on the position directly after.
	Thus,
	(i) for $i \in \mathit{TypeOnePos}$, the same number of applications of $\Succ_{\gamma_i}$ as before are needed to skip over $\{p\}^{l + |\mathcal{A}_\psi|!}$ or $\emptyset^{l + |\mathcal{A}_\psi|!}$ due to \cref{ltlclaim}, thus the run is again in the $\emptyset$-suffix after $|\tr_s| + 2 \cdot \mathit{nd}(\psi) + 1$ steps where $\mathcal{A}_\psi$ can loop from $q$ to $q$ without changing the suffix of the trace to be processed and (ii) for $i \in \mathit{TypeTwoPos}$, the loops from $q$ to $q$ read exactly the additional symbols.
	The parts of the traces where these loops are taken are marked in red in \Cref{fig:pumpedrun} (areas with solid border).
	Consequently, we have an accepting run of $\mathcal{A}_\psi$ over $w_{\Pi_\at'^\Gamma}$ and conclude $\Pi' \models_{\mathcal{T}'} \psi$ by again using Property~(\ref{atlabelling}) and the fact that $\mathcal{A}_\psi$ is also $(\Gamma,\mathcal{T}_\at')$-equivalent to $\psi_\at$.
	
	\begin{figure}
		\resizebox{.65\textwidth}{!}{
			\begin{tikzpicture}
				\draw[line width=.8pt,dotted,color =blue] (-1.5,-0.3) -- (-1.5,-0.7) -- (0.5,-0.7) -- (0.5,-0.3) -- cycle;
				\fill[color=blue,opacity=0.2] (-1.5,-0.3) rectangle (0.5,-0.7);
				\draw[line width=.8pt,color =red] (4.5,-0.3) -- (4.5,-0.7) -- (5.5,-0.7) -- (5.5,-0.3) -- cycle;
				\fill[color=red,opacity=0.2] (4.5,-0.3) rectangle (5.5,-0.7);
				\draw[line width=.8pt,color =red] (-1.5,-0.8) -- (-1.5,-1.2) -- (0.5,-1.2) -- (0.5,-0.8) -- cycle;
				\fill[color=red,opacity=0.2] (-1.5,-0.8) rectangle (0.5,-1.2);
				
				\draw[line width=.8pt,dotted,color =blue] (-1.5,-1.3) -- (-1.5,-1.7) -- (0.5,-1.7) -- (0.5,-1.3) -- cycle;
				\fill[color=blue,opacity=0.2] (-1.5,-1.3) rectangle (0.5,-1.7);
				\draw[line width=.8pt,color =red] (4.5,-1.3) -- (4.5,-1.7) -- (5.5,-1.7) -- (5.5,-1.3) -- cycle;
				\fill[color=red,opacity=0.2] (4.5,-1.3) rectangle (5.5,-1.7);
				\draw[line width=.8pt,color =red] (-1.5,-1.8) -- (-1.5,-2.2) -- (0.5,-2.2) -- (0.5,-1.8) -- cycle;
				\fill[color=red,opacity=0.2] (-1.5,-1.8) rectangle (0.5,-2.2);

				\node at (-5,0) {\strut Trace};
				\node at (-4,0) {\strut Type};
				
				\node at (-5,-0.5) {$\tr_0'$};
				\node at (-4,-0.5) {$1$};
				\node at (-3,-0.5) {$p$};
				\node at (-2.5,-0.5) {$\dots$};
				\node at (-2,-0.5) {$p$};
				\node at (-1,-0.5) {$p$};
				\node at (-0.5,-0.5) {$\dots$};
				\node at (0,-0.5) {$p$};
				\node at (1,-0.5) {$p$};
				\node at (1.5,-0.5) {$\dots$};
				\node at (2,-0.5) {$p$};
				\node at (2.5,-0.5) {$\emptyset^l$};
				\node at (3,-0.5) {$\tr_s$};
				\node at (3.5,-0.5) {$\emptyset$};
				\node at (4,-0.5) {$\dots$};
				\node at (5,-0.5) {$\emptyset^\omega$};
				
				\node at (-5,-1) {$\tr_0'$};
				\node at (-4,-1) {$2$};
				\node at (-3,-1) {$p$};
				\node at (-2.5,-1) {$\dots$};
				\node at (-2,-1) {$p$};
				\node at (-1,-1) {$p$};
				\node at (-0.5,-1) {$\dots$};
				\node at (0,-1) {$p$};
				\node at (1,-1) {$p$};
				\node at (1.5,-1) {$\dots$};
				\node at (2,-1) {$p$};
				\node at (2.5,-1) {$\emptyset^l$};
				\node at (3,-1) {$\tr_s$};
				\node at (3.5,-1) {$\emptyset$};
				\node at (4,-1) {$\dots$};
				\node at (5,-1) {$\emptyset^\omega$};
				
				\node at (-5,-1.5) {$\tr_1'$};
				\node at (-4,-1.5) {$1$};
				\node at (-3,-1.5) {$\emptyset$};
				\node at (-2.5,-1.5) {$\dots$};
				\node at (-2,-1.5) {$\emptyset$};
				\node at (-1,-1.5) {$\emptyset$};
				\node at (-0.5,-1.5) {$\dots$};
				\node at (0,-1.5) {$\emptyset$};
				\node at (1,-1.5) {$\emptyset$};
				\node at (1.5,-1.5) {$\dots$};
				\node at (2,-1.5) {$\emptyset$};
				\node at (2.5,-1.5) {$p^l$};
				\node at (3,-1.5) {$\tr_s$};
				\node at (3.5,-1.5) {$\emptyset$};
				\node at (4,-1.5) {$\dots$};
				\node at (5,-1.5) {$\emptyset^\omega$};
				
				\node at (-5,-2) {$\tr_1'$};
				\node at (-4,-2) {$2$};
				\node at (-3,-2) {$\emptyset$};
				\node at (-2.5,-2) {$\dots$};
				\node at (-2,-2) {$\emptyset$};
				\node at (-1,-2) {$\emptyset$};
				\node at (-0.5,-2) {$\dots$};
				\node at (0,-2) {$\emptyset$};
				\node at (1,-2) {$\emptyset$};
				\node at (1.5,-2) {$\dots$};
				\node at (2,-2) {$\emptyset$};
				\node at (2.5,-2) {$p^l$};
				\node at (3,-2) {$\tr_s$};
				\node at (3.5,-2) {$\emptyset$};
				\node at (4,-2) {$\dots$};
				\node at (5,-2) {$\emptyset^\omega$};
				
				\draw [decorate,decoration={brace,amplitude=8pt,mirror,raise=-4ex}] (-3,-3) -- (-2,-3) node[left]{$|\tr_s| + 2 \cdot \mathit{nd}(\psi) + 1$};
				\draw [decorate,decoration={brace,amplitude=8pt,mirror,raise=-4ex}] (-1,-3) -- (0,-3) node[midway]{$|\mathcal{A}_\psi| + 1 + |\mathcal{A}_\psi|!$};
				\draw [decorate,decoration={brace,amplitude=8pt,mirror,raise=-4ex}] (1,-3) -- (2,-3) node[pos = 1]{$\mathit{nd}(\psi) + 1$};
				\draw [decorate,decoration={brace,amplitude=8pt,mirror,raise=-4ex}] (2.85,-3) -- (5,-3) node[midway]{$\tr$};
			\end{tikzpicture}
		}
		\caption{Accepting run of $\mathcal{A}_\psi$ over $w_{\Pi_\at'^\Gamma}$ after the application of the pumping argument.
			We have four cases for component $i$ of the input word where $\Pi(\pi_i) = \tr_0$ or $\Pi(\pi_i) = \tr_1$ and $i \in \mathit{TypeOnePos}$ or $i \in \mathit{TypeTwoPos}$.
			Loops in the automaton are added when reading the red areas (areas with solid border).
			The additional symbols in the blue areas (areas with dotted border) are skipped when the traces are progressed with type one steps.
			The extended labelling of the additional positions is the same as that of the position directly before them since there are at least $\mathit{nd}(\psi) + 1$ successive positions with the same symbol.}
		\label{fig:pumpedrun}
	\end{figure}
	
	The \textbf{inductive step} considers the quantifiers $Q_1,\dots,Q_n$ and follows straightforwardly from the semantics of quantifiers and the induction hypothesis.
\end{proof}

Combining \cref{lem:mumblingexpressiveness} and \cref{lem:mumblingstrictltlexpressiveness}, we obtain:

\begin{theorem}\label{thm:mumblingfullltlexpressiveness}
	Mumbling $H_\mu$ with basis $\mathit{LTL}$ and unique mumbling is strictly more expressive than stuttering $H_\mu$ with basis $\mathit{LTL}$ and unique stuttering.
\end{theorem}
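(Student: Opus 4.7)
The plan is to derive the theorem as a direct corollary of the two preceding lemmas, namely \cref{lem:mumblingexpressiveness} and \cref{lem:mumblingstrictltlexpressiveness}. The first lemma supplies one direction of the comparison: every hyperproperty expressible in stuttering $H_\mu$ with unique stuttering and basis $\mathit{LTL}$ is also expressible in mumbling $H_\mu$ with unique mumbling and basis $\mathit{LTL}$. So it remains only to exhibit a separating hyperproperty.

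For the strictness, I would use the hyperproperty $\mathcal{H}$ from \cref{lem:mumblingstrictltlexpressiveness}. The lemma states that $\mathcal{H}$ is expressible in mumbling $H_\mu$ with unique mumbling and basis $\AP$. Since every atomic proposition $\ap \in \AP$ is trivially an LTL trace formula, any formula whose base set is contained in $\AP$ is in particular a formula with base set contained in $\mathit{LTL}$; that is, mumbling $H_\mu$ with basis $\mathit{LTL}$ subsumes mumbling $H_\mu$ with basis $\AP$. Hence $\mathcal{H}$ is also expressible in mumbling $H_\mu$ with unique mumbling and basis $\mathit{LTL}$. On the other hand, the very same lemma asserts that $\mathcal{H}$ is not expressible in stuttering $H_\mu$ with unique stuttering and basis $\mathit{LTL}$. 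Together with the containment supplied by \cref{lem:mumblingexpressiveness}, this witnesses that the inclusion between the two classes is strict.

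There is no real obstacle here: the argument is purely one of composition, and both technical ingredients (the translation from stuttering to mumbling and the inexpressibility argument for $\mathcal{H}$) have already been supplied. The only point that needs to be stated explicitly for clarity is the inclusion of the basis $\AP$ into the basis $\mathit{LTL}$, so that the positive side of \cref{lem:mumblingstrictltlexpressiveness} transfers from the atomic basis setting to the $\mathit{LTL}$ basis setting in which the two logics are being compared.
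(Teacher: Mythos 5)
Your proposal is correct and matches the paper's own argument, which obtains the theorem by combining \cref{lem:mumblingexpressiveness} (the inclusion) with \cref{lem:mumblingstrictltlexpressiveness} (the separating hyperproperty $\mathcal{H}$). Your explicit remark that the basis $\AP$ is contained in the basis $\mathit{LTL}$ is a correct and harmless clarification that the paper leaves implicit.
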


As \textit{simple HyperLTL$_S$}, the decidable fragment of HyperLTL$_S$ from \cite{Bozzelli2021}, can straightforwardly be embedded into stuttering $H_\mu$ with unique stuttering, these results also directly imply that the hyperproperty $\mathcal{H}$ is not expressible in simple HyperLTL$_S$ and that mumbling $H_\mu$ with unique mumbling is strictly more expressive than simple HyperLTL$_S$.
Surprisingly, the lower expressivity of stuttering can be compensated exploiting the power of fixpoints:

\begin{lemma}\label{lem:stutteringexpressiveness}
	Stuttering $H_\mu$ with unique stuttering and full basis is at least as expressive as mumbling $H_\mu$ with unique mumbling and full basis.
\end{lemma}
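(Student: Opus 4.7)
My plan is a syntactic translation $T$ from mumbling $H_\mu$ with unique mumbling $\Delta$ to stuttering $H_\mu$ with unique stuttering $\Gamma$, both over the full basis, such that $\mathcal{T} \models \varphi \Leftrightarrow \mathcal{T} \models T(\varphi)$ for every trace set $\mathcal{T}$. The central technical ingredient is that, with fixpoints available inside trace formulas, one mumbling successor can be encoded entirely at the trace level. Concretely, for each $\delta$ appearing in the image of $\Delta$, I introduce the auxiliary trace-level operator
\[
  N_\delta(\phi) \;:=\; \bigcirc^\succglobal \mu Y.\; (\phi^\delta_{\mathrm{st}} \land \phi) \lor (\lnot \phi^\delta_{\mathrm{st}} \land \bigcirc^\succglobal Y),
\]
where $\phi^\delta_{\mathrm{st}} := \delta \lor \lnot \mathcal{F}^\succglobal \delta$ marks exactly the mumbling-stop positions (either a $\delta$-position, or a position past the last $\delta$-position). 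A routine induction over the Kleene approximants of the inner $\mu$-fixpoint shows that $i \in \llbracket N_\delta(\phi) \rrbracket^{\tr}$ iff $\Succ_\delta(\tr,i) \in \llbracket \phi \rrbracket^{\tr}$: starting from $i+1$, we scan forward through non-stop positions until we either meet a $\delta$ or discover there are no more, which are exactly the two cases in the definition of $\Succ_\delta$.

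Given $N_\delta$, the translation $T$ is defined by structural induction. It is the identity on tests $[\delta]_\pi$ and fixpoint variables, and it distributes over disjunction and negation, which is legal because $\Succ_\Delta$ is a total function and hence $\bigcirc^\Delta$ is self-dual and commutes with arbitrary unions. On the jump operator over a test I set $T(\bigcirc^\Delta [\delta]_\pi) := [N_{\Delta(\pi)}(\delta)]_\pi$, and I push any remaining $\bigcirc^\Delta$ occurrences inward using Boolean distribution. For the remaining multi-trace fixpoints $\mu X.\psi$ in which $X$ occurs guarded by $\bigcirc^\Delta$, I use the unique stuttering criterion $\Gamma(\pi) := \{\phi^{\Delta(\pi)}_{\mathrm{st}}\}$ in $T(\varphi)$ and replace each $\bigcirc^\Delta X$ by $\bigcirc^\Gamma X$ while weaving $\phi^{\Delta(\pi)}_{\mathrm{st}}$-guards for each trace variable $\pi$ into the body of the fixpoint. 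The guards filter stuttering tuples down to those in which each component has landed on a mumbling-stop position, restoring the mumbling semantics at the multi-trace level from the finer stuttering granularity.

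The correctness proof is a structural induction on $\varphi$. The base and Boolean cases are immediate from the key identity for $N_\delta$ together with the distribution of $\bigcirc^\Delta$, and the quantifier cases reduce directly to the inductive hypothesis. The hard part of the plan, and the main technical obstacle, is the fixpoint case for multi-trace formulas whose mumbling-step iteration is \emph{correlated} across several traces: pushing $\bigcirc^\Delta$ into each test via $N_\delta$ in a decoupled fashion replaces one existential over a common mumbling index by independent existentials over per-trace indices and thereby breaks the intended synchronisation. Resolving this requires showing that the combination of the stuttering criterion $\Gamma(\pi) = \{\phi^{\Delta(\pi)}_{\mathrm{st}}\}$ together with the inserted $\phi^{\Delta(\pi)}_{\mathrm{st}}$-guards yields a fixpoint whose Kleene approximants are in one-to-one correspondence with those of the original mumbling fixpoint; this relies on a monotonicity argument that fires a new approximant only when every trace component has advanced through at least one more mumbling stop, and it is here that the power of fixpoints in the jump criteria (i.e.\ the full basis on the stuttering side) is essential. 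Once this commutation-style lemma is established, the structural induction closes and yields the claim.
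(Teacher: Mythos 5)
Your trace-level operator $N_\delta$ is fine as far as it goes: it correctly internalises a \emph{single} application of $\Succ_\delta$ into a trace formula, and pushing $\bigcirc^\Delta$ through Boolean connectives is legitimate since $\Succ_\Delta$ is total. The gap is in the fixpoint case, and it is not merely an unproved step but a false key claim. Your plan rests on the stuttering criterion $\Gamma(\pi)=\{\delta\lor\lnot\mathcal{F}^\succglobal\delta\}$ (with $\delta=\Delta(\pi)$) visiting at least every mumbling stop, so that guards inside the fixpoint body can filter away the surplus positions. But the stuttering successor advances to the next position where the \emph{valuation changes}, not to the next position where the formula \emph{holds}. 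On a trace $\tr$ with $\tr(0)=\emptyset$, $\tr(1)=\tr(2)=\{p\}$, $\tr(3)=\emptyset$ and infinitely many later $p$-positions, and $\delta=p$, the valuation of $\delta\lor\lnot\mathcal{F}^\succglobal\delta$ reads $F,T,T,F,\dots$, so $\Succ_\Gamma(\tr,1)=3$ while $\Succ_\delta(\tr,1)=2$: the stuttering step \emph{overshoots} and skips a mumbling stop. Once a stop has been overshot no guarding or filtering can recover it, so a formula such as $\mu X.\,([q]_{\pi_1}\land[q]_{\pi_2})\lor\bigcirc^\Delta X$ whose unique witness on $\pi_1$ sits at position $2$ is true under mumbling but false under your translation; the claimed one-to-one correspondence of Kleene approximants cannot hold. (The converse failure also occurs: on $(\{p\}\cdot\emptyset)^\omega$ your criterion --- indeed, as noted before \cref{lem:mumblingstrictltlexpressiveness}, \emph{any} criterion --- forces the stuttering successor to visit every position while mumbling visits every other one, so the two successors cannot be made to coincide in general.)

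This is exactly the difficulty the paper's proof is engineered around. Rather than a formula that \emph{holds} at mumbling stops, it uses parity-counting fixpoint formulae ($\gamma_0$ for traces with finitely many $\delta$-positions, $\gamma_j$ and $\tilde{\gamma}_j$ for the infinite case) whose valuation \emph{flips exactly at} the $\delta$-positions whenever this is achievable, so that $\Succ_\Gamma$ and $\Succ_\Delta$ literally coincide there; in the one residual case where no stuttering criterion can keep pace (infinitely many $\delta$-positions on which every test is eventually constant), it shows the mismatch is harmless because the rewritten tests $[(\lnot\delta\,\mathcal{U}(\delta\land\delta_j))\lor(\delta_j\land\lnot\mathcal{F}\delta)]_\pi$ are invariant under the extra steps. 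To repair your argument you would have to replace $\delta\lor\lnot\mathcal{F}^\succglobal\delta$ by formulae of this parity-flipping kind and supply the invariance argument for the residual case --- at which point you have essentially reconstructed the paper's proof.
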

\begin{proof}
	We show this lemma by presenting a translation from a mumbling $H_\mu$ formula $\varphi$ with unique mumbling to an equivalent stuttering $H_\mu$ formula $\hat{\varphi}$ with unique stuttering.
	
	Let $\varphi = Q_n \pi_n . \dots Q_1 \pi_1 . \psi$ be a mumbling $H_\mu$ formula with unique mumbling using the successor assignment $\Delta$.
	We assume $\varphi$ is in positive form, i.e. negation occurs only in front of tests in $\psi$.
	As in other proofs, we define $\varphi_i = Q_i \pi_i . \dots Q_1 \pi_1 . \psi$ with $\varphi_0 = \psi$ and $\varphi_n = \varphi$.
	We assume w.l.o.g. that every test in $\psi$ is either only applied on the first position of a trace or only on later positions.
	This can be achieved by unrolling fixpoints so that all tests are either unguarded (and thus only apply to the first position) or in scope of at least one $\bigcirc^\Delta$ operator (and thus only apply to later positions).
	
	We first define a stuttering criterion $\Gamma$ by specifying $\Gamma(\pi)$ for each trace variable $\pi$.
	For this, let $\pi \in \{\pi_1,\dots,\pi_n\}$ be a trace variable with $\delta = \Delta(\pi)$ as well as $[\delta_1]_{\pi},\dots,[\delta_m]_{\pi}$ be the tests applied on $\pi$.
	We introduce $\bigcirc^{\delta} \delta'$ as an abbreviation for the trace formula $\bigcirc(\lnot \delta \mathcal{U} (\delta \land \delta'))$.
	Intuitively, $\bigcirc^{\delta} \delta'$ asserts that (i) there is a future position where $\delta$ holds and (ii) $\delta'$ holds at the next $\delta$ position.
	For $j \in \{1,\dots,m\}$, we define formulae $\gamma_0,\gamma_j$ and $\tilde{\gamma}_j$ which we explain later:
	\begin{align*}
		\gamma_0 &= (\mathcal{F}\mathcal{G} \lnot \delta) \land \mu Y . ((\bigcirc^{\delta} \bigcirc \mathcal{G} \lnot \delta) \lor (\bigcirc^{\delta}\bigcirc^{\delta} Y)) \\
		\gamma_j &= (\mathcal{G}\mathcal{F} \delta) \land \mu Y . (\bigcirc^{\delta} \lnot \delta_j \lor \bigcirc^{\delta} (\delta_j \land \bigcirc^\delta (\delta_j \land Y))) \\
		\tilde{\gamma}_j &= (\mathcal{G}\mathcal{F} \delta) \land \mu Y . (\bigcirc^{\delta} \delta_j \lor \bigcirc^{\delta} (\lnot \delta_j \land \bigcirc^\delta (\lnot \delta_j \land Y))).
	\end{align*}
	We set $\Gamma(\pi) = \{\gamma_0,\gamma_1,\dots,\gamma_m,\tilde{\gamma}_1,\dots,\tilde{\gamma}_m\}$ and replace every test $[\delta_j]_\pi$ in scope of a $\bigcirc^\Delta$ operator by $[(\lnot \delta \mathcal{U} (\delta \land \delta_j)) \lor (\delta_j \land \lnot \mathcal{F} \delta)]_\pi$.
	After doing so for all trace variables $\pi$, we replace every next operator $\bigcirc^\Delta$ with $\bigcirc^\Gamma$, obtaining a multitrace formula $\hat{\psi}$.
	$\hat{\varphi}$ is then given as $Q_n \pi_n . \dots Q_1 \pi_1 . \hat{\psi}$.
	The equivalence of $\varphi$ and $\hat{\varphi}$ follows from the following claim in which $\hat{\varphi}_i$ is defined analogous to $\varphi_i$:
	\begin{claim}\label{stteringexpressivenessmainclaim}
		For all sets of traces $\mathcal{T}$ and trace assignments $\Pi$ over $\mathcal{T}$, $\Pi \models_{\mathcal{T}} \varphi_i$ iff $\Pi \models_{\mathcal{T}} \hat{\varphi}_i$.
	\end{claim}
	
	A formal proof of this claim by induction on $i$ can be found in \cref{app:stutteringexpressiveness}.
	Here, we explain the intuition of the translation.
	First, consider a trace where $\delta$ is true on a finite number of positions.
	This case is illustrated in \Cref{fig:intuition1}.
	On such traces, the formulae $\gamma_j$ or $\tilde{\gamma}_j$ do not change their valuation since their first conjunct is never fulfilled.
	We thus use $\gamma_0$ to progress on such traces.
	Intuitively, the formula expresses that (i) there are only finitely many $\delta$-positions on the trace (expressed by $\mathcal{F}\mathcal{G}\lnot \delta$) and (ii) there is an odd number of $\delta$-positions after the current position (expressed by the fixpoint formula $\mu Y . ((\bigcirc^{\delta} \bigcirc \mathcal{G} \lnot \delta) \lor (\bigcirc^{\delta}\bigcirc^{\delta} Y))$).
	In this formula, $\bigcirc^{\delta} \bigcirc \mathcal{G} \lnot \delta$ identifies the positions with exactly one $\delta$-position after them.
	Additionally, the use of $\bigcirc^{\delta}\bigcirc^{\delta} Y$ in each fixpoint iteration advances by two $\delta$-positions and thus expresses that a position satisfying the fixpoint is an even number of $\delta$-positions away from the base case.
	Since the number of $\delta$-positions left on the trace is decreased by one whenever a $\delta$-position is encountered, $\gamma_0$ changes its valuation exactly at the positions where $\delta$ holds.
	
	\begin{figure}
			\resizebox{.65\textwidth}{!}{
			\begin{tikzpicture}
				\node at (-6,1.5) {\#};
				\node at (-5,1.5) {4};
				\node at (-4,1.5) {4};
				\node at (-3,1.5) {3};
				\node at (-2,1.5) {3};
				\node at (-1,1.5) {2};
				\node at (0,1.5) {1};
				\node at (1,1.5) {1};
				\node at (2,1.5) {0};
				\node at (3,1.5) {0};
				\node at (4,1.5) {0};
				\node at (5,1.5) {0};
				
				\node at (-6,1) {$\models\gamma_0$};
				\node at (-5,1) {\xmark};
				\node at (-4,1) {\xmark};
				\node at (-3,1) {\cmark};
				\node at (-2,1) {\cmark};
				\node at (-1,1) {\xmark};
				\node at (0,1) {\cmark};
				\node at (1,1) {\cmark};
				\node at (2,1) {\xmark};
				\node at (3,1) {\xmark};
				\node at (4,1) {\xmark};
				\node at (5,1) {\xmark};
				
				\node at (-5,0) {$\delta$};
				\node at (-4,0) {$\lnot\delta$};
				\node at (-3,0) {$\delta$};
				\node at (-2,0) {$\lnot\delta$};
				\node at (-1,0) {$\delta$};
				\node at (0,0) {$\delta$};
				\node at (1,0) {$\lnot\delta$};
				\node at (2,0) {$\delta$};
				\node at (3,0) {$\lnot\delta$};
				\node at (4,0) {$\lnot\delta$};
				\node at (5,0) {$\lnot\delta$};
				\node at (6,0) {\dots};
				
				\draw [decorate,decoration={brace,amplitude=8pt,mirror,raise=-4ex}] (3,-1) -- (6,-1) node[midway]{Suffix where $\delta$ does not hold};
			\end{tikzpicture}
			}
		\caption{Valuation of $\gamma_0$ on trace with finitely many $\delta$-positions.
			The numbers in the line labeled \# indicate the number of $\delta$-positions after the current position.}
		\label{fig:intuition1}
	\end{figure}
	\begin{figure}
		\centering
		\resizebox{.65\textwidth}{!}{
			\begin{tikzpicture}
				\node at (-6,1.5) {\#};
				\node at (-5,1.5) {0};
				\node at (-4,1.5) {0};
				\node at (-3,1.5) {3};
				\node at (-2,1.5) {3};
				\node at (-1,1.5) {2};
				\node at (0,1.5) {1};
				\node at (1,1.5) {1};
				\node at (2,1.5) {0};
				\node at (3,1.5) {0};
				\node at (4,1.5) {?};
				\node at (5,1.5) {?};
				
				\node at (-6,1) {$\models\gamma_j$};
				\node at (-5,1) {\cmark};
				\node at (-4,1) {\cmark};
				\node at (-3,1) {\xmark};
				\node at (-2,1) {\xmark};
				\node at (-1,1) {\cmark};
				\node at (0,1) {\xmark};
				\node at (1,1) {\xmark};
				\node at (2,1) {\cmark};
				\node at (3,1) {\cmark};
				\node at (4,1) {?};
				\node at (5,1) {?};
				
				\node at (-6,0) {\#};
				\node at (-5,0) {1};
				\node at (-4,0) {1};
				\node at (-3,0) {0};
				\node at (-2,0) {0};
				\node at (-1,0) {0};
				\node at (0,0) {0};
				\node at (1,0) {0};
				\node at (2,0) {1};
				\node at (3,0) {1};
				\node at (4,0) {0};
				\node at (5,0) {?};
				
				\node at (-6,-0.5) {$\models\tilde{\gamma}_j$};
				\node at (-5,-0.5) {\xmark};
				\node at (-4,-0.5) {\xmark};
				\node at (-3,-0.5) {\cmark};
				\node at (-2,-0.5) {\cmark};
				\node at (-1,-0.5) {\cmark};
				\node at (0,-0.5) {\cmark};
				\node at (1,-0.5) {\cmark};
				\node at (2,-0.5) {\xmark};
				\node at (3,-0.5) {\xmark};
				\node at (4,-0.5) {\cmark};
				\node at (5,-0.5) {?};
				
				\node at (-5,-1.5) {$\delta$};
				\node at (-4,-1.5) {$\lnot\delta$};
				\node at (-3,-1.5) {$\delta$};
				\node at (-2,-1.5) {$\lnot\delta$};
				\node at (-1,-1.5) {$\delta$};
				\node at (0,-1.5) {$\delta$};
				\node at (1,-1.5) {$\lnot\delta$};
				\node at (2,-1.5) {$\delta$};
				\node at (3,-1.5) {$\lnot\delta$};
				\node at (4,-1.5) {$\delta$};
				\node at (5,-1.5) {$\delta$};
				\node at (6,-1.5) {};
				
				\node at (-5,-2) {$\delta_j$};
				\node at (-4,-2) {};
				\node at (-3,-2) {$\lnot\delta_j$};
				\node at (-2,-2) {};
				\node at (-1,-2) {$\delta_j$};
				\node at (0,-2) {$\delta_j$};
				\node at (1,-2) {};
				\node at (2,-2) {$\delta_j$};
				\node at (3,-2) {};
				\node at (4,-2) {$\lnot\delta_j$};
				\node at (5,-2) {$\delta_j$};
				\node at (6,-2) {\dots};
			\end{tikzpicture}
		}
		\caption{Valuation of $\gamma_j$ and $\tilde{\gamma}_j$ on trace with infinitely many $\delta$-positions.
		The numbers in the line labeled \# indicate the number of relevant positions for each position as described in (1) and (2).
		A question mark indicates that the valuation depends on the continuation of the trace.}
		\label{fig:intuition2}
	\end{figure}
	
	Next, consider a trace where $\delta$ holds infinitely often.
	This case is illustrated in \Cref{fig:intuition2}.
	On such traces, the formula $\gamma_0$ does not change its valuation since the first conjunct is never fulfilled.
	Here, we use formulae $\gamma_j$ and $\tilde{\gamma}_j$ to progress on the trace.
	For these formulae, the first conjunct $\mathcal{G}\mathcal{F} \delta$ is used to identify the case that there is an infinite number of $\delta$-positions.
	Additionally, we have:
	\begin{enumerate}
		\item $\gamma_j$ is satisfied on positions with an even number of positions that satisfy $\delta_j \land \delta$ after the current position and before the next position satisfying $\lnot \delta_j \land \delta$.
		The base case of the fixpoint formula ($\bigcirc^{\delta} \lnot \delta_j$) identifies positions with no further $\delta$-positions between them and the next position where $\lnot \delta_j \land \delta$ is true.
		Each fixpoint iteration (by $\bigcirc^{\delta} (\delta_j \land \bigcirc^\delta (\delta_j \land Y))$) advances by two $\delta$-positions where $\delta_j$ is true as well.
		\item Analogously, $\tilde{\gamma}_j$ is satisfied on positions with an even number of positions that satisfy $\lnot \delta_j \land \delta$ after the current position and before the next position satisfying $\delta_j \land \delta$.
	\end{enumerate}
	
	As a consequence of (1) and (2), either $\gamma_j$ or $\tilde{\gamma}_j$ changes its value on all $\delta$-positions if there are infinitely many $\delta$-positions where $\delta_j$ holds and infinitely many $\delta$-positions where $\lnot \delta_j$ holds, i.e. the valuation of $\delta_j$ on $\delta$-positions changes infinitely often.
	If this is not the case, i.e. if the valuation of $\delta_j$ is constant for all $j \in \{1,\dots,m\}$ on $\delta$-positions from some point onward, $\gamma_j$ or $\tilde{\gamma}_j$ change value on all $\delta$-positions up to that point.
	Thus, $\Gamma$ advances a trace exactly like $\Delta$ except in situations where there are infinitely many $\delta$-positions and the valuation of all tests $\delta_j$ is constant on $\delta$ positions on the suffix of the trace.
	However, in this case we can use the fact that the valuation for all tests is constant and perform future tests on arbitrary $\delta$-positions.
	This is done by replacing tests $[\delta_j]_\pi$ by $[(\lnot \delta \mathcal{U} (\delta \land \delta_j)) \lor (\delta_j \land \lnot \mathcal{F} \delta)]_\pi$.
	The disjunct $(\lnot \delta \mathcal{U} (\delta \land \delta_j))$ is equivalent to $\delta_j$ if the stuttering assignment has correctly advanced to a $\delta$-position and tests at the next $\delta$-position if the stuttering assignment has not correctly advanced.
	Additionally, $(\delta_j \land \lnot \mathcal{F} \delta)$ accounts for the tests that are performed on the suffix where $\delta$ does not hold in the case with finitely many $\delta$-positions.
\end{proof}

From \cref{lem:mumblingexpressiveness} and \cref{lem:stutteringexpressiveness}, we conclude:

\begin{theorem}
	Mumbling $H_\mu$ with unique mumbling and stuttering $H_\mu$ with unique stuttering are expressively equivalent.
\end{theorem}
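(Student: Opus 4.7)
The plan is to observe that this theorem is an immediate corollary of the two expressiveness inclusions already established in this section, so no further technical work is required beyond invoking them in both directions.

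First I would invoke \cref{lem:mumblingexpressiveness}, which states that mumbling $H_\mu$ with unique mumbling and full basis is at least as expressive as stuttering $H_\mu$ with unique stuttering and full basis. This gives one direction of the equivalence: every stuttering formula has an equivalent mumbling formula within the unique-jump fragments.

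Second, I would invoke \cref{lem:stutteringexpressiveness}, which provides the converse translation from any mumbling $H_\mu$ formula with unique mumbling and full basis to an equivalent stuttering $H_\mu$ formula with unique stuttering and full basis. The translation there carefully constructs the stuttering assignment $\Gamma$ from the successor assignment $\Delta$ using the three families of formulae $\gamma_0$, $\gamma_j$, and $\tilde{\gamma}_j$, and adjusts tests to compensate for the slight mismatch of jump positions; it exploits that a full basis is available on both sides.

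Since the two inclusions go in opposite directions between exactly the same pair of logics (both taken with unique mumbling/stuttering and full basis), chaining them yields the desired expressive equivalence. There is no genuine obstacle here, as the heavy technical work — in particular the fixpoint-based encoding in the proof of \cref{lem:stutteringexpressiveness} that compensates for the strictly weaker expressivity of stuttering under an LTL basis shown in \cref{lem:mumblingstrictltlexpressiveness} — is already carried out in the preceding lemmas. The only conceptual point worth emphasising in the proof is that both translations respect the restriction to a single jump criterion, so the equivalence holds precisely within the unique-mumbling/unique-stuttering fragments and not merely between the unrestricted logics.
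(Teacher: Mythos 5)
Your proposal is correct and matches the paper exactly: the theorem is stated immediately after the sentence "From \cref{lem:mumblingexpressiveness} and \cref{lem:stutteringexpressiveness}, we conclude," so the paper's own proof is precisely the combination of those two lemmas in opposite directions. Your additional remark that both translations respect the single-jump-criterion restriction is accurate and consistent with how the lemmas are stated.
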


	\section{Related Work}
Hyperproperties were first systematically studied in \cite{Clarkson2010}. 
A plethora of hyperlogics was developed based on variants of established temporal logics like LTL and CTL* \cite{Clarkson2014}, QPTL \cite{Rabe2016} or PDL-$\Delta$ \cite{Gutsfeld2020a}.
All these approaches only concern \textit{synchronous} hyperproperties. 

In \cite{GutsfeldMO21} the logic $H_\mu$ for asynchronous hyperproperties was introduced. 
It is based on the linear time $\mu$-calculus $\muTL$ with an asynchronous notion of progress on different paths and is one inspiration for the logic presented in this paper.
However, $H_\mu$ does not include abstract modalities or a jump mechanism and has only been considered on finite models.
The same holds true for the logics presented in \cite{Sanchez2020, Baumeister2021} that make use of \textit{trajectories} to model asynchronous progress.
Bozzelli et al.\ \cite{Bozzelli2021} recently introduced an asynchronous variant of HyperLTL based on a mechanism to specify an indistinguishability criterion for positions on traces, another inspiration for the logic in the current paper. 
Another logic for asynchronous hyperproperties is observation-based HyperLTL \cite{Beutner2022}.
The concept of observation points in that logic is very similar to our notion of mumbling.
However, due to a different choice of infinite state system model and verification technique, precise decidability or complexity results cannot be provided in \cite{Beutner2022} whereas our work does.
Additionally, they do not consider the expressiveness of different jump criteria.
In \cite{Bozzelli2022}, different asynchronous hyperlogics are compared with respect to expressivity.
As opposed to the study of expressiveness in this paper, \cite{Bozzelli2022} compares the unrestricted versions of the logics rather than focussing on decidable fragments.

There are only two other approaches for model checking hyperlogics against pushdown models that we are aware of.
The approach of \cite{Pommellet2018} consists of model checking HyperLTL against a regular over- or underapproximation  of the pushdown model. 
This approach, however, considers neither asynchronicity nor non-regular modalities and its restrictions are unrelated to the notion of well-alignedness we introduce.
The other approach, \cite{Bajwa2023}, uses quantification over \textit{stack access patterns} to align the stack actions of different traces.
A preprint version of the current paper is discussed in the related work section of \cite{Bajwa2023} which suggests that their approach might be inspired by the notion of well-aligned modalities.
They only cover synchronous hyperproperties where a common stack access pattern corresponds to a special case of our notion of well-alignedness.

Finally, there are two approaches to hyperlogics that are orthogonal to the one using named quantifiers and thus only indirectly related to the current work.
In logics with \textit{team semantics} \cite{Krebs2018,Virtema2021,GutsfeldMOV22}, a formula is evaluated over multiple traces (\textit{teams}) at once instead of only a single one.
The adoption of team semantics seems to lead to logics expressively incomparable to our approach.
Other logics add an equal-level predicate to first- and second-order logics \cite{Thomas2011,Finkbeiner2017,Coenen2019}.
The work \cite{Coenen2019} discovered that these logics can be placed in an expressiveness hierarchy with synchronous hyperlogics with trace quantification while the work \cite{Bozzelli2022} suggests that this may not be the case for asynchronous hyperlogics with trace quantification.
Finally, we note these two approaches also have not yet been considered for the verification of recursive programs.
	\section{Conclusion}\label{section:conclusion}
We proposed a novel logic for the specification and verification of asynchronous hyperproperties.
In addition to other extensions, the logic provides a new jump mechanism on traces that is simpler yet more expressive for LTL jump criteria than a related mechanism used by the logic HyperLTL$_S$.
Under an assumption necessary for decidability, we provided a model checking algorithm for both finite and pushdown models, the first model checking algorithm for asynchronous hyperproperties on pushdown models. 
For the finite state case, the complexity of the model checking procedure coincides with that of simple HyperLTL$_S$ despite the increased expressiveness.
For the pushdown case, we introduced a concept called well-alignedness as an enabler for decidability.
The ability to model check pushdown systems in conjunction with the ability to handle asynchronicity and the abstract, non-regular modalities renders our algorithm a promising approach for automatic verification of hyperproperties on recursive programs.
	\begin{acks}
	This work was partially funded by \grantsponsor{dfg:id}{DFG}{https://www.dfg.de/} project
	Model-Checking of Navigation Logics (MoNaLog) (\grantnum{dfg:id}{MU 1508/3}).
\end{acks}

	\bibliography{sections/conclusion/citations}
	
	\clearpage
\appendix

\section{Appendix to section \ref{sec:logic}}

\subsection{Definition of Fixpoint Alternation Depth}\label{app:fixpointalternation}

In the construction of $\mathcal{A}_B$ from \cref{subsec:basis} and the construction of $\mathcal{A}_\psi$ from \cref{subsec:visiblypushdown} as well as the associated lemmas, we need a notion of fixpoint alternation depth for trace and multitrace formulae.
Fixpoint alternation depth is a well-established measure for the complexity of nested fixpoint formulae and often, as in our case, a parameter for the complexity of algorithmic constructions for fixpoint formulae.
In a formula $\delta$, we say that the variable $Y'$ depends on the variable $Y$, written $Y \prec_\delta Y'$ if $Y$ is a free variable in $\fp(Y')$ where $\fp(Y')$ is the unique fixpoint binding $Y'$.
We write $<_\delta$ to denote the transitive closure of $\prec_\delta$.
Then, the alternation depth $\ad(\delta)$ is the length of the longest chain $Y_1 <_\delta \dots <_\delta Y_n$ such that adjacent variables have a different fixpoint type.
For example, let $\delta(V)$ be a trace formula in which the variables in the list $V$ occur freely.
Then, $\mu X. ((\nu Y. \delta(Y)) \land \delta'(X))$ and $\mu X. \mu Y. \delta(X,Y)$ have fixpoint alternation depth $1$ while $\mu X. \nu Y. \mu Z. \delta(X,Y,Z)$ has fixpoint alternation depth $3$.
We extend this notion to finite sets of trace formulae: $\ad(B) = \max\{\ad(\delta) \mid \delta \in B\}$.
For multitrace formulae $\psi$, the notion is extended straightforwardly but considers only fixpoints $\mu X.\psi'$ or¸ $\nu X.\psi'$ in $\psi$ and not the fixpoints in the base formulae of $\psi$.

\subsection{Formal Results about the Fixpoint Semantics in Subsection \ref{subsec:logicsem}}\label{app:fixpoints}

\begin{theorem}\label{thm:monotonedelta}
	$\beta \colon 2^{\mathbb{N}_0} \to 2^{\mathbb{N}_0}$ with $\beta(I) := \llbracket \delta \rrbracket_{\mathcal{V}[Y \mapsto I]}^{\tr}$ is monotone for all $\mathcal{V},Y$ and $\delta$ in positive normal form.
\end{theorem}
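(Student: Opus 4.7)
The plan is to prove the statement by structural induction on $\delta$, exploiting the fact that $\delta$ is in positive normal form, so negations occur only in front of atomic propositions and hence $Y$ never appears under a negation.

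For the base cases, if $\delta$ is $\ap$ or $\lnot \ap$ or a variable $Y' \neq Y$, then $\beta$ is a constant function and hence trivially monotone. If $\delta = Y$, then $\beta(I) = \mathcal{V}[Y \mapsto I](Y) = I$, which is the identity and thus monotone.

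For the inductive step, I would treat each case of the grammar:
\begin{itemize}
    \item For $\delta = \delta_1 \vee \delta_2$, the induction hypothesis gives monotonicity of the functions $\beta_k(I) := \llbracket \delta_k \rrbracket_{\mathcal{V}[Y \mapsto I]}^{\tr}$, and $\beta(I) = \beta_1(I) \cup \beta_2(I)$ is the pointwise union of two monotone functions and hence monotone.
    \item For $\delta = \bigcirc^{\succtype} \delta'$, let $\beta'$ be the function for $\delta'$. Then $i \in \beta(I)$ iff $\Succ_{\succtype}(\tr,i) \in \beta'(I)$; since $\beta'$ is monotone by induction, the preimage under the fixed function $\Succ_{\succtype}(\tr,\cdot)$ preserves inclusion, giving monotonicity of $\beta$.
    \item For $\delta = \mu Y'. \delta'$ with $Y' \neq Y$ (we may assume this by our convention that every variable is bound exactly once), consider the two-argument operator $g(I, J) := \llbracket \delta' \rrbracket_{\mathcal{V}[Y \mapsto I][Y' \mapsto J]}^{\tr}$, which is monotone in both arguments by the induction hypothesis (the monotonicity in $J$ uses that $\delta'$ is in positive normal form with respect to $Y'$ as well, and monotonicity in $I$ is the induction hypothesis applied with the extended assignment). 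Then $\beta(I)$ is the least fixpoint of $J \mapsto g(I, J)$. Monotonicity of $\beta$ in $I$ follows from a standard Knaster–Tarski argument: if $I_1 \subseteq I_2$, then for the least fixpoint $J_2 = g(I_2, J_2)$ we have $g(I_1, J_2) \subseteq g(I_2, J_2) = J_2$, so $J_2$ is a prefixpoint of $g(I_1, \cdot)$, and by the characterisation of $\mu$ as the intersection of all prefixpoints, $\beta(I_1) \subseteq J_2 = \beta(I_2)$.
\end{itemize}

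The case of negation $\lnot \delta'$ need not be considered for an arbitrary inductive subformula because positive normal form restricts negation to atoms, which are handled in the base case. This restriction is precisely what makes the induction go through: without it, one would need a simultaneous induction on monotone and antitone occurrences, which could fail. The main obstacle is therefore just verifying carefully in the fixpoint case that the induction hypothesis applies to the inner formula $\delta'$ with the modified assignment $\mathcal{V}[Y' \mapsto J]$ for arbitrary $J$, so that the two-argument monotonicity of $g$ is established before invoking Knaster–Tarski; otherwise the argument is a routine structural induction.
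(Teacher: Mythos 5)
Your proof is correct and takes essentially the same approach as the paper, which states only that the result "can be shown by a straightforward structural induction" (\cref{thm:monotonedelta} in \cref{app:fixpoints}); your write-up supplies the details of exactly that induction, including the correct handling of the fixpoint case via the prefixpoint characterisation. The only cosmetic omission is that the positive-form cases $\land$, $\nu$, and the dual next operators should be mentioned as analogous, but they follow by the same routine arguments.
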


\begin{corollary}\label{cor:welldefineddelta}
	$\llbracket \mu Y.\delta \rrbracket_{\mathcal{V}}^{\tr}$ is the least fixpoint of $\beta$.
	It can be characterised by its approximants $\bigcup_{\kappa \geq 0} \beta^\kappa(\emptyset)$, where $\beta^0(I) = I$, $\beta^{\kappa + 1}(I) = \beta(\beta^\kappa(I))$ for ordinals $\kappa$ and $\beta^\lambda(V) = \bigcup_{\kappa < \lambda} \beta^\kappa(I)$ for limit ordinals $\lambda$.
\end{corollary}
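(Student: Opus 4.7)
(Proposal)
The plan is to invoke the Knaster--Tarski fixpoint theorem on the complete lattice $(2^{\mathbb{N}_0},\subseteq)$, using the monotonicity of $\beta$ provided by \cref{thm:monotonedelta}. First I would observe that $(2^{\mathbb{N}_0},\subseteq)$ is a complete lattice with arbitrary meets given by intersection and that, by \cref{thm:monotonedelta}, the map $\beta(I) := \llbracket \delta \rrbracket_{\mathcal{V}[Y \mapsto I]}^{\tr}$ is monotone (since $\delta$ is in positive normal form by our standing syntactic convention).

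For the first assertion I would apply Knaster--Tarski directly: the least fixpoint of a monotone operator on a complete lattice exists and coincides with the meet of all pre-fixpoints. Unfolding this, the least fixpoint of $\beta$ equals
\[
\bigcap\{ I \subseteq \mathbb{N}_0 \mid \beta(I) \subseteq I \} \;=\; \bigcap\{ I \subseteq \mathbb{N}_0 \mid \llbracket \delta \rrbracket_{\mathcal{V}[Y \mapsto I]}^{\tr} \subseteq I \},
\]
which is exactly the definition of $\llbracket \mu Y.\delta \rrbracket_{\mathcal{V}}^{\tr}$ given in \cref{subsec:logicsem}. This establishes that $\llbracket \mu Y.\delta \rrbracket_{\mathcal{V}}^{\tr}$ is the least fixpoint of $\beta$.

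For the approximant characterisation I would proceed by transfinite induction on the ordinal index, showing that (i) the approximants $\beta^{\kappa}(\emptyset)$ form an increasing chain (from $\emptyset \subseteq \beta(\emptyset)$ and monotonicity at successor steps, and from the union definition at limit steps) and (ii) their union is a fixpoint of $\beta$ that is below any pre-fixpoint. Point (ii) is the crux: for any pre-fixpoint $I$, a straightforward transfinite induction shows $\beta^{\kappa}(\emptyset) \subseteq I$ for every ordinal $\kappa$, so the union lies below the least fixpoint; conversely, by cardinality the ascending chain must stabilise at some ordinal $\kappa^{\ast}$, and its value there is a fixpoint, hence at least the least fixpoint. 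The main (though mild) obstacle is the limit step: I need $\beta$ to commute with the ascending unions arising in the transfinite sequence, which follows from monotonicity alone for the inequality $\bigcup_{\kappa<\lambda} \beta^{\kappa}(\emptyset) \subseteq \beta(\bigcup_{\kappa<\lambda} \beta^{\kappa}(\emptyset))$ needed to continue the induction, and stabilisation then guarantees equality. Combining these two inclusions with the first part of the corollary yields the desired characterisation.
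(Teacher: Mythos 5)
Your proposal is correct and matches the paper's approach: the paper simply notes that the corollary "follows by an application of Knaster Tarski's fixpoint theorem" (citing Tarski for the least-fixpoint characterisation and Cousot--Cousot for the constructive approximant version), which is exactly the two-part argument you spell out. The details you supply — identifying the definition of $\llbracket \mu Y.\delta \rrbracket_{\mathcal{V}}^{\tr}$ with the meet of pre-fixpoints, and the transfinite induction with stabilisation by cardinality for the approximants — are the standard filling-in of that citation and are sound.
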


\begin{theorem}\label{thm:monotonepsi}
	$\alpha \colon 2^{\mathbb{N}_0^n} \to 2^{\mathbb{N}_0^n}$ with $\alpha(V) := \llbracket \psi \rrbracket_{\mathcal{W}[X \mapsto V]}^\Pi$ is monotone for all $\mathcal{W},X$ and $\psi$ in positive normal form.
\end{theorem}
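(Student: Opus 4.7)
The plan is to prove this by structural induction on $\psi$ in positive normal form, establishing that if $V \subseteq V'$ then $\llbracket \psi \rrbracket_{\mathcal{W}[X \mapsto V]}^\Pi \subseteq \llbracket \psi \rrbracket_{\mathcal{W}[X \mapsto V']}^\Pi$. The base cases are straightforward: for $\psi = [\delta]_{\pi_i}$ the semantics does not depend on $\mathcal{W}$ at all; for $\psi = X$ we have $\llbracket X \rrbracket_{\mathcal{W}[X \mapsto V]}^\Pi = V \subseteq V' = \llbracket X \rrbracket_{\mathcal{W}[X \mapsto V']}^\Pi$; and for $\psi = X'$ with $X' \neq X$, the value is $\mathcal{W}(X')$ and thus independent of $V$.

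For the inductive step, the interesting cases are disjunction, negation, the next operator, and the inner fixpoint. Disjunction $\psi_1 \lor \psi_2$ is immediate from the induction hypothesis since union preserves inclusion. The negation case $\lnot \psi'$ is handled using the assumption that $\psi$ is in positive normal form: negation only occurs directly in front of tests $[\delta]_\pi$, which do not depend on $\mathcal{W}$, so the claim is again trivial. For $\bigcirc^\Delta \psi'$ (and the well-aligned variants $\bigcirc_w^\Delta$, $\bigcirc_d^\Delta$ defined in \cref{subsec:wellalignedness}), the definition passes the inclusion through: from $\llbracket \psi' \rrbracket_{\mathcal{W}[X \mapsto V]}^\Pi \subseteq \llbracket \psi' \rrbracket_{\mathcal{W}[X \mapsto V']}^\Pi$ we obtain $\{v \mid \Succ_\Delta(\Pi,v) \in \llbracket \psi' \rrbracket_{\mathcal{W}[X \mapsto V]}^\Pi\} \subseteq \{v \mid \Succ_\Delta(\Pi,v) \in \llbracket \psi' \rrbracket_{\mathcal{W}[X \mapsto V']}^\Pi\}$ by simple set-theoretic reasoning.

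The main obstacle is the fixpoint case $\psi = \mu X'. \psi'$ with $X' \neq X$ (the case $X' = X$ is trivial since the convention that every variable is bound by a unique fixpoint ensures that such a $\psi$ does not depend on $V$). Here I need to appeal to the parameterised least fixpoint. For each fixed $V$, define $\beta_V(V'') := \llbracket \psi' \rrbracket_{\mathcal{W}[X \mapsto V, X' \mapsto V'']}^\Pi$. Applying the induction hypothesis to $\psi'$ with respect to the variable $X'$ shows that $\beta_V$ is monotone in $V''$, so its least fixpoint is well-defined and coincides with $\llbracket \mu X'.\psi' \rrbracket_{\mathcal{W}[X \mapsto V]}^\Pi$ by the intersection characterisation in the definition of the semantics. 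Applying the induction hypothesis to $\psi'$ with respect to $X$ yields pointwise monotonicity in the parameter: $V \subseteq V'$ implies $\beta_V(V'') \subseteq \beta_{V'}(V'')$ for every $V''$. A standard argument then gives $\mu \beta_V \subseteq \mu \beta_{V'}$: every prefixed point of $\beta_{V'}$ (i.e.\ every $V''$ with $\beta_{V'}(V'') \subseteq V''$) satisfies $\beta_V(V'') \subseteq \beta_{V'}(V'') \subseteq V''$ and is thus also prefixed for $\beta_V$, hence the intersection defining $\mu \beta_V$ is taken over a superset of the sets defining $\mu \beta_{V'}$. This delivers the desired inclusion and closes the induction.
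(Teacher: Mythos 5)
Your proof is correct and follows exactly the route the paper intends: the paper merely remarks that \cref{thm:monotonepsi} "can be shown by a straightforward structural induction," and your argument is a careful spelling-out of that induction, with the positive normal form disposing of negation, the next operators passing the inclusion through, and the standard prefixed-point argument handling the nested fixpoint case. No gaps.
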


\begin{corollary}\label{cor:welldefinedpsi}
	\begin{sloppypar}
	$\llbracket \mu X.\psi \rrbracket_{\mathcal{W}}^\Pi$ is the least fixpoint of $\alpha$.
	It can be characterised by its approximants $\bigcup_{\kappa \geq 0} \alpha^\kappa(\emptyset)$ where $\alpha^0(V) = V$, $\alpha^{\kappa + 1}(V) = \alpha(\alpha^\kappa(V))$ for ordinals $\kappa$ and $\alpha^\lambda(V) = \bigcup_{\kappa < \lambda} \alpha^\kappa(V)$ for limit ordinals $\lambda$.
	\end{sloppypar}
\end{corollary}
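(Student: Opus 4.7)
The plan is to apply standard fixpoint-theoretic machinery on the complete lattice $(2^{\mathbb{N}_0^n}, \subseteq)$, using only Theorem \ref{thm:monotonepsi} as input. First, since $\alpha$ is monotone, the Knaster--Tarski theorem yields that the set $\bigcap\{V \subseteq \mathbb{N}_0^n \mid \alpha(V) \subseteq V\}$ is itself a fixpoint of $\alpha$ and is the least among all fixpoints. But this intersection is exactly the semantic definition of $\llbracket \mu X.\psi \rrbracket_{\mathcal{W}}^\Pi$, which immediately gives the first half of the claim.

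For the approximant characterisation, I would first verify by transfinite induction that the sequence $(\alpha^\kappa(\emptyset))_\kappa$ is monotonically increasing in $\kappa$: at $\kappa = 0$, $\emptyset \subseteq \alpha(\emptyset)$ is trivial; at a successor, applying the monotone $\alpha$ to $\alpha^\kappa(\emptyset) \subseteq \alpha^{\kappa+1}(\emptyset)$ yields $\alpha^{\kappa+1}(\emptyset) \subseteq \alpha^{\kappa+2}(\emptyset)$; at a limit $\lambda$, the definition $\alpha^\lambda(\emptyset) = \bigcup_{\kappa < \lambda} \alpha^\kappa(\emptyset)$ makes the inclusion into $\alpha^{\lambda+1}(\emptyset)$ immediate once one shows (again by monotonicity) that $\alpha$ preserves the existing chain. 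Since the whole chain lives inside the set $2^{\mathbb{N}_0^n}$ of bounded cardinality, it must stabilise at some ordinal $\kappa_0$, and the stabilised value $V^* := \alpha^{\kappa_0}(\emptyset) = \bigcup_{\kappa \geq 0} \alpha^\kappa(\emptyset)$ satisfies $\alpha(V^*) = V^*$, so it is a fixpoint.

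It remains to match $V^*$ with the least fixpoint from Knaster--Tarski. One direction is automatic: $V^*$ is a fixpoint, hence a pre-fixpoint, so it contains the least fixpoint. For the reverse direction, I would show by transfinite induction that $\alpha^\kappa(\emptyset) \subseteq V$ for every pre-fixpoint $V$ of $\alpha$: the base case $\emptyset \subseteq V$ is trivial; the successor step uses monotonicity of $\alpha$ together with $\alpha(V) \subseteq V$; the limit step takes a union of subsets of $V$. Taking $\kappa = \kappa_0$ shows $V^* \subseteq V$ for every pre-fixpoint $V$, hence $V^* \subseteq \bigcap\{V \mid \alpha(V) \subseteq V\} = \llbracket \mu X.\psi \rrbracket_{\mathcal{W}}^\Pi$, closing the equality.

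The only real care is needed in the transfinite induction arguments at limit ordinals, where one has to check that unions of increasing chains play well with the monotone operator $\alpha$; everything else is routine lattice theory given \cref{thm:monotonepsi}. The proof of \cref{cor:welldefineddelta} would be verbatim the same argument on the lattice $(2^{\mathbb{N}_0}, \subseteq)$ using \cref{thm:monotonedelta} in place of \cref{thm:monotonepsi}.
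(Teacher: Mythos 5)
Your proposal is correct and follows essentially the same route as the paper, which simply invokes the Knaster--Tarski theorem (in its constructive, iteration-based form) for the monotone operator $\alpha$ on the complete lattice $(2^{\mathbb{N}_0^n},\subseteq)$ established by \cref{thm:monotonepsi}. The details you supply — the identification of the semantic intersection with the least fixpoint, the transfinite induction showing the approximant chain is increasing and stabilises, and the two inclusions matching the stabilised value with the least fixpoint — are exactly the standard argument the paper leaves implicit.
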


Here, \cref{thm:monotonedelta} and \cref{thm:monotonepsi} can be shown by a straightforward structural induction.
Then, \cref{cor:welldefineddelta} and \cref{cor:welldefinedpsi} follow by an application of Knaster Tarski's fixpoint theorem \cite{Cousot1979,Tarski1955}.

\section{Appendix to Section \ref{sec:fsmodelchecking}}

\subsection{Detailed Construction of \boldmath$\mathcal{A}_B$\unboldmath\  from Subsection \ref{subsec:basis} and Proof of \cref{lem:modelcheckinglemma}}\label{app:modelcheckinglemma}

In this construction, we write $\ad(\delta)$ for the fixpoint alternation depth of $\delta$, defined in the usual way (see e.g. \cite{Demri2016} or \cref{app:fixpointalternation}), and extend this notion to sets: $\ad(B) = \max\{\ad(\delta) \mid \delta \in B\}$.

Given a set of trace formulae $B$, we construct the 2-AJA $\mathcal{A}_{B}$ over $2^{\AP_{B}} \cup \{\intern,\call,\ret\}$ that ensures that $at(\delta)$ holds in a position on a trace from $(2^{\AP_{B}} \cdot \{\intern,\call,\ret\})^\omega$ if and only if $\delta$ holds on this position on the trace's restriction to $(2^\AP \cdot \{\intern,\call,\ret\})^\omega$.
The alphabet $2^{\AP_{B}} \cup \{\intern,\call,\ret\}$ is divided into three parts in the obvious way: $\Sigma_{\mathtt{i}} = 2^{\AP_{B}} \cup \{\intern\}$, $\Sigma_{\mathtt{c}} = \{\call\}$ and $\Sigma_{\mathtt{r}} = \{\ret\}$.
The automaton is given as $(Q_{B},Q_{0,B},\rho_{B},\Omega_{B})$ where $Q_{B} := \{q_{\delta} \mid \delta \in cl(B)\} \times \{0,1\} \cup \{q^{0}_B\} \times \{0,1\}$ and $Q_{0,B} = \{(q^{0}_B,0)\}$.
We have two copies of each state to deal with the fact that the input words we are interested in alternate between symbols from $2^{\AP_{B}}$ and symbols from $\{\intern,\call,\ret\}$.
The idea is that the first copy moves to the second copy using a symbol from $2^{\AP_{B}}$ which then reads a transition symbol from $\{\intern,\call,\ret\}$.

The transition function $\rho_{B}$ for states $(q_\delta,b)$ is defined inductively over the structure of $\delta$.
For this, we will write symbols in $2^{\AP_{B}}$ as $(A \cup N)$ such that $A \subseteq \AP$ and $N \subseteq \AP_\delta$.
Symbols in $\{\intern,\call,\ret\}$ will be written as $m$.
For atomic formulae, we have:
\begin{align*}
	\rho_{B}((q_{\ap},0),A \cup N) &= \begin{cases}
		(\succglobal,\true,\true) &\text{if } \ap \in A \\
		(\succglobal,\false,\false) &\text{otherwise}
	\end{cases}\\
	\rho_{B}((q_{\lnot \ap},0),A \cup N) &= \begin{cases}
		(\succglobal,\true,\true) &\text{if } \ap \not\in A \\
		(\succglobal,\false,\false) &\text{otherwise.}
	\end{cases} 
\end{align*}
For all other formulae with one exception, the atomic symbols in the first copy of a state move to the second copy.
In particular, for $\delta \not\in \{ \ap,\lnot \ap, \bigcirc^\succcaller \delta',\bigcirc_d^\succcaller \delta'\}$, we have
\begin{align*}
	\rho_{B}((q_\delta,0),A\cup N) &= (\succglobal,(q_\delta,1),(q_\delta,1)).
\end{align*}
We now define the remaining transitions.
Transitions for non next operator formulae use the inductively defined transitions for their subformulae.
For the boolean operators, we transition to an appropriate boolean combination of successor states.
\begin{align*}
	\rho_{B}((q_{\delta \lor \delta'},1),m) &= \rho_{B}((q_{\delta},1),m) \lor \rho_{B}((q_{\delta'},1),m) \\
	\rho_{B}((q_{\delta \land \delta'},1)m) &= \rho_{B}((q_{\delta},1),m) \land \rho_{B}((q_{\delta'},1),m)
\end{align*}
Fixpoints introduce loops in the automaton.
\begin{align*}
	\rho_{B}((q_{Y},1),m) &= \rho_{B}((q_{\fp(Y)},1),m) \\
	\rho_{B}((q_{\mu Y. \delta},1),m) &= \rho_{B}((q_{\delta},1),m) \\
	\rho_{B}((q_{\nu Y. \delta},1),m) &= \rho_{B}((q_{\delta},1),m)
\end{align*}
where $\fp(Y)$ is the unique fixpoint binding $Y$.
Finally, the different kinds of next operators directly transition to states for their subformulae.
The caller predecessor has a different transition behaviour in its first copy than most other states since the transition behaviour does not depend on the transition symbol after a position.
Instead we move from a propositional symbol to a call symbol and then make a backwards move onto the propositional symbol representing the caller predecessor.
\begin{align*}
	\rho_{B}((q_{\bigcirc^\succglobal \delta},1),m) &= (\succglobal,(q_{\delta},0),(q_{\delta},0)) \\
	\rho_{B}((q_{\bigcirc^\succabstract \delta},1),m) &= (\succabstract,(q_{\delta},0),\false) \\
	\rho_{B}((q_{\bigcirc_d^\succabstract \delta},1),m) &= (\succabstract,(q_{\delta},0),\true) \\
	\rho_{B}((q_{\bigcirc^\succcaller \delta},0),A \cup N) &= (\succcaller,(q_{\bigcirc^\succcaller \delta},1),\false) \\
	\rho_{B}((q_{\bigcirc^\succcaller \delta},1),m) &= (\succback,(q_{\delta},0),(q_{\delta},0)) \\
	\rho_{B}((q_{\bigcirc_d^\succcaller \delta},0),A \cup N) &= (\succcaller,(q_{\bigcirc_d^\succcaller \delta},1),\true) \\
	\rho_{B}((q_{\bigcirc_d^\succcaller \delta},1),m) &= (\succback,(q_{\delta},0),(q_{\delta},0))
\end{align*}
Omited definitions (like $\rho_{B}((q_{\ap},0),m)$) indicate that there is no such transition.
The transition function in the initial state is then defined using the alredy constructed parts of the transition function for states $q_\delta$:
\begin{align*}
	\rho_{B}((q^0_B,0),A \cup N) &= (\succglobal,(q^0_B,1),(q^0_B,1)) \land \bigwedge_{at(\delta) \in N} \rho_{B}((q_{\delta},0),A \cup N) \land  \bigwedge_{at(\delta') \not\in N} \rho_{B}((q_{\lnot\delta'},0),A \cup N) \\
	\rho_{B}((q^0_B,1),m) &= (\succglobal,(q^0_B,0),(q^0_B,0))
\end{align*}

For the priority assignments, we always assign $\Omega((q,0)) = \Omega((q,1))$ and thus omit the second component of each state in the description.
The priority assignment $\Omega_{B}$ for the initial state $q^0_B$ is given as  $\Omega_{B}(q^0_B) := 0$ whereas for the other states $q_{\delta}$, it is defined depending on the structure of $\delta$.
We first assign priorities for fixpoint variables and fixpoints, that is for $\delta \in\{Y,\mu Y.\delta',\nu Y.\delta'\}$.
We do so by inspecting all maximal chains $Y_1 <_{\delta''} \dots <_{\delta''} Y_n$ (where adjacent variables do not necessarily have different fixpoint types) for formulae $\delta'' \in B$ and assigning proiorities to the first variable based on the fixpoint type: greatest fixpoints and their variables get priority $0$ and least fixpoints get priority $1$.
Then, we move through the chains and assign this priority as long as the fixpoint type does not change.
In that case, we increase the currently assigned priority by one and keep going.
For all other states, let $p_{max}$ be the highest priority assigned so far.
Then, we assign
\begin{align*}
	\Omega_{B}(q_{\delta}) &= p_{max}
\end{align*}
for $\delta \not\in\{Y,\mu Y.\delta',\nu Y.\delta'\}$.
Notice that when $\ad(\delta) = 1$ for all $\delta \in B$, we only need priorities $0$ and $1$ and $\mathcal{A}_B$ is an Alternating Büchi Automaton (ABA), i.e. an APA with only priorities $0$ and $1$.
For ABA, there is a variant of \cref{prop:paritydealternation} that allows dealternation into an automaton with size $2^{\mathcal{O}(n)}$ instead of $2^{\mathcal{O}(n \cdot \log(n) \cdot k)}$.
This concludes the construction of $\mathcal{A}_B$.

For the proof of \cref{lem:modelcheckinglemma}, we need some additional notation.
Given an automaton $\mathcal{A}$, a state $q$ in $\mathcal{A}$ and a set of indices $I$, we use $\mathcal{A}[q \colon I]$ to denote an automaton that behaves exactly like $\mathcal{A}$ except for the state $q$ where it accepts iff the run is currently at an index from the set $I$.

\begin{proof}[Proof of \cref{lem:modelcheckinglemma}]
	The second part of \cref{lem:modelcheckinglemma} can be shown constructively.
	Given a trace $\tr$, $w$ is obtained by amending every position $i$ of $\tr$ with the set of atomic propositions $\{ at(\delta) \mid i \in \llbracket \delta \rrbracket^{\tr}\}$.
	
	For the first part, let $\mathcal{A}_\delta$ be the subautomaton of the 2-AJA version of $\mathcal{A}_B$ with only the states for subformulae of $\delta$.
	In this automaton, states $q_Y$ for free fixpoint variables may have undefined transition behaviour, but we circumvent this by \textit{filling} these states using the notion $\mathcal{A}_\delta[q_Y \colon I]$.
	For these automata $\mathcal{A}_\delta$, we show a result stronger than the first part of \cref{lem:modelcheckinglemma} that can be shown inductively since it also applies to formulae with free fixpoint variables.
	Part one of \Cref{lem:modelcheckinglemma} then follows immediately from this claim and \Cref{prop:ajadealternation}.
	In particular, we show:
	\begin{claim}\label{claim:modelcheckinglemma}
		Let $\delta$ be a set trace formula over $\AP$ with free fixpoint variables $Y_1,\dots,Y_n$ and let $\mathcal{A}_\delta$ be the automaton as described above.
		Furthermore, let $\mathcal{V}$ be a fixpoint variable assignment, $w \in (2^{\AP_{B}} \cdot \{\intern,\call,\ret\})^\omega$ be an input word and $i \geq 0$ be an index.
		Then,
		\begin{align*}
			\mathcal{A}_{\delta}[q_{Y_1} \colon \mathcal{V}(Y_1),\dots,q_{Y_n} \colon \mathcal{V}(Y_n)] \text{ has an accepting } ((q_{\delta},0),2i)\text{-run on } w \text{ iff } i \in \llbracket \delta \rrbracket^{(w)_\AP}_{\mathcal{V}}.
		\end{align*}
	\end{claim}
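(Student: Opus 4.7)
The plan is to prove Claim~\ref{claim:modelcheckinglemma} by structural induction on $\delta$, treating the position $i$ and the fixpoint variable assignment $\mathcal{V}$ as parameters that may vary through the inductive calls. A useful observation, built into the statement, is that by ``filling'' the free-variable states with their semantic values the claim remains meaningful for open subformulae encountered in the induction, so no separate auxiliary claim is needed. Throughout, I will exploit the fact that in $\mathcal{A}_\delta$ even positions $2i$ of $w$ carry propositional symbols and correspond to positions $i$ of $(w)_\AP$, while odd positions $2i+1$ carry transition symbols, and the two-copy structure of $Q_B$ is precisely designed to mediate this alternation.

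\textbf{Base cases and easy inductive cases.} For $\delta \in \{\ap, \lnot\ap\}$ the transition function from $(q_\delta,0)$ at position $2i$ resolves immediately to $\true$ or $\false$ depending on whether $\ap$ belongs to the propositional part of $w(2i)$, which matches $\llbracket\ap\rrbracket^{(w)_\AP}_{\mathcal{V}}=\{i\mid \ap\in (w)_\AP(i)\}$. For $\delta=Y$ a free variable, $\mathcal{A}_\delta[q_Y\colon\mathcal{V}(Y)]$ accepts at $2i$ by definition iff $i\in\mathcal{V}(Y)=\llbracket Y\rrbracket^{(w)_\AP}_{\mathcal{V}}$. The boolean cases $\delta_1 \lor \delta_2$ and $\delta_1\land\delta_2$ follow from the induction hypothesis because the transitions in copy~$1$ are literally the disjunction or conjunction of the transitions for the subformulae. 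For the next-operator cases, one unfolds one propositional and one transition step: from $(q_{\bigcirc^\succtype\delta'},0)$ at $2i$ the automaton transitions via copy~$1$ at $2i+1$ to a directional move governed by $\succtype$. For $\succtype=\succglobal$ this lands in $(q_{\delta'},0)$ at $2i+2$, matching $\Succ_\succglobal(\tr,i)=i+1$ on $(w)_\AP$; for $\succabstract$ and $\succcaller$ the correspondence between directions on $w$ and on $(w)_\AP$ follows from the extension of the successor functions to visibly pushdown words, combined with the construction's use of a backward step after the caller direction so that one ends up at a propositional position of the right trace index. The dual variants $\bigcirc_d^\succtype$ are handled identically, replacing the ``default'' state with $\true$ instead of $\false$ to account for undefined successors.

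\textbf{Fixpoint cases.} The only substantial obstacle is $\delta=\sigma Y.\delta'$ for $\sigma\in\{\mu,\nu\}$, where the transitions $q_{\sigma Y.\delta'}\to q_{\delta'}$ and $q_Y\to q_{\sigma Y.\delta'}$ may be traversed unboundedly often in a run. Here I will invoke the standard correspondence between parity-acceptance on alternating automata and the fixpoint semantics of the $\mu$-calculus. Concretely, one can either (i) prove both directions by transfinite induction on the approximants characterised in Corollary~\ref{cor:welldefineddelta}, matching each approximant level with the signatures of accepting partial runs, or (ii) encode runs of $\mathcal{A}_\delta[q_{Y_1}\colon\mathcal{V}(Y_1),\dots,q_{Y_n}\colon\mathcal{V}(Y_n)]$ as strategies in a parity game whose winning positions coincide with $\llbracket\delta\rrbracket^{(w)_\AP}_{\mathcal{V}}$. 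The priority assignment $\Omega_B$ is crafted to reflect the fixpoint alternation depth (see Appendix~\ref{app:fixpointalternation}): each $\mu$-variable receives an odd priority and each $\nu$-variable an even one, and these priorities increase as one descends a nested chain with alternating fixpoint types. Consequently, the parity condition rejects any run that unfolds a $\mu$-fixpoint infinitely without eventually leaving it in favour of an enclosing $\nu$-fixpoint, which is exactly what is needed so that least and greatest fixpoints are distinguished correctly. The most delicate check is that, for a chain $Y_1<_\delta\cdots<_\delta Y_k$ of fixpoint variables with alternating types, the priority sequence produced by the construction is strictly monotone so as to correctly order nested fixpoints; this follows by unrolling the definition of $\Omega_B$ on the chain and inspecting adjacent priorities. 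The base-case results for free variables $Y_i$, already established, provide the anchor for the transfinite induction (or, equivalently, the leaf evaluation in the parity game), so that the whole induction closes and Lemma~\ref{lem:modelcheckinglemma} follows by taking the initial state $q_B^0$ into account and applying Proposition~\ref{prop:ajadealternation}.
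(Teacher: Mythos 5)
Your proposal is correct and takes essentially the same route as the paper: the paper also proceeds by structural induction on $\delta$ with the non-fixpoint cases read off directly from the construction, and resolves the fixpoint case by showing that the set of positions admitting an accepting run is itself the least/greatest fixpoint of the map $I \mapsto \{\, i \mid \mathcal{A}_{\delta'}[\dots, q_Y \colon I] \text{ has an accepting run}\,\}$ and comparing it with the approximant characterisation of \cref{cor:welldefineddelta} — precisely your option (i). The paper's write-up is in fact terser than yours, deferring the transfinite-induction/signature details to the analogous proof in the cited prior work.
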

	\cref{claim:modelcheckinglemma} is shown by a straightforward structural induction on $\delta$.
	As our construction uses an established technique to transform fixpoint formulae into automata, this part of the proof follows the associated proof technique as performed e.g. in the proof of \cref{thm:synchronousautomaton} which was conducted in \cite{GutsfeldMO21}.
	
	The most interesting case is that of fixpoints, where we use the fact that states for least fixpoints and their fixpoint variables can only be visited finitely many times while states for greatest fixpoints and their fixpoint variables may be visited infinitely often due to their priority.
	From this, it can be shown that the set of indices $i$ from which the automaton $\mathcal{A}_\delta$ has an accepting $((q_\delta,0),i)$-run on $w$ can be expressed as a least or greatest fixpoint, respectively, of a function $f \colon I \mapsto \{ i \mid \mathcal{A}[q_{Y_1} \colon \mathcal{V}(Y_1),\dots,q_{Y_n} \colon \mathcal{V}(Y_n),q_Y \colon I] \text{ has an accepting } ((q_\delta,0),i)\text{-run on } w \}$.
	This fixpoint can then be compared to the semantics of the formula using its characterisation by approximants from \cref{cor:welldefineddelta}.
	
	Using \cref{claim:modelcheckinglemma} and inspecting the initial state of the 2-AJA $\mathcal{A}_B$, it straightforward to see that it fulfills the first part of \cref{lem:modelcheckinglemma}.
	It is also straightforward to see that if $B$ contains only $\mu\mathit{TL}$ formulae, $\mathcal{A}_B$ is an APA.
	The claim that it is also possible to construct a VPA/NBA of the claimed size then follows immediately from \Cref{prop:ajadealternation}/\cref{prop:paritydealternation}.
\end{proof}

\subsection{Detailed Construction of \boldmath$(\mathcal{PD}',F')$\unboldmath\ from Subsection \ref{subsec:basis}}\label{app:productstructure}

The pushdown system $\mathcal{PD}' = (S',S_{0}',R',L')$ with a labelling over $\AP_{B}$ and target states $F'$ is given as the product of $\mathcal{PD} = (S,S_0,R,L)$ with target states $F$ and the VPA $\mathcal{A}_B = (Q_B,Q_{0,B},\rho_B,F_B)$.
The stack alphabet $\Theta$ of $\mathcal{PD}'$ is given as $\Theta_1 \times \Theta_2$ where $\Theta_1$ is the stack alphabet of $\mathcal{PD}$ and $\Theta_2$ is the stack alphabet of $\mathcal{A}_B$.
In order to improve readability in the definition of the transition relation, we write $(s,q) \xrightarrow{P,\intern} (s',q')$ for all $s,s' \in S ,q,q' \in Q$ and $P \subseteq \AP_{\delta}$ with $(s,s') \in R$ and $q' \in \rho_B(q,(P \cup L(s),\intern)))$.
Similarly, we write $(s,q) \xrightarrow{P,\call,(\theta_1,\theta_2)} (s',q')$ if $(s,s',\theta_1) \in R$ and $(q',\theta_2) \in \rho_B(q,(P \cup L(s),\call))$ and $(s,q) \xrightarrow{P,\ret,(\theta_1,\theta_2)} (s',q')$ if $(s,\theta_1,s') \in R$ and $(q',\theta_2) \in \rho_B(q,(P \cup L(s),\ret))$.
We have:
\begin{align*}
	S' &= S \times Q \times 2^{\AP_\delta} \times \{0,1\}  \\
	S_0' &= S_0 \times Q_{0,B} \times 2^{\AP_{\delta}} \times \{0\}\\
	R' &= \bigcup_{f \in \{\intern,\call,\ret\}} R_{f}' \\
	L'((s,q,P,i)) &= P \cup L(s)
\end{align*}
where
\begin{align*}
	R'_{\intern} &= \{((s,q,P,i),(s',q',P',j)) \mid (s,q) \xrightarrow{P,\intern} (s',q') \}, \\
	R'_{\call} &= \{((s,q,P,i),(s',q',P',j),(\theta_1,\theta_2)) \mid (s,q) \xrightarrow{P,\call,(\theta_1,\theta_2)} (s',q')\} \quad \text{and}\\
	R'_{\ret} &= \{((s,q,P,i),(\theta_1,\theta_2),(s',q',P',j)) \mid (s,q) \xrightarrow{P,\ret,(\theta_1,\theta_2)} (s',q')\}.
\end{align*}
with $i \neq j$ iff $i = 0$ and $s \in F$ or $i = 1$ and $q \in F_\delta$.
As target states $F'$, we have:
\begin{align*}
	F' = S \times F_\delta \times 2^{\AP_\delta} \times \{1\}
\end{align*}
Intuitively, the four components of the structures' states play the following roles:
The first and second components are used to build a product of $\mathcal{PD}$ and $\mathcal{A}_B$.
The third component is used to properly extend the labelling from one only assigning $\AP$ labels to one assigning $\AP_B$ labels in a consistent manner.
The last component ist used to combine the fairness condition of $(\mathcal{PD},F)$ with the acceptance condition of $\mathcal{A}_B$.
Here, we apply the standard idea for combining Büchi acceptance conditions:
The transition relation switches from copy $0$ to copy $1$ when a state $s \in F$ is encountered and from copy $1$ to copy $0$ when a state $q \in F_\delta$ is encountered.
Thus, paths visiting the target states $F'$ visit both original targets infinitely often.

\subsection{Proof of \cref{lem:deltatequivalencecorollary}}

Let $\prog(\Pi,\Delta,i) = \Succ_{\Delta}^i(\Pi,(0,\dots,0))$ be the progress made by $i$ steps of the $\Delta$ successor function on the trace assignment $\Pi$.
The Lemma follows mainly from the following claim which we show separately by induction:
\begin{claim}\label{claim:deltatequivalencecorollary}
	For all multitrace formulae $\psi$ with unique successor assignment $\Delta$ and basis $\AP$, indices $i \in \mathbb{N}_0$, trace assignments $\Pi$ and fixpoint variable assignments $\mathcal{W}, \mathcal{W}'$ with $(i,\dots,i) \in \mathcal{W}(X)$ iff $\prog(\Pi,\Delta,i) \in \mathcal{W}'(X)$ for all $X \in \chi_v$, we have
	$(i,\dots,i) \in \llbracket \psi^{s} \rrbracket_{\mathcal{W}}^{\Pi^\Delta}$ iff $\prog(\Pi,\Delta,i) \in \llbracket \psi \rrbracket_{\mathcal{W}'}^{\Pi}$.
\end{claim}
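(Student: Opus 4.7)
The plan is to prove \cref{claim:deltatequivalencecorollary} by structural induction on $\psi$. The conceptual underpinning is the observation that for each trace $\pi_j \in \dom(\Pi)$ we have by definition $\Pi^\Delta(\pi_j)(i) = \Pi(\pi_j)(\Succ^i_{\Delta(\pi_j)}(\Pi(\pi_j),0))$, so that the synchronous position $(i,\dots,i)$ inside $\Pi^\Delta$ carries exactly the same atomic information as the vector $\prog(\Pi,\Delta,i)$ inside $\Pi$. The induction will therefore proceed by showing, case by case, that this correspondence is respected by every logical construct of the multitrace fragment over basis $\AP$.

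In the base cases $[\ap]_{\pi_j}$ the equivalence is immediate from the observation above: $i \in \llbracket \ap \rrbracket^{\Pi^\Delta(\pi_j)}$ iff $\ap \in \Pi^\Delta(\pi_j)(i) = \Pi(\pi_j)(\Succ^i_{\Delta(\pi_j)}(\Pi(\pi_j),0))$ iff $\Succ^i_{\Delta(\pi_j)}(\Pi(\pi_j),0) \in \llbracket \ap \rrbracket^{\Pi(\pi_j)}$, and the last index is precisely the $j$-th component of $\prog(\Pi,\Delta,i)$. The boolean cases $\psi_1 \lor \psi_2$ and $\lnot \psi'$ are direct from the induction hypothesis. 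For variables $X$ the equivalence is exactly the compatibility assumption between $\mathcal{W}$ and $\mathcal{W}'$. For the next operator, note that $\psi^s$ uses the synchronous assignment $\Delta_s = \lambda \pi.\true$, so $\Succ_{\Delta_s}(\Pi^\Delta,(i,\dots,i)) = (i+1,\dots,i+1)$; on the other side $\Succ_\Delta(\Pi,\prog(\Pi,\Delta,i)) = \prog(\Pi,\Delta,i+1)$ by definition of $\prog$. Thus both sides advance from index $i$ to index $i+1$ in the shared parameter, and the induction hypothesis applied at $i+1$ closes the case.

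The main obstacle is the fixpoint case $\mu X.\psi'$, where I plan to use the approximant characterisation from \cref{cor:welldefinedpsi}. Let $\alpha_s^\kappa$ and $\alpha^\kappa$ denote the approximants of the fixpoint operators on the summarised side and the original side, respectively. I will prove by a nested transfinite induction on the ordinal $\kappa$ that the compatibility condition
\[
(i,\dots,i) \in \alpha_s^\kappa \iff \prog(\Pi,\Delta,i) \in \alpha^\kappa \qquad (\text{for all } i \in \mathbb{N}_0)
\]
is preserved throughout the iteration. The base case $\kappa = 0$ is trivial since both approximants are empty; the successor step $\kappa \mapsto \kappa+1$ invokes the outer structural induction hypothesis on $\psi'$ with the updated variable assignments $\mathcal{W}[X \mapsto \alpha_s^\kappa]$ and $\mathcal{W}'[X \mapsto \alpha^\kappa]$, which by the inner induction satisfy the compatibility premise required for the outer hypothesis; the limit step is immediate because unions preserve membership equivalences. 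Taking the union over all ordinals yields the claim for $\mu X.\psi'$. The case for $\nu X.\psi'$ (if one prefers to handle it directly rather than via negation) is dual.

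Once \cref{claim:deltatequivalencecorollary} is established, \cref{lem:deltatequivalencecorollary} follows by instantiating the claim at $i = 0$ with the empty fixpoint variable assignment $\mathcal{W}_0$: $\prog(\Pi,\Delta,0) = (0,\dots,0)$, so $(0,\dots,0) \in \llbracket \psi^s \rrbracket^{\Pi^\Delta}$ iff $(0,\dots,0) \in \llbracket \psi \rrbracket^{\Pi}$, which by the semantics of hyperproperty formulae says exactly that $\Pi \models_\mathcal{T} \psi^s$ iff $\Pi \models_\mathcal{T} \psi$. Combined with the assumed $\mathcal{T}$-equivalence of $\mathcal{A}_\psi$ to $\psi^s$ and the encoding $w_\Pi = w_{\Pi^\Delta}$ (these being literally the same word up to identifying traces with their summaries), this yields the desired $(\Delta,\mathcal{T})$-equivalence of $\mathcal{A}_\psi$ to $\psi$.
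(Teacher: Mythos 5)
Your proposal is correct and follows essentially the same route as the paper's own proof: a structural induction on $\psi$ whose base cases exploit $\Pi^\Delta(\pi_j)(i) = \Pi(\pi_j)(\Succ^i_{\Delta(\pi_j)}(\Pi(\pi_j),0))$, whose next-operator case shifts the index from $i$ to $i+1$, and whose fixpoint case runs a nested transfinite induction on the approximants while threading the compatibility condition on the variable assignments through each step. No substantive differences to report.
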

\begin{proof}
	The proof is by induction on the structure of $\psi$.
	
	\textbf{Case 1:} $\psi = [\delta]_{\pi}$.
	Follows straightforwardly from the definition of $\Pi^\Delta$ and the fact that $\delta \in \AP$.
	
	\textbf{Case 2:} $\psi = \lnot [\delta]_{\pi}$.
	Analogous to case 1.
	
	\textbf{Case 3:} $\psi = [X]_{\pi}$.
	Follows from the assumption on $\mathcal{W}$ and $\mathcal{W}'$.
	
	\textbf{Case 4:} $\psi = \psi_1 \lor \psi_2$.
	Follows directly from the induction hypothesis.
	
	\textbf{Case 5:} $\psi = \psi_1 \land \psi_2$.
	Analogous to case 4.
	
	\textbf{Case 6:} $\psi = \bigcirc^\Delta \psi_1$.
	For arbitrary $i$, the claim follows from the fact that the induction hypothesis establishes the claim for $i+1$.
	
	\textbf{Case 7:} $\psi = \mu X. \psi_1$.
	We use a fixpoint approximant characterisation of $\psi^s$ and $\psi$ and write $\llbracket \psi^s \rrbracket_{\mathcal{W}}^{\Pi^\Delta}$ as $\bigcup_{\kappa \geq 0} \alpha_s^\kappa(\emptyset)$ for $\alpha_s$ with $\alpha_s(V) = \llbracket \psi_1^s \rrbracket_{\mathcal{W}[X \mapsto V]}^{\Pi^\Delta}$ and $\llbracket \psi \rrbracket_{\mathcal{W}'}^{\Pi}$ as $\bigcup_{\kappa \geq 0} \alpha^\kappa(\emptyset)$ for $\alpha$ with $\alpha(V) = \llbracket \psi_1 \rrbracket_{\mathcal{W}'[X \mapsto V]}^\Pi$.
	We then show by transfinite induction over $\kappa$, that $(i,\dots,i) \in \alpha_s^\kappa(\emptyset)$ iff $\prog(\Pi,\Delta,i) \in \alpha^\kappa(\emptyset)$.
	To avoid confusion, we will write (SIH) for the induction hypothesis of the structural induction and (TIH) for the induction hypothesis of the transfinite induction.
	The \textit{base case} $\kappa = 0$ follows directly from (SIH) if we can establish that the assumption from the lemma holds for $\mathcal{W}[X \mapsto \emptyset]$ and $\mathcal{W}'[X \mapsto \emptyset]$.
	For all $X' \neq X$, the assumption follows from the fact that it holds for $\mathcal{W}$ and $\mathcal{W}'$.
	For $X' = X$, the assumption follows from the fact that $X$ is mapped to $\emptyset$ in both vector fixpoint variable assignments.
	In the inductive step $\kappa \mapsto \kappa + 1$, we use (TIH) to establish that the claim holds for $\kappa$.
	Thus, the lemma's assumption holds for $\mathcal{W}[X \mapsto \alpha_s^\kappa(\emptyset)]$ and $\mathcal{W}'[X \mapsto \alpha^\kappa(\emptyset)]$ and we can use (SIH) to establish the claim for $\kappa + 1$.
	Finally, the \textit{limit case} $\kappa < \lambda \mapsto \lambda$ follows directly from (TIH).
	
	\textbf{Case 8:} $\psi = \nu X. \psi_1$.
	Analogous to case 7.
\end{proof}

\begin{proof}[Proof of \cref{lem:deltatequivalencecorollary}]
	Using \cref{claim:deltatequivalencecorollary}, we can easily show that $\mathcal{A}_\psi$ is $(\Delta,\mathcal{T})$-equivalent to $\psi^\Delta$:
	
	Let $\mathcal{T}$ be an arbitrary set of traces and let $\Pi$ be an arbitrary trace assignment over $\mathcal{T}$.
	For the first direction, assume that $(0,\dots,0) \in \llbracket \psi \rrbracket^\Pi$.
	From \cref{claim:deltatequivalencecorollary}, we then get $(0,\dots,0) \in \llbracket \psi^s \rrbracket^{\Pi^\Delta}$ since $\mathcal{W}_0$ satisfies the requirement of the \cref{claim:deltatequivalencecorollary}.
	Then, the $\Img(\Pi^\Delta)$-equivalence of $\mathcal{A}_{\psi^s}$ and $\psi^s$ gives us $w_{\Pi^\Delta} \in \mathcal{L}(\mathcal{A}_{\psi^s})$.
	The other direction is analogous.
\end{proof}

\subsection{Detailed Construction of \boldmath$\mathcal{A}_\varphi$\unboldmath\ from Subsection \ref{subsec:finitestate} and Proof of \cref{thm:quantifierkequivalence}}\label{app:finitestate}

Using the automaton $\mathcal{A}_{\psi}$ from \cref{thm:deltatequivalence}, we inductively construct an automaton $\mathcal{A}_{\varphi}$ that is $(\Delta,\Traces(\mathcal{K},F))$-equivalent to $\varphi$ by adding a technique to handle the quantifiers.
Recall that $\varphi = Q_n \pi_n \dots Q_1 \pi_1 . \psi$.
We write $\varphi_i$ for the formula $Q_{i} \pi_{i} \dots Q_1 \pi_1 . \psi$ and have special cases $\varphi_0 = \psi$ and $\varphi_n = \varphi$.
The construction is performed inductively.
When adressing the quantifier $Q_i$, i.e. when handling the formula $\varphi_i = Q_i \pi_i . \varphi_{i-1}$ for $i \geq 1$, we construct the automaton $\mathcal{A}_{\varphi_i}$ with input alphabet $(2^\AP)^{n-i}$ from the automaton $\mathcal{A}_{\varphi_{i-1}}$ with input alphabet $(2^\AP)^{n-i+1}$ and the structure $(\mathcal{K},F)$.
In the definition of $(\Delta,\mathcal{T})$-equivalence, $\mathcal{A}_\psi$ is expected to read an encoding of $\Pi^\Delta$.
Thus, the quantifier will be handled by introducing traces summarised with respect to $\Delta$ to the automaton.
We assume that $\mathcal{A}_{\varphi_{i-1}}$ is given as an NBA $(Q_{\varphi_{i-1}},Q_{0,\varphi_{i-1}},\rho_{\varphi_{i-1}},F_{\varphi_{i-1}})$ over the input alphabet $(2^\AP)^{n-i+1}$.
This can generally be assumed due to \cref{prop:paritydealternation}.

Since our basis is $\AP$, we can assume $\Delta(\pi_i) = \ap$ for some $\ap \in \AP$.
We construct a fair Kripke structure $(\mathcal{K}_{\ap},F_\ap)$ whose traces represent traces of $(\mathcal{K},F)$ with $\mathcal{K} = (S,S_0,R,L)$ summarised by the atomic successor formula $\ap$.
For this construction, we assume that initial states in $\mathcal{K}$ are isolated, i.e. that there are no transitions $(s,s_0)$, $(s,s_0,\theta)$ or $(s,\theta,s_0)$ for all $s_0 \in S_0$.
This can be achieved by creating copies of the initial states with no incoming transitions as new initial states without changing the set of traces of the Kripke structure.
Let $S_{\ell} = S_0 \cup \{s \in S \setminus S_0 \mid \ap \in L(s)\}$ and $S_{n\ell} = \{s \in S \setminus S_0 \mid \ap \not\in L(s)\}$ be a partition of $S$, i.e. $S = S_{\ell} \dot\cup S_{n\ell}$.
Intuitively, due to our assumption on the isolation of initial states, $S_\ell$ contains the states that can be visited with with the successor formula $\ap$ while $S_{n\ell}$ contains the states that are skipped as long as $\ap$-successors exist.
In traces where $\ap$ does not hold from a certain point, states from $S_{\ell}$ are visited up until that point and states from $S_{n\ell}$ are visited afterwards.
For $s,s' \in S_\ell$, we write $s \rightarrow_{\ap} s'$ if there is a path $s = s_1,s_2,\dots,s_{l-1},s_l = s'$ in $\mathcal{K}$ such that $(s_j,s_{j+1}) \in R$ for all $1 \leq j \leq l-1$ and $s_j \in S_{n\ell} \text{ for all } 2 \leq j \leq l-1$.
If additionally $s_j \in F$ for some $2 \leq j \leq l$, we write $s \rightarrow_{\ap,f} s'$.

Then, $\mathcal{K}_{\ap}$ is given as $(S_\ap,S_{0,\ap},R_{\ap},L_\ap)$ where:
\begin{align*}
	S_\ap &= S_{\ell} \times \{0,1\} \cup S_{n\ell} \\
	S_{0,\ap} &= S_0 \times \{0\} \\
	L_\ap((s,i)) &= L(s)
\end{align*}
and
\begin{align*}
	R_{\ap} =& \{(s,s') \mid s,s' \in S_{n\ell}, (s,s') \in R \} \cup \\
	& \{((s,b),s') \mid (s,b) \in S_{\ell} \times \{0,1\}, s' \in S_{n\ell},(s,s') \in R \} \cup \\
	& \{((s,b),(s',b')) \mid s \rightarrow_{\ap} s' \text{ and } b' = 0 \text{ or } s \rightarrow_{\ap,f} s' \text{ and } b' = 1\}
\end{align*}
The set of target states $F_\ap$ is given as $(S_{\ell} \times \{1\}) \cup (S_{n\ell} \cap F)$.
Intuitively, traces in $(\mathcal{K}_\ap,F_\ap)$ simulate summarised versions of traces in $(\mathcal{K},F)$ in the following way:
A trace starts in states $S_{\ell} \times \{0,1\}$ where it remains as long as $\ap$-labelled states are seen in the simulated trace.
If the simulated trace contains infinitely many $\ap$-successors, it remains in this part of the structure indefinitely.
Otherwise, it switches to states $S_{n\ell}$ at the first point without an $\ap$ successor and remains in the part of the structure where $\ap$-labelled states cannot be seen anymore.
Switches between $0$ and $1$ states in $S_{\ell} \times \{0,1\}$ are made to make the simulated trace's visits to target states not labelled $\ap$ visible.
For this structure, we have $\Traces(\mathcal{K}_\ap,F_\ap) = \summ_\ap(\Traces(\mathcal{K},F))$.

Our construction for $\mathcal{A}_{\varphi_{i}} = (Q_{\varphi_{i}},Q_{0,\varphi_{i}},\rho_{\varphi_{i}},F_{\varphi_{i}})$ uses a common way to handle quantifiers.
The only two differences to the standard constructions used e.g. for HyperLTL in \cite{Finkbeiner2015}, HyperPDL-$\Delta$ in \cite{Gutsfeld2020a} or $H_\mu$ in \cite{GutsfeldMO21} are that (i) instead of building the product automaton of $\mathcal{A}_{\varphi_{i-1}}$ and $\mathcal{K}$, we construct the product of $\mathcal{A}_{\varphi_{i-1}}$ and $(\mathcal{K}_\ap,F_\ap)$ and thus have to combine two Büchi acceptance conditions and (ii) we have a different input alphabet.
In the following construction, we write $(P_1,\dots,P_{n-1}) \in (2^{\AP})^{n-i}$ as $\mathcal{P}$ and we write $(P_1,\dots,P_{n-i},P) \in (2^{\AP})^{n-i+1}$ as $\mathcal{P} + P$.
In order to improve readability, we write $(q,s_\ap) \rightarrow_{\mathcal{P}} (q',s_\ap')$ for $q' \in \rho_{\varphi_i}(q,\mathcal{P} + L(s_\ap))$ and $(s_\ap,int,s_\ap') \in R_\ap$.
For $Q_i = \exists$, $\mathcal{A}_{\varphi_{i}}$ is given as follows:
\begin{align*}
	Q_{\varphi_{i}} &= Q_{\varphi_{i-1}} \times S_\ap \times \{0,1\} \\
	Q_{0,\varphi_{i}} &= Q_{0,\varphi_{i-1}} \times S_{0,\ap} \times \{0\} \\
	\rho_{\varphi_{i}}((q,s_\ap,b),\mathcal{P}) &= \{(q's_\ap',b') \in Q_{\varphi_{i}} \mid \\
	&\qquad (q,s_\ap) \rightarrow_{\mathcal{P}} (q',s_\ap'), b \neq b' \text{ iff } b = 0 \text{ and } s_\ap \in F_\ap \text{ or } b = 1 \text{ and } q' \in F_{\varphi_i}\} \\
	F_{\varphi_{i}} &= F_{\varphi_i} \times S_\ap \times \{1\}
\end{align*}

As for other hyperlogics using path or trace quantifiers, universal quantifiers $Q_i = \forall$ are handled by using automata complementation and the fact that a universal quantifier $\forall$ can be expressed as $\lnot \exists \lnot$ in logics.
Generally, such negations can then be handled by complementing the automaton constructed so far, introducing an exponential blowup of its size due to \cref{prop:paritydealternation}.
There are some exceptions, where this can be avoided, however.
After the substitution of $\forall$ with $\lnot \exists \lnot$ has been performed in $\varphi$, double negations can be cancelled out.
Also, if a negation is introduced at the start or end of the quantifier prefix in this manner, it can be handled easily.
An inntermost negation can be handled by constructing the automaton for the negation normal form of $\lnot \psi$ instead of constructing the automaton for $\psi$ and then complementing it.
An outmermost negation can be handled by negating the result of the emptiness test on the automaton for $\varphi$ instead of constructing the automaton for $\lnot \varphi$ and then testing for emptiness.
The remaining negations each correspond to a quantifier alternation in the original formula and thus increase the size of the automaton exponentially for each such quantifier alternation.
Also note that the general way to combine different Büchi conditions used in the construction for a single quantifier would induce an exponential blowup in the number of quantifiers if done inductively, even when no quantifier alternations are present.
This can, however, be avoided by constructing states $Q_{\varphi_i} \times (S_\ap)^j \times \{0,1,\dots,j\}$ instead of states $Q_{\varphi_i} \times (S_\ap \times \{0,1\})^j$ to combine $j+1$ Büchi conditions when handling $j$ consecutive quantifiers of the same type.
In this altered construction, the size increase due to the combination of Büchi conditions is only polynomial and does not change the size of the final automaton asymptotically.

\begin{proof}[Proof of \cref{thm:quantifierkequivalence}]
	The part of the claim about the size of $\mathcal{A}_\varphi$ can be seen by inspecting the construction.
	For the inner formula $\psi$, we know that $|\mathcal{A}_\psi|$ is linear in $|\psi|$ for the APA $\mathcal{A}_\psi$ from \cref{thm:deltatequivalence}.
	An alternation removal construction to transform it into an NBA increases the size to exponential in $|\psi|$.
	Complementation constructions are performed corresponding to every quantifier alternation, each further increasing the size exponentially.
	For this, we can interpret an NBA as an APA, complement it without an increase in size, and then transform it into an NBA again with \cref{prop:paritydealternation}.
	Finally, the size measured in $|\mathcal{K}|$ is one exponent smaller since the structure is first introduced into the automaton after the first alternation removal construction.
	
	The part of the claim about $(\Delta,\Traces(\mathcal{K},F))$-equivalence is shown by induction.
	In order to improve readability, let $\mathcal{T} = \Traces(\mathcal{K},F)$ in the remainder of the proof.
	Using the notation $\varphi_i = Q_i \pi_i \dots Q_1 \pi_1 .\psi$ with special cases $\varphi_0 = \psi$ and $\varphi_n = \varphi$, we show that $\mathcal{A}_{\varphi_i}$ is $(\Delta,\mathcal{T})$-equivalent to $\varphi_i$ by induction on $i$.
	The base case follows from \cref{thm:deltatequivalence}.
	In the inductive step, assume that the claim holds for $\varphi_{i-1}$.
	We now show that it holds for $\varphi_{i}$ as well.
	
	There are two cases based on the form of the outermost quantifier $Q_i$.
	The case for a universal quantifier follows from the case for an existential quantifier and the fact that complementation on automata corresponds to negation on formulae.
	For the case of an existential quantifier, we have $\varphi_i = \exists \pi_i. \varphi_{i-1}$ and $\Delta(\pi_i) = \ap$ for some $\ap \in \AP$.
	From the induction hypothesis, we know that $\varphi_{i-1}$ is $(\Delta,\mathcal{T})$-equivalent to $\mathcal{A}_{\varphi_{i-1}}$.
	Let $\Pi$ be an arbitrary trace assignment over $\mathcal{T}$ binding the free trace variables in $\varphi$.
	We show both directions that are required for $(\Delta,\mathcal{T})$-equivalence separately.
	
	For the first direction, assume that $\Pi \models_{\mathcal{T}} \varphi_i$.
	This means there is a trace $\tr \in \mathcal{T}$ such that $\Pi[\pi \mapsto \tr] \models_{\mathcal{T}} \varphi_{i-1}$.
	We denote $\Pi[\pi \mapsto \tr]$ by $\Pi'$.
	Since $\mathcal{A}_{\varphi_{i-1}}$ is $(\Delta,\mathcal{T})$-equivalent to $\varphi_{i-1}$, we know that $w_{\Pi'^\Delta} \in \mathcal{L}(\mathcal{A}_{\varphi_{i-1}})$.
	Furthermore, we know that $\summ_\ap(\tr)$ is a trace in $(\mathcal{K}_\ap,F_\ap)$ since $\Traces(\mathcal{K}_\ap,F_\ap) = \summ_\ap(\mathcal{T})$.
	Thus, we obtain a run of $\mathcal{A}_{\varphi_i}$ on $w_{\Pi^\Delta}$ by simulating $\summ_\ap(\tr)$ in the second component and simulating the run of $\mathcal{A}_{\varphi_{i-1}}$ on $w_{\Pi^\Delta}$ in the first component.
	It is an accepting run since both the fairness condition of $(\mathcal{K}_\ap,F_\ap)$ and the Büchi condition of $\mathcal{A}_{\varphi_{i-1}}$ are satisfied and thus an accepting state of $\mathcal{A}_{\varphi_i}$ is visited infinitely often.
	
	For the other direction, assume that $w_{\Pi^\Delta} \in \mathcal{L}(\mathcal{A}_{\varphi_i})$.
	From the second component of the states of this run, we can extract a trace $\tr' \in \Traces(\mathcal{K}_\ap,F_\ap)$.
	We know that the trace must be a fair trace since accepting runs of $\mathcal{A}_{\varphi_i}$ visit the target states of $(\mathcal{K}_\ap,F_\ap)$ infinitely often.
	Since $\Traces(\mathcal{K}_\ap,F_\ap) = \summ_\ap(\mathcal{T})$, we know that $\tr' = \summ_\ap(\tr)$ for some trace $\tr \in \mathcal{T}$.
	From the first component of the states of the run, we can extract a run of $\mathcal{A}_{\varphi_{i-1}}$ on $w_{\Pi'^{\Delta}}$ where $\Pi'$ is the trace assignment $\Pi[\pi \mapsto \tr]$.
	We know that it is an accepting run since accepting runs of $\mathcal{A}_{\varphi_i}$ visit the accepting states of $\mathcal{A}_{\varphi_{i-1}}$ infinitely often.
	Since $\mathcal{A}_{\varphi_{i-1}}$ is $(\Delta,\mathcal{T})$-equivalent to $\varphi_{i-1}$, we know that $\Pi' \models_{\mathcal{T}} \varphi_{i-1}$.
	This witnesses $\Pi \models_{\mathcal{T}} \varphi_i$.
\end{proof}

\subsection{Proof of \cref{thm:modelcheckingcompleteness}}

For the proof of \cref{thm:modelcheckingcompleteness}, we formulate three additional theorems for upper and lower bounds:

\begin{theorem}\label{thm:restrictedfinitestatemodelchecking}
	Fair model checking a mumbling $H_\mu$ hyperproperty formula $\varphi$ with basis $\AP$ and unique mumbling against a fair Kripke structure $(\mathcal{K},F)$ is decidable in $k\EXPSPACE$ where $k$ is the alternation depth of the quantifier prefix of $\varphi$.
	For fixed formulae, it is decidable in $(k-1)\EXPSPACE$ for $k\geq1$ and in $\NLOGSPACE$ for $k=0$.
\end{theorem}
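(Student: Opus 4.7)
The plan is to combine the automaton construction from \cref{thm:quantifierkequivalence} with the NBA emptiness test of \cref{thm:parityemptiness}. First I would reduce to the case of an outermost existential quantifier: if the outermost quantifier of $\varphi$ is universal, I solve the complementary problem for $\lnot \varphi$ instead, pushing the leading negation past the outermost $\forall$ to obtain an $\exists$-headed formula in positive form of size linear in $|\varphi|$ and of the same alternation depth $k$. Closure under negation below the quantifier prefix, including the dual next operators introduced in \cref{subsec:logicdef}, guarantees that this transformation leaves $k$ unchanged; the final answer is then inverted.

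Given such an $\exists$-headed $\varphi$, \cref{thm:quantifierkequivalence} yields an NBA $\mathcal{A}_\varphi$ of size $\mathcal{O}(g(k+1,|\varphi|+\log|\mathcal{K}|))$ that is $(\Delta,\Traces(\mathcal{K},F))$-equivalent to $\varphi$. Since $\varphi$ is closed, $(\Delta,\mathcal{T})$-equivalence specialises to $(\mathcal{K},F)\models\varphi$ iff the unique word over the single-letter alphabet of empty tuples lies in $\mathcal{L}(\mathcal{A}_\varphi)$, which is equivalent to $\mathcal{L}(\mathcal{A}_\varphi) \neq \emptyset$. By \cref{thm:parityemptiness}, NBA emptiness is in $\NLOGSPACE$ in the automaton size, so the test runs in space $\mathcal{O}(\log|\mathcal{A}_\varphi|) = \mathcal{O}(g(k,|\varphi|+\log|\mathcal{K}|))$. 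For varying inputs this sits within $k\EXPSPACE$; for fixed $\varphi$, the $|\varphi|$ summand becomes constant and one level of the exponential tower collapses to polynomial in $|\mathcal{K}|$ via the innermost $\log|\mathcal{K}|$, yielding $(k-1)\EXPSPACE$ for $k\geq 1$ and $\NLOGSPACE$ for $k=0$.

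The hardest part will be ensuring that the claimed space bound is actually attained, since $\mathcal{A}_\varphi$ is far too large to construct explicitly and must be simulated on the fly following the inductive construction underlying \cref{thm:quantifierkequivalence}. A transition query at outer level $i$ would be resolved by a bounded number of queries into an on-the-fly representation of $\mathcal{A}_{\varphi_{i-1}}$ together with local inspection of $(\mathcal{K}_\ap,F_\ap)$, and the interleaved alternation-removal and complementation steps from \cref{prop:paritydealternation} admit standard on-the-fly implementations via guess-and-check of macrostates. A single non-deterministic configuration of the emptiness procedure then stores one current state at each of the $k+1$ nested construction levels plus logarithmic bookkeeping, which fits into $\mathcal{O}(g(k,|\varphi|+\log|\mathcal{K}|))$ space and thereby delivers the claimed upper bounds.
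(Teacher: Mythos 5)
Your proposal follows essentially the same route as the paper: reduce an outermost universal quantifier to the complement, apply \cref{thm:quantifierkequivalence} to obtain the NBA $\mathcal{A}_\varphi$ of size $\mathcal{O}(g(k+1,|\varphi|+\log|\mathcal{K}|))$, and test it for emptiness in nondeterministic space logarithmic in its size via \cref{thm:parityemptiness} (the paper additionally cites Savitch's theorem to land in the deterministic space classes). Your extra discussion of on-the-fly simulation is a reasonable elaboration of why the space bound is attained, but it does not constitute a different approach.
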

\begin{proof}
	This follows immediately from \cref{thm:quantifierkequivalence} and \cref{thm:parityemptiness}.
	We can test the NBA $\mathcal{A}_{\varphi}$ (or $\mathcal{A}_{\lnot \varphi}$ for an outermost universal quantifier) of size $\mathcal{O}(g(k+1,|\varphi| + \log(|\mathcal{K}|)))$ for emptiness in nondeterministic space logarithmic in its size to solve the model checking problem.
	Savitch's theorem gives us membership in the corresponding deterministic space classes.
\end{proof}

\begin{theorem}\label{thm:finitestatemodelchecking}
	Fair model checking a mumbling $H_\mu$ hyperproperty formula $\varphi$ with full basis and unique mumbling against a fair Kripke structure $(\mathcal{K},F)$ is decidable in $k\EXPSPACE$ where $k$ is the alternation depth of the quantifier prefix of $\varphi$.
	For fixed formulae, it is decidable in $(k-1)\EXPSPACE$ for $k\geq1$ and in $\NLOGSPACE$ for $k=0$.
\end{theorem}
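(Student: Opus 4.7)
The plan is to combine the reduction from \cref{lem:modelcheckingtranslation} with the restricted-basis result of \cref{thm:restrictedfinitestatemodelchecking}. Given a hyperproperty formula $\varphi$ with full basis and unique mumbling, together with a fair Kripke structure $(\mathcal{K},F)$, I would first invoke \cref{lem:modelcheckingtranslation} to obtain an equivalent model checking instance $(\mathcal{K}',F') \models \varphi'$ where $\varphi'$ has basis $\AP'$ and the same number of successor assignments as $\varphi$ (so unique mumbling is preserved), and where $\mathcal{K}'$ is again a Kripke structure. Since $\varphi$ has unique mumbling, so does $\varphi'$, and thus $\varphi'$ satisfies the hypotheses of \cref{thm:restrictedfinitestatemodelchecking}. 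Applying that theorem to $(\mathcal{K}',F')$ and $\varphi'$ then solves the original instance.

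The only thing to verify is that the complexity bounds carry over. By \cref{lem:modelcheckingtranslation} we have $|\varphi'| \in \mathcal{O}(|\varphi|)$ and $|\mathcal{K}'| \in \mathcal{O}(|\mathcal{K}| \cdot 2^{p(|\varphi|)})$ for some polynomial $p$, so in particular
\begin{equation*}
	|\varphi'| + \log(|\mathcal{K}'|) \in \mathcal{O}(|\varphi| + \log(|\mathcal{K}|) + p(|\varphi|)) = \mathcal{O}(q(|\varphi|) + \log(|\mathcal{K}|))
\end{equation*}
for some polynomial $q$. Moreover, the quantifier alternation depth of $\varphi'$ equals that of $\varphi$ since the reduction only changes tests and successor assignments, not the quantifier prefix. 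Plugging these bounds into \cref{thm:restrictedfinitestatemodelchecking}, which decides the instance in space $\log$ of the automaton's size, i.e.\ in space $\mathcal{O}(g(k, q(|\varphi|) + \log(|\mathcal{K}|)))$, yields the claimed $k\EXPSPACE$ bound: the polynomial blow-up $p(|\varphi|)$ from the reduction is absorbed into the outermost polynomial in the tower $g_{c,p}$ and does not add an additional exponential level.

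For the fixed-formula bounds, $|\varphi|$ is a constant, so $q(|\varphi|)$ is constant as well and $|\varphi'| + \log(|\mathcal{K}'|) \in \mathcal{O}(\log(|\mathcal{K}|))$. Feeding this into the fixed-formula statement of \cref{thm:restrictedfinitestatemodelchecking} gives $(k-1)\EXPSPACE$ for $k \ge 1$ and $\NLOGSPACE$ for $k = 0$, exactly as claimed.

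I do not expect a genuine obstacle here since all non-trivial work has already been done in \cref{lem:modelcheckingtranslation} and \cref{thm:restrictedfinitestatemodelchecking}. The only point that requires a little care is the bookkeeping above, in particular the observation that the exponential blow-up in $|\mathcal{K}'|$ becomes a polynomial contribution once wrapped in the logarithm, so it does not raise the height of the exponential tower in the complexity bound, and that the reduction preserves both the quantifier alternation depth and the unique-mumbling property, which are the two parameters that \cref{thm:restrictedfinitestatemodelchecking} is sensitive to.
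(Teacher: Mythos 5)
Your proposal matches the paper's proof: the paper likewise composes \cref{lem:modelcheckingtranslation} with \cref{thm:restrictedfinitestatemodelchecking} and observes that the exponential blow-up of the structure contributes only a polynomial term once it enters the automaton-size bound through $\log(|\mathcal{K}'|)$, so the height of the exponential tower is unchanged. Your bookkeeping of the alternation depth, unique mumbling, and the fixed-formula case is exactly the argument intended.
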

\begin{proof}
	Follows from \cref{lem:modelcheckingtranslation} and the proof of \cref{thm:restrictedfinitestatemodelchecking}.
	More precisely, in \cref{lem:modelcheckingtranslation}, the translation of $\varphi$ is linear in size and the exponential blowup of $\mathcal{K}$ is only in the size of $\varphi$.
	Moreover, the size of the automaton constructed in \cref{thm:restrictedfinitestatemodelchecking} is one exponent larger when measured in $|\varphi|$ compared to the size when measured in $|\mathcal{K}|$.
	Thus, the automaton that is constructed does not asymptotically increase in size compared to the proof of \cref{thm:restrictedfinitestatemodelchecking}.
\end{proof}

\begin{theorem}\label{thm:finitestatelowerbound}
	The fair finite state model checking problem for a mumbling $H_\mu$ hyperproperty formula $\varphi$ with unique mumbling and Kripke structure $\mathcal{K}$ is hard for $k\EXPSPACE$.
	For fixed formulae, it is $(k-1)\EXPSPACE$-hard for $k\geq1$ and $\NLOGSPACE$-hard for $k = 0$.
\end{theorem}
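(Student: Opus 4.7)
The plan is to reduce synchronous HyperLTL finite-state model checking to the fair finite-state model checking problem for mumbling $H_\mu$ with unique mumbling, and then invoke the matching lower bounds of Rabe~\cite{Rabe2016}. Given a HyperLTL formula $\varphi_H = Q_n \pi_n \dots Q_1 \pi_1 .\, \psi_H$ over $\AP$ with alternation depth $k$, let $\Delta_s := \lambda \pi .\, \true$ be the uniformly-true successor assignment. Define a translation $\tau$ that is the identity on the quantifier prefix and on Boolean connectives, maps each HyperLTL atomic test $a_\pi$ to $[a]_\pi$, each global next $\mathcal{X}\varphi$ to $\bigcirc^{\Delta_s} \tau(\varphi)$, and each until $\varphi_1\, \mathcal{U}\, \varphi_2$ to $\tau(\varphi_1)\, \mathcal{U}^{\Delta_s}\, \tau(\varphi_2)$.

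Since $\Delta_s(\pi) = \true$, we have $\Succ_{\Delta_s(\pi)}(\tr,i) = i+1$ on every trace and every position, so $\bigcirc^{\Delta_s}$ advances all traces by exactly one global step and faithfully mimics HyperLTL's synchronous next; a routine structural induction on $\psi_H$ then shows $(\mathcal{K},F) \models \varphi_H$ iff $(\mathcal{K},F) \models \tau(\varphi_H)$. The translation is linear in $|\varphi_H|$, preserves the alternation depth $k$, produces a formula with basis $\AP$, and introduces only the single successor assignment $\Delta_s$, so $\tau(\varphi_H)$ lies in the fragment under consideration. Hence the HyperLTL lower bounds of~\cite{Rabe2016} transfer directly: we obtain $k\EXPSPACE$-hardness of the general problem at alternation depth $k \geq 1$, $(k-1)\EXPSPACE$-hardness for fixed formulae with $k \geq 1$, and for $k = 0$ the problem specialises to LTL-style model checking of a single quantifier block, for which $\NLOGSPACE$-hardness is standard via reduction from directed graph reachability. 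Since the fragment with basis $\AP$ is syntactically contained in the full-basis fragment, these lower bounds apply in either setting.

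\textbf{Main obstacle.} The only real bookkeeping concern is the mismatch between the fair model checking problem studied here and the ordinary model checking problem typically assumed in~\cite{Rabe2016}. This is resolved by the trivial reduction already noted after the definition of the decision problems in \cref{subsec:logicsem}: given an ordinary Kripke structure $\mathcal{K} = (S,S_0,R,L)$, instantiate the fair Kripke structure $(\mathcal{K}, S)$ with every state as a target, so that $\Traces(\mathcal{K},S) = \Traces(\mathcal{K})$ and the reduction carries over verbatim. A minor auxiliary check is that unique mumbling really is preserved by $\tau$; this is immediate because $\Delta_s$ is the only successor assignment the translation ever introduces.
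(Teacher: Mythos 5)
Your proposal is correct and follows exactly the route the paper takes: embed synchronous HyperLTL into mumbling $H_\mu$ with unique mumbling via the trivial successor assignment $\Delta_s = \lambda \pi.\,\true$ (under which $\Succ_{\Delta_s(\pi)}$ is the global successor) and transfer the lower bounds of \cite{Rabe2016}, handling fairness by declaring all states targets. The only nitpick is that $\true$ is not an atomic proposition, so the resulting formula does not literally have basis $\AP$ — but the theorem imposes no basis restriction, and a fresh everywhere-true proposition would repair this anyway, so the argument stands.
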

\begin{proof}
	It is easy to see that HyperLTL is subsumed by mumbling $H_\mu$ with unique mumbling.
	Thus we can show the lower bound by a reduction from the HyperLTL model checking problem for which hardness was shown in \cite{Rabe2016}.
\end{proof}

With the help of these, we obtain a simple proof:

\begin{proof}[Proof of \cref{thm:modelcheckingcompleteness}]
	Follows directly from \cref{thm:finitestatemodelchecking} and \cref{thm:finitestatelowerbound}.
\end{proof}

\section{Appendix to Section \ref{sec:pdmodelchecking}}

\subsection{Proof of \cref{thm:gniundecidability}}\label{app:gniundecidability}

\begin{proof}
	The property generalised non-interference is given by the HyperLTL formula
	\begin{align*}
		\varphi_{\mathit{GNI}} := \forall \pi_1 . \forall \pi_2 . \exists \pi_3 . (\mathcal{G} \bigwedge_{\ap \in L} \ap_{\pi_1} \leftrightarrow \ap_{\pi_3} )\land (\mathcal{G} \bigwedge_{\ap \in H} \ap_{\pi_2} \leftrightarrow \ap_{\pi_3} )
	\end{align*}
	where $L$ encodes a set of low security variables and $H$ encodes a set of high-security variables.
	It states that for all pairs of traces $\pi_1,\pi_2$, there is a third trace $\pi_3$ agreeing with $\pi_1$ on the low-security variables from $L$ and agreeing with $\pi_2$ on the high-security variables from $H$.
	We show that it is undecidable to check $\mathcal{PD} \models \varphi_{\mathit{GNI}}$ for pushdown systems $\mathcal{PD}$ via a reduction from the equivalence problem for pushdown automata.
	
	Let $\mathcal{A}_1$ and $\mathcal{A}_2$ be pushdown automata recognizing languages $\mathcal{L}_1$ and $\mathcal{L}_2$ respectively. 
	We construct a system $\mathcal{PD}$ such that $\mathcal{PD} \models \varphi_{\mathit{GNI}}$ iff $\mathcal{L}_1 = \mathcal{L}_2$.
	Specifically, $\mathcal{PD}$ contains a copy of $\mathcal{A}_1$ and $\mathcal{A}_2$ with a nondeterministic choice to move to either automaton at the start.
	Transition symbols of the automata are encoded in low-security variables $L$ while a high-security bit $b$ (with $H = \{b\}$) indicates whether $\mathcal{PD}$ follows a trace from $\mathcal{A}_1$ or from $\mathcal{A}_2$.
	
	For the first direction, assume that $\mathcal{PD} \models \varphi_{\mathit{GNI}}$.
	We choose arbitrary traces $\tr_1$ from the copy of $\mathcal{A}_1$ and $\tr_2$ from the copy of $\mathcal{A}_2$.
	If we bind $\tr_1$ to $\pi_1$ and $\tr_2$ to $\pi_2$, we know that there is a trace $\tr_3$ (bound to $\pi_3$) that agrees with $\tr_1$ on low-security variables and with $\tr_2$ on high-security variables.
	The first of these two conditions ensures that $\tr_3$ encodes the same word $w$ as $\tr_1$, i.e. a word $w \in \mathcal{L}_1$.
	The second of the two conditions ensures that $\tr_3$ is a trace from the copy of $\mathcal{A}_2$, from which we infer $w \in \mathcal{L}_2$.
	Since $\tr_1$ was chosen as an arbitrary trace from $\mathcal{A}_1$, we conclude $\mathcal{L}_1 \subseteq \mathcal{L}_2$.
	By swapping the roles of $\tr_1$ and $\tr_2$, we can show $\mathcal{L}_2 \subseteq \mathcal{L}_1$ analogously.
	
	For the other direction, assume that $\mathcal{L}_1 = \mathcal{L}_2$.
	We show that $\mathcal{PD} \models \varphi_{\mathit{GNI}}$ by discriminating cases for the choice of traces for the first two quantifiers.
	If both quantifiers choose a trace from the same automaton, then $\pi_3$ can be chosen as the trace bound to $\pi_1$.
	Then, $\pi_1$ and $\pi_3$ agree on $L$ since they are the same trace and $\pi_2$ and $\pi_3$ agree on $b$ since both traces are in the same automaton.
	We thus know $\mathcal{PD} \models \varphi_{\mathit{GNI}}$ in this case.
	In the other case, the quantifiers choose traces from different automata.
	Assume wlog. that $\pi_1$ binds a trace from $\mathcal{A}_1$ and $\pi_2$ binds a trace from $\mathcal{A}_2$.
	The other case is analogous.
	Let $w \in \mathcal{L}_1$ be the word encoded by the trace bound by $\pi_1$.
	Since $\mathcal{L}_1 = \mathcal{L}_2$, we know that $w \in \mathcal{L}_2$ and there is a trace $\tr$ in $\mathcal{A}_2$ encoding $w$.
	We choose $\tr$ for $\pi_3$.
	As in the first direction, we know that $\pi_3$ agrees with $\pi_1$ on the low-security variables since they encode the same word and that $\pi_3$ agrees with $\pi_2$ on the high-security variables since they bind traces from the same automaton.
	We conclude $\mathcal{PD} \models \varphi_{\mathit{GNI}}$ in this case as well.
\end{proof}

\subsection{Detailed Construction of \boldmath$\mathcal{A}_\psi$\unboldmath\ from Subsection  \ref{subsec:visiblypushdown} and Proof of \cref{thm:alignedsynchronousautomaton}}\label{app:pdinnerautomaton}

$\mathcal{A}_\psi$ is given as $(Q_\psi,Q_{0,\psi},\rho_\psi,\Omega_\psi)$ where the two state sets are $Q_{\psi} := \{q_{\psi'} \mid \psi' \in \sub(\psi)\} \times \{t,f\}$ and $Q_{0,\psi} = \{(q_\psi,t)\}$.
The transition function $\rho_\psi$ for input tuples $\mathcal{P} = (P_1,\dots,P_n) \in (2^\AP)^n$ is defined by induction over the structure of subformulae.
\begin{align*}
	\rho_{\psi}((q_{[\ap]_{\pi_i}},b),\mathcal{P}) &= \begin{cases}
		\true &\text{if } \ap \in P_i \\
		\false &\text{otherwise}
	\end{cases}\\
	\rho_{\psi}((q_{\lnot [\ap]_{\pi_i}},b),\mathcal{P}) &= \begin{cases}
		\true &\text{if } \ap \not\in P_i \\
		\false &\text{otherwise}
	\end{cases} \\
	\rho_{\psi}((q_{\psi' \lor \psi''},b),\mathcal{P}) &= \rho_{\psi}((q_{\psi'},b),\mathcal{P}) \lor \rho_{\psi}((q_{\psi''},b),\mathcal{P}) \\
	\rho_{\psi}((q_{\psi' \land \psi''},b),\mathcal{P}) &= \rho_{\psi}((q_{\psi'},b),\mathcal{P}) \land \rho_{\psi}((q_{\psi''},b),\mathcal{P}) \\
	\rho_{\psi}((q_{\bigcirc_w^\Delta \psi'},b),\mathcal{P}) &= (q_{\psi'},f) \\
	\rho_{\psi}((q_{\bigcirc_d^\Delta \psi'},b),\mathcal{P}) &= (q_{\psi'},t) \\
	\rho_{\psi}((q_{\mu X. \psi'},b),\mathcal{P}) &= \rho_{\psi}((q_{\psi'},b),\mathcal{P}) \\
	\rho_{\psi}((q_{\nu X. \psi'},b),\mathcal{P}) &= \rho_{\psi}((q_{\psi'},b),\mathcal{P}) \\
	\rho_{\psi}((q_{X},b),\mathcal{P}) &= \rho_{\psi}((q_{fp(X)},b),\mathcal{P})
\end{align*}

For the other input symbols, we define $\rho_\psi((q_{\psi'},t),\top) = \true$, $\rho_\psi((q_{\psi'},f),\top) = \false$ and \allowbreak$\rho_\psi((q_{\psi'},b),m) = (q_{\psi'},b)$ for $m \in \{\call,\ret\}$.
For the priority assignment, we set $\Omega_\psi((q,t)) = \Omega_\psi((q,f))$ and thus omit the second component of each state in the description.
The process is similar to that in the construction of $\mathcal{A}_B$ in \cref{subsec:basis} (resp. \cref{app:modelcheckinglemma}):
we first assign priorities for states $q_{\psi'}$ where $\psi'$ is a fixpoint variable or fixpoint formula.
We assign greatest fixpoints and their variables even priorities and least fixpoints odd priorities, starting at $0$ and $1$, respectively, for outermost fixpoints and increasing by one for each fixpoint alternation.
For all other states, we assign $\Omega_{\psi}(q_{\psi'}) = p_{\mathit{max}}$, where $p_{\mathit{max}}$ is the highest priority assigned so far.

Intuitively, being in a state $(q_\psi,b)$ means that we are currently checking the formula $\psi$ with bit $b$ indicating whether we accept or reject if we encounter a $\top$-symbol.
For $\psi = \bigcirc_w^\Delta \psi'$, we set $b = f$ to indicate that for $\psi$ to hold, the next step has to be well-aligned.
Likewise, for $\psi = \bigcirc_d^\Delta \psi'$, we set $b = t$ to indicate that if the next step is not well-aligned, $\psi$ holds.
The priorities are assigned to reflect the nature of fixpoints.
Odd priorities for least fixpoints reflect that these states may only be visited a finite number of times unless they are nested within a greatest fixpoint that is also visited infinitely often on that path.
Similarly, even priorities for greatest fixpoints reflect that these states may be visited infinitely often.
Assigning lower priorities to outer fixpoints reflects that these fixpoints take precedence over the fixpoints that are nested in them.

For the proof of \cref{thm:alignedsynchronousautomaton}, we need to formulate a lemma.
Intuitively, it tells us that if $\Pi$ has a well-aligned prefix of finite length, the semantics of a formula $\psi$ can be characterised by a variant of the formula that has no fixpoints.
Let $\psi^j$ for $j \in \mathbb{N}_0$ be recursively defined as follows:
$\psi^0$ replaces all $\bigcirc_w^\Delta \psi'$ subformulae in $\psi$ with $\false$ as well as $\bigcirc_d^\Delta \psi'$ subformulae with $\true$ and $\psi^{j+1}$ is obtained from $\psi$ by replacing every subformula $\psi'$ of $\psi$ which is directly in scope of an outermost $\bigcirc_w^\Delta$ or $\bigcirc_d^\Delta$ operator by $\psi'^{j}$.
For this, fixpoints are unrolled $j$ times for $\psi^j$.
We formulate the following lemma:

\begin{lemma}\label{lem:alignedprefixlemma}
	Let $\psi$ be a multitrace formula with unique successor assignment $\Delta$, $\Pi$ be a trace assignment with $\wapref(\Pi,\Delta) \neq \infty$ and $\mathcal{W}$ be a fixpoint variable assignment.
	For $i \leq \wapref(\Pi,\Delta)$, let $j_i = \wapref(\Pi,\Delta)-i$.
	Then for all $i \leq \wapref(\Pi,\Delta)$, $\prog_w(\Pi,\Delta,i) \in \llbracket \psi \rrbracket^\Pi_\mathcal{W} $ iff $ \prog_w(\Pi,\Delta,i) \in \llbracket \psi^{j_i} \rrbracket^\Pi_\mathcal{W}$.
\end{lemma}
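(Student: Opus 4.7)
The plan is to establish \cref{lem:alignedprefixlemma} by a two-level induction: an outer induction on $j_i = \wapref(\Pi,\Delta) - i$, carried out as a downward induction on $i$ from $\wapref(\Pi,\Delta)$, and an inner structural induction on $\psi$, with the fixpoint case handled via the approximant characterisation from \cref{cor:welldefinedpsi}. The driving semantic observation is that $\Succ_\Delta^w(\Pi,\prog_w(\Pi,\Delta,\wapref(\Pi,\Delta)))$ is undefined by the very definition of $\wapref$, so at this rightmost well-aligned position the operator $\bigcirc_w^\Delta \psi'$ holds vacuously false and $\bigcirc_d^\Delta \psi'$ holds vacuously true, which is exactly what the constants replacing them in $\psi^0$ prescribe.

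In the base case $j_i = 0$ I would walk through the structural cases. Atomic tests, the boolean connectives and the fixpoint-variable case are immediate from the structural IH. The next-operator cases are settled by the observation above: for $\psi = \bigcirc_w^\Delta \psi'$ the position $\prog_w(\Pi,\Delta,i)$ is outside $\llbracket\psi\rrbracket_\mathcal{W}^\Pi$ because the successor is undefined, matching $\psi^0 = \false$; the dual case is symmetric. For the inductive step $j_i > 0$, only the next-operator cases require a new argument: for $\psi = \bigcirc_w^\Delta \psi'$ the transformation yields $\psi^{j_i} = \bigcirc_w^\Delta(\psi'^{\,j_i-1})$, and unfolding the semantics of $\bigcirc_w^\Delta$ reduces the claim to the outer inductive hypothesis at $i+1$, where $j_{i+1} = j_i - 1$; the dual operator is handled analogously. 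All other structural cases are treated as in the base case.

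The main obstacle is the fixpoint case $\psi = \mu X.\psi'$ (and dually for $\nu$). The structural IH only supplies pointwise agreement of $\psi'$ and $\psi'^{\,j_i}$ at the position $\prog_w(\Pi,\Delta,i)$ for every variable assignment, whereas the fixpoint semantics is defined globally via a least fixpoint on sets of positions, so one cannot simply apply the IH at a single point. The plan is to exploit strict guardedness, which forces every occurrence of $X$ in $\psi'$ to lie in the scope of a $\bigcirc_w^\Delta$ or $\bigcirc_d^\Delta$ operator, and to prove by transfinite induction on the approximant ordinal $\kappa$ that the approximants $\alpha^\kappa(\emptyset)$ of $\llbracket\mu X.\psi'\rrbracket_\mathcal{W}^\Pi$ and $\beta^\kappa(\emptyset)$ of $\llbracket(\mu X.\psi')^{j_i}\rrbracket_\mathcal{W}^\Pi$ agree at every position $\prog_w(\Pi,\Delta,i')$ with $i' \leq \wapref(\Pi,\Delta)$ simultaneously. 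At each successor step of the transfinite induction, any access to $X$ inside $\psi'$ is mediated by a next operator and therefore references the previous approximants only at a strictly later position $\prog_w(\Pi,\Delta,i'+1)$, which already falls under the outer inductive hypothesis; the inner structural IH lifts this agreement through the boolean combinations making up the body of the fixpoint, closing the induction and yielding the lemma.
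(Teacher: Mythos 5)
Your proposal is correct in substance and reaches the same goal, but it organises the induction differently from the paper. The paper inducts on the prefix length $l = \wapref(\Pi,\Delta)$ itself: for the inductive step it builds a shifted trace assignment $\Pi'$ (and shifted $\mathcal{W}'$) in which the subtraces skipped by the first application of $\Succ_\Delta^w$ are removed, so that $\wapref(\Pi',\Delta) = l$ and position $i$ in $\Pi'$ corresponds to position $i+1$ in $\Pi$; the induction hypothesis applied to $\Pi'$ then yields the equivalence for all outermost next-operator subformulae, after which a structural induction on $\psi$ propagates it. Your downward induction on $i$ (equivalently on $j_i$) within a fixed $\Pi$ decrements the same quantity $j_i$ without the shifting construction, which is arguably cleaner since it avoids having to argue that the semantics is invariant under removing the skipped subtraces; the base case ($\Succ_\Delta^w$ undefined at the end of the prefix, so $\bigcirc_w^\Delta$ and $\bigcirc_d^\Delta$ collapse to $\false$ and $\true$) and the next-operator step (reduce to $i+1$, where $j_{i+1}=j_i-1$) coincide exactly with the paper's. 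You also deserve credit for confronting the fixpoint case explicitly, which the paper dispatches as ``a straightforward induction on the structure of $\psi$''.

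Two points in your fixpoint argument need repair, though neither is fatal. First, $(\mu X.\psi')^{j_i}$ is fixpoint-free by construction (the fixpoint is unrolled $j_i$ times and then truncated), so it has no approximant sequence $\beta^\kappa(\emptyset)$; the comparison should be between the approximants $\alpha^\kappa(\emptyset)$ of $\llbracket \mu X.\psi'\rrbracket^\Pi_{\mathcal{W}}$ from \cref{cor:welldefinedpsi} and the finite truncated unfoldings, or, equivalently, you should show that strict guardedness forces $\alpha^\kappa(\emptyset)$ to stabilise at position $\prog_w(\Pi,\Delta,i')$ once $\kappa > j_{i'}$, because each unfolding step reads the previous approximant only at strictly later $\Delta$-positions and position $\wapref(\Pi,\Delta)$ reads nothing at all. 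Second, ``agree at every position $\prog_w(\Pi,\Delta,i')$ simultaneously'' cannot be measured against the single formula $(\mu X.\psi')^{j_i}$, since the appropriate truncation depth varies with $i'$; the simultaneous claim has to pair position $i'$ with depth $j_{i'}$, which is just the full statement of \cref{lem:alignedprefixlemma} for the fixpoint subformula. With these adjustments, and with the observation that the lemma is quantified over all $\mathcal{W}$ so the structural hypothesis may be instantiated at the updated assignments $\mathcal{W}[X\mapsto V]$ arising inside the fixpoint computation, your argument goes through.
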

\begin{proof}
	By induction on $l := \wapref(\Pi,\Delta)$.
	
	In the \textbf{base case} $l = 0$, the $\Delta$-well-aligned prefix of $\Pi$ has length $0$ and $\psi^0$ replaces all $\bigcirc_w^\Delta$ subformulae with $\false$ as well as $\bigcirc_d^\Delta$ subformulae with $\true$.
	The claim can be seen straightforwardly, since $\Succ_\Delta^w(\Pi,(0,...,0))$ is undefined which makes the semantics of all $\bigcirc_w^\Delta$ subformulae of $\psi$ equivalent to $\false$ and all $\bigcirc_d^\Delta$ subformulae of $\psi$ equivalent to $\true$.
	
	In the \textbf{inductive step} $l \mapsto l + 1$, assume that the claim holds for $l$.
	We have $\wapref(\Pi,\Delta) = l+1$ and $\psi^{l+1-i}$ has a nesting depth of $l + 1 - i$ for $\bigcirc_w^\Delta$ and $\bigcirc_d^\Delta$ operators.
	In particular, $\psi^{l+1-i}$ is obtained from $\psi$ by replacing every subformula $\psi'$ of $\psi$ which is directly in scope of an outermost $\bigcirc_w^\Delta$ or $\bigcirc_d^\Delta$ operator by $\psi'^{l-i}$.
	Let $\Pi'$ be a variant of the trace assignment $\Pi$ in which the subtraces skipped by the first application of $\Succ^w_{\Delta}$ are removed.
	Analogously, let $\mathcal{W}'$ be the fixpoint variable assignment where indices are shifted according to the first application of $\Succ^w_{\Delta}$.
	This means that $\wapref(\Pi',\Delta) = l$.
	For subformulae $\psi'$, the trace assignment $\Pi'$ and the fixpoint variable assignment $\mathcal{W}'$, we can use the induction hypothesis and obtain $\prog_w(\Pi',\Delta,i) \in \llbracket \psi' \rrbracket^{\Pi'}_{\mathcal{W}'}$ iff $\prog_w(\Pi',\Delta,i) \in \llbracket \psi'^{l-i} \rrbracket^{\Pi'}_{\mathcal{W}'}$ for all $i \leq l$.
	Thus, since $\prog_w(\Pi',\Delta,i)$ in $\Pi'$ and $\mathcal{W}'$ corresponds to $\prog_w(\Pi,\Delta,i+1)$ in $\Pi$ and $\mathcal{W}$, we directly obtain $\prog_w(\Pi,\Delta,i) \in \llbracket \bigcirc_w^\Delta \psi' \rrbracket^\Pi_{\mathcal{W}}$ iff $\prog_w(\Pi,\Delta,i) \in \llbracket \bigcirc_w^\Delta \psi'^{l-i} \rrbracket^\Pi_{\mathcal{W}}$ for $i \leq l$ and the analogous claim for $\bigcirc_d^\Delta \psi'$ subformulae.
	Also, for $i = l+1$, we obtain the same claim with a similar argument as in the base case.
	Using this, a straightforward induction on the structure of $\psi$ yields  $\prog_w(\Pi,\Delta,i) \in \llbracket \psi \rrbracket^\Pi_\mathcal{W} $ iff $ \prog_w(\Pi,\Delta,i) \in \llbracket \psi^{j_i} \rrbracket^\Pi_\mathcal{W}$  for all $i \leq l+1$.
\end{proof}

For the proof of \cref{thm:alignedsynchronousautomaton}, we need a stronger version of aligned $(\Delta,\mathcal{T})$-equivalence for multitrace formulae that enables an inductive proof.
As in the proof of \cref{lem:modelcheckinglemma}, we use the notion $\mathcal{A}[q \colon I]$ for an automaton with the same behaviour as $\mathcal{A}$ except for in state $q$, where it accepts iff the current index of the run is in the set $I$.
For simpler notation, we define offsets in $w_{\Pi}^{\Delta}$ in a similar manner as in traces.
For $i \leq \wapref(\Pi,\Delta)$ and $w_{\Pi}^{\Delta} = \mathcal{P}_0 \cdot \{\ret\}^{r_0} \cdot \{\call\}^{c_0} \cdot \mathcal{P}_1 \cdot \{\ret\}^{r_1} \cdot \{\call\}^{c_1} \cdot \dots$, we set $w_{\Pi}^{\Delta}[i] = \mathcal{P}_i \cdot \{\ret\}^{r_i} \cdot \{\call\}^{c_i} \cdot ...$ and for $i > \wapref(\Pi,\Delta)$, we set $w_{\Pi}^{\Delta}[i] = \{\top\}^\omega$.
Additionally, we use $\mathit{ind}(w_\Pi^\Delta,i)$ for the index corresponding to $w_\Pi^\Delta(i)$ according to the usual notion of offsets.

\begin{definition}[Inductive aligned $(\Delta,\mathcal{T})$-equivalence]
	Given a set of traces $\mathcal{T}$, a multitrace formula $\psi$ with free fixpoint variables $X_1,\dots,X_n$ and unique successor assignment $\Delta$ as well as an automaton $\mathcal{A}$ with states including $q_{X_1},\dots,q_{X_n}$, we call $\mathcal{A}$ inductively aligned $(\Delta,\mathcal{T})$-equivalent to $\psi$, iff for all trace assignments $\Pi$ over $\mathcal{T}$ binding the free trace variables of $\psi$, fixpoint variable assignments $\mathcal{W}$ and indices $i \in \mathbb{N}_0$ with $i \leq \wapref(\Pi,\Delta)$, we have $\prog_w(\Pi,\Delta,i) \in \llbracket \psi \rrbracket^{\Pi}_{\mathcal{W}}$ iff $\mathcal{A}[q_{X_1} \colon \mathcal{W}(X_1),\dots, q_{X_n} \colon \mathcal{W}(X_n)])$ has an accepting $(q_0,\mathit{ind}(w_\Pi^\Delta,i))$-run on $w_\Pi^\Delta$ for some initial state $q_0$ of $\mathcal{A}$.
\end{definition}

\begin{proof}[Proof of \cref{thm:alignedsynchronousautomaton}]
	The automaton $\mathcal{A}_\psi$ is given by the construction described in \cref{subsec:visiblypushdown} (resp. \cref{app:pdinnerautomaton}) and has linear size in $|\psi|$ where the size of the transition function is measured by the number of distinct subformulae in analogy to the size of mumbling $H_\mu$ formulae.
	We intend to show that $\mathcal{A}_\psi$ is inductively aligned $(\Delta,\mathcal{T})$-equivalent to $\psi$ for all sets of traces $\mathcal{T}$.
	
	For this, let $\mathcal{T}$ be an arbitrary set of traces, $\Pi$ be a trace assignment over $\mathcal{T}$ and $i \leq \wapref(\Pi,\Delta)$.
	We discriminate two cases based on the form of $w_\Pi^{\Delta}$ and focus on the harder one, i.e. where $w_\Pi^{\Delta}$ has a suffix of $\top$-symbols.
	The other case is completely analogous to the proof of \cref{claim:modelcheckinglemma} in the proof of \cref{lem:modelcheckinglemma}.
	
	We focus on a finite succession of $(P^1,\dots,P^n)$ symbols followed by an infinite suffix of $\top$ symbols where for $i \leq \wapref(\Pi,\Delta)$, $\mathit{ind}(w_\Pi^\Delta,i) = i$.
	The other cases follow from the fact that the semantics of $\psi$ is invariant under the well-aligned addition and removal of $\call$ and $\ret$ moves in $\Pi$ and the fact that these symbols are skipped in the automaton.
	
	As a first step, we show the claim for formulae $\psi$ that do not contain fixpoints or fixpoint variables.
	This can be done by a structural induction on the form of $\psi$.
	For atomic formulae $[\ap]_\pi$ and $\lnot[\ap]_\pi$ as well as connectives $\psi' \lor \psi''$ and $\psi' \land \psi''$ this is straightforward.
	In the case for next formulae $\bigcirc^\Delta_w \psi'$, we discriminate two cases: $i < \wapref(\Pi,\Delta)$ and $i = \wapref(\Pi,\Delta)$.
	For the first of these two cases, the claim follows directly from the induction hypothesis since we have already shown the inductive equivalence for $\psi'$ and index $i+1$.
	For the second case, we have $\prog_w(\Pi,\Delta,i) \not\in \llbracket \psi \rrbracket^\Pi$ since we have reached the end of the $\Delta$-well-aligned prefix of $\Pi$.
	Also, $\mathcal{A}_\psi$ does not have an accepting $(q_0,i)$-run:
	the automaton moves to $(q_{\psi'},\false)$ with the first symbol $(P_i^1,\dots,P_i^n)$ of $w_\Pi^{\Delta}[i]$ and then moves to $\false$ with the second symbol $\top$ of $w_\Pi^{\Delta}[i]$.
	From there, all runs are rejecting.
	For dual next formulae $\bigcirc_d^\Delta \psi'$, the proof is analogous to the previous case with the difference that we move to $\true$ when a $\top$ symbol is encountered.
	This concludes the proof for fixpoint-free formulae $\psi$.
	
	Now, we show the claim for general formulae $\psi$ with fixpoints using the fact that we have already shown it for fixpoint-free formulae.
	In \cref{lem:alignedprefixlemma}, we have seen that $\prog_w(\Pi,\Delta,i) \in \llbracket \psi \rrbracket^\Pi_{\mathcal{W}}$ iff $\prog_w(\Pi,\Delta,i) \in \llbracket \psi^{\wapref(\Pi,\Delta)-i} \rrbracket^\Pi_{\mathcal{W}}$ for all $i \leq \wapref(\Pi,\Delta)$ where $\psi^{\wapref(\Pi,\Delta)-i}$ is a formula without fixpoints.
	Since we have already shown the claim for such formulae, we know that $\mathcal{A}_{\psi^{\wapref(\Pi,\Delta)-i}}$ is inductively aligned $(\Delta,\mathcal{T})$-equivalent to $\psi^{\wapref(\Pi,\Delta)-i}$.
	We thus know for all $i \leq \wapref(\Pi,\Delta)$ that $\prog_w(\Pi,\Delta,i) \in \llbracket \psi^{\wapref(\Pi,\Delta)-i} \rrbracket^{\Pi}_{\mathcal{W}}$ iff $\mathcal{A}_{\psi^{\wapref(\Pi,\Delta)-i}}[q_{X_1} \colon \mathcal{W}(X_1),\dots, q_{X_n} \colon \allowbreak \mathcal{W}(X_n)]$ has an accepting $(q_0,i)$-run on $w_{\Pi}^{\Delta}$.
	We argue that $\mathcal{A}_\psi[q_{X_1} \colon  \mathcal{W}(X_1),\dots, q_{X_n} \colon \mathcal{W}(X_n)]$ has an accepting $(q_0,i)$-run on $w_{\Pi}^{\Delta}$ for an initial state $q_0$ iff $\mathcal{A}_{\psi^{\wapref(\Pi,\Delta)-i}}[q_{X_1} \colon \mathcal{W}(X_1),\dots, q_{X_n} \colon \allowbreak \mathcal{W}(X_n)]$ has an accepting $(q_0,i)$-run on $w_{\Pi}^{\Delta}$ for an initial state $q_0$ in order to show our original claim.
	For this, we transform an accepting run of $\mathcal{A}_\psi[q_{X_1} \colon \mathcal{W}(X_1),\dots, q_{X_n} \colon \mathcal{W}(X_n)]$ into an accepting run of $\mathcal{A}_{\psi^{\wapref(\Pi,\Delta)-i}}[q_{X_1} \colon \mathcal{W}(X_1),\dots, q_{X_n} \colon \mathcal{W}(X_n)]$.
	Since our run is accepting, it has to end in loops on states $\true$ after a finite amount of steps since otherwise it would either move to $\false$ from a state $q_{[\ap]_\pi}$ ($q_{\lnot[\ap]_\pi}$) or read a symbol $\top$ in a state $q_{\psi'}$ for some subformula $\psi'$ of $\psi$ which is not a dual next formula and then move to $\false$.
	Similarly, if a symbol $\top$ is read in a state $q_{\psi'}$ for a dual next formula $\psi'$, we end in a $\true$ loop as well.
	$\psi^{\wapref(\Pi,\Delta)-i}$ is obtained from $\psi$ by unrolling fixpoints $\mu X \psi'$ (or $\nu X. \psi'$) $\wapref(\Pi,\Delta)-i$ times and then replacing $\bigcirc_w^\Delta$ and $\bigcirc_d^\Delta$ operators that are nested more than $\wapref(\Pi,\Delta)-i$ times by $\false$ and $\true$, respectively.
	This makes $\mathcal{A}_{\psi^{\wapref(\Pi,\Delta)-i}}$ structurally very similar to $\mathcal{A}_\psi$.
	Thus, we can build a run in $\mathcal{A}_{\psi^{\wapref(\Pi,\Delta)-i}}$ that is structurally very similar to the run in $\mathcal{A}_\psi$ but visits the state $q_{\psi'}$ (or rather a version of this state for some unrolling of $\psi'$) instead of the state $q_X$ during the exploration of the fixpoint.
	Since the acceptance of every branch in our run was induced by the loops on $\true$, the new run is still accepting despite this change in priorities.
	With similar arguments, an accepting run of $\mathcal{A}_{\psi^{\wapref(\Pi,\Delta)-i}}[q_{X_1} \colon  \mathcal{W}(X_1),\dots, q_{X_n} \colon \mathcal{W}(X_n)]$ can be transformed into an accepting run of $\mathcal{A}_\psi[q_{X_1} \colon  \mathcal{W}(X_1),\dots, q_{X_n} \colon \mathcal{W}(X_n)]$.
	This concludes our proof.
\end{proof}

\subsection{Detailed Construction of \boldmath$(\mathcal{PD}_\ap,F_\ap)$\unboldmath\ from Subsection \ref{subsec:visiblypushdown}}\label{app:visiblypushdownstructure}

For $\ap = \Delta(\pi_i)$, we transform $(\mathcal{PD},F)$ with $\mathcal{PD} = (S,S_{0},R,L)$ into a fair pushdown system $(\mathcal{PD}_\ap,F_\ap)$ that is suitable for a projection construction with $\mathcal{A}_{\varphi_{i-1}}$.
Here, this process is more involved than the corresponding construction for a Kripke structure $(\mathcal{K}_\ap,F_\ap)$ from \cref{subsec:finitestate} (resp. \cref{app:finitestate}), however.
In particular, we are not only faced with the challenge of different behaviour of the mumbling operator in prefixes where the mumbling criterion $\ap$ holds and suffixes where it does not hold, which was already present in the construction of $(\mathcal{K}_\ap,F_\ap)$.
Instead, we also have to deal with the peculiarities of the $\Delta$-well-aligned encoding of a trace assignment in which one mumbling step is not matched by one, but possibly multiple steps in the encoding.
Towards the first challenge, we proceed as in the construction of $(\mathcal{K}_\ap,F_\ap)$: We divide the state space of our structure into a part where states labelled $\ap$ are visited and intermediate states not labelled $\ap$ are skipped as well as a part where states labelled $\ap$ cannot be visited any more and where intermediate states are not skipped.
Towards the second challenge, we make sure that (i) one $\intern$-step can be made in the structure corresponding to $\mathcal{P}$-symbols in the encoding of a trace and (ii) $\ret$- and $\call$-steps are made corresponding to the $\ret$-$\call$-profile of the currently progressed subtrace.
The construction proceeds in two steps.
We first construct an intermediate structure in which we divide the state space for the first challenge and add $\intern$-steps for part (i) of the second challenge.
In a second step we then construct the final structure out of the intermediate structure while addressing part (ii) of the second challenge as well.

For this construction, we assume that initial states in $\mathcal{PD}$ are isolated, i.e. that there are no transitions $(s,s_0)$, $(s,s_0,\theta)$ or $(s,\theta,s_0)$ for all $s_0 \in S_0$.
This can be achieved by creating copies of the initial states with no incoming transitions as new initial states without changing the set of traces of the PDS.
For the intermediate structure, let $S = S_0\, \dot\cup\, S_\ell\, \dot\cup\, S_{n\ell}$ where $\ap \in L(s)$ for all $s \in S_\ell$ and $\ap \not\in L(s)$ for all $s \in S_{n\ell}$ be the partition of $S$ into initial states ($S_0$), non-initial states labelled $\ap$ ($S_{\ell}$) and non-initial states not labelled $\ap$ ($S_{n\ell}$).
Intuitively, due to our assumption on the isolation of initial states, states in the first two sets are the ones that are visited while progressing with the mumbling criterion $\ap$ whereas the third set contains the states that are only visited in suffixes not seeing $\ap$ any more.
The states and labelling of the intermediate system $\mathcal{PD}' = (S',S_{0}',R',L')$ are given as follows:
\begin{align*}
	S' &= (S_{n\ell} \times \{\mathit{\mathit{pre}},\mathit{suf},\mathit{sufpend}\}) \cup ((S_0 \cup S_\ell) \times \{l,r\}) \\
	S_{0}' &= S_0 \times \{l\} \\
	L'((s,f)) &= L(s) \text{ for } (s,f) \in S_{n\ell} \times \{\mathit{\mathit{pre}},\mathit{suf},\mathit{sufpend}\} \\
	L'((s,d)) &= L(s) \text{ for } (s,d) \in (S_0 \cup S_\ell) \times \{l,r\}.
\end{align*}
The set of target states $F'$ is given by $\{(s,f) \mid s \in F \cap S_{n\ell}, f \in \{\mathit{pre},\mathit{suf}\}\} \cup \{(s,l) \mid s \in F \cap (S_0 \cup S_\ell)\}$.
Before we formally define the transition relation, let us explain the intuition for creating multiple copies of certain states.
In this structure, we sort the states into different categories based on what phase they are visited in: states $(s,l),(s,r)$ and $(s,\mathit{pre})$ are visited in the prefix where $\ap$-labelled states are visited and states $(s,\mathit{suf})$ $(s,\mathit{sufpend})$ are visited in the suffix where $\ap$-labelled states cannot be visited anymore.
Additionally notice that we split certain states into two copies $(s,l)$ and $(s,r)$ or $(s,\mathit{suf})$ and $(s,\mathit{sufpend})$.
We do this to make sure that every step made by the $\ap$-mumbling can be matched by exactly one $\intern$-step in between these copies.
In the prefix, we always first visit the left copy $(s,l)$, then make an $\intern$-step to the right copy $(s,r)$ and then proceed from there, thus always adding an $\intern$-step.
In the suffix, we only add an $\intern$-step if the current transition is a $\call$- or $\ret$-transition is taken.
These transitions lead to a pending state $(s,\mathit{sufpend})$ from where the added $\intern$-transition leads to $(s,\mathit{suf})$.

We now proceed with the definition of the transition relation $R'$.
Let $R_{\mathit{subs}}$ be the set $R$ where states $s \in S_0 \cup S_{\ell}$ are substituted by $(s,l)$ if they occur on the right side of a transition and substituted by $(s,r)$ if they are on the left side of a transition.
States $s \in S_{n\ell}$ are substituted by $(s,\mathit{pre})$ in this set.
Formally, the internal transitions of $R_{\mathit{subs}}$ are given by
\begin{align*}
	& \{ ((s,\mathit{pre}),(s',\mathit{pre})) \mid s,s' \in S_{n\ell}, (s,s') \in R \} \cup \\
	&\{ ((s',\mathit{pre}),(s,l)) \mid s \in S_0 \cup S_{\ell},s' \in S_{n\ell}, (s',s) \in R \} \cup \\
	&\{ ((s,r),(s',\mathit{pre})) \mid s \in S_0 \cup S_{\ell},s' \in S_{n\ell},(s,s') \in R \} \cup \\
	&\{ ((s,r),(s,l)) \mid s,s' \in S_0 \cup S_{\ell}, (s,s') \in R\}.
\end{align*}
Call- and return-transitions are defined analogously.
Additionally, let $R_{\mathit{suf}}$ be the set obtained from $R$ in the following way:
on the left side, we substitute states $s \notin S_{n\ell}$ with $(s,r)$ and for states $s \in S_{n\ell}$, we substitute with $(s,\mathit{suf})$;
on the right side, we substitute states $s$ with $(s,\mathit{sufpend})$ for $\call$ or $\ret$ transitions and with $(s,\mathit{suf})$ for $\intern$ transitions.
Furthermore, we have $\intern$ transitions from $(s,\mathit{sufpend})$ to $(s,\mathit{suf})$ in $R_{\mathit{suf}}$.
Formally, the internal transitions of $R_{\mathit{suf}}$ are given by
\begin{align*}
	& \{((s,r),(s',\mathit{suf})) \mid s\notin S_{n\ell},s \in S_{n\ell},(s,s') \in R\} \cup\\
	& \{((s,\mathit{suf}),(s',\mathit{suf})) \mid s,s'\in S_{n\ell},(s,s') \in R\} \cup\\
	& \{ ((s,\mathit{sufpend}),(s,\mathit{suf})) \mid s \in S_{n\ell}\}
\end{align*}
As mentioned, call and return transitions are defined slightly differently in this case.
The return transitions of $R_{\mathit{suf}}$ are given by
\begin{align*}
	& \{((s,r),\theta,(s',\mathit{sufpend})) \mid s\notin S_{n\ell},s' \in S_{n\ell},(s,\theta,s') \in R\} \cup\\
	& \{((s,\mathit{suf}),\theta,(s',\mathit{sufpend})) \mid s,s'\in S_{n\ell},(s,\theta,s') \in R\}
\end{align*}
Call transitions are defined analogously.
Finally, let $R_\ell$ be the set $\{((s,l),(s,r)) \mid s \in S_0 \cup S_{\ell}\}$.
We define:
\begin{align*}
	R' &= R_{\mathit{subs}}\cup R_{\mathit{suf}} \cup R_\ell.
\end{align*}

Intuitively, these transition sets can be understood as follows.
The sets $R_\ell$ and $\{ ((s,\mathit{sufpend}),\allowbreak(s,\mathit{suf})) \mid s \in S_{n\ell}\}$ correspond to the additional $\intern$-steps discussed before.
In $R_{\mathit{subs}}$, we substitute $(s,l)$ for transitions where $s$ is on the right and $(s,r)$ for transitions where $s$ is on the left to make sure that the internal transition from $(s,l)$ to $(s,r)$ is taken exactly once whenever $s$ is visied.
In $R_{\mathit{suf}}$, we only take an additional internal move when a $\call$- or $\ret$-transition is taken.
This is done by moving to states $(s,\mathit{sufpend})$ with these transitions from where the only possible transition is $((s,\mathit{sufpend}),(s,\mathit{suf}))$.
Finally, the set $\{((s,r),(s',\mathit{suf})) \mid s\notin S_{n\ell},(s,s') \in R\}$ contains the transitions in $R_{\mathit{suf}}$ making the switch from the prefix to the suffix.

Using this intermediate structure, we now construct $\mathcal{PD}_\ap$.
Notice that in the suffix only visiting states not labelled $\ap$, where transitions from $R_{\mathit{suf}}$ in $\mathcal{PD}'$ are taken, these transitions already directly correspond to the $\ret$-$\call$-profile of mumbling steps.
As each mumbling transition moves exactly one step in this case, the $\ret$-$\call$-profile of a step can be $(1,0)$, $(0,1)$ or $(0,0)$ depending on whether a $\ret$, a $\call$ or an $\intern$-transition is taken during this single step.
The corresponding encoding is a $\ret$- followed by an $\intern$-step, a $\call$- followed by an $\intern$-step or only an $\intern$-step, respectively.
Thus, the transitions in $R_{\mathit{suf}}$ exactly match the correct encoding.

We now compute transitions corresponding to $(\abs^* \ret)$ or $(\abs^* \call)$ steps in abstract summarisations.
Due to the above observations, we only have to do further calculations in the prefix.
We calculate abstract successors in $\mathcal{PD}'$ with respect to $R_{\mathit{subs}}$.
This makes sure that (i) the $\intern$-steps from $R_\ell$ which were not present in the original structure do not count towards these abstract successors and (ii) that states labelled $\ap$ cannot be \textit{skipped} by a calculated abstract successor.
During the calculation, we distinguish whether a target state $s \in F'$ is visited on the way or not and write $s \rightarrow_{\mathit{abs}} s'$ for abstract successors not visiting a target state and $s \rightarrow_{\mathit{abs},f} s'$ for abstract successors visiting target states.
Let $\rightarrow_{\mathit{abs}}^*$ be the reflexive and transitive closure of $\rightarrow_{\mathit{abs}}$ and $\rightarrow_{\mathit{abs},f}^*$ be the relation $(\rightarrow_{\mathit{abs}} \cup \rightarrow_{\mathit{abs},f})^* \rightarrow_{\mathit{abs},f} (\rightarrow_{\mathit{abs}} \cup \rightarrow_{\mathit{abs},f})^*$ where $(\rightarrow_{\mathit{abs}} \cup \rightarrow_{\mathit{abs},f})^*$ is the reflexive and transitive closure of $\rightarrow_{\mathit{abs}} \cup \rightarrow_{\mathit{abs},f}$.

From these abstract successor relations we define multiple preliminary transition relations:
We have $(s,f) \rightarrow_{\ap} (s',f')$ iff $((s,f),(s',f')) \in R_{\mathit{suf}}$  or $(s,f) \rightarrow_{\mathit{abs}}^* (s'',f'')$ and $((s'',f''),(s',f')) \in R_\ell$.
If additionally, a target state is visited, we have $(s,f) \rightarrow_{\ap,f} (s',f')$.
The relations $(s,f) \rightarrow_{\call,\theta} (s',f')$, $(s,f) \rightarrow_{\call,\theta,f} (s',f')$, $(s,f) \rightarrow_{\ret,\theta} (s',f')$ and $(s,f) \rightarrow_{\ret,\theta,f} (s',f')$ are defined analogously.
These relations can easily be all computed in polynomial time.

For the definition of the structure, the state space of $\mathcal{PD}'$ has to be supplemented slightly.
We have to make sure that (i) $\ret$- and $\call$-transitions are taken in the right order and (ii) target states visited on a trace but skipped via abstract successors in its encoding are made visible.
For this, we introduce two bits: one bit indicating whether a $\ret$-transition can be taken and one bit indicating whether a target state was recently visited.
We define the states and labelling of $\mathcal{PD}_\ap = (S_\ap,S_{0,\ap},R_\ap,L_\ap)$ as
\begin{align*}
	S_\ap &= S' \times \{r,c\} \times \{0,1\} \\
	S_{0,\ap}' &= S_0' \times \{r\} \times \{0\} \\
	L_\ap(s,m,i) &= L(s)
\end{align*}
with target states $F_\ap = S' \times \{r,c\} \times \{1\}$.
The final transition relation is given by
\begin{align*}
	R_\ap &= \{ ((s,m,b),(s',r,b')) \mid  s \rightarrow_{\ap} s',m \in \{r,c\} \text{ and } b' = 0 \text{ or } s \rightarrow_{\ap,f} s' \text{ and } b' = 1\} \\
	&\cup \{ ((s,m,b),(s',c,b'),\theta) \mid s \rightarrow_{\call,\theta} s',m \in \{r,c\} \text{ and }  b' = 0 \text{ or } s \rightarrow_{\call,\theta,f} s' \text{ and } b' = 1 \} \\
	&\cup \{ ((s,r,b),\theta,(s',r,b')) \mid s \rightarrow_{\ret,\theta} s' \text{ and } b' = 0 \text{ or } s \rightarrow_{\ret,\theta,f} s' \text{ and } b' = 1 \}.
\end{align*}
This structure generates us the well-aligned encodings $w_{\tr}^{\wa}$ of traces $\tr$ from $\Traces(\mathcal{PD},F)$.
For this, we have to look at the state labelling whenever we do an internal step, since these steps are made exactly at those points that are inspected in an $\ap$-mumbling.
In between those states, $\ret$ and $\call$ moves can be made in accordance with $(\abs^*\ret)$ and $(\abs^*\call)$ successions from the $\ret$-$\call$ profile of the current finite subtrace.
The second component of the state space makes sure that $\ret$- and $\call$-transitions are taken in the right order.
$\ret$-trantisions are only possible in copy $r$ which is left upon taking a $\call$-transition and only reentered when taking an $\intern$-transition.

\subsection{Proof of \cref{thm:quantifieralignedkequivalence}}

\begin{proof}
	The part of the claim about the size of $\mathcal{A}_\varphi$ can be seen by inspecting the construction.
	For the inner formula $\psi$, we know that $|\mathcal{A}_\psi|$ is linear in $|\psi|$ for the APA $\mathcal{A}_\psi$ from \cref{thm:alignedsynchronousautomaton}.
	An alternation removal construction to transform it into an NBA increases the size to exponential in $|\psi|$.
	Complementation constructions are performed using \cref{prop:vpacomplementation} for each every quantifier alternation, each further increasing the size exponentially.
	Finally, the size measured in $|\mathcal{PD}|$ is one exponent smaller since the structure is first introduced into the automaton after the first alternation removal construction.
	
	For the second part of the proof, let $\mathcal{T} = \Traces(\mathcal{PD},F)$.
	We use the notation $\varphi_i = Q_i \pi_i \dots Q_1 \pi_1 .\psi$ with special cases $\varphi_0 = \psi$ and $\varphi_n = \varphi$, and show that $\mathcal{A}_{\varphi_i}$ is aligned $(\Delta,\mathcal{T})$-equivalent to $\varphi_i$ by induction on $i$.
	The base case follows immediately from \cref{thm:alignedsynchronousautomaton}.
	
	In the inductive step, we assume that $\mathcal{A}_{\varphi_{i-1}}$ is aligned $(\Delta,\mathcal{T})$-equivalent to $\varphi_{i-1}$ and show the claim for $\varphi_i$.
	There are two cases, $Q_i = \exists$ and $Q_i = \forall$.
	The more interesting case is the former, where $\varphi_i = \exists \pi_{i}. \varphi_{i-1}$.
	Let $\Pi$ be a trace assignment over $\mathcal{T}$ binding the free trace variables in $\varphi_i$.
	We show both directions of the required equivalence individually.
	
	On the one hand, assume that $\Pi \models_{\mathcal{T}} \varphi_i$.
	From the definition of the semantics, we know there is a trace $\tr \in \mathcal{T}$ such that $\Pi[\pi_{i} \mapsto \tr] \models_{\mathcal{T}} \varphi_{i-1}$.
	We use $\Pi'$ to denote the trace assignment $\Pi[\pi_{i} \mapsto \tr]$.
	Since $\Pi'$ is an extension of $\Pi$ by an additional trace, we know that $\wapref(\Pi,\Delta) \geq \wapref(\Pi',\Delta)$.
	From the induction hypothesis we know that $w_{\Pi'}^{\wa} \in \mathcal{L}(\mathcal{A}_{\varphi_{i-1}})$.
	Thus, there is an accepting run $q_0' q_1' \dots$ over $w_{\Pi'}^{\wa}$ in $\mathcal{A}_{\varphi_{i-1}}$ from which we now construct an accepting run $q_0 q_1 \dots$ over $w_{\Pi}^{\wa}$ in $\mathcal{A}_{\varphi_i}$.
	We discrminate three cases based on $\wapref(\Pi,\Delta)$ and $\wapref(\Pi',\Delta)$.
	
	In the first case, we have $\wapref(\Pi,\Delta) = \wapref(\Pi',\Delta) = \infty$.
	Then, both $w_{\Pi}^{\wa}$ and $w_{\Pi'}^{\wa}$ do not contain $\top$-symbols and each $(P_1,\dots,P_{n-i})$ symbol in $w_{\Pi}^{wa}$ is extended by a set of atomic propositions $P$ from the corresponding position in $\tr$ to obtain $(P_1,\dots,P_{n-i},P)$.
	The run $q_0 q_1 \dots$ is constructed from $q_0' q_1' \dots$ and $\tr$ in the same way as in the proof of \cref{thm:quantifierkequivalence} and stays in copy $\wa$ all the time.
	Its acceptance can be inferred from the acceptance of $q_0' q_1' \dots$ and fairness condition of $\tr$ with the same argument as used in the proof of \cref{thm:quantifierkequivalence}.
	The component simulating the multi-automaton $\mathcal{A}_{\mathcal{PD}}$ does not matter in this case since we never transition to states $q_\top$.
	We know, however, that a transition can always be taken in this component since $\mathcal{A}_{\mathcal{PD}}$ is reverse-total.
	
	In the second case, we have $\wapref(\Pi',\Delta) \neq \infty$ and $\wapref(\Pi,\Delta) > \wapref(\Pi',\Delta)$.
	Then $w_{\Pi'}^{\wa}$ consists of $\top$-symbols after the first $\wapref(\Pi',\Delta) + 1$ $\mathcal{P}$-symbols whereas in $w_{\Pi}^{\wa}$, $\top$-symbols start later (if at all).
	This means that the non-well-alignedness of $\Pi'$ in $\Delta$-step $\wapref(\Pi',\Delta) + 1$ is not due to the non-well-alignedness of the traces in $\Pi$, but instead due to the fact that the traces in $\Pi$ are not well-aligned with $\tr$ in this step.
	In particular, the well-aligned encoding of $\Pi$ makes a $\call$ somewhere in this $\Delta$-step while the well-aligned encoding of $\tr$ makes an $\intern$- or $\ret$-step (or any other combination of mismatching steps).
	We construct the run $q_0 q_1 \dots$ as follows.
	Up until $\Delta$-step $\wapref(\Pi',\Delta)$, we construct it in the same way as in the first case, i.e. we stay in copy $\wa$ and simulate $\mathcal{A}_{\varphi_{i-1}}$ on $w_{\Pi'}^{\wa}$ by taking the traces in $\Pi$ from the input and constructing $\tr$ on the fly in the $S_\ap$ component of the automaton.
	Then, we move to the $\ua$ copy and keep the simulation until we are at the point where the well-aligned encoding of $\Pi$ and $\tr$ make different kinds of steps.
	In the component representing $\mathcal{A}_{\mathcal{PD}}$, we can choose an accepting reverse-run that ends in the last state of the prefix of $\tr$ at this point, which is possible since we know that the prefix of $\tr$ up until this point has a fair continuation, namely $\tr$.
	Thus, it is possible to move to states $q_\top$ at this point where the run will remain indefinitely.	
	At the same time as moving to $\ua$, we move the component representing $\mathcal{A}_{\varphi_{i-1}}$ to a state $q_\top$ from where $\mathcal{A}_{\varphi_{i-1}}$ is simulated on $\top^\omega$.
	Since $w_{\Pi'}^{\wa}$ has a $\top^\omega$ suffix from $\Delta$-step $\wapref(\Pi',\Delta) + 1$ onwards and $q_0' q_1' \dots$ is an accepting run, this leads to an accepting run in $\mathcal{A}_{\varphi_i}$ as well.
	
	In the third case, we have $\wapref(\Pi',\Delta) \neq \infty$ and $\wapref(\Pi,\Delta) = \wapref(\Pi',\Delta)$ which means that $w_{\Pi}^{\wa}$ and $w_{\Pi'}^{\wa}$ have a $\top$-suffix that starts after $\wapref(\Pi,\Delta) = \wapref(\Pi',\Delta)$ $\Delta$-steps.
	In this case, the non-well-alignedness of $\Pi'$ in $\Delta$-step $\wapref(\Pi',\Delta) + 1$ is already due to a non-well-alignedness of $\Pi$ in $\Delta$-step $\wapref(\Pi',\Delta) + 1$.
	The run $q_0 q_1 \dots$ is constructed similar to the previous case, but skips the $\ua$ copy and instead moves to a state $q_\top$ when encountering the first $\top$-symbol.
	It's acceptance can be inferred from the acceptance of $q_0' q_1' \dots$ in the same way as in the previous case.
	
	On the other hand, assume that $w_\Pi^{\wa} \in \mathcal{L}(\mathcal{A}_{\varphi_i})$.
	We thus have an accepting run $q_0 q_1 \dots$ of $\mathcal{A}_{\varphi}$ on $w_\Pi^{\wa}$.
	We discriminate two cases based on $\wapref(\Pi,\Delta)$.
	
	In the first case, where $\wapref(\Pi,\Delta) = \infty$, the run stays in copy $\wa$ of $\mathcal{A}_{\varphi_i}$ all the time.
	From the $S_\ap$ component of this run, we can extract the well-aligned encoding of a fair trace $\tr$ that is well-aligned with $\Pi$.
	From the $Q_{\varphi_{i-1}}$ component, we also know that $\mathcal{A}_{\varphi_{i-1}}$ has an accepting run on $w_{\Pi'}^{\wa}$ where $\Pi'$ denotes the trace assignment $\Pi[\pi_{i} \mapsto \tr]$.
	We use the induction hypothesis to obtain that $\Pi' \models_{\mathcal{T}} \varphi_{i-1}$ and have thus found a witness for $\Pi \models_{\mathcal{T}} \exists \pi_{i}. \varphi_{i-1}$.
	
	In the second case, we have $\wapref(\Pi,\Delta) \neq \infty$ and the run moves to states $q_\top$ at some point: either (a) in $\Delta$-step $\wapref(\Pi,\Delta)+1$ due to reading a $\top$-symbol from the $\wa$ copy of the automaton, or (b) due to visiting the copy $\ua$ and then ending up there in a $\Delta$-step before that.
	Before this point, we can extract a prefix of a trace $\tr$ from the $Q_{\varphi_{i-1}}$ and $S_\ap$ components of the automaton in the same way as in the first case of this direction of the proof.
	This prefix is then extended into a fair trace $\tr$.
	In particular, this is possible since an accepting run can only end up in states $q_{\top}$ when there is an accepting run of the multi-automaton $\mathcal{A}_{\mathcal{PD}}$ on the last configuration before this transition.
	Let $\Pi'$ denote the trace assignment $\Pi[\pi_{i} \mapsto \tr]$.
	We know that $\wapref(\Pi',\Delta) \leq \wapref(\Pi,\Delta)$.
	If our run $q_0 q_1 \dots$ has the form (a), we know that $\wapref(\Pi',\Delta) = \wapref(\Pi,\Delta)$ since the run can only stay in copy $\wa$ of the automaton as long as $\tr$ and $\Pi$ are well-aligned.
	If the run has the form (b) instead, we know that $\wapref(\Pi',\Delta) < \wapref(\Pi,\Delta)$ since we have identified the non-well-alignedness of $\tr$ and $\Pi$ before $\Delta$-step $\wapref(\Pi,\Delta)$ in this case.
	In both cases, however, we have simulated $\mathcal{A}_{\varphi_{i-1}}$ on the correct encoding $w_{\Pi'}^{\wa}$ and checked that it has an accepting run.
	We can thus again use the induction hypothesis to obtain that $\Pi' \models_{\mathcal{T}} \varphi_{i-1}$ and have found a witness for $\Pi \models_{\mathcal{T}} \exists \pi_{i}. \varphi_{i-1}$.
	
	The case $Q_i = \forall$ uses the fact that $\Pi \models_{\mathcal{T}} \forall \pi_{i}. \varphi_{i-1}$ iff $\Pi \not\models_{\mathcal{T}} \exists \pi_{i}. \lnot\varphi_{i-1}$ (where the semantics of $\lnot\varphi_{i-1}$ is interpreted as usual), \cref{prop:vpacomplementation} and the same arguments as in the previous case.
\end{proof}

\subsection{Proof of \cref{thm:pdmodelcheckingcompleteness}}

For the proof of \cref{thm:pdmodelcheckingcompleteness}, we again formulate additional theorems for upper and lower bounds as in the proof of \cref{thm:modelcheckingcompleteness}.

\begin{theorem}\label{thm:restrictedvpmodelchecking}
	Fair model checking a mumbling $H_\mu$ hyperproperty formula $\varphi$ with basis $\AP$, unique mumbling and well-aligned successor operators against a fair pushdown system $(\mathcal{PD},F)$ is decidable in 
	$(k+1)\EXPTIME$ where $k$ is the alternation depth of the quantifier prefix.
	For fixed formulae, it can be decided in $k\EXPTIME$.
\end{theorem}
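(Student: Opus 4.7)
The plan is to reduce the model checking question to a VPA emptiness test, using the automaton construction already established in \cref{thm:quantifieralignedkequivalence}, and then to perform a careful complexity analysis that distinguishes the two regimes.

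First, given an instance $(\mathcal{PD},F) \models \varphi$, I would handle the outermost quantifier as in \cref{subsec:finitestate}: if $Q_n = \exists$, I directly invoke \cref{thm:quantifieralignedkequivalence} to obtain a VPA $\mathcal{A}_\varphi$ of size $\mathcal{O}(g(k+1, |\varphi| + \log|\mathcal{PD}|))$ that is aligned $(\Delta,\Traces(\mathcal{PD},F))$-equivalent to $\varphi$; if $Q_n = \forall$, I instead construct $\mathcal{A}_{\neg\varphi}$ for the (positive-normal-form of the) negation and flip the final answer. This absorbs the leading negation without the extra complementation blowup from \cref{prop:vpacomplementation} (just as in the finite state case in \cref{subsec:finitestate}), so the size bound stays the same.

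Second, since $\varphi$ is closed, the definition of aligned $(\Delta,\mathcal{T})$-equivalence for the empty trace assignment tells me that $(\mathcal{PD},F) \models \varphi$ holds iff some $w \in \mathcal{L}(\mathcal{A}_\varphi)$ is a valid encoding of $\{\}$, i.e., is a word in the language $(() \cdot \{\ret\}^* \cdot \{\call\}^*)^\omega$. This language is recognised by a trivially small fixed-size VPA $\mathcal{A}_{\mathrm{enc}}$. Taking the product of $\mathcal{A}_\varphi$ and $\mathcal{A}_{\mathrm{enc}}$ yields a VPA whose size is asymptotically the same as $\mathcal{A}_\varphi$, and the model checking question reduces to testing this product for non-emptiness. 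By the second half of \cref{prop:vpacomplementation}, this test runs in time polynomial in the size of the product VPA.

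Third, I read off the complexity bounds. In the general case, the VPA has size $\mathcal{O}(g(k+1, |\varphi| + \log|\mathcal{PD}|))$, a $(k+1)$-fold exponential in $|\varphi| + \log|\mathcal{PD}|$, and polynomial-time emptiness on a $(k+1)$-fold exponential object lies in $(k+1)\EXPTIME$. For fixed $\varphi$, the parameter degenerates to $g(k+1, \log|\mathcal{PD}|)$: the innermost exponential evaluates to $2^{p(\log|\mathcal{PD}|)}$, which is polynomial in $|\mathcal{PD}|$, effectively collapsing one level of the exponential tower. The remaining $k$ levels together with the polynomial-time emptiness check land the problem in $k\EXPTIME$.

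I do not expect a real obstacle; the heavy lifting has been done in \cref{thm:quantifieralignedkequivalence}. The only point that needs a little care is checking that the handling of the outermost quantifier via a negation really saves one exponential (so that the size bound matches $g(k+1,\cdot)$ rather than $g(k+2,\cdot)$) and that the collapse of the innermost exponential for fixed formulae is legitimate under the convention $g_{c,p}(0,n) = p(n)$ from the preliminaries.
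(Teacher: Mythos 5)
Your proposal matches the paper's proof essentially step for step: it invokes \cref{thm:quantifieralignedkequivalence} (switching to $\mathcal{A}_{\lnot\varphi}$ for an outermost universal quantifier and negating the answer), intersects with the constant-size VPA recognising the encodings of $\{\}$, and applies the polynomial-time emptiness test from \cref{prop:vpacomplementation}, with the fixed-formula bound obtained by the same observation that the dependence on $|\mathcal{PD}|$ sits one exponential lower. The argument is correct and no further comparison is needed.
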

\begin{proof}
	\cref{thm:quantifieralignedkequivalence} gives us a VPA $\mathcal{A}_{\varphi}$ of size $g(k+1,|\varphi| + \log(|\mathcal{PD}|))$ that is aligned $(\Delta,\allowbreak\Traces(\mathcal{PD},F))$-equivalent to $\varphi$ for formulae with an outermost existential quantifier.
	For an outermost universal quantifier, we take the automaton $\mathcal{A}_{\lnot \varphi}$ instead.
	By \cref{prop:vpacomplementation}, the intersection of $\mathcal{A}_{\varphi}$ (resp. $\mathcal{A}_{\lnot\varphi}$) and the automaton recognising encodings of $\{\}$ (that has constant size) can be tested for emptiness in time polynomial in the size of the automaton for an answer to the model checking problem.
\end{proof}

\begin{theorem}\label{thm:vpmodelchecking}
	Fair model checking a mumbling $H_\mu$ hyperproperty formula $\varphi$ with unique mumbling and well-aligned successor operators against a fair pushdown system is decidable in $(k+1)\EXPTIME$ where $k$ is the alternation depth of the quantifier prefix.
	For fixed formulae, it can be decided in $k\EXPTIME$.
\end{theorem}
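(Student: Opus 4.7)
The plan is to combine the procedure of \cref{thm:restrictedvpmodelchecking}, which handles the restricted case of basis $\AP$, with the reduction from \cref{lem:modelcheckingtranslation}, which removes the restriction on the basis. This mirrors the way \cref{thm:finitestatemodelchecking} was obtained from \cref{thm:restrictedfinitestatemodelchecking} in the finite state setting.

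Given a closed mumbling $H_\mu$ formula $\varphi$ with unique mumbling and well-aligned successor operators and a fair PDS $(\mathcal{PD},F)$, I would first apply \cref{lem:modelcheckingtranslation} to obtain an equivalent instance consisting of a formula $\varphi'$ of size $\mathcal{O}(|\varphi|)$ with basis $\AP'$ and a fair PDS $(\mathcal{PD}',F')$ of size $\mathcal{O}(|\mathcal{PD}| \cdot 2^{p(|\varphi|)})$ such that $(\mathcal{PD},F) \models \varphi$ iff $(\mathcal{PD}',F') \models \varphi'$. Since the reduction preserves the number of successor assignments, $\varphi'$ still has unique mumbling. The reduction also preserves the well-aligned flavour of the next operators, since it only changes tests and the arguments of successor assignments (which are base formulae tracked by newly introduced atomic propositions) and leaves the overall multitrace structure, in particular the $\bigcirc_w^\Delta$ and $\bigcirc_d^\Delta$ operators, intact. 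Thus $\varphi'$ is in the fragment covered by \cref{thm:restrictedvpmodelchecking}, and the alternation depth is preserved as well.

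Next, I would invoke \cref{thm:restrictedvpmodelchecking} on the transformed instance. That theorem gives a decision procedure running in time $(k+1)\EXPTIME$ in terms of $|\varphi'|$ and $|\mathcal{PD}'|$, where the dependence on $|\varphi'|$ is one exponent higher than on $|\mathcal{PD}'|$ (inspect the bound $g(k+1, |\varphi| + \log(|\mathcal{PD}|))$ coming from \cref{thm:quantifieralignedkequivalence}). The blowup from the reduction is exponential in $|\varphi|$ but only polynomial in $|\mathcal{PD}|$; feeding $|\mathcal{PD}'| \in \mathcal{O}(|\mathcal{PD}| \cdot 2^{p(|\varphi|)})$ and $|\varphi'| \in \mathcal{O}(|\varphi|)$ into that bound, the extra exponent in $|\mathcal{PD}'|$ is absorbed by the larger exponent already applied to $|\varphi|$, so the overall complexity remains $(k+1)\EXPTIME$ in $|\varphi|$ and $|\mathcal{PD}|$. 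For fixed formulae, $|\varphi|$ becomes a constant, so the reduction's blowup on $\mathcal{PD}$ is absorbed into the constant, and the bound from \cref{thm:restrictedvpmodelchecking} for fixed formulae gives $k\EXPTIME$.

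The only subtlety — and really the only thing requiring a sentence of argument rather than pure citation — is to check that applying \cref{lem:modelcheckingtranslation} does not bump the complexity class: this is exactly the complexity bookkeeping in the previous paragraph, relying on the fact that the reduction's blowup on $\mathcal{PD}$ is exponential in $|\varphi|$ while the procedure of \cref{thm:restrictedvpmodelchecking} is already nested-exponential in $|\varphi|$ one level above its dependence on $|\mathcal{PD}|$. Once this is verified, the theorem follows immediately.
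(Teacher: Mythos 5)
Your proposal is correct and matches the paper's own argument: the paper likewise proves this theorem by combining \cref{lem:modelcheckingtranslation} with \cref{thm:restrictedvpmodelchecking} and observing that the reduction's blowup (linear in $|\varphi|$, exponential in $|\varphi|$ but polynomial in $|\mathcal{PD}|$ for the structure) is absorbed because the automaton size is one exponent larger in $|\varphi|$ than in $|\mathcal{PD}|$. Your additional remarks on the preservation of unique mumbling and well-aligned operators are a welcome explicit check of details the paper leaves implicit.
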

\begin{proof}
	Follows directly from \cref{thm:restrictedvpmodelchecking} and \cref{lem:modelcheckingtranslation} with the same arguments as presented in the proof of \cref{thm:finitestatemodelchecking}.
\end{proof}

\begin{theorem}\label{thm:vplowerbound}
	The fair pushdown model checking problem for a mumbling $H_\mu$ hyperproperty formula $\varphi$ with unique mumbling and well-aligned successors and fair Pushdown System $(\mathcal{PD},F)$ is hard for $k\EXPSPACE$ where $k \geq 1$ is the alternation-depth of the quantifier prefix of $\varphi$.
	For fixed formulae and $k \geq 1$, it is $(k-1)\EXPSPACE$-hard.
	For $k = 0$, it is hard for $\EXPTIME$.
\end{theorem}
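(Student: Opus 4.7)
The plan is to derive these lower bounds from two known hard problems that are already cited in the paper: finite state HyperLTL model checking for the cases $k \geq 1$ and LTL pushdown model checking for $k = 0$. Both reductions rely on simple embeddings into mumbling $H_\mu$ that must be shown to respect the unique mumbling and well-alignedness restrictions.

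For $k \geq 1$, I would reduce from finite state HyperLTL model checking, which by \cite{Rabe2016} is $k\EXPSPACE$-complete and $(k-1)\EXPSPACE$-complete for fixed formulae. The translation sends a HyperLTL next $\bigcirc \psi$ to $\bigcirc_w^\Delta \psi$ with the trivial successor assignment $\Delta = \lambda \pi . \true$; atomic tests, booleans, and quantifiers map directly, and $\mathcal{U}, \mathcal{F}, \mathcal{G}$ are recovered via the syntactic sugar already introduced in \cref{sec:logic}. Since a Kripke structure is exactly a PDS whose transition relation contains only internal transitions, each step of $\Succ_\Delta$ skips a subtrace with $\ret$-$\call$ profile $(0,0)$, so every step is trivially well-aligned and $\bigcirc_w^\Delta$ coincides with $\bigcirc^\Delta$ on such systems (cf.\ \cref{rem:fsbasisformulae} and the discussion in \cref{subsec:wellalignedness}). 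The reduction preserves the alternation depth $k$ and is polynomial in both formula and structure, yielding the stated hardness bounds.

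For $k = 0$, I would reduce from LTL pushdown model checking, which is $\EXPTIME$-complete by \cite{Bouajjani1997}. Given an LTL formula $\phi$ and a fair PDS $(\mathcal{PD},F)$, construct the closed mumbling $H_\mu$ formula $\forall \pi . [\delta]_\pi$, where $\delta$ is $\phi$ read as a trace formula using only $\bigcirc^\succglobal$. This formula has a single trace quantifier, hence alternation depth $0$, and $(\mathcal{PD},F) \models \forall \pi . [\delta]_\pi$ iff every fair trace of $\mathcal{PD}$ satisfies $\phi$. Moreover, the formula contains no multitrace next operator, so the well-alignedness restriction imposes no additional condition whatsoever.

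The main subtlety I expect to address is confirming that the well-aligned semantics does not weaken the first reduction: I need to argue explicitly that restricting to $\bigcirc_w^\Delta$ is harmless when the underlying system is a Kripke structure, because consecutive propositional positions in a Kripke trace have matching (empty) stack profiles, so the well-aligned successor is defined whenever the standard successor is and agrees with it. Once this is in place, the lower bounds are inherited directly from the cited hardness results, the three parts of the theorem follow, and combined with \cref{thm:vpmodelchecking} they yield the completeness statement for $k=0$ and the announced hardness bounds for $k \geq 1$.
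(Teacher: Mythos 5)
Your proposal is correct and follows essentially the same route as the paper: the $k \geq 1$ case is obtained by embedding HyperLTL into mumbling $H_\mu$ with unique mumbling and inheriting the hardness bounds of \cite{Rabe2016} (using that a Kripke structure is a PDS with only internal transitions, on which well-aligned and standard next operators coincide), and the $k=0$ case is a reduction from LTL pushdown model checking \cite{Bouajjani1997}. The paper states these reductions more tersely, but your added detail on why well-alignedness is vacuous for Kripke structures and for quantifier-free multitrace parts matches the paper's own remarks in \cref{subsec:wellalignedness}.
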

\begin{proof}
	The case for $k > 0$ is an immediate corollary from \cref{thm:finitestatelowerbound} and the fact that fair pushdown model checking subsumes fair finite state model checking.
	The case for $k = 0$ is by a reduction from the LTL model checking problem against pushdown systems known to be $\EXPTIME$-hard \cite{Bouajjani1997}.
\end{proof}

With the help of these, we again obtain a simple proof:

\begin{proof}[Proof of \cref{thm:pdmodelcheckingcompleteness}]
	Follows directly from \cref{thm:vpmodelchecking} and \cref{thm:vplowerbound}.
\end{proof}

\section{Appendix to Section \ref{sec:expressivity}}\label{app:expressivityproofs}

\subsection{Proof of \cref{thm:stutteringtomumblingreduction}}\label{app:stutteringtomumblingreduction}

\begin{proof}
	Let $(\mathcal{PD},F)$ and $\varphi$ be the inputs for the fair pushdown model checking problem for stuttering $H_\mu$.
	The main idea of the reduction is to translate the stuttering $H_\mu$ formula $\varphi$ with stuttering assignments $\Gamma$ into a mumbling $H_\mu$ formula $\varphi'$ with successor assignments $\Delta$ in which each formula $\Delta(\pi)$ expresses that the valuation of some formula $\delta \in \Gamma(\pi)$ changes from this point on the trace $\pi$ to the next.
	Then, all next operators $\bigcirc^\Gamma$ are replaced with the corresponding next operator $\bigcirc^\Delta$.
	More concretely, $\Delta(\pi)$ is given as $\delta_{\Gamma,\pi} := \bigvee_{\delta \in \Gamma(\pi)} \lnot (\delta \leftrightarrow \bigcirc^\succglobal \delta)$.
	Then, $\Delta$ always advances the traces to the points directly before the points that $\Gamma$ would advance them to.
	To compensate for this effect, all tests $[\delta]_{\pi}$ are replaced with $[\bigcirc^\succglobal \delta]_{\pi}$.
	In order to ensure that we are also directly in front of the tested positions initially, we extend the system $(\mathcal{PD},F)$ by a fresh initial state that transitions to the old initial state, obtaining $(\mathcal{PD}',F')$.
	It is easy to see that $(\mathcal{PD},F) \models \varphi$ iff $(\mathcal{PD}',F') \models \varphi'$.
	It is also easy to see that if $(\mathcal{PD},F)$ is a Kripke structure, then $(\mathcal{PD}',F')$ is a Kripke structure as well.
\end{proof}

\subsection{Proof of \cref{lem:mumblingexpressiveness}}\label{app:mumblingexpressiveness}

\begin{proof}
	The main idea is to extend the translation of the formula from the proof of \cref{thm:stutteringtomumblingreduction}.
	If only one stuttering assignment is used in $\varphi = Q_n \pi_n \dots Q_1 \pi_1 . \psi_{\Gamma}$, the problem with the first position can be addressed directly in the formula $\varphi'$ without changing the structure.
First of all, fixpoints in $\psi_{\Gamma}$ are unrolled once such that every test and every next operator $\bigcirc^{\Delta}$ either applies to the initial position only or just to non-initial positions.
	Then, tests to the initial position are not shifted like the other tests.
	We call the unquantified formula obtained from $\psi_{\Gamma}$ so far $\psi_{\Delta}$.
	
	A subtle problem arises, if $\Gamma$ advances a trace onto the second position of that trace: In this case, $\bigcirc^{\Delta}$ operators on the initial position move too far.
	If we know the set $T \subseteq \{\pi_1,...,\pi_n\}$ of traces this problem applies to, we can solve this problem by removing the $\bigcirc^{\Delta}$ operators on the initial position (which is possible since we unrolled fixpoints) and shifting tests on traces $\pi \not\in T$ by one $\Gamma$-position by replacing $[\bigcirc^\succglobal \delta]_\pi$ with $[\bigcirc^\succglobal ((\lnot \delta_{\Gamma,\pi})\, \mathcal{U}^\succglobal (\delta_{\Gamma,\pi} \land \bigcirc^\succglobal \delta))]_\pi$. 
	For a specific set $T$, we use $\psi_{\Delta}^T$ for the formula where the replacements are done in accordance to $T$.
	In the final formula, we identify the correct problematic trace set $T$ by testing for $\delta_{\Gamma,\pi}$ on the first position of each trace.
	Our final translation of $\varphi_{\Gamma}$ is then given by $Q_n \pi_n \dots Q_1 \pi_1 . \bigvee_{T \subseteq \{\pi_1,...,\pi_n\}} \bigwedge_{\pi \in T} [\delta_{\Gamma,\pi}]_\pi \land \bigwedge_{\pi' \not\in T} \lnot[\delta_{\Gamma,\pi'}]_{\pi'} \land \psi_{\Delta}^T$.
\end{proof}

\subsection{Proof of \cref{lem:mumblingstrictltlexpressiveness}}\label{app:mumblingstrictltlexpressiveness}

Here, we prove the claims used in the proof of \cref{lem:mumblingstrictltlexpressiveness} that were not proved directly.
First, we have a detailed version of the first direction of the proof.

\begin{lemma}\label{lem:mumblingstrictexpressivenesslemma1}
	$\mathcal{H}$ can be expressed in the decidable fragment of mumbling $H_\mu$.
\end{lemma}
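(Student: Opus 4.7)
I would exhibit a closed formula $\varphi_\mathcal{H}$ in mumbling $H_\mu$ with unique mumbling and basis $\AP$ that captures $\mathcal{H}$. Concretely, take $\varphi_\mathcal{H}=\forall\pi_1.\forall\pi_2.\psi$ with the unique successor assignment $\Delta(\pi_1)=\Delta(\pi_2)=p$, so that $\bigcirc^\Delta$ advances each trace to its next $p$-position and, in the absence of one, falls back by a single step. Starting from $(0,0)$, iterated application of $\bigcirc^\Delta$ then induces on any trace pair $(\tr_1,\tr_2)$ a well-defined $\omega$-word over the four-letter alphabet $\{00,01,10,11\}$ recording the pair $([p]_{\pi_1},[p]_{\pi_2})$ at each step. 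The multitrace formula $\psi$ will be built from the atomic tests $[p]_{\pi_i}$, the operator $\bigcirc^\Delta$, and nested $\mu$/$\nu$ fixpoints so as to accept precisely the $\omega$-words produced by matched-count pairs.

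The first step is a case analysis of the induced $\omega$-word. Call a trace of count $n$ of type $A$ if $p\in\tr(0)$ and of type $B$ otherwise; a direct calculation shows the per-step $[p]$-sequence is $1^n 0^\omega$ for type $A$ and $0\cdot 1^n 0^\omega$ for type $B$ (with the obvious variants $1^\omega$ and $0\cdot 1^\omega$ for infinite counts). Splitting on the four combinations $AA$, $BB$, $AB$, $BA$, one obtains that $|\{i\mid p\in\tr_1(i)\}|=|\{i\mid p\in\tr_2(i)\}|$ is equivalent to the sampled $\omega$-word belonging to the $\omega$-regular language $L_\text{matched}$ comprising: the always-aligned sequences $(00)^{?}(11)^{\ast}(00)^\omega$ and $(00)^{?}(11)^\omega$ (matched $AA$ or $BB$); the sequences $(10)(11)^{\ast}(01)(00)^\omega$ and $(10)(11)^\omega$ (matched $AB$); and their $BA$-symmetric variants. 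Every unmatched pair produces an $\omega$-word outside this set.

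I would then encode $L_\text{matched}$ as $\psi$, taken as the disjunction of three sub-formulae, one per pattern. The always-aligned pattern is captured by $\nu Y.\bigl(([p]_{\pi_1}\leftrightarrow[p]_{\pi_2})\land\bigcirc^\Delta Y\bigr)$. The matched $AB$ pattern is captured by $[p]_{\pi_1}\land\lnot[p]_{\pi_2}\land\bigcirc^\Delta\psi_\text{mid}$, where $\psi_\text{mid}$ is the disjunction of $\nu X.\bigl([p]_{\pi_1}\land[p]_{\pi_2}\land\bigcirc^\Delta X\bigr)$ (covering the infinite $(11)$-tail) and $\mu Y.\bigl((\lnot[p]_{\pi_1}\land[p]_{\pi_2}\land\bigcirc^\Delta\nu Z.(\lnot[p]_{\pi_1}\land\lnot[p]_{\pi_2}\land\bigcirc^\Delta Z))\lor([p]_{\pi_1}\land[p]_{\pi_2}\land\bigcirc^\Delta Y)\bigr)$ (covering the finite $(11)^{\ast}(01)(00)^\omega$ tail). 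The matched $BA$ pattern is obtained by swapping the roles of $\pi_1$ and $\pi_2$. Correctness, i.e.\ $\mathcal{T}\models\varphi_\mathcal{H}$ iff $\mathcal{T}\in\mathcal{H}$, then follows by a routine case check that every configuration, including the corner cases $n=0$ and infinite counts, falls into exactly the intended disjunct.

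The main conceptual obstacle is the asymmetry between type-$A$ and type-$B$ initial positions: because $\bigcirc^\Delta$ jumps \emph{strictly} past the current index, a type-$A$ trace already ``consumes'' one $p$ on the very first $\bigcirc^\Delta$ step, whereas a type-$B$ trace first ``repositions'' to its initial $p$ and only begins consuming on the second step. This unavoidable one-step shift is precisely what forces the matched $AB$ and $BA$ cases to be recognised by formulae distinct from the aligned case, and is also what gives mumbling its strict expressive advantage over stuttering in the LTL-basis regime, as witnessed by the complementary direction of Lemma~\ref{lem:mumblingstrictltlexpressiveness}.
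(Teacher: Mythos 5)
Your proposal is correct and follows essentially the same route as the paper's proof: the same unique successor assignment $\Delta(\pi_1)=\Delta(\pi_2)=p$, the same four-way case split on whether $p$ holds at the initial position of each trace, and the same use of until-style fixpoint formulae with a one-step shift compensation in the mixed $AB$/$BA$ cases (the paper phrases these via the derived weak-until $\mathcal{W}^\Delta$ rather than explicit $\mu$/$\nu$ nestings, and keeps the $AA$ and $BB$ cases as separate disjuncts, but these differences are cosmetic). Your observation that each sampled component word has the restricted shape $0^?1^*0^\omega$ or $0^?1^\omega$ is exactly what justifies collapsing the aligned cases into a single $\mathcal{G}^\Delta$-style invariant.
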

\begin{proof}
	Let $\Delta$ be the successor assignment with $\Delta(\pi_1) = \Delta(\pi_2) = p$.
	In order to improve readability, we use an additional derived LTL operator, the weak until operator $\mathcal{W}^\Delta$ that is defined dually to the until operator, i.e. $\psi_1 \mathcal{W}^\Delta \psi_2 \equiv \lnot (\lnot \psi_1 \mathcal{U}^\Delta \lnot \psi_2)$.
	The formula expressing the hyperproperty $\mathcal{H}$ is given as $\forall \pi_1 . \forall \pi_2 . \psi_1 \lor \psi_2 \lor \psi_3 \lor \psi_4$ where
	\begin{align*}
		\psi_1 &= [p]_{\pi_1} \land [\lnot p ]_{\pi_2} \land \bigcirc^\Delta ([p]_{\pi_1} \land [p]_{\pi_2}) \mathcal{W}^\Delta \\
		&\qquad\qquad ([\lnot p]_{\pi_1} \land [p]_{\pi_2} \land \bigcirc^\Delta [\lnot p]_{\pi_2}) \\
		\psi_2 &= [\lnot p]_{\pi_1} \land [p ]_{\pi_2} \land \bigcirc^\Delta ([p]_{\pi_1} \land [p]_{\pi_2}) \mathcal{W}^\Delta \\
		&\qquad\qquad ([p]_{\pi_1} \land [\lnot p]_{\pi_2} \land \bigcirc^\Delta [\lnot p]_{\pi_1}) \\
		\psi_3 &= [\lnot p]_{\pi_1} \land [\lnot p]_{\pi_2} \land \bigcirc^\Delta (([p]_{\pi_1} \land [p]_{\pi_2}) \mathcal{W}^{\Delta} \\
		&\qquad\qquad (\lnot[p]_{\pi_1} \land \lnot[p]_{\pi_2})) \\
		\psi_4 &= [p]_{\pi_1} \land [p]_{\pi_2} \land (([p]_{\pi_1} \land [p]_{\pi_2}) \mathcal{W}^{\Delta} (\lnot[p]_{\pi_1} \land \lnot[p]_{\pi_2})) 
	\end{align*}
	Intuitively, the four formulae cover four cases: $\psi_1$ covers the case where $p$ holds in the first position of $\pi_1$ but not $\pi_2$, $\psi_2$ covers the case where $p$ holds in the first position of $\pi_2$ but not $\pi_1$, $\psi_3$ covers the case where $p$ holds in the first position of none of $\pi_1$ and $\pi_2$ and $\psi_4$ covers the case where $p$ holds in the first position of both $\pi_1$ and $\pi_2$.
	The successor assignment $\Delta$ jumps from $p$-position to $p$-position.
	This way, the weak until-formula checks on corresponding positions in $\pi_1$ and $\pi_2$ whether $p$ still holds.
	In $\psi_4$, this can be done directly.
	In the other formulae, this has to be checked after one $\Delta$-step to move away from the initial $\lnot p$ position on a trace.
	In $\psi_1$ and $\psi_2$ where $p$ initially holds on only one trace but not the other, the weak until-formula checks $p$-positions on $\pi_1$ and $\pi_2$ shifted by one position and compensates by testing that the other trace has exactly one excess $p$ position at the end.
	In every case, we use a weak until formula instead of an until formula to cover the case where both traces have infinitely many $p$ positions.
\end{proof}

For the other direction, we use a claim about LTL that we establish by induction here.

\begin{proof}[Proof of \cref{ltlclaim}]
	The proof is by structural induction:
	\begin{description}
		\item[Case $\delta = a$:]$ $\\ Straightforward due to $\tr(i) = \tr(i+1) = P$.
		
		\item[Case $\delta = \lnot \delta'$:]$ $\\ Directly from the induction hypothesis.
		
		\item[Case $\delta = \delta' \lor \delta''$:]$ $\\ Directly from the induction hypothesis.
		
		\item[Case $\delta = \bigcirc \delta'$:]$ $\\ From the definition of $\mathit{nd}$, we have $\mathit{nd}(\delta') = n - 1$ for $\mathit{nd}(\delta) = n$.
		Hence, we have $i \in \llbracket\delta\rrbracket^\tr$ iff $i+1 \in \llbracket\delta'\rrbracket^\tr$ iff $i+2 \in \llbracket \delta'\rrbracket^\tr$ iff $i+1 \in \llbracket \delta \rrbracket^\tr$ where the first and last equivalence are due to the semantics of $\delta$ and the second equivalence follows from the induction hypothesis.
		
		\item[Case $\delta = \delta' \mathcal{U} \delta''$:]$ $\\ From the definition of $\mathit{nd}$, we have $n \geq \mathit{nd}(\delta')$ and $n \geq \mathit{nd}(\delta'')$ for $n = \mathit{nd}(\delta)$.
		We show both directions of the equivalence separately.
		Assume first that $i+1 \in \llbracket \delta \rrbracket^\tr$.
		If $i+1 \in \llbracket\delta''\rrbracket^\tr$, then $i\in \llbracket\delta''\rrbracket^\tr$ by the induction hypothesis and we have $i \in \llbracket\delta\rrbracket^\tr$.
		If $i+1 \not\in \llbracket\delta''\rrbracket^\tr$, then by the induction hypothesis ($\ast$) $i \not\in \llbracket\delta''\rrbracket^\tr$.
		As $i+1 \in \llbracket\delta\rrbracket^\tr$, there is a $k > i + 1$ such that $k \in \llbracket\delta''\rrbracket^\tr$ and for all $l$ with $i + 1 \leq l < k$ we have $l \in \llbracket\delta'\rrbracket^\tr$ and $l \not\in \llbracket\delta''\rrbracket^\tr$.
		Combining this with ($\ast$), we get $i \in \llbracket\delta\rrbracket^\tr$ by the semantics of $\mathcal{U}$.
		The other direction is similar.
	\end{description}
\end{proof}

\subsection{Proof of \cref{lem:stutteringexpressiveness}}\label{app:stutteringexpressiveness}

In the proof of \cref{lem:stutteringexpressiveness} in the main body of the paper, the proof of \cref{stteringexpressivenessmainclaim} was missing.
We present this part of the proof here.

We formalise the intuitions presented in the proof in the main body of the paper in additional claims, which we show separately.
For these claims, we classify positions on traces bound by $\Pi$ into three categories.
For a trace variable $\pi$ with $\Pi(\pi) = \tr$, $\Delta(\pi) = \delta$ and $\Gamma(\pi) = \gamma$ (as defined in the proof) and a position $i$, we say that $i$ is a position of Type a), b) or c) on $\pi$ based on the following conditions:
\begin{itemize}
	\item \textit{Type a):} There are only finitely many $k$ such that $k \in \llbracket\delta\rrbracket^\tr$.
	\item \textit{Type b):} There are infinitely many $k$ such that $k \in \llbracket\delta\rrbracket^\tr$ and there are $j \in \{1,\dots,m\}$, $k > i$ and $l > i$ such that $k \in \llbracket\delta \land \delta_j\rrbracket^\tr$ and $l \in \llbracket\delta \land \lnot \delta_j\rrbracket^\tr$.
	\item \textit{Type c):} Neither of the previous conditions applies.
	This is equivalent to the condition that there are infinitely many $k$ with $k \in \llbracket\delta\rrbracket^\tr$ and for all $j \in \{1,\dots,m\}$ and $k,k' > i$ with $k \in \llbracket\delta\rrbracket^\tr$ and $k' \in \llbracket\delta\rrbracket^\tr$, $k \in \llbracket\delta_j\rrbracket^\tr$ iff $k' \in \llbracket\delta_j\rrbracket^\tr$.
\end{itemize}

\begin{claim}\label{claim-a}
	Let $\pi$ be a path variable with $\Pi(\pi) = \tr$, $\Delta(\pi) = \delta$ and $\Gamma(\pi) = \gamma$ as well as $i$ be a position of type a) or b) on $\pi$.
	Then $\Succ_\delta(\tr,i) = \Succ_\gamma(\tr,i)$.
\end{claim}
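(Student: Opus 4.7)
The plan is to analyze the values of the formulas in $\gamma = \Gamma(\pi) = \{\gamma_0,\gamma_1,\ldots,\gamma_m,\tilde\gamma_1,\ldots,\tilde\gamma_m\}$ at $i$ and at the positions following it. Enumerate the $\delta$-positions strictly after $i$ on $\tr$ as $p_1 < p_2 < \ldots$. Then $\Succ_\delta(\tr,i) = p_1$ when $p_1$ exists and $\Succ_\delta(\tr,i) = i+1$ otherwise. To match this with $\Succ_\gamma$, I will show (A) no $\gamma'\in\gamma$ has different valuations at $i$ and any $j$ with $i<j<p_1$, and (B) at $j=p_1$ some $\gamma'\in\gamma$ does change valuation from $i$. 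The edge case (possible only for type a)) in which no $p_1$ exists will be handled separately by showing that every $\gamma'\in\gamma$ is constantly false on $[i,\infty)$, which yields $\Succ_\gamma(\tr,i)=i+1=\Succ_\delta(\tr,i)$.

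Part (A) is straightforward: each $\gamma'\in\gamma$ is built from $\bigcirc^{\delta}$-guarded subformulas plus the tail-stable LTL properties $\mathcal{F}\mathcal{G}\lnot\delta$ and $\mathcal{G}\mathcal{F}\delta$. At any position $q$, the $\bigcirc^{\delta}$-guarded subformulas depend only on the sequence of $\delta$-positions strictly greater than $q$; for $i<j<p_1$ this sequence is the same as at $i$, namely $p_1,p_2,\ldots$. The tail properties likewise coincide at $i$ and $j$, so every $\gamma'\in\gamma$ has the same value at $i$ and $j$.

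Part (B) is the main content and splits along the type of $i$. For type a), the condition $\mathcal{F}\mathcal{G}\lnot\delta$ is satisfied at $i$, which forces the least fixpoint inside $\gamma_0$ to be reachable; an unfolding shows that $\gamma_0$ holds at a position $q$ iff the (finite) number of $\delta$-positions strictly after $q$ is odd. Hence $\gamma_0$ flips value each time a $\delta$-position is crossed, giving the required change at $p_1$. If no $p_1$ exists, the base case $\bigcirc^{\delta}\bigcirc\mathcal{G}\lnot\delta$ of the fixpoint has no witness and $\gamma_0$ is false at every position in $[i,\infty)$; moreover $\mathcal{G}\mathcal{F}\delta$ fails everywhere so all $\gamma_j,\tilde\gamma_j$ are also constantly false, covering the edge case. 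For type b), $\mathcal{F}\mathcal{G}\lnot\delta$ fails at $i$, so $\gamma_0$ is constantly false; however, by the type b) hypothesis there exists $j\in\{1,\ldots,m\}$ and $\delta$-positions $k,l>i$ with $k\in\llbracket\delta_j\rrbracket^{\tr}$ and $l\in\llbracket\lnot\delta_j\rrbracket^{\tr}$. Unfolding the fixpoint in $\gamma_j$ shows that it holds at a position $q$ iff the number of leading consecutive $\delta_j$-satisfying $\delta$-positions after $q$ (before the first $\lnot\delta_j$-satisfying one) is finite and even; analogously for $\tilde\gamma_j$ with the roles of $\delta_j$ and $\lnot\delta_j$ swapped. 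If $p_1$ satisfies $\delta_j$, then this count for $\gamma_j$ at $i$ exceeds the count at $p_1$ by exactly one and both are finite (by the hypothesis), so $\gamma_j$ flips value; if $p_1$ satisfies $\lnot\delta_j$, then the analogous argument for $\tilde\gamma_j$ applies.

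The main obstacle is the accurate unfolding of the three families of fixpoint formulas into statements about (parities of) counts of $\delta$-positions carrying particular $\delta_j$-labels, together with careful use of the fact that these are least fixpoints. In particular, the fixpoints evaluate to false whenever the characterizing count is infinite, and it is precisely this point where the hypotheses of type a) (finitely many $\delta$-positions, so $\mathcal{F}\mathcal{G}\lnot\delta$ holds) and of type b) (the existence of both a $\delta_j$- and a $\lnot\delta_j$-labelled $\delta$-position after $i$) are needed to guarantee a well-defined finite count that actually flips parity when $p_1$ is discarded.
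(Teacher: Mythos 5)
Your proof is correct and follows essentially the same route as the paper's: characterise $\gamma_0$ by the parity of the number of remaining $\delta$-positions for type a), and $\gamma_j$/$\tilde{\gamma}_j$ by the parity of the leading block of $\delta_j$- (resp. $\lnot\delta_j$-) labelled $\delta$-positions for type b), then show the values are constant on $[i,p_1)$ and that the relevant formula flips at $p_1$. Your Part (A), which argues uniformly that \emph{no} formula of $\Gamma(\pi)$ can change value strictly before the next $\delta$-position because each is a conjunction of a tail-stable property with a $\bigcirc^{\delta}$-guarded fixpoint, is in fact slightly more complete than the paper's proof, which only tracks the single witnessing formula and leaves the non-interference of the remaining formulas implicit.
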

\begin{proof}
If $i$ is a position of type a) on $\pi$, we distinguish two cases based on how many positions $k > i$ with $k \in \llbracket\delta\rrbracket^\tr$ there are.
If there are no such positions, then clearly $\Succ_\delta(\tr,i) = i + 1$.
Since the valuation of $\gamma_0$ is false on all positions after $i$, we have $\Succ_\gamma(\tr,i) = i + 1$ as well in this case.
If there are such positions, then $\Succ_\delta(\tr,i) = \textit{min}\{ k > i \mid k \in \llbracket\delta\rrbracket^\tr \}$ by definition.
We argue that $\Succ_\gamma(\tr,i) = \textit{min}\{ k > i \mid k \in \llbracket\delta\rrbracket^\tr \}$ as well in this case:
If $i \in \llbracket\gamma_0\rrbracket^\tr$, then there are an odd number of $\delta$-positions after position $i$.
For all $i < l < \textit{min}\{ k > i \mid k \in \llbracket\delta\rrbracket^\tr \}$, there are also an odd number of $\delta$-positions after $l$, thus $l \in \llbracket\gamma_0\rrbracket^\tr$ as well.
For $\textit{min}\{ k > i \mid k \in \llbracket\delta\rrbracket^\tr \}$ on the other hand, there are an even number of $\delta$-positions and thus $\textit{min}\{ k > i \mid k \in \llbracket\delta\rrbracket^\tr \} \not\in \llbracket\gamma_0\rrbracket^\tr$.
Analogously, if $i \not\in \llbracket\gamma_0\rrbracket^\tr$ then $l \not\in \llbracket\gamma_0\rrbracket^\tr$ for all $i < l < \textit{min}\{ k > i \mid k \in \llbracket\delta\rrbracket^\tr \}$ and $\textit{min}\{ k > i \mid k \in \llbracket\delta\rrbracket^\tr \} \in \llbracket\gamma_0\rrbracket^\tr$.
Thus, $\Succ_\gamma(\tr,i) = \textit{min}\{ k > i \mid k \in \llbracket\delta\rrbracket^\tr \}$.

If $i$ is a position of type b) on $\pi$, we again have $\Succ_\delta(\tr,i) = \textit{min}\{ k > i \mid k \in \llbracket\delta\rrbracket^\tr \}$.
We choose $k$ and $l$ as the minimal positions greater than $i$ such that $k \in \llbracket\delta \land \delta_j\rrbracket^\tr$ and $l \in \llbracket\delta \land \lnot \delta_j\rrbracket^\tr$.
Since $k = l$ is impossible, we distinguish two cases, $k < l$ and $k > l$.
We start with the case $k < l$ where there is a positive number of $\delta \land \delta_j$ positions between $i$ and $l$.
If $i \in \llbracket\gamma_j\rrbracket^\tr$, then this number is even.
For all $i < k' < \textit{min}\{ k > i \mid k \in \llbracket\delta\rrbracket^\tr \}$, the number of $\delta \land \delta_j$ positions between $k'$ and $l$ is even as well since $k' \not\in \llbracket\delta\rrbracket^\tr$ for all such $k'$.
Thus $k' \in \llbracket\gamma_j\rrbracket^\tr$.
On the other hand, the number of $\delta \land \delta_j$ positions between $\textit{min}\{ k > i \mid k \in \llbracket\delta\rrbracket^\tr \}$ and $l$ is odd since $\textit{min}\{ k > i \mid k \in \llbracket\delta\rrbracket^\tr \} \in \llbracket\delta \land \delta_j\rrbracket^\tr$.
Thus $\textit{min}\{ k > i \mid k \in \llbracket\delta\rrbracket^\tr \} \not\in \llbracket\gamma_j\rrbracket^\tr$.
Analogously, if $i \not\in \llbracket\gamma_j\rrbracket^\tr$, then $m \not\in \llbracket\gamma_j\rrbracket^\tr$ for all $i < m < \textit{min}\{ k > i \mid k \in \llbracket\delta\rrbracket^\tr \}$ and $\textit{min}\{ k > i \mid k \in \llbracket\delta\rrbracket^\tr \} \in \llbracket\gamma_j\rrbracket^\tr$.
We conclude $\Succ_\gamma(\tr,i) = \textit{min}\{ k > i \mid k \in \llbracket\delta\rrbracket^\tr \}$.
The other case, $k > l$, is analogous to the case $k < l$ with the roles of $\gamma_j$ and $\tilde{\gamma}_j$ switched.
This concludes the proof of \cref{claim-a}.
\end{proof}

\begin{claim}\label{claim-b}
	Let $\pi$ be a path variable with $\Pi(\pi) = \tr$ and $\Gamma(\pi) = \gamma$ as well as $i$ be a position of type c) on $\pi$.
	Then $\Succ_\gamma(\tr,i) = i + 1$.
\end{claim}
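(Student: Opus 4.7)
The plan is to show that every formula $\gamma'\in\gamma = \{\gamma_0,\gamma_1,\dots,\gamma_m,\tilde{\gamma}_1,\dots,\tilde{\gamma}_m\}$ has constant truth value on the suffix of $\tr$ starting at position $i$. Once this is established, the set $S = \{ j \mid j > i, i \in \llbracket\gamma'\rrbracket^\tr \not\Leftrightarrow j \in \llbracket\gamma'\rrbracket^\tr \text{ for some } \gamma' \in \gamma\}$ is empty, and the second case in the definition of $\Succ_\gamma$ yields $\Succ_\gamma(\tr,i) = i+1$ directly.

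I would first dispose of $\gamma_0$. Since type c) requires that there are infinitely many $\delta$-positions after $i$, the conjunct $\mathcal{F}\mathcal{G}\lnot\delta$ of $\gamma_0$ is false at every position $k \geq i$, hence so is $\gamma_0$ itself. For the formulae $\gamma_j$ and $\tilde{\gamma}_j$, I would fix $j \in \{1,\dots,m\}$ and let $b_j \in \{\textit{true},\textit{false}\}$ denote the (constant) valuation of $\delta_j$ on all $\delta$-positions strictly after $i$, as guaranteed by condition c). The key intermediate observation is that for every $k \geq i$, the next $\delta$-position at or after $k+1$ lies strictly after $i$ and hence carries the valuation $b_j$ for $\delta_j$; unwinding the abbreviation $\bigcirc^\delta \delta' \equiv \bigcirc(\lnot\delta\,\mathcal{U}\,(\delta \land \delta'))$, this means that $\bigcirc^\delta \delta_j$ is uniformly equal to $b_j$, and $\bigcirc^\delta \lnot\delta_j$ uniformly equal to its negation, on positions $\geq i$.

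With this in hand, a case distinction on $b_j$ handles the two fixpoint formulae. If $b_j = \textit{true}$, then the base case $\bigcirc^\delta \lnot \delta_j$ of $\gamma_j$ is never satisfied on positions $\geq i$, and the only unfolding available, $\bigcirc^\delta(\delta_j \land \bigcirc^\delta(\delta_j \land Y))$, amounts to requiring $Y$ again at a later position; since $Y$ is a least fixpoint, a straightforward induction along its approximants (using \cref{cor:welldefineddelta}) shows the fixpoint is empty on positions $\geq i$, so $\gamma_j$ is uniformly false there. Conversely, if $b_j = \textit{false}$, the base case holds at every position $\geq i$ and $\gamma_j$ is uniformly true. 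The argument for $\tilde{\gamma}_j$ is symmetric with the roles of $b_j$ and $\lnot b_j$ swapped. The main (only mild) obstacle is the fixpoint-emptiness argument, since one must carefully argue that the least fixpoint cannot be populated solely through the inductive clause without ever touching the base case; this is handled uniformly by the approximant characterisation. Collecting the four cases ($\gamma_0$, $\gamma_j$ and $\tilde{\gamma}_j$ for each $j$), every $\gamma' \in \gamma$ is constant on positions $\geq i$, the witness set $S$ is empty, and $\Succ_\gamma(\tr,i) = i+1$ follows.
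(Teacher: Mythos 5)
Your proposal is correct and follows essentially the same route as the paper: $\gamma_0$ is killed by the failing conjunct $\mathcal{F}\mathcal{G}\lnot\delta$, and for each $j$ a case split on the constant valuation of $\delta_j$ on $\delta$-positions after $i$ shows that one of $\gamma_j,\tilde{\gamma}_j$ is uniformly true (base case always holds) and the other uniformly false (base case never holds, so the least fixpoint is empty via the approximant characterisation), whence the witness set in the definition of $\Succ_\gamma$ is empty. Your explicit note that $\bigcirc^\delta$ only inspects the next $\delta$-position \emph{strictly after} the current one, and your approximant argument for fixpoint emptiness, just make precise what the paper leaves implicit.
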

\begin{proof}
The valuation of $\gamma_0$ is false on all positions of a trace with positions of type c).
We show that the valuation of $\gamma_j$ and $\tilde{\gamma}_j$ is also constant for all $j \in \{1,\dots,m\}$ and positions $k \geq i$.
Fix an arbitrary $j \in \{1,\dots,m\}$.
We distinguish two cases based on whether all positions $k' > i$ with $k' \in \llbracket\delta\rrbracket^\tr$ satisfy $\gamma_j$ or all positions $k' > i$ with $k' \in \llbracket\delta\rrbracket^\tr$ satisfy $\lnot \gamma_j$.
Consider the case where for all $k' > i$ with $k' \in \llbracket\delta\rrbracket^\tr$, we have $k' \in \llbracket\delta_j\rrbracket^\tr$.
We argue that $k \not\in \llbracket\gamma_j\rrbracket^\tr$ and $k \in \llbracket\tilde{\gamma}_j\rrbracket^\tr$ for all $k \geq i$.
For $\gamma_j$, this is due to the fact that the base case of the fixpoint formula is not satisfied for any position on the subtrace $\tr[k]$.
For $\tilde{\gamma}_j$, this is due to the fact that $k$ is $0$ positions (i.e. an even number of positions) satisfying $\lnot\delta_j \land \delta$ away from the next position satisfying $\delta_j \land \delta$ since there are no such positions on $\tr[k]$. 
The other case is analogous with the roles of $\gamma_j$ and $\tilde{\gamma}_j$ switched.
\end{proof}

For the next claim and the proof of the main claim, we introduce additional notations.
The first notations are $v_\pi$ and $v[\pi:k]$ for vectors $v \in \mathbb{N}_0^n$, path variables $\pi$ and indices $k$.
$v_\pi$ represents the entry belonging to $\pi$ in $v$ and
$v[\pi:k]$ is obtained from the vector $v$ by substituting the entry belonging to $\pi$ by $k$.
Formally, if $\pi = \pi_m$ and $v = (v_1,\dots,v_n)$, then $v_\pi = v_m$ and $v[\pi:k] = (v_1,\dots,v_{m-1},k,v_{m+1},\dots,v_n)$.
Next, for a set $S \subseteq \mathbb{N}_0^n$ of vectors, we say that \textit{$S$ is invariant under type c) substitutions} if and only if for for all vectors $v \in \mathbb{N}_0^n$, trace variables $\pi$ and non minimal type c) positions $k,k'$ on $\pi$, if they exist, we have $v[\pi:k] \in S$ iff $v[\pi:k'] \in S$.
Additionally, for three sets $S,S',S'' \subseteq \mathbb{N}_0^n$ of vectors, we say that \textit{$S$ and $S'$ are equivalent on $S''$} if and only if for all $v \in S''$, $v \in S$ iff $v \in S'$ (or in other words $S \cap S'' = S' \cap S''$).

\begin{claim}\label{claim-c}
	Let $\mathcal{W}$ be a fixpoint variable assignment such that for all fixpoint variables $X$, $\mathcal{W}(X)$ is invariant under type c) substitutions.
	Then, $\llbracket \hat{\psi} \rrbracket^\Pi_\mathcal{W}$ is invariant under type c) substitutions.
\end{claim}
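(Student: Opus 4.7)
The plan is to prove the claim by structural induction on $\hat{\psi}$. Boolean combinations and the fixpoint variable case (where $\llbracket X\rrbracket^\Pi_\mathcal{W} = \mathcal{W}(X)$ is invariant by the assumption on $\mathcal{W}$) are immediate from the definitions. The substantive cases are the tests, the next operator $\bigcirc^\Gamma \hat{\psi}'$, and the fixpoint constructor $\mu X.\hat{\psi}'$ (with the $\nu$ case symmetric).

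For a replaced test $[(\lnot\delta\mathcal{U}(\delta\land\delta_j))\lor(\delta_j\land\lnot\mathcal{F}\delta)]_\pi$ evaluated at a non-minimal type c) position $k$ on $\pi$, the second disjunct is false because a type c) position has infinitely many later $\delta$-positions, so $\mathcal{F}\delta$ holds. The first disjunct reduces to $\delta_j(l)$ at the smallest $l \geq k$ with $l \in \llbracket\delta\rrbracket^{\Pi(\pi)}$. Since $k$ is strictly greater than the minimal type c) position $k^*$, we have $l \geq k > k^*$, and the definition of type c) applied at $k^*$ forces $\delta_j(l)$ to equal the common value $v_j$ that $\delta_j$ takes on all $\delta$-positions above $k^*$. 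Hence the test evaluates uniformly at all non-minimal type c) positions. Unguarded tests, by the w.l.o.g.\ unrolling assumption, lie outside every $\bigcirc^\Gamma$ in $\hat{\psi}$, so they contribute at the starting vector only and do not interact with substitutions into non-minimal type c) positions arising inside any fixpoint or next operator.

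For $\bigcirc^\Gamma\hat{\psi}'$, \Cref{claim-b} gives $\Succ_\gamma(\Pi(\pi),k) = k+1$ whenever $k$ is type c) on $\pi$; moreover, type c) is upward closed under taking successors, so $k+1$ is a non-minimal type c) position whenever $k$ is. Writing $v' = \Succ_\Gamma(\Pi,v)$, the substitution only affects the $\pi$-component, giving $\Succ_\Gamma(\Pi, v[\pi:k]) = v'[\pi:k+1]$ and $\Succ_\Gamma(\Pi, v[\pi:k']) = v'[\pi:k'+1]$, so the structural induction hypothesis applied to $\hat{\psi}'$ yields the required invariance. For $\mu X.\hat{\psi}'$, apply the approximant characterisation from \Cref{cor:welldefinedpsi} to write $\llbracket\mu X.\hat{\psi}'\rrbracket^\Pi_\mathcal{W} = \bigcup_\kappa \alpha^\kappa(\emptyset)$ with $\alpha(V) = \llbracket\hat{\psi}'\rrbracket^\Pi_{\mathcal{W}[X\mapsto V]}$, and prove by transfinite induction on $\kappa$ that each approximant is invariant: the empty set trivially is; the successor step invokes the structural induction hypothesis for $\hat{\psi}'$ with $\mathcal{W}[X\mapsto\alpha^\kappa(\emptyset)]$, which remains an invariant assignment by the transfinite hypothesis together with the assumption on $\mathcal{W}$; the limit step uses closure of invariance under arbitrary unions. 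The resulting union is again invariant.

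The main obstacle is the careful treatment of the replaced tests. The specific syntactic form $(\lnot\delta\mathcal{U}(\delta\land\delta_j))\lor(\delta_j\land\lnot\mathcal{F}\delta)$ of the replacement, combined with the restriction to \emph{non-minimal} (rather than arbitrary) type c) positions, is exactly what makes the uniform-evaluation argument go through; weakening either ingredient breaks invariance. A secondary subtlety is coordinating the guarded/unguarded classification of tests produced by the w.l.o.g.\ unrolling with the inductive statement, which is why the substitution needs to be performed at non-minimal type c) positions so that at least one $\Succ_\gamma$-step has been taken before the test is reached.
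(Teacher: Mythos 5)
Your proof follows essentially the same route as the paper's: a structural induction in which the replaced-test case is settled by observing that the $\lnot\mathcal{F}\delta$ disjunct fails at type c) positions and the until-disjunct reduces to the common value of $\delta_j$ on $\delta$-positions above the minimal type c) position, the $\bigcirc^\Gamma$ case combines \cref{claim-b} with closure of (non-minimal) type c) positions under successor, and the fixpoint cases use transfinite induction on the approximants from \cref{cor:welldefinedpsi}. Your explicit remark on unguarded tests is a point the paper leaves implicit, but it does not change the argument.
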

\begin{proof}
The proof of this claim is by induction on the structure of $\psi$.

\textit{Case} $\psi = [\delta_j]_{\pi'}$: 
Let $v \in \mathbb{N}_0^n$ be a vector, $\pi$ be a trace variable with $\delta = \Delta(\pi)$ and $k,k'$ be non minimal positions of type c) on $\pi$.
For $\pi' \neq \pi$, the claim is trivial, we thus assume $\pi' = \pi$ from here.
Since the evaluation of the test is independent of all positions other than $v_\pi$, we need to show that $k \in \llbracket(\lnot \delta \mathcal{U} (\delta \land \delta_j)) \lor (\delta_j \land \lnot \mathcal{F} \delta)\rrbracket^{\Pi(\pi)}$ iff $k' \in \llbracket(\lnot \delta \mathcal{U} (\delta \land \delta_j)) \lor (\delta_j \land \lnot \mathcal{F} \delta)\rrbracket^{\Pi(\pi)}$.
For this, let $k'' \geq k$ and $k''' \geq k'$ be the minimal positions such that $k'' \in \llbracket\delta\rrbracket^{\Pi(\pi)}$ and $k''' \in \llbracket\delta\rrbracket^{\Pi(\pi)}$.
Then $k \in \llbracket(\lnot \delta \mathcal{U} (\delta \land \delta_j)) \lor (\delta_j \land \lnot \mathcal{F} \delta)\rrbracket^{\Pi(\pi)}$ iff $k'' \in \llbracket\delta_j\rrbracket^{\Pi(\pi)}$ iff $k''' \in \llbracket\delta_j\rrbracket^{\Pi(\pi)}$ iff $k' \in \llbracket(\lnot \delta \mathcal{U} (\delta \land \delta_j)) \lor (\delta_j \land \lnot \mathcal{F} \delta)\rrbracket^{\Pi(\pi)}$.
Here, the first and third equivalence are due the semantics of the until formula (and the fact that $\lnot \mathcal{F} \delta$ is false in type c) positions) and the second equivalence is due to the fact that $k''$ and $k'''$ are both $\delta$-positions that are greater than the minimal position of type c) on $\pi$ and thus have the same valuation on $\delta_j$.

\textit{Case} $\psi = \lnot [\delta_j]_{\pi'}$:
Analogous to the previous case.

\textit{Case} $\psi = X$:
Follows immediately from the assumption on $\mathcal{W}$.

\textit{Case} $\psi = \psi_1 \lor \psi_2$:
Follows immediately from the induction hypothesis.

\textit{Case} $\psi = \psi_1 \land \psi_2$:
Follows immediately from the induction hypothesis.

\textit{Case} $\psi = \bigcirc^\Delta \psi_1$:
Let $v \in \mathbb{N}_0^n$ be a vector, $\pi$ be a trace variable and $k,k'$ be non minimal positions of type c) on $\pi$.

On the one hand, let $v[\pi : k] \in \llbracket \bigcirc^\Gamma \hat{\psi}_1 \rrbracket^\Pi_\mathcal{W}$.
By the definition of the semantics, we get $\Succ_\Gamma(\Pi,v[\pi : k]) \in \llbracket \hat{\psi}_1 \rrbracket^\Pi_\mathcal{W}$.
By \cref{claim-b}, we know that $(\Succ_\Gamma(\Pi,v[\pi : k]))_\pi = k + 1$ and thus $\Succ_\Gamma(\Pi,v)[\pi : k + 1] = \Succ_\Gamma(\Pi,v[\pi : k]) \in \llbracket \hat{\psi}_1 \rrbracket^\Pi_\mathcal{W}$.
Since successors of type c) positions are also type c) positions, $k + 1$ is a non minimal position of type c) on $\pi$ as well.
By the same argument, $k' + 1$ is a non minimal position of type c) on $\pi$ and the induction hypothesis yields $\Succ_\Gamma(\Pi,v)[\pi : k' + 1] \in \llbracket \hat{\psi}_1 \rrbracket^\Pi_\mathcal{W}$.
Again, by \cref{claim-b}, we have $(\Succ_\Gamma(\Pi,v)[\pi : k' + 1])_\pi = k' + 1 = (\Succ_\Gamma(\Pi,v[\pi : k']))_\pi$, thus $\Succ_\Gamma(\Pi,v[\pi : k']) \in \llbracket \hat{\psi}_1 \rrbracket^\Pi_\mathcal{W}$.
Finally, using the semantics definition, we get $v[\pi:k'] \in \llbracket \bigcirc^\Gamma \hat{\psi}_1 \rrbracket^\Pi_\mathcal{W}$.
The other direction is analogous.

\textit{Case} $\psi = \mu X. \psi_1$: 
For least fixpoints, the approximant characterisation from \cref{app:fixpoints} is used, i.e. $\llbracket \mu X. \hat{\psi}_1 \rrbracket^\Pi_\mathcal{W} = \bigcup_{\kappa \geq 0} \alpha^\kappa(\emptyset)$ for $\alpha$ defined as $\alpha(V) = \llbracket \hat{\psi}_1 \rrbracket^\Pi_{\mathcal{W}[X \mapsto V]}$.
We show by transfinite induction, that for all ordinals $\kappa \geq 0$, $\alpha^\kappa(\emptyset)$ is invariant under type c) substitutions.
In this induction's base case, $\alpha^0(\emptyset)$ is empty and thus satisfies the claim.
In the case for successors, the induction hypothesis from the transfinite induction yields that $\alpha^\kappa(\emptyset)$ is invariant under type c) substitutions which, together with the fact that $\mathcal{W}(X')$ is invariant under type c) substitutions for all fixpoint variables $X'$, means that $\mathcal{W}[X \mapsto \alpha^\kappa(\emptyset)]$ is invariant under type c) substitutions.
Thus, the induction hypothesis from the structural induction is applicable and yields the claim for $\alpha^{\kappa + 1}(\emptyset)$.
The case for limit ordinals is a straightforward application of the induction hypothesis of the transfinite induction.

\textit{Case} $\psi = \nu X. \psi_1$: Analogous to the previous case.
\end{proof}

Having established these additional claims, we can now proceed with the missing proof.
\begin{proof}[Proof of \cref{stteringexpressivenessmainclaim}]
	In the \textbf{base case}, let $\mathcal{T}$ be an arbitrary set of traces and $\Pi$ be an arbitrary trace assignment over $\mathcal{T}$.
	We prove a result implying $(0,\dots,0) \in \llbracket \psi \rrbracket^\Pi$ iff $(0,\dots,0) \in \llbracket \hat{\psi} \rrbracket^\Pi$ and thus $\Pi \models_\mathcal{T} \varphi_0$ iff $\Pi \models_{\mathcal{T}} \hat{\varphi}_0$.
	We formulate this result so that we can show it by induction on the structure of $\psi$.
	
	\begin{claim}\label{inductiveclaim}
		Let $S = \{ \Succ_\Delta^k(\Pi,(0,\dots,0)) \mid k \in \mathbb{N}_0 \}$.
		Let $\mathcal{W}$, $\mathcal{W}'$ be fixpoint variable assignments such that for all fixpoint variables $X$
		\begin{itemize}
			\item $\mathcal{W}'(X)$ is invariant under type c) substitutions and
			\item $\mathcal{W}(X)$ and $\mathcal{W}'(X)$ are equivalent on $S$.
		\end{itemize}
		Then, $\llbracket \psi \rrbracket^\Pi_\mathcal{W}$ and $\llbracket \hat{\psi} \rrbracket^\Pi_{\mathcal{W}'}$ are equivalent on $S$.
	\end{claim}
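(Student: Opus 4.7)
The plan is to prove \cref{inductiveclaim} by structural induction on $\psi$, using Claims \ref{claim-a}, \ref{claim-b}, and \ref{claim-c} as the main analytic ingredients, and propagating the two hypotheses on $(\mathcal{W},\mathcal{W}')$ through every case.

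For the base cases, $\psi = X$ is immediate from the hypothesis that $\mathcal{W}(X)$ and $\mathcal{W}'(X)$ agree on $S$. For a literal $[\delta_j]_\pi$ (or $\lnot[\delta_j]_\pi$), I would split by whether the test is syntactically at the top of $\psi$ (a first-position test) or in the scope of at least one $\bigcirc^\Delta$ (a later-position test); this split is available by the unrolling convention. First-position tests are not rewritten, so equivalence is trivial. For a later-position test, $\hat\psi=[(\lnot\delta\,\mathcal{U}\,(\delta\land\delta_j))\lor(\delta_j\land\lnot\mathcal{F}\delta)]_\pi$ with $\delta=\Delta(\pi)$. Every $v\in S$ with $v\neq(0,\dots,0)$ has $v_\pi=\Succ_\delta^k(\Pi(\pi),0)$ for some $k\geq 1$, so $v_\pi$ is either a $\delta$-position (in which case the until immediately reduces to $\delta_j$) or lies strictly past the final $\delta$-position on $\Pi(\pi)$ (in which case $\lnot\mathcal{F}\delta$ holds and the second disjunct reduces to $\delta_j$). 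The initial vector is handled by the unrolling: a later-position test cannot be semantically probed at $(0,\dots,0)$ in the actual evaluation, so it suffices to establish the equivalence on the positions reached via at least one $\Delta$-step.

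The boolean connectives $\psi_1\lor\psi_2$ and $\psi_1\land\psi_2$ follow directly from the IH. The pivotal inductive case is $\psi=\bigcirc^\Delta\psi_1$, where $\hat\psi=\bigcirc^\Gamma\hat\psi_1$. Fix $v\in S$. Using the semantics and the IH applied at $\Succ_\Delta(\Pi,v)\in S$, I reduce to showing
\[
\Succ_\Delta(\Pi,v)\in\llbracket\hat\psi_1\rrbracket^\Pi_{\mathcal{W}'}\iff\Succ_\Gamma(\Pi,v)\in\llbracket\hat\psi_1\rrbracket^\Pi_{\mathcal{W}'}.
\]
Component-wise, for trace variables $\pi$ whose $v_\pi$ is of Type a) or b), \cref{claim-a} gives $\Succ_\delta(\Pi(\pi),v_\pi)=\Succ_\gamma(\Pi(\pi),v_\pi)$, so those coordinates already match. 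For $\pi$ where $v_\pi$ is of Type c), both $\Succ_\delta(\Pi(\pi),v_\pi)$ and $\Succ_\gamma(\Pi(\pi),v_\pi)$ are strictly greater than $v_\pi$, and since Type c)-ness is upward-closed and $v_\pi$ is already past the minimal Type c) position, both successor coordinates are \emph{non-minimal} Type c) positions. Now \cref{claim-c}, whose hypothesis $\mathcal{W}'(X)$ invariant under Type c) substitutions is exactly what the statement provides, says that $\llbracket\hat\psi_1\rrbracket^\Pi_{\mathcal{W}'}$ is invariant under Type c) substitutions, so we may swap these coordinates one at a time without changing membership.

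The fixpoint cases $\psi=\mu X.\psi_1$ and $\psi=\nu X.\psi_1$ are handled by approximant characterisation (cf.\ \cref{app:fixpoints}) together with a transfinite induction on the approximation ordinal, completely parallel to the transfinite induction in the proof of \cref{claim-c}. At successor ordinals I apply the structural IH, having first verified that the updated pair $(\mathcal{W}[X\mapsto\alpha^\kappa(\emptyset)],\mathcal{W}'[X\mapsto\hat\alpha^\kappa(\emptyset)])$ still satisfies both hypotheses: invariance of $\hat\alpha^\kappa(\emptyset)$ under Type c) substitutions follows from \cref{claim-c} applied inductively, and agreement on $S$ follows from the transfinite IH. Limit ordinals are immediate because both arbitrary unions (for $\mu$) and arbitrary intersections (for $\nu$) preserve agreement on $S$ and invariance under Type c) substitutions. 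The main obstacle will be the $\bigcirc^\Delta$ case, specifically the bookkeeping needed to check that every Type c) component contributes only non-minimal Type c) positions to both $\Succ_\Delta(\Pi,v)$ and $\Succ_\Gamma(\Pi,v)$, and the careful handling of the literal case at $(0,\dots,0)$, which hinges on the unrolling convention separating first-position from later-position tests.
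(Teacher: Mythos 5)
Your proposal is correct and follows essentially the same route as the paper's own proof: structural induction with the literal case split via the unrolling convention, the $\bigcirc^\Delta$ case handled component-wise by \cref{claim-a} for Type a)/b) coordinates and \cref{claim-c}-invariance for non-minimal Type c) coordinates, and the fixpoint cases by a transfinite induction on approximants that re-verifies both hypotheses on the updated assignments. No substantive differences.
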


	As mentioned, \cref{inductiveclaim} is shown by a structural induction:
	
	\textit{Case} $\psi = [\delta_j]_\pi$: 
	Let $v \in \{ \Succ_\Delta^k(\Pi,(0,\dots,0)) \mid k \in \mathbb{N}_0 \}$.
	If $v = (0,\dots,0)$, we just need to focus on tests applied to the initial position.
	As $\hat{\psi} = \psi$ for these positions, this case is trivial.
	
	If $v = \Succ_\Delta^k(\Pi,(0,\dots,0))$ for some $k > 0$, we need to show that $v_\pi \in \llbracket\delta_j\rrbracket^{\Pi(\pi)}$ iff $v_\pi \in \llbracket(\lnot \delta \mathcal{U} (\delta \land \delta_j)) \lor (\delta_j \land \lnot \mathcal{F} \delta)\rrbracket^{\Pi(\pi)}$ since the evaluation of the test is independent of all positions other than $v_\pi$.
	If $v_\pi \in \llbracket\delta\rrbracket^{\Pi(\pi)}$, then the test in $\hat{\psi}$ is equivalent to $(\lnot \delta \mathcal{U} (\delta \land \delta_j))$ which is equivalent to $\delta_j$ and the case is established.
	If $v_\pi \not\in \llbracket\delta\rrbracket^{\Pi(\pi)}$, then we know that there are no future positions on $\pi$ that satisfy $\Delta(\pi)$, i.e. $\delta$, (since $v = \Succ_\Delta^k(\Pi,(0,\dots,0))$) and the test in $\hat{\psi}$ is equivalent to $(\delta_j \land \lnot \mathcal{F} \delta)$ which is equivalent to $\delta_j$ and the case is established as well.
	
	\textit{Case} $\psi = \lnot [\delta_j]_\pi$: Analogous to the previous case.
	
	\textit{Case} $\psi = X$: Follows immediately from the assumptions on $\mathcal{W}$ and $\mathcal{W}'$.
	
	\textit{Case} $\psi = \psi_1 \lor \psi_2$: Follows immediately from the induction hypothesis.
	
	\textit{Case} $\psi = \psi_1 \land \psi_2$: Follows immediately from the induction hypothesis.
	
	\textit{Case} $\psi = \bigcirc^\Delta \psi_1$:
	Let $v \in \{ \Succ_\Delta^k(\Pi,(0,\dots,0)) \mid k \in \mathbb{N}_0 \}$.
	Since $\mathcal{W}'(X)$ is invariant under type c) substitutions for all fixpoint variables $X$, we can apply \cref{claim-c} and obtain that $\llbracket \hat{\psi}_1 \rrbracket^\Pi_{\mathcal{W}'}$ is invariant under type c) substitutions.
	
	On the one hand, assume that $v \in \llbracket \bigcirc^\Delta \psi_1 \rrbracket^\Pi_\mathcal{W}$.
	By the definition of the semantics, we know that $\Succ_\Delta(\Pi,v) \in \llbracket \psi_1 \rrbracket^\Pi_\mathcal{W}$.
	From the induction hypothesis, we know that $\Succ_\Delta(\Pi,v) \in \llbracket \hat{\psi}_1 \rrbracket^\Pi_{\mathcal{W}'}$ as well.
	For all $\pi$ where $v_\pi$ is a position of type a) or b) on $\Pi(\pi)$, we can apply \cref{claim-a} to obtain $(\Succ_\Delta(\Pi,v))_\pi = (\Succ_\Gamma(\Pi,v))_\pi$.
	For all $\pi$ where $v_\pi$ is a position of type c) on $\Pi(\pi)$, $(\Succ_\Delta(\Pi,v))_\pi$ and $(\Succ_\Gamma(\Pi,v))_\pi$ are non minimal positions of type c) on $\pi$.
	Since $\llbracket \hat{\psi}_1 \rrbracket^\Pi_{\mathcal{W}'}$ is invariant under type c) substitutions, we obtain that $\Succ_\Gamma(\Pi,v) \in \llbracket \hat{\psi}_1 \rrbracket^\Pi_{\mathcal{W}'}$ in both cases.	
	Using the semantics definition, we conclude $v \in \llbracket \bigcirc^\Gamma \hat{\psi}_1 \rrbracket^\Pi_{\mathcal{W}'}$.
	
	On the other hand, assume that $v \in \llbracket \bigcirc^\Gamma \hat{\psi}_1 \rrbracket^\Pi_{\mathcal{W}'}$.
	By the definition of the semantics, we know that $\Succ_\Gamma(\Pi,v) \in \llbracket \hat{\psi}_1 \rrbracket^\Pi_{\mathcal{W}'}$.
	For all $\pi$ where $v_\pi$ is a position of type a) or b), we can apply \cref{claim-a} to obtain $(\Succ_\Gamma(\Pi,v))_\pi = (\Succ_\Delta(\Pi,v))_\pi$.
	For all $\pi$ where $v_\pi$ is a position of type c), $(\Succ_\Gamma(\Pi,v))_\pi$ and $(\Succ_\Delta(\Pi,v))_\pi$ are non minimal positions of type c) on $\pi$.
	Since $\llbracket \hat{\psi}_1 \rrbracket^\Pi_{\mathcal{W}'}$ is invariant under type c) substitutions, we again obtain that $\Succ_\Delta(\Pi,v) \in \llbracket \hat{\psi}_1 \rrbracket^\Pi_{\mathcal{W}'}$ in both cases.
	From the induction hypothesis, we thus know $\Succ_\Delta(\Pi,v) \in \llbracket \psi_1 \rrbracket^\Pi_\mathcal{W}$ and therefore $v \in \llbracket \bigcirc^\Delta \psi_1 \rrbracket^\Pi_\mathcal{W}$ using the semantics definition.
	
	\textit{Case} $\psi = \mu X. \psi_1$: 
	For least fixpoints, we use the approximant characterisation from \cref{app:fixpoints} where $\llbracket \mu X. \psi_1 \rrbracket^\Pi_\mathcal{W} = \bigcup_{\kappa \geq 0} \alpha^\kappa(\emptyset)$ for $\alpha$ defined as $\alpha(V) = \llbracket \psi_1 \rrbracket^\Pi_{\mathcal{W}[X \mapsto V]}$ and $\llbracket \mu X. \hat{\psi}_1 \rrbracket^\Pi_{\mathcal{W}'} = \bigcup_{\kappa \geq 0} \hat{\alpha}^\kappa(\emptyset)$ for $\hat{\alpha}$ defined as $\hat{\alpha}(V) = \llbracket \hat{\psi}_1 \rrbracket^\Pi_{\mathcal{W'}[X \mapsto V]}$
	We show by transfinite induction, that for all ordinals $\kappa \geq 0$, (i) $\hat{\alpha}(\emptyset)$ is invariant under type c) substitutions and (ii) $\alpha^\kappa(\emptyset)$ and $\hat{\alpha}^\kappa(\emptyset)$ are equivalent on $S$.
	In this induction's base case, $\alpha^0(\emptyset)$ and $\hat{\alpha}^0(\emptyset)$ are both empty and thus satisfy both claims.
	
	In the case for successors, the induction hypothesis from the transfinite induction yields that (1) $\hat{\alpha}^\kappa(\emptyset)$ is invariant under type c) substitutions and (2) $\alpha^\kappa(\emptyset)$ and $\hat{\alpha}^\kappa(\emptyset)$ are equivalent on $S$.
	(1) together with the fact that $\mathcal{W}'(X')$ is invariant under type c) substitutions for all fixpoint variables $X'$ means that $\mathcal{W}'[X \mapsto \hat{\alpha}^\kappa(\emptyset)](X')$ is invariant under type c) substitutions for all fixpoint variables $X'$.
	\cref{claim-c} then establishes that $\llbracket \hat{\psi}_1 \rrbracket^\Pi_{\mathcal{W}'[X \mapsto \hat{\alpha}^\kappa(\emptyset)]} = \hat{\alpha}^{\kappa + 1}(\emptyset)$ is invariant under type c) substitutions.
	(2) together with the fact that $\mathcal{W}(X')$ and $\mathcal{W}'(X')$ are equivalent on $S$ for all fixpoint variables $X'$ yield that $\mathcal{W}[X \mapsto \alpha^\kappa(\emptyset)](X')$ and $\mathcal{W}'[X \mapsto \hat{\alpha}^\kappa(\emptyset)](X')$ are equivalent on $S$ for all fixpoint variables $X'$.
	Since both requirements on $\mathcal{W}[X \mapsto \alpha^\kappa(\emptyset)]$ and $\mathcal{W}'[X \mapsto \hat{\alpha}^\kappa(\emptyset)]$ are fulfilled, the induction hypothesis from the structural induction is applicable and yields that $\llbracket \psi_1 \rrbracket^\Pi_{\mathcal{W}[X \mapsto \alpha^\kappa(\emptyset)]} = \alpha^{\kappa + 1}(\emptyset)$ and $\llbracket \hat{\psi}_1 \rrbracket^\Pi_{\mathcal{W}'[X \mapsto \hat{\alpha}^\kappa(\emptyset)]} = \hat{\alpha}^{\kappa + 1}(\emptyset)$ are equivalent on $S$.
	
	The case for limit ordinals is a straightforward application of the induction hypothesis of the transfinite induction.
	
	\textit{Case} $\psi = \nu X. \psi_1$:
	Analogous to the previous case.
	
	This concludes the proof of \cref{inductiveclaim}.\hfill$\square$
	
	The \textbf{inductive step} in the main proof is trivial.
\end{proof}
\end{document}